\xpatchcmd{\@thm}{\thm@headpunct{.}}{\thm@headpunct{}}{}{}
\def\th@plain{%
  \thm@notefont{}
  \itshape 
}
\def\th@definition{%
  \thm@notefont{}
  \normalfont 
}
\newtheorem{theorem}{Theorem}
\newtheorem{definition}{Definition}
\newtheorem{lemma}[theorem]{Lemma}
\newtheorem{assumption}{Assumption}
\definecolor{myred}{HTML}{880000}
\definecolor{mygreen}{HTML}{008800}
\definecolor{myblue}{HTML}{000088}
\definecolor{linkblue}{HTML}{0000BB}
\newcommand{\R}{\mathbb{R}}
\newcommand{\E}{\mathbb{E}}
\newcommand{\x}{\mathbf{x}}
\newcommand{\X}{\mathbf{X}}
\newcommand{\A}{\mathbf{A}}
\newcommand{\Eins}{\mathrm{\textbf{1}}}
\newcommand{\gauss}{\mathcal{N}}
\renewcommand{\P}{\mathbb{P}}
\renewcommand{\S}{\mathcal{S}}
\title{Nearly Minimax-Optimal Rates for Noisy Sparse Phase Retrieval via Early-Stopped Mirror Descent}
\author{
  Fan Wu,
  Patrick Rebeschini \\
	Department of Statistics, University of Oxford
}
\begin{document}

\maketitle

\begin{abstract}%
	This paper studies early-stopped mirror descent applied to noisy sparse phase retrieval, which is the problem of recovering a $k$-sparse signal $\x^\star\in\R^n$ from a set of quadratic Gaussian measurements corrupted by sub-exponential noise. We consider the (non-convex) unregularized empirical risk minimization problem and show that early-stopped mirror descent, when equipped with the hyperbolic entropy mirror map and proper initialization, achieves a nearly minimax-optimal rate of convergence, provided the sample size is at least of order $k^2$ (modulo logarithmic term) and the minimum (in modulus) non-zero entry of the signal is on the order of $\|\x^\star\|_2/\sqrt{k}$. Our theory leads to a simple algorithm that does not rely on explicit regularization or thresholding steps to promote sparsity. More generally, our results establish a connection between mirror descent and sparsity in the non-convex problem of noisy sparse phase retrieval, adding to the literature on early stopping that has mostly focused on non-sparse, Euclidean, and convex settings via gradient descent. Our proof combines a potential-based analysis of mirror descent with a quantitative control on a variational coherence property that we establish along the path of mirror descent, up to a prescribed stopping time.
\end{abstract}

\section{Introduction}
\label{section:introduction}
In many fields of science and engineering, one is tasked with phase retrieval, the problem of recovering an $n$-dimensional signal $\x^\star$ from phaseless measurements
\begin{equation}
\label{eq:obs_model}
Y_j = (\A_j^\top\x^\star)^2 + \varepsilon_j, \qquad j=1,...,m,
\end{equation}
where the sensing vectors $\A_1,...,\A_m$ are observed, and the random variables $\varepsilon_1,...,\varepsilon_m$ model a possible contamination of the measurements with noise. This problem arises naturally in applications such as optics, X-ray crystallography and astronomy, where it is often easier to build detectors which measure intensities and not phases \cite{SEC+14}.
In many applications of interest, the underlying signal $\x^\star$ is naturally sparse \cite{DF87, JEH16, M90}. (Noisy) sparse phase retrieval represents a cornerstone in high-dimensional statistics. It is more challenging than sparse linear regression due to the missing phase information, which often leads to non-convex problems. A geometric analysis of a natural least-squares formulation of phase retrieval (without sparsity and noise) shows that, with high probability, there are no spurious local minimizers, provided the number of Gaussian measurements $m$ is at least of order $n$ modulo logarithmic terms \cite{SQW18}. However, recovery of a sparse signal is possible from potentially far fewer measurements $m\ll n$, in which case the least-squares formulation studied in \cite{SQW18} has potentially exponentially many local minimizers.

Over the years, a variety of methods have been proposed to tackle the problem of sparse phase retrieval. These methods typically enforce sparsity either by adding explicit regularization via penalty terms \cite{CCG15, OYDS12}, sparsity inducing priors \cite{SR15} or sparsity constraints \cite{JH19, NJS15, SBE14}, or by using algorithmic principles such as thresholding steps \cite{CLM16, WZGAC18, ZWGC18}. 
The noisy model has been studied and stability results have been established in \cite{CCG15} for a convex relaxation-based formulation, which includes an $\ell_1$-penalty term to promote sparsity, and in \cite{JH19} for an alternating minimization-based approach that uses an explicit sparsity constraint to enforce sparsity. However, the optimal dependence of the estimation error rate on the sample size, the sparsity of the signal and the noise level was not considered in aforementioned works. Empirical risk minimization with sparsity constraint has been shown to achieve a nearly minimax-optimal rate of convergence in \cite{LM15}, though it does not lead to a tractable algorithm due to the non-convex sparsity constraint. Thresholded Wirtinger flow (TWF) \cite{CLM16} is a non-convex optimization-based algorithm that relies on thresholding steps to enforce sparsity, and has been shown to achieve a nearly minimax-optimal rate of convergence for noisy sparse phase retrieval, provided the number of measurements $m$ is of order $k^2$ modulo logarithmic terms. 

Mirror descent was first introduced by Nemirovski and Yudin \cite{NY83} for solving large-scale convex optimization problems, and has since gained in popularity in various optimization and machine learning settings. An appealing property of mirror descent is the fact that it can be adapted to the (possibly non-Euclidean) geometry of the problem at hand by choosing a suitable strictly convex function, the so-called mirror map. In convex optimization, the algorithm often admits a potential-based convergence analysis in terms of the Bregman divergence associated to the mirror map, see e.g.\ \cite{BT03, B15, NJLS09}. There is a large body of literature that analyzes mirror descent in optimization (both convex and non-convex) and online learning settings, see e.g.\ \cite{ABL13, GL13, S11}. In particular, an algorithm based on online mirror descent has been studied for an online version of phase retrieval (without sparsity) \cite{KN19}. In the case of \emph{noiseless} sparse phase retrieval, an approach based on continuous-time mirror descent has been recently shown to lead to recovery of the signal up to an arbitrary precision, without requiring explicit regularization terms or thresholding steps \cite{WR20b}. These results, however, do not apply to noisy measurements where perfect recovery of the signal is no longer possible and where the iterates of mirror descent may cease to be (approximately) sparse as time increases.

In the present work, we build upon the framework considered in \cite{WR20b} and 
we present a full analysis of \emph{early-stopped} mirror descent (both continuous-time and discrete-time), equipped with the hyperbolic entropy mirror map, applied to minimize the (non-convex) unregularized empirical risk to solve \emph{noisy} sparse phase retrieval. We establish the lower bound $\frac{\sigma}{\|\x^\star\|_2^2}\sqrt{k/m}$ for the minimax-optimal rate of convergence for noisy sparse phase retrieval under sub-exponential noise for signals $\x^\star$ with minimum non-zero entry (in modulus) on the order of $\|\x^\star\|_2/\sqrt{k}$, where $\sigma$ describes the noise level. We prove that early-stopped mirror descent achieves this rate up to a factor $\sqrt{\log n}$, provided the sample size is on the order of $k^2$ (modulo logarithmic term) and the sensing vectors $\A_1,...,\A_m$ are independent standard Gaussian vectors. Up to logarithmic terms, the sample size requirement matches that of existing results in the literature on (noisy) sparse phase retrieval \cite{CLM16, CCG15}.

The proposed approach yields a simple algorithm for noisy sparse phase retrieval, which, unlike most existing algorithms, does not rely on added regularization terms or algorithmic principles such as thresholding steps to enforce sparsity of the estimates. Running mirror descent involves tuning at most two parameters, namely the step size $\eta$ (only for the discrete-time version of the algorithm) and the mirror map parameter $\beta$. Our analysis shows that $\eta$ should be chosen smaller than a quantity depending on the signal size $\|\x^\star\|_2$, while the choice of $\beta$ depends on $\|\x^\star\|_2$ and the (known) ambient dimension $n$. As the signal size $\|\x^\star\|_2$ can be easily estimated by considering the average size of the observations \cite{CLS15, WGE17}, the tuning of $\eta$ and $\beta$ only involves known or easily estimated quantities, and, in particular, does not require knowledge (or estimation) of the sparsity $k$ or the noise level $\sigma$. The optimal stopping time, whose knowledge is not required a priori to run mirror descent, is the only parameter that depends on the unknown quantities $k$ and $\sigma$. Our numerical simulations attest that a commonly-used data-dependent stopping rule based on cross-validation, where no knowledge of $k$ or $\sigma$ is used, yields results that validate our theoretical findings.

The main idea behind the proof of our results is to combine a generic potential-based analysis of mirror descent in terms of the Bregman divergence with a quantitative control of a variational coherence property that we establish along the path of mirror descent, when properly initialized, up to a prescribed stopping time. Variational coherence has been previously used as an \emph{a-priori} assumption to establish convergence results for mirror descent in non-convex optimization. As defined in \cite{ZMBBG20}, variational coherence corresponds to the requirement that the inner product $\langle\nabla F(\x), \x - \x' \rangle$ is non-negative for any vector $\x$ in a certain region, where $F$ is the objective function to be minimized and $\x'$ is a local minimizer of $F$. Two versions of variational coherence have been considered in \cite{ZMBBG20}: a global property that holds for any vector $\x$ and any global minimizer $\x'$, and a local property that only needs to be satisfied for any vector $\x$ in an open neighborhood of a local minimizer $\x'$. The global property precludes the existence of local minimizers (that are not also global minimizers) and saddle points, and is not satisfied in the problem of sparse phase retrieval when $F$ is taken to be the unregularized empirical risk. Further, both the global and local versions of variational coherence are formulated to yield convergence towards local minimizers of the objective function, which is not the goal in noisy sparse phase retrieval as, almost surely, the signal $\x^\star$ does not coincide with a local minimizer $\x'$ of the empirical risk. Moreover, in these formulations, variational coherence only yields convergence results but it does not yield a direct control on the speed of convergence.
In contrast, our analysis unveils and exploits a \emph{quantitative} control on variational coherence within a region where the trajectory of mirror descent lies up to the stopping time, upon proper initialization. The object of our analysis is given by the inner product $\langle\nabla F(\X(t)), \X(t) - \x^\star \rangle$, where $\X(t)$ represents the iterate of mirror descent at time $t$ and where the signal $\x^\star$ now replaces the local minimizer $\x'$ of $F$. In our analysis, we establish a lower bound for this inner product that is strictly positive and directly depends on the iterates of the algorithm. This adaptive lower bound yields a control on the rate of convergence of mirror descent, which is key to investigate early stopping and establish our results.
In existing gradient-based approaches to sparse phase retrieval, the inclusion of thresholding steps confines the algorithm to the low-dimensional subspace of sparse vectors, see e.g.\ \cite{CLM16, WZGAC18, YWW19}. On the other hand, our analysis shows that, when properly initialized, the iterates of mirror descent have negligibly small off-support coordinates until a certain stopping time, allowing to focus on the subspace of (approximately) sparse vectors.

Our work adds to the literature on early stopping for iterative methods, which has been primarily developed in the context of ridge regression and kernel methods \cite{RWY14, WYW19, YLC07, ZY05}, i.e.\ convex problems based on the Euclidean geometry which, in particular, do not involve sparsity. An exception is the work \cite{VKR19}, where a connection between early stopping and sparsity has been established for gradient descent with Hadamard parametrization applied to the (convex) problem of sparse recovery. This result was later recovered in \cite{VKR20} using mirror descent with the hyperbolic entropy mirror map. Our results extend this line of research by establishing a connection between early stopping and sparsity for mirror descent applied to the non-convex problem of noisy sparse phase retrieval. Our approach shows how to apply the generic potential-based analysis of mirror descent to non-convex problems where a quantitative control on variational coherence is exhibited along the path traced by the algorithm.

The remainder of the paper is organized as follows. In Section \ref{section:sparse_phase_retrieval}, we give the required background and a brief literature review on sparse phase retrieval. In Section \ref{section:the_algorithm}, we describe our approach to solving sparse phase retrieval using mirror descent. In Section \ref{section:main_results}, we present our main results for mirror descent in continuous time and in discrete time. In Section \ref{section:numerical_simulation}, we perform a numerical study to verify that the dependence of the estimation error achieved by mirror descent with respect to the noise level $\sigma$, sample size $m$ and sparsity $k$ can match the behavior prescribed by our upper bounds. In Section \ref{section:proofs}, we present the proof of the results for the continuous-time algorithm. The proof of the results for the discrete-time algorithm involves further technicalities, and we defer it to the Appendix.

\textbf{Notation.} We use boldface letters for vectors and matrices, normal font for real numbers, and, generally, uppercase letters for random and lowercase letters for deterministic quantities. For any number $n\in\mathbb{N}$, we write $[n] = \{1,...,n\}$. For any vector $\x\in\R^n$ and $p\ge 0$, we write $\|\x\|_p = (\sum_{i=1}^n|x_i|^p)^{1/p}$, which is the standard $\ell_p$-norm for $p\ge 1$. For a random variable $X$, we write $\|X\|_{\psi_1} = \sup_{p\ge 1}p^{-1}(\E[|X|^p])^{1/p}$ for its sub-exponential norm. We denote by $\mathcal{S}=\{i\in [n]:x^\star_i\neq 0\}$ the support of $\x^\star$. For any $\x\in\R^n$ and subset of coordinates $\mathcal{C}\subseteq [n]$, we write $\x_\mathcal{C} = (x_i)_{i\in \mathcal{C}}\in\R^{|\mathcal{C}|}$, and, similarly for $i\in [n]$, we write $\x_{-i} = (x_j)_{j\in[n]\backslash\{i\}}\in \R^{n-1}$. The notation $f(n) \lesssim g(n)$ (resp.\ $f(n) \gtrsim g(n)$, $f(n)\asymp g(n)$) means that there are constants $c_1, c_2>0$ such that $f(n) \le c_1g(n)$ (resp.\ $f(n) \ge c_2g(n)$, $c_2g(n)\le f(n) \le c_1g(n)$).
\section{Sparse phase retrieval}
\label{section:sparse_phase_retrieval}
The goal in sparse phase retrieval is to reconstruct an unknown $k$-sparse signal vector $\x^\star\in\R^n$ from a set of (possibly noisy) quadratic measurements 
$Y_j = (\A_j^\top\x^\star)^2 + \varepsilon_j$,
$j=1,...,m.$
We consider the standard setting where the sensing vectors $\A_j \sim \gauss(0,\mathbf{I}_n)$ are independent standard Gaussian vectors, and the noise terms $\varepsilon_j$ are independent centered sub-exponential random variables with maximum sub-exponential norm $\sigma\ge 0$, that is $\sigma = \max_{j\in [m]}\|\varepsilon_j\|_{\psi_1}$.\footnote{For the clarity of the analytical results, we focus on noisy sparse phase retrieval with real signal and real Gaussian measurement vectors. Our results extend naturally to the complex case.}

We follow the well-established approach to estimating the signal $\x^\star$ based on non-convex optimization \cite{CLM16, WZGAC18, YWW19, ZWGC18}. Given observations $\{\A_j,Y_j\}_{j=1}^m$, we consider the following (non-convex) unregularized empirical risk minimization problem:
\begin{equation}\label{eq:objective_function}
F(\x) = \frac{1}{4m}\sum_{j=1}^m\big((\A_j^\top\x)^2 - Y_j\big)^2.
\end{equation}
It is worth mentioning that a risk function based on amplitude measurements $|\A_j^\top\x^\star|$ has also been considered \cite{WZGAC18, ZWGC18}. However, the objective function becomes non-smooth in that case, and an analysis via the mirror descent framework appears more challenging.

Without any restrictions, the non-convex function $F$ in (\ref{eq:objective_function}) could potentially have exponentially many local minimizers and saddle points if $m\lesssim n$, cf.\ \cite{SQW18}.
In the noiseless case, it has been shown in \cite{LV13} that $m\ge 4k-1$ Gaussian measurements suffice for $\{\pm\x^\star\}$ to be the sparsest minimizer of $F$ with high probability, that is 
\begin{equation}
\label{eq:problem}
\{\pm \x^\star\}=\operatorname{argmin}_{\x:\|\x\|_0\le k} F(\x).
\end{equation}
However, solving (\ref{eq:problem}) is challenging due to the non-convexity of the objective $F$ and the combinatorial nature of the sparsity constraint $\|\x\|_0\le k$.
Existing gradient-based methods such as thresholded Wirtinger flow (TWF) \cite{CLM16}, sparse truncated amplitude flow (SPARTA) \cite{WZGAC18}, compressive reweighted amplitude flow (CRAF) \cite{ZWGC18} and sparse Wirtinger flow (SWF) \cite{YWW19} minimize the non-convex empirical risk by employing a non-trivial diagonal thresholding, orthogonality-promoting or spectral initialization scheme, which produces an initial estimate close to the signal (up to a global sign) $\{\pm\x^\star\}$, inside the so-called basin of attraction. This is followed by a refinement procedure via thresholded gradient descent iterations, where the thresholding steps are necessary to enforce sparsity of the iterates.

Numerous other algorithms have been developed to solve sparse phase retrieval, which rely on a strategy other than thresholding steps to exploit sparsity. One such strategy is to include an $\ell_1$-regularization term to the objective function, which is commonly found in convex relaxation-based approaches such as compressive phase retrieval via lifting (CPRL) \citep{OYDS12} and SparsePhaseMax \citep{HV16}, but also in the generalized message passing algorithm PR-GAMP \cite{SR15}, which includes a sparsity promoting prior. Alternatively, one can directly restrict the search to $k$-sparse vectors, either with a preliminary support recovery step as in the alternating minimization algorithm SparseAltMinPhase \cite{NJS15}, or by updating the estimated support in every iteration of the search in a greedy fashion as in GESPAR \cite{SBE14}.   

Hadamard Wirtinger flow (HWF) \cite{WR20} is an algorithm that performs gradient descent on the unregularized empirical risk (\ref{eq:objective_function}) using the Hadamard parametrization, and it can be recovered as a first-order approximation to mirror descent equipped with the hyperbolic entropy mirror map \cite{VKR20, WR20b}. 
HWF has been empirically studied in \cite{WR20}, and in some settings it has been shown to exhibit favorable sample complexities compared to other gradient-based approaches to sparse phase retrieval in numerical simulations.
\section{Mirror descent algorithms}
\label{section:the_algorithm}
Our proposed approach to solving noisy sparse phase retrieval consists of running unconstrained mirror descent equipped with the hyperbolic entropy mirror map to minimize the empirical risk (\ref{eq:objective_function}). In particular, when initialized as described below, we require neither an added regularization term nor thresholding steps to enforce sparsity of the estimates.
We begin by giving a brief overview of the main quantities and definitions for mirror descent. The key object defining the geometry of mirror descent is the \textit{mirror map}.
\begin{definition}
Let $\mathcal{D}\subseteq \R^n$ be a convex open set. We say that $\Phi:\mathcal{D}\rightarrow \R$ is a mirror map if it is strictly convex, differentiable and $\{\nabla \Phi(\x):\x\in\mathcal{D}\} = \R^n$.
\end{definition}
A central quantity in the analysis of mirror descent is the \textit{Bregman divergence} associated to a mirror map $\Phi$, which is a measure of distance between two points $\x, \mathbf{y}\in \mathcal{D}$ given by 
\begin{equation*}
D_{\Phi}(\x, \mathbf{y}) = \Phi(\x) - \Phi(\mathbf{y}) - \nabla \Phi(\mathbf{y})^\top(\x-\mathbf{y}).
\end{equation*}
As a special case, we obtain the squared Euclidean distance by choosing the mirror map $\Phi(\x) = \|\x\|_2^2$, in which case mirror descent coincides with standard gradient descent. 

As we consider unconstrained mirror descent, we have $\mathcal{D} = \R^n$. Let $F:\R^n\rightarrow \R$ be the objective function we seek to minimize. The trajectory of mirror descent is characterized by the mirror map $\Phi$ and an initial point $\X(0)$, and is, in continuous time, defined by the ordinary differential equation \cite{NY83}
\begin{equation}
\label{eq:mirror_descent_continuous}
\frac{d}{dt}\X(t) = -\big(\nabla^2\Phi(\X(t))\big)^{-1}\nabla F(\X(t)).
\end{equation}
In discrete time, the mirror descent update is given by 
\begin{equation}
\label{eq:mirror_descent_discrete}
\nabla\Phi (\X^{t+1}) = \nabla\Phi (\X^t) - \eta \nabla F(\X^t),
\end{equation}
where $\eta>0$ is the step-size. Throughout, we write $\X(t)$ and $\X^t$ for the iterates of mirror descent in continuous and discrete time, respectively.

For the mirror map, we choose the hyperbolic entropy mirror map \cite{GHS20} given by
\begin{equation}
\label{eq:mirror_map}
\Phi(\x) = \sum_{i=1}^n\biggl(x_i \operatorname{arcsinh}\Bigl(\frac{x_i}{\beta}\Bigr) - \sqrt{x_i^2 + \beta^2}\biggr),
\end{equation}
for some parameter $\beta>0$. We provide a discussion on the choice of the parameter $\beta$ in Section \ref{section:main_results}. This choice of mirror map is motivated by the fact that, equipped with this mirror map, mirror descent is approximated by gradient descent with Hadamard parametrization \cite{GHS20, VKR20}, which has been studied in the recovery of low-rank structures in sparse recovery \cite{H17, VKR19, ZYH19} and matrix factorization \cite{ACHL19, GWBNS17, LMZ18}.

For the initialization, we follow the approach outlined in \cite{WR20}, which estimates a single coordinate on the support of the signal $\x^\star$ and initializes all other coordinates to zero. 
\begin{align}
\label{eq:initialization}
X_i(0) \equiv X_i^0 = \begin{cases}
\frac{1}{\sqrt{3}} \sqrt{\frac{1}{m}\sum_{j=1}^mY_j} \enskip & i=I_0\\
0 & i\neq I_0
\end{cases},
\qquad
I_0 \in \operatorname{arg}\max_i\Biggl\{\frac{1}{m}\sum_{j=1}^mY_jA_{ji}^2\Biggr\}.
\end{align}
The term $(\frac{1}{m}\sum_{j=1}^mY_j)^{1/2}$ is an estimate of the magnitude $\|\x^\star\|_2$, see e.g.\ \cite{CLS15, WGE17}, and Lemma 1 of \cite{WR20} guarantees in the noiseless case that $x^\star_{I_0}\neq 0$ with high probability, provided $m\gtrsim (x^\star_{max})^{-2}\max\{k\log n,\, \log^3n\}$, where $x^\star_{max} = \max_i|x^\star_i|/\|\x^\star\|_2$. The following is an immediate extension of Lemma 1 of \cite{WR20} to the noisy observation model.
\begin{lemma}
\label{lemma}
\begin{sloppypar}
Let $\x^\star\in\R^n$ be any $k$-sparse vector, and let the observations $\{Y_j, \A_j\}_{j=1}^m$ be given as in (\ref{eq:obs_model}), where $\A_j\sim\gauss(\mathbf{0}, \mathbf{I}_n)$ i.i.d.\ and $\{\varepsilon_j\}_{j=1}^m$ are independent centered sub-exponential random variables with maximum sub-exponential norm $\sigma = \max_{j}\|\varepsilon_j\|_{\psi_1}$. Let $I_0 \in \operatorname{arg}\max_i \{\frac{1}{m}\sum_{j=1}^mY_jA_{ji}^2\}$. There exist universal constants $c_s,c_p>0$ such that the following holds. If $m\ge c_s(1 + \frac{\sigma^2}{\|\x^\star\|_2^4})(x^\star_{max})^{-2}\max\{k\log n,\, \log^3n\}$,  then, with probability at least $1-c_pn^{-10}$ we have $|x^\star_{I_0}|\ge \frac{1}{2}x^\star_{max}\|\x^\star\|_2$.
\end{sloppypar}
\end{lemma}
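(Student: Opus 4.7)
The plan is to decompose $T_i := \frac{1}{m}\sum_{j=1}^m Y_j A_{ji}^2$ as the sum of the noiseless statistic $S_i = \frac{1}{m}\sum_j (\A_j^\top\x^\star)^2 A_{ji}^2$, for which a deviation bound is already supplied by Lemma 1 of \cite{WR20}, plus a noise term $N_i = \frac{1}{m}\sum_j \varepsilon_j A_{ji}^2$. Since the $\varepsilon_j$ are centred and independent of the $\A_j$, one has $\E[N_i]=0$, and a short fourth-moment calculation for standard Gaussians gives $\E[T_i] = \|\x^\star\|_2^2 + 2(x^\star_i)^2$. Hence the population argmax is any index $i^\star$ attaining $|x^\star_{i^\star}| = \max_i |x^\star_i|$, and $\E[T_{i^\star}] - \E[T_i] = 2((x^\star_{i^\star})^2 - (x^\star_i)^2)$. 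The goal then reduces to showing that $\max_i |T_i - \E[T_i]| < \tfrac{3}{4}(x^\star_{i^\star})^2$ with the stated probability, which is enough to conclude by comparing $T_{I_0}$ to $T_{i^\star}$.

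For the noiseless statistic I invoke Lemma 1 of \cite{WR20} directly: under $m \gtrsim (x^\star_{max})^{-2}\max\{k\log n,\log^3 n\}$, it yields $\max_i|S_i - \E[S_i]| \le c_0 (x^\star_{i^\star})^2$ with probability at least $1-O(n^{-10})$ for a suitably small constant $c_0$. For the noise part I would condition on the matrix of sensing vectors: given $\A$, the sum $\frac{1}{m}\sum_j \varepsilon_j A_{ji}^2$ is a weighted sum of independent centred sub-exponentials, with variance proxy $\frac{\sigma^2}{m^2}\sum_j A_{ji}^4$ and maximum weight $\frac{\sigma}{m}\max_j A_{ji}^2$. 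Bernstein's inequality conditional on $\A$, combined with standard concentration of $\frac{1}{m}\sum_j A_{ji}^4$ around $3$ and with the tail bound $\max_j A_{ji}^2 \lesssim \log(nm)$, yields $|N_i| \lesssim \sigma\sqrt{(\log n)/m} + \sigma \log^2(nm)/m$ with probability at least $1-O(n^{-11})$, and a union bound over $i\in[n]$ preserves an overall $n^{-10}$ failure probability up to constants.

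Combining the two deviation bounds under the stated sample-size requirement — whose extra factor $1+\sigma^2/\|\x^\star\|_2^4$ is precisely what makes $\sigma\sqrt{(\log n)/m} \lesssim (x^\star_{max})^2\|\x^\star\|_2^2$ — gives $\max_i |T_i - \E[T_i]| < \tfrac{3}{4}(x^\star_{i^\star})^2$ with probability $1-O(n^{-10})$. The defining inequality $T_{I_0}\ge T_{i^\star}$ together with subtraction of the common $\|\x^\star\|_2^2$ then forces $2(x^\star_{I_0})^2 \ge 2(x^\star_{i^\star})^2 - \tfrac{3}{2}(x^\star_{i^\star})^2$, i.e.\ $|x^\star_{I_0}|\ge \tfrac{1}{2}|x^\star_{i^\star}| = \tfrac{1}{2}x^\star_{max}\|\x^\star\|_2$. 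The main obstacle is the uniform control of $N_i$: since $\varepsilon_j A_{ji}^2$ is sub-Weibull of order $1/2$ rather than sub-exponential, a direct Bernstein bound does not apply, and the conditioning argument above (or an equivalent truncation of $|\varepsilon_j|$ at level $O(\sigma\log m)$) is what produces the polylogarithmic factor responsible for the $\log^3 n$ term in the sample-size condition.
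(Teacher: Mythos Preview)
Your proposal is correct and matches the paper's approach essentially one-for-one: the paper also splits off the noise contribution $\frac{1}{m}\sum_j \varepsilon_j A_{ji}^2$, bounds it by $c\sigma\sqrt{(\log n)/m}$ via a truncation argument (packaged as Lemma~\ref{lemma:tech3}, equation~(\ref{eq:tech3_5}) with $l=i$), and then defers the noiseless part to the proof of Lemma~1 in \cite{WR20}. Your conditioning-on-$\A$ Bernstein argument is an equivalent route to the same noise bound, and your explicit argmax comparison at the end simply spells out what ``the rest of the proof follows the same steps'' means.
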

\begin{proof}
By Lemma \ref{lemma:tech3} in the Appendix \ref{supp:technical_lemmas}, $|\frac{1}{m}\sum_{j=1}^m\varepsilon_jA_{ji}^2|\le c\sigma \sqrt{\frac{\log n}{m}}$ holds with probability $1-c_pn^{-10}$, where $c>0$ is an absolute constant. 
The rest of the proof follows the same steps as the proof of Lemma 1 in \cite{WR20}, and we omit the details. 
\end{proof}
\section{Main results}
\label{section:main_results}
We show that, with high probability, early-stopped mirror descent recovers any $k$-sparse signal $\x^\star\in\R^n$ with $x^\star_{min} \gtrsim 1/\sqrt{k}$ from $(1 + \frac{\sigma^2}{\|\x^\star\|_2^4})k^2$ (modulo logarithmic term) Gaussian measurements, where we write $x^\star_{min} = \min_{i:x^\star_i\neq 0} |x^\star_i|/\|\x^\star\|_2$.

We begin by characterizing the relationship between the Bregman divergence $D_{\Phi}(\x^\star,\x)$ associated to the hyperbolic entropy mirror map $\Phi$ in (\ref{eq:mirror_map}) and the $\ell_2$-norm $\|\x-\x^\star\|_2$.
\begin{lemma}[\normalfont{\cite{WR20b}}] \label{lemma2} 
Let $\x^\star\in \R^n$ be any $k$-sparse vector with  $x^\star_{min} \ge c_\star/\sqrt{k}$ for a constant $c_\star>0$. Let $\mathcal{S}=\{i\in [n] :x^\star_i\neq 0\}$ be its support, and let $\Phi$ be as in (\ref{eq:mirror_map}) with parameter $\beta>0$.
\begin{itemize}
\item For any vector $\x\in \R^n$, we have
\begin{equation}\label{eq:lemma2lb}
\|\x-\x^\star\|_2^2 \le 2\sqrt{\max\{\|\x\|_{\infty}^2,\|\x^\star\|_\infty^2\}+\beta^2} \, D_{\Phi}(\x^\star,\x).
\end{equation}
\item Let $\x\in \R^n$ be any vector with $x_ix^\star_i\ge 0$ (no mismatched sign) and $|x_i| \ge \frac{1}{2}|x^\star_i|$ for all $i=1,...,n$. Then, we have
\begin{equation}\label{eq:lemma2ub}
D_{\Phi}(\x^\star, \x) \le \frac{\sqrt{k}}{c_\star\|\x^\star\|_2}\|\x_{\mathcal{S}}-\x^\star_{\mathcal{S}}\|_2^2 + \|\x_{\mathcal{S}^c}\|_1.
\end{equation}
\end{itemize}
\end{lemma}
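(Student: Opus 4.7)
The plan is to exploit the fact that the hyperbolic entropy $\Phi$ is separable: $\Phi(\x)=\sum_i \phi(x_i)$ with $\phi(u)=u\operatorname{arcsinh}(u/\beta)-\sqrt{u^2+\beta^2}$, so the Bregman divergence splits coordinatewise, $D_\Phi(\x^\star,\x)=\sum_i D_\phi(x_i^\star,x_i)$. The derivatives of the scalar potential are $\phi'(u)=\operatorname{arcsinh}(u/\beta)$ and $\phi''(u)=(u^2+\beta^2)^{-1/2}$, and both bounds in the lemma will come from one-dimensional second-order Taylor expansions of $\phi$, combined with the elementary inequality $\sqrt{u^2+\beta^2}-\beta\le |u|$ to deal with the off-support terms in (\ref{eq:lemma2ub}).

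For the lower bound (\ref{eq:lemma2lb}), I would write the exact Taylor remainder $D_\phi(a,b)=\tfrac12\phi''(\xi)(a-b)^2$ with $\xi$ between $a$ and $b$, and use that $\xi^2\le\max\{a^2,b^2\}$ so that $\phi''(\xi)=(\xi^2+\beta^2)^{-1/2}\ge(\max\{a^2,b^2\}+\beta^2)^{-1/2}$. Applied coordinatewise with $a=x_i^\star$, $b=x_i$, this yields
\begin{equation*}
D_\Phi(\x^\star,\x)\;\ge\;\sum_{i=1}^n\frac{(x_i-x_i^\star)^2}{2\sqrt{\max\{x_i^2,(x_i^\star)^2\}+\beta^2}}\;\ge\;\frac{\|\x-\x^\star\|_2^2}{2\sqrt{\max\{\|\x\|_\infty^2,\|\x^\star\|_\infty^2\}+\beta^2}},
\end{equation*}
which is exactly (\ref{eq:lemma2lb}) after rearrangement.

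For the upper bound (\ref{eq:lemma2ub}), I would split the sum into $i\in\mathcal{S}$ and $i\notin\mathcal{S}$. For $i\notin\mathcal{S}$ we have $x_i^\star=0$, and a direct computation gives $D_\phi(0,x_i)=\sqrt{x_i^2+\beta^2}-\beta\le|x_i|$, so the off-support contribution is at most $\|\x_{\mathcal{S}^c}\|_1$. For $i\in\mathcal{S}$, the hypotheses $x_ix_i^\star\ge 0$ and $|x_i|\ge\tfrac12|x_i^\star|$ imply that any point $\xi$ between $x_i^\star$ and $x_i$ satisfies $|\xi|\ge\tfrac12|x_i^\star|$; hence $\phi''(\xi)\le 1/|\xi|\le 2/|x_i^\star|\le 2\sqrt{k}/(c_\star\|\x^\star\|_2)$ using the assumption $x^\star_{min}\ge c_\star/\sqrt{k}$. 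Applying the Taylor expansion again gives $D_\phi(x_i^\star,x_i)\le\frac{\sqrt k}{c_\star\|\x^\star\|_2}(x_i-x_i^\star)^2$, and summing over $i\in\mathcal{S}$ produces the first term on the right-hand side of (\ref{eq:lemma2ub}).

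I do not foresee a real obstacle: once the separability of $\Phi$ is noted, both bounds reduce to uniform one-sided control of $\phi''$ on the relevant interval. The only delicate point is making sure that in the upper bound the same-sign plus $|x_i|\ge\tfrac12|x_i^\star|$ hypothesis is used at exactly the right place, namely to guarantee $|\xi|\ge\tfrac12|x_i^\star|$ and thereby convert the lower bound on $|x_i^\star|$ into an upper bound on $\phi''(\xi)$; without same-sign, $\xi$ could sit near $0$ and $\phi''$ would blow up to $1/\beta$, breaking the argument.
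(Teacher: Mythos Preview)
Your proposal is correct and follows essentially the same approach as the paper: both exploit the separability of $\Phi$, reduce to coordinatewise bounds, and control the curvature $\phi''(u)=(u^2+\beta^2)^{-1/2}$ from above and below on the interval between $x_i$ and $x_i^\star$. The only cosmetic difference is that the paper computes the derivative $g_i'(z)=(z-x_i^\star)/\sqrt{z^2+\beta^2}$ and integrates it explicitly, whereas you invoke the Lagrange remainder $D_\phi(a,b)=\tfrac12\phi''(\xi)(a-b)^2$; these are equivalent formulations of the same estimate.
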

A proof of Lemma \ref{lemma2} appears in the appendix of \cite{wu2020continuoustime}. For completeness, we include a proof of Lemma \ref{lemma2} in Section \ref{proof:lemma2} below. Lemma \ref{lemma2} implies that, if we are interested in convergence with respect to the $\ell_2$-norm, we can consider the Bregman divergence $D_{\Phi}$ as a proxy for it. As it is impossible to distinguish $\x^\star$ from $-\x^\star$ using phaseless measurements, we consider the $\ell_2$-distance and Bregman divergence from the solution set $\{\pm \x^\star\}$ given by $\operatorname{dist}(\x^\star, \x) = \min\{\|\x-\x^\star\|_2, \|\x+\x^\star\|_2\}$ and $\operatorname{dist}_{\Phi}(\x^\star,\x) = \min\{D_{\Phi}(\x^\star,\x), D_{\Phi}(-\x^\star,\x) \}$, respectively. Applying the bound in (\ref{eq:lemma2lb}) to both $\x^\star$ and $-\x^\star$, we find the upper bound 
\begin{equation*}
\operatorname{dist}(\x^\star,\x)^2 \le  2\sqrt{\max\{\|\x\|_{\infty}^2,\|\x^\star\|_\infty^2\}+\beta^2}\, \operatorname{dist}_{\Phi}(\x^\star,\x).
\end{equation*}

The setting for our convergence results is summarized in the following assumption.
\begin{assumption}\label{assumption}
For some universal constants $c_\star,c_s>0$, the following holds.
The signal $\x^\star\in \R^n$ is $k$-sparse with $x^\star_{min} \ge c_\star/\sqrt{k}$. The observations $\{Y_j, \A_j\}_{j=1}^m$ are i.i.d.\ given as in (\ref{eq:obs_model}), with $\A_j\sim \gauss(\mathbf{0}, \mathbf{I}_n)$ and $\varepsilon_j$ centered sub-exponential with maximum sub-exponential norm $\sigma = \max_{j}\|\varepsilon_j\|_{\psi_1}$. The sample size is at least $m\ge c_s(1 + \frac{\sigma^2}{\|\x^\star\|_2^4})\max\{k^2\log^2n,\;\log^5n\}$.
\end{assumption}
We now formulate our results for continuous-time mirror descent in terms of the Bregman divergence $D_{\Phi}$. Lemma \ref{lemma2} allows to translate the upper bound on $\operatorname{dist}_{\Phi}(\x^\star, \X(t))$ into an upper bound on the $\ell_2$-distance $\operatorname{dist}(\x^\star,\X(t))$. Recall that we denote the support of the signal $\x^\star$ by $\mathcal{S}=\{i\in [n]:x^\star_i\neq 0\}$ .
\begin{theorem}\label{theorem}
Let Assumption \ref{assumption} hold. There exist universal constants $c_p,c,c_1,c_2>0$ such that the following holds. Let $\X(t)$ be defined by the continuous-time mirror descent algorithm given by (\ref{eq:mirror_descent_continuous}) with mirror map (\ref{eq:mirror_map}) and initialization (\ref{eq:initialization}) with $\beta\le c_1 \|\x^\star\|_2/n^3$. Let $\delta = \sqrt{n\beta/\|\x^\star\|_2}$, and define
\begin{equation*}
T_2 = \inf\biggl\{t>0: \frac{\operatorname{dist}(\x^\star, \X(t))}{\|\x^\star\|_2} \le c\max\biggl\{\sqrt{c_\star\sqrt{k}\delta}, \; \frac{\sigma}{\|\x^\star\|_2^2} \sqrt{\frac{k\log n}{m}} \biggr\}\biggr\}.
\end{equation*}
Then, there exists $T_1\le \frac{c_2}{\|\x^\star\|_2^3}k\log(\frac{\|\x^\star\|_2}{\beta}) \log (k\log \frac{\|\x^\star\|_2}{\beta})$ such that
\begin{equation}
\label{thm:convergence}
\frac{\operatorname{dist}_{\Phi}(\x^\star,\X(t))}{\|\x^\star\|_2} \le \frac{6\sqrt{k}}{c_\star} \exp\biggl(-\frac{c_\star\|\x^\star\|_2^3}{4\sqrt{k}} (t - T_1)\biggr) \quad \text{for all } T_1\le t\le T_2,
\end{equation}
with probability at least $1-c_pn^{-10}$.
Furthermore, for all $t\le T_2$, we have
\begin{equation}
\label{thm:off_support}
\|\X_{\mathcal{S}^c}(t)\|_1 \le \delta \|\x^\star\|_2.
\end{equation}
\end{theorem}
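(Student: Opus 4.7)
\textbf{Plan for Theorem \ref{theorem}.} The approach is to couple a standard potential-based analysis in the Bregman divergence with a bootstrap that simultaneously maintains approximate sparsity of $\X(t)$, a magnitude/sign condition on the support of $\X(t)$, and a quantitative variational-coherence lower bound. Concretely, I would start from the elementary identity for the continuous-time flow (\ref{eq:mirror_descent_continuous}),
\begin{equation*}
\frac{d}{dt} D_{\Phi}(\x^\star,\X(t)) = \langle \nabla F(\X(t)), \x^\star - \X(t)\rangle,
\end{equation*}
which reduces the proof to a time-dependent lower bound on $\langle \nabla F(\X(t)), \X(t)-\x^\star\rangle$ (up to a global sign, consistent with the $\operatorname{dist}$ formulation).

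A direct Gaussian moment computation gives $\mathbb{E}[\nabla F(\x)] = 3\|\x\|_2^2\x - \|\x^\star\|_2^2\x - 2\langle\x,\x^\star\rangle\x^\star$, so the population inner product $\mathbb{E}[\langle\nabla F(\x),\x-\x^\star\rangle]$ vanishes at $\pm\x^\star$ and, expanded around $\x^\star+\mathbf{h}$, dominates $2\|\x^\star\|_2^2\|\mathbf{h}\|_2^2$ at leading order. I would upgrade this to a uniform-in-$\x$ empirical lower bound over approximately $k$-sparse vectors via sub-exponential concentration of polynomials of Gaussians together with a standard $\varepsilon$-net argument, using the sample size assumption $m\gtrsim k^2\log^2 n$; the noise cross term $\frac{1}{m}\sum_j \varepsilon_j\langle\A_j,\X(t)\rangle\langle\A_j,\X(t)-\x^\star\rangle$ is controlled separately by a sub-exponential Bernstein tail and contributes an error of order $\sigma\sqrt{k\log n/m}\cdot\|\x^\star\|_2\cdot\|\X(t)-\x^\star\|_2$. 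Combining these with the upper bound (\ref{eq:lemma2ub}) of Lemma \ref{lemma2} to trade $\|\X(t)-\x^\star\|_2^2$ against $D_{\Phi}$ and the off-support mass $\|\X_{\mathcal{S}^c}(t)\|_1$ then yields a differential inequality
\begin{equation*}
\frac{d}{dt} D_{\Phi}(\x^\star,\X(t)) \le -\frac{c_\star\|\x^\star\|_2^3}{2\sqrt{k}}\,D_{\Phi}(\x^\star,\X(t)) + \mathrm{error},
\end{equation*}
which integrates to (\ref{thm:convergence}); the stopping time $T_2$ is defined precisely as the level at which $D_\Phi$ meets the accumulated error.

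For the application of (\ref{eq:lemma2ub}) to be legitimate one needs sign matching and $|X_i(t)|\ge|x^\star_i|/2$ on $\mathcal{S}$, together with the off-support bound (\ref{thm:off_support}). For $i\in\mathcal{S}^c$ the ODE simplifies to $\dot{X}_i = -\sqrt{X_i(t)^2+\beta^2}\,\nabla_i F(\X(t))$ with $X_i(0)=0$, and a Gr\"onwall estimate yields $|X_i(t)|+\beta\le\beta\exp\bigl(\int_0^t|\nabla_i F(\X(s))|\,ds\bigr)$; uniform concentration of $\nabla_i F$ over off-support coordinates bounds the integrated gradient by a universal constant for $t\le T_2$, so $|X_i(t)|\lesssim\beta$ throughout, and summing the $n-k$ contributions together with $\beta\le c_1\|\x^\star\|_2/n^3$ produces (\ref{thm:off_support}). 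For $i\in\mathcal{S}\setminus\{I_0\}$, near the initialization $\x\approx X_{I_0}e_{I_0}$ the leading term of $-\nabla_i F$ is $2X_{I_0}x^\star_{I_0}x^\star_i$, so $X_i$ first grows linearly while $|X_i|\lesssim\beta$ and then geometrically with rate of order $X_{I_0}x^\star_{I_0}|x^\star_i|\asymp\|\x^\star\|_2^3/k$; the time for every $i\in\mathcal{S}$ to reach $|X_i|\ge|x^\star_i|/2$ with the correct sign (matching whichever of $\x^\star$ or $-\x^\star$ is selected by the sign of $x^\star_{I_0}$) is then the warm-up time $T_1$ from the theorem.

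The main obstacle is that the off-support bound, the sign/magnitude condition on $\mathcal{S}$, and the variational-coherence lower bound are mutually coupled: each invariant uses the other two, and none follows from a standing assumption because $\x^\star$ is not a critical point of the empirical risk $F$. The proof must therefore propagate all three invariants jointly along the trajectory by a bootstrap from the initialization up to $T_2$; the sample size $m\gtrsim k^2\log^2 n$ supplies the uniform-concentration margin required to close the loop, and the tiny scale $\beta\le c_1\|\x^\star\|_2/n^3$ guarantees that the off-support mass remains well within $\delta\|\x^\star\|_2$ throughout both the warm-up phase $[0,T_1]$ and the exponential-decay phase $[T_1,T_2]$.
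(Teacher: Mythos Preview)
Your overall architecture---the Bregman potential identity, the population variational-coherence computation, the $\varepsilon$-net concentration over approximately sparse vectors, and the bootstrap of three coupled invariants---matches the paper's approach closely. However, there is a concrete gap in your off-support argument.

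You claim that for $i\in\mathcal{S}^c$, ``uniform concentration of $\nabla_i F$ over off-support coordinates bounds the integrated gradient by a universal constant for $t\le T_2$, so $|X_i(t)|\lesssim\beta$ throughout.'' This is false. Concentration (your own Lemma-type bound, or the paper's Lemma~\ref{lemma:support1}) only gives $|\nabla_i F(\X(t))|\lesssim \|\X_{\mathcal S}(t)\|_1/k$, which is of order $1/k$ once the trajectory has nontrivial mass. Since $T_1$ itself is of order $k\log(1/\beta)$ (by your own coordinate-growth heuristic), the integral $\int_0^{T_1}|\nabla_i F(\X(s))|\,ds$ is of order $\log(1/\beta)$, not $O(1)$; feeding this into Gr\"onwall yields $|X_i(T_1)|\lesssim\beta\cdot(1/\beta)^{c}$ for some $c>0$, which blows up rather than staying $\lesssim\beta$. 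No choice of universal constants in the sample-size assumption repairs this, because shrinking the concentration constant $\gamma$ forces $m\gtrsim k^2/\gamma^2$, and you would need $\gamma\lesssim 1/\log(1/\beta)$.

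The paper's fix is not an absolute bound on $\int|\nabla_i F|$ but a \emph{comparison}: for the slowest support coordinate $I_1\in\mathcal S$ one shows $|\nabla_j F(\X(t))|\le \gamma\,|\nabla_{I_1} F(\X(t))|$ for every $j\notin\mathcal S$ with an explicit $\gamma<1$ (e.g.\ $\gamma=\tfrac{1}{4\sqrt{2}}$), uniformly in $t\le T_1$. Since $X_{I_1}$ grows from $0$ to $\tfrac12|x^\star_{I_1}|$ over $[0,T_1]$, the same Gr\"onwall comparison forces $|X_j(T_1)|\le \beta\,(X_{I_1}(T_1)/\beta)^{\gamma'}\le \beta^{3/4}$, not $\beta$. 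Summing over $n$ coordinates then gives $\|\X_{\mathcal S^c}(T_1)\|_1\le n\beta^{3/4}$, and a second (shorter) comparison on $[T_1,T_2]$ using $T_2-T_1\lesssim\sqrt{k}\log(1/\delta)$ keeps this below $\delta=\sqrt{n\beta}$, which is exactly where the scaling $\beta\lesssim n^{-3}$ enters. Your proposal is missing this comparison mechanism; once you insert it, the rest of your outline goes through essentially as in the paper. A secondary point: before $|X_i|\ge\tfrac12|x^\star_i|$ on $\mathcal S$ you cannot invoke (\ref{eq:lemma2ub}), so the differential inequality $\dot D_\Phi\le -\tfrac{c_\star}{2\sqrt{k}}D_\Phi$ is unavailable on $[0,T_1]$; the paper instead uses a two-regime bound ($\langle\nabla F,\X-\x^\star\rangle\gtrsim 1/\sqrt{k}$ when $\|\X\|_2^2\le\tfrac25$, and $\gtrsim\|\X-\x^\star\|_2^2$ otherwise) together with a count of ``small'' support coordinates to control $T_1$, which is where the extra $\log(k\log(1/\beta))$ factor in the bound for $T_1$ originates.
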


Analogous results also hold in discrete time.
\begin{theorem}\label{theorem_discrete}
Let Assumption \ref{assumption} hold. There exist universal constants $c_p,c,c_1,c_2,c_3>0$ such that the following holds. Let $\X^t$ be defined by the discrete-time mirror descent algorithm given by (\ref{eq:mirror_descent_discrete}) with mirror map (\ref{eq:mirror_map}) and initialization (\ref{eq:initialization}) with $\beta\le c_1 \|\x^\star\|_2/n^3$ and $\eta\le c_3/\|\x^\star\|_2^3$. Let $\delta = \sqrt{n\beta/\|\x^\star\|_2}$, and define
\begin{equation*}
T_2 = \inf\biggl\{t>0: \frac{\operatorname{dist}(\x^\star, \X^t)}{\|\x^\star\|_2} \le c\max\biggl\{\sqrt{c_\star\sqrt{k}\delta}, \; \frac{\sigma}{\|\x^\star\|_2^2} \sqrt{\frac{k\log n}{m}} \biggr\}\biggr\}.
\end{equation*}
Then, there exists $T_1\le \frac{c_2}{\eta \|\x^\star\|_2^3}k\log(\frac{\|\x^\star\|_2}{\beta}) \log (k\log \frac{\|\x^\star\|_2}{\beta})$ such that
\begin{equation}
\label{thm:convergence_discrete}
\frac{\operatorname{dist}_{\Phi}(\x^\star,\X^t)}{\|\x^\star\|_2} \le \frac{6\sqrt{k}}{c_\star} \biggl(1 - \frac{c_\star\eta\|\x^\star\|_2^3}{8\sqrt{k}}\biggr)^{t-T_1} \quad \text{for all } T_1\le t\le T_2,
\end{equation}
with probability at least $1-c_pn^{-10}$.
Furthermore, for all $t\le T_2$, we have
\begin{equation}
\label{thm:off_support_discrete}
\|\X_{\mathcal{S}^c}^t\|_1 \le \delta \|\x^\star\|_2.
\end{equation}
\end{theorem}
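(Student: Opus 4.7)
The plan is to lift the potential-based analysis of Theorem \ref{theorem} to the discrete setting by means of the three-point identity for Bregman divergences. With $\mathbf{a}=\X^{t+1}$, $\mathbf{b}=\X^t$, $\mathbf{c}=\x^\star$ and the update rule (\ref{eq:mirror_descent_discrete}), this produces the recursion
\begin{equation*}
D_\Phi(\x^\star, \X^{t+1}) = D_\Phi(\x^\star, \X^t) - \eta\langle \nabla F(\X^t), \X^t - \x^\star\rangle + \eta\langle \nabla F(\X^t), \X^t - \X^{t+1}\rangle - D_\Phi(\X^{t+1}, \X^t),
\end{equation*}
in which the second term is the discrete analogue of the dissipation that drives continuous-time convergence, while the last two terms form the discretization error, to be absorbed via the step-size condition $\eta\le c_3/\|\x^\star\|_2^3$.

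First, I would re-establish along the discrete trajectory the quantitative variational-coherence bound used in the continuous-time proof: conditional on a high-probability concentration event for $\nabla F$, for any $\X^t$ with no sign mismatch on $\mathcal{S}$, with $|X^t_i|\ge |x^\star_i|/2$ on $\mathcal{S}$, and with small off-support mass, one has $\langle \nabla F(\X^t),\X^t-\x^\star\rangle\gtrsim (\|\x^\star\|_2^3/\sqrt{k})\,D_\Phi(\x^\star,\X^t)$ up to a noise term of order $\sigma\|\x^\star\|_2\sqrt{(k\log n)/m}\,\|\X^t-\x^\star\|_2$. In parallel, I would inductively verify that the iterates remain in this good region for all $t\le T_2$. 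The explicit update $X^{t+1}_i=\beta\sinh(\operatorname{arcsinh}(X^t_i/\beta)-\eta\,(\nabla F(\X^t))_i)$ makes sign and magnitude preservation on $\mathcal{S}$ transparent when $\eta$ is small enough, and allows the multiplicative growth of the off-support coordinates to be telescoped across $t$ steps, yielding $\|\X^t_{\mathcal{S}^c}\|_1\le \delta\|\x^\star\|_2$ throughout the window, which is exactly (\ref{thm:off_support_discrete}).

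Next, I would bound the discretization error by a second-order Taylor expansion of $\Phi$, using that the inverse Hessian of the hyperbolic entropy is diagonal with entries $\sqrt{x_i^2+\beta^2}$; this gives $D_\Phi(\X^{t+1},\X^t)\le \tfrac{\eta^2}{2}\sum_i\sqrt{(X_i^t)^2+\beta^2}\,(\nabla F(\X^t))_i^2$ with a symmetric bound on the cross term. Controlling $\|\nabla F(\X^t)\|_2$ in terms of $\|\X^t-\x^\star\|_2$ via the same concentration event turns this into an error of order $\eta^2\|\x^\star\|_2^4\, D_\Phi(\x^\star,\X^t)$ plus a noise contribution, and the step-size condition $\eta\le c_3/\|\x^\star\|_2^3$ ensures that the linear dissipation strictly dominates, producing the geometric contraction factor $1-c_\star\eta\|\x^\star\|_2^3/(8\sqrt{k})$ in (\ref{thm:convergence_discrete}). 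The time $T_1$ is identified as the first iterate at which the conditions of the upper bound (\ref{eq:lemma2ub}) in Lemma \ref{lemma2} are satisfied, so that $D_\Phi(\x^\star,\X^t)\le 6\sqrt{k}/c_\star$ can serve as the starting value of the geometric recursion; before $T_1$, a separate argument tracks the growth of the single support coordinate seeded by (\ref{eq:initialization}) to bound $T_1$.

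The main obstacle is managing the interplay between the three inductive invariants (sign preservation, $|X^t_i|\ge |x^\star_i|/2$ on $\mathcal{S}$, and small off-support mass) and the discretization error under the hyperbolic entropy's ill-conditioned Hessian, whose inverse-diagonal $\sqrt{x_i^2+\beta^2}$ grows with $|x_i|$. Concretely, any iterate with a coordinate exceeding a constant multiple of $\|\x^\star\|_\infty$ would inflate the quadratic discretization error beyond what the step size can absorb, so the proof must a priori rule out such excursions before they occur; the remaining steps are quantitative discrete-time counterparts of arguments already deployed in the continuous-time analysis.
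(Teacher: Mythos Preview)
Your plan matches the paper's proof in its essential architecture: the Bregman three-point recursion, the reuse of the continuous-time variational-coherence lower bounds of Lemmas \ref{lemma:support1}--\ref{lemma:support3}, the identification of $T_1$ with the first time every support coordinate reaches half its target, and the inductive confinement of the trajectory (boundedness, sign preservation, small off-support mass). Two points differ in detail. First, the paper packages the discretization error as the single term $D_\Phi(\X^t,\X^{t+1})$ and bounds it by $\eta^2\|\nabla F(\X^t)\|_2^2$ via $1/2$-strong convexity of $\Phi$ on $\{\|\x\|_2^2\le 2\}$, rather than your split into $\eta\langle\nabla F(\X^t),\X^t-\X^{t+1}\rangle - D_\Phi(\X^{t+1},\X^t)$ plus a coordinate-wise Taylor bound; both routes land at the same quadratic control, but the paper's form is cleaner since it avoids tracking the inverse-Hessian weights $\sqrt{(X_i^t)^2+\beta^2}$ explicitly. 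Second, and more substantively, your treatment of the pre-$T_1$ phase is underspecified: ``tracking the growth of the single support coordinate seeded by (\ref{eq:initialization})'' does not by itself produce the claimed bound $T_1\lesssim (\eta\|\x^\star\|_2^3)^{-1}k\log(\|\x^\star\|_2/\beta)\log(k\log(\|\x^\star\|_2/\beta))$. The paper obtains this bound from the Bregman decrease itself, splitting Stage~(i) into the regime $\|\X^t\|_2^2\le 2/5$ (where $D_\Phi$ drops by at least $c_\star\eta/\sqrt{k}$ per step) and the regime $\|\X^t\|_2^2>2/5$ (where it contracts geometrically with rate tied to the number of coordinates still below threshold, via the counting argument in (\ref{eq:boundbreg1})--(\ref{eq:boundbreg2})). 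Likewise, the off-support bound (\ref{thm:off_support_discrete}) during Stage~(i) is obtained by comparing the growth of each $j\notin\mathcal{S}$ against the \emph{slowest} support coordinate $I_1$, not the seeded one $I_0$, which is the ingredient that converts the gradient-ratio inequality into the $\beta^{3/4}$ bound. Your identification of the ``main obstacle'' is exactly right: the paper devotes the lengthy discrete-time proof of Lemma~\ref{lemma:support3} to ruling out excursions of $\|\X^t\|_2^2$ above $2$, via a careful case analysis on the ratio $R_t=2((\X^t)^\top\x^\star)/(3\|\X^t\|_2^2-1)$.
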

In the noiseless case, when $\sigma = 0$, Theorem \ref{theorem} recovers the results in \cite{WR20b} and Theorem \ref{theorem_discrete} shows that any desired relative error $\varepsilon > 0$ can be achieved by running mirror descent with the choice $\beta \asymp \varepsilon^2\|\x^\star\|_2 / n^3$ for $k\log \frac{\|\x^\star\|_2}{\beta}$ (modulo double-logarithmic term) iterations. 

We now discuss some of the main implications of Theorem \ref{theorem} and Theorem \ref{theorem_discrete}. For notational simplicity, we focus on discrete-time mirror descent throughout the remainder of this section, but the same considerations also apply to the continuous-time algorithm.

\textbf{Nearly minimax-optimal rate}. 
Theorem \ref{theorem_discrete} implies that mirror descent achieves a relative mean-squared error $\frac{\operatorname{dist}(\X^{T_2}, \x^\star)}{\|\x^\star\|_2}$ on the order of $\frac{\sigma}{\|\x^\star\|_2^2} \sqrt{\frac{k\log n}{m}}$, provided the mirror map parameter $\beta$ is chosen smaller than $\sigma^4k\log^2 n / (m^2n\|\x^\star\|_2^7)$. This is identical to the estimation error rate achieved by thresholded Wirtinger flow \cite{CLM16}, and nearly matches the lower bound on the minimax-optimal rate which we derive below in Theorem \ref{thm:minimax}. As we only consider $k$-sparse signals with minimum non-zero entry on the order of $\|\x^\star\|_2/\sqrt{k}$, our lower bound on the minimax rate of convergence differs by a factor $\sqrt{\log (en/k)}$ from the bound of $\frac{\sigma}{\|\x^\star\|_2^2} \sqrt{\frac{k\log (en/k)}{m}}$ for $k$-sparse vectors, which has been stated as Theorem 3.2 in \cite{CLM16}.
\begin{theorem}
\label{thm:minimax}
\hspace{-0.5em} Let $\Theta (k,n,r) = \{\x\in\R^n: \|\x\|_2 = r, \|\x\|_0=k, \min_{i:x_i\neq 0}|x_i|\ge c_\star r/\sqrt{k}\}$ for some constant $c_\star>0$. Let $\{\A_{j}\}_{j=1}^m$ be a collection of i.i.d.\ $\gauss (0,\mathbf{I}_n)$ random vectors, $\{\varepsilon_j\}_{j=1}^m$ be a collection of i.i.d.\ $\gauss (0,\sigma^2)$ random variables, and let $\{Y_j\}_{j=1}^m$ be given as in (\ref{eq:obs_model}). There exist universal constants $c_s,c>0$ such that if $m\ge c_s(1 + \frac{\sigma^2}{R^4})k\log (en/k)$, then
\begin{equation}
\label{eq:minimax_rate}
\inf_{\widehat{\x}}\sup_{\x^\star\in \Theta (k,n,r)} \P_{(\A,\mathbf{Y}|\x^\star)}\biggl[\frac{\operatorname{dist}(\x^\star, \widehat{\x})}{r} \ge \frac{c\sigma}{r^2}\sqrt{\frac{k}{m}} \biggr] \ge \frac{1}{5},
\end{equation}
where the infimum is taken over all procedures $\mathbf{\widehat{x}}(\A,\mathbf{Y})$.
\end{theorem}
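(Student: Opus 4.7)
The plan is to reduce the minimax lower bound to a multiple-hypothesis testing problem via Fano's inequality, applied to a local packing of $\Theta(k,n,r)$. Let $P_v$ denote the joint distribution of the $m$ observations $(\A_j, Y_j)_{j=1}^m$ under the hypothesis $\x^\star = \x_v$. I will construct a finite family $\{\x_v\}_{v \in V} \subset \Theta(k,n,r)$ with three properties: pairwise separation $\operatorname{dist}(\x_v, \x_w) \gtrsim (\sigma/r)\sqrt{k/m}$; pairwise KL divergence $\operatorname{KL}(P_v \,\|\, P_w) \lesssim k$; and cardinality $|V| \ge 2^{ck}$ for a constant $c > 0$. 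Fano's inequality then delivers the claimed $1/5$ lower bound.

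For the packing, I fix (without loss of generality) the support $S = [k]$, take $\mathbf{e} \in \R^n$ to be the uniform unit vector on $S$ (entries $1/\sqrt{k}$ on $S$, zero elsewhere), and, for a small angle $\theta > 0$ to be chosen, define
\begin{equation*}
\x_v = r\cos\theta \cdot \mathbf{e} + \frac{r\sin\theta}{\sqrt{k}}\, v, \qquad v \in V,
\end{equation*}
where each $v$ is viewed as a vector in $\R^n$ supported on $S$ with entries in $\{\pm 1\}$. I take $V$ to be a Varshamov--Gilbert packing of the balanced hypercube $\{v \in \{\pm 1\}^k : \sum_i v_i = 0\}$ with pairwise Hamming distance at least $k/4$ and $|V| \ge 2^{ck}$. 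Since $\langle \mathbf{e}, v\rangle = 0$ and $\|v\|_2 = \sqrt{k}$, this construction gives $\|\x_v\|_2 = r$ exactly, and for $\theta$ small enough $|\x_{v,i}| \ge (r/\sqrt{k})(\cos\theta - \sin\theta) \ge c_\star r/\sqrt{k}$ on $S$, so $\x_v \in \Theta(k,n,r)$.

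For the KL divergence, since $Y_j \mid \A_j \sim \gauss((\A_j^\top\x)^2, \sigma^2)$ and the marginal of $\A_j$ is the same under all hypotheses, a direct computation using the Gaussian moment identity $\E[(\A^\top\mathbf{u})^2(\A^\top\mathbf{w})^2] = \|\mathbf{u}\|_2^2\|\mathbf{w}\|_2^2 + 2\langle \mathbf{u},\mathbf{w}\rangle^2$ for $\A \sim \gauss(\mathbf{0}, \mathbf{I}_n)$, combined with $\|\x_v\|_2 = \|\x_w\|_2 = r$, yields
\begin{equation*}
\operatorname{KL}(P_v \,\|\, P_w) = \frac{m}{2\sigma^2}\, \|\x_v - \x_w\|_2^2\, \|\x_v + \x_w\|_2^2 \le \frac{2mr^2}{\sigma^2}\, \|\x_v - \x_w\|_2^2.
\end{equation*}
Plugging in $\|\x_v - \x_w\|_2^2 = (4r^2\sin^2\theta/k)\, d_H(v,w) \le 4r^2\sin^2\theta$ gives $\operatorname{KL} \le 8mr^4\sin^2\theta/\sigma^2$, while $d_H(v,w) \ge k/4$ combined with $\|\x_v + \x_w\|_2 \ge 2r(\cos\theta - \sin\theta) \gg r\sin\theta$ yields the separation $\operatorname{dist}(\x_v, \x_w) = \|\x_v - \x_w\|_2 \ge r\sin\theta$.

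Setting $\sin^2\theta = c_0\sigma^2 k/(mr^4)$ with $c_0$ a small absolute constant makes $\operatorname{KL} \le (1/2)\log|V|$, and Fano's inequality gives $\sup_v \P_v[\operatorname{dist}(\widehat{\x}, \x_v) \ge r\sin\theta/2] \ge 1/5$ with $r\sin\theta/2 \asymp (\sigma/r)\sqrt{k/m}$, which is exactly the rate in the theorem. The sample-size hypothesis $m \ge c_s(1+\sigma^2/r^4)k\log(en/k)$ ensures $\sin\theta$ is indeed small enough for the packing to lie inside $\Theta(k,n,r)$. The main subtlety I anticipate is constructing a packing that simultaneously lies on the sphere $\|\x\|_2 = r$ and satisfies the min-entry constraint while being dense enough (entropy $\gtrsim k$) to produce the $\sqrt{k/m}$ rate rather than a coarser $\sqrt{1/m}$; the trigonometric parametrization along directions tangent to the sphere at $r\mathbf{e}$ resolves both automatically, and the inherent sign-flip ambiguity of phase retrieval is non-binding because $\|\x_v + \x_w\|_2 \approx 2r$ dwarfs the target separation scale $r\sin\theta$.
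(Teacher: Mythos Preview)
Your argument is correct and self-contained, but it takes a different route from the paper. The paper does not carry out a direct Fano construction; instead it invokes Theorem~C of \cite{LM15} as a black box, and the only work is to verify that the \emph{local} packing entropy of $\Theta(k,n,r)$ scales like $k$ rather than $k\log(en/k)$. The paper's argument for this is simply that the minimum-entry constraint forces any $\x\in\Theta(k,n,r)\cap\mathcal{B}_2(\x_0,c_0\rho)$ with $\rho\lesssim r/\sqrt{k}$ to share the support of $\x_0$, so the local packing is a $k$-dimensional sphere cap.

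Your explicit trigonometric packing is morally the same local object: you fix the support and perturb along a $k$-dimensional tangent direction to $r\mathbf{e}$. What you gain is a fully explicit KL computation (in particular, the identity $\langle\x_v-\x_w,\x_v+\x_w\rangle=0$ on the sphere kills the cross term cleanly), and the reader does not have to look up the cited theorem. What the paper's route buys is brevity and a clear conceptual message: the difference between the $\sqrt{k\log(en/k)}$ rate over all $k$-sparse vectors and the $\sqrt{k}$ rate here is precisely the collapse of the local entropy caused by the minimum-entry constraint. Two small points to tidy in your version: the balanced-hypercube packing requires $k$ even (handle odd $k$ by freezing one coordinate), and for very small $k$ the Fano denominator $\log|V|$ may not dominate $\log 2$, so you would fall back on a two-point Le Cam argument there; both are routine.
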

\begin{proof}
Theorem \ref{thm:minimax} follows from an application of Theorem C in \cite{LM15}. We first introduce some notation used for Theorem C. For any vector $\x\in\R^n$ and $r>0$, we write $\mathcal{B}_2(\x,\rho)\subseteq\R^n$ for the $\ell_2$-ball with center $\x$ and radius $\rho$. For any set $\mathcal{X}\in \R^n$ and $\rho>0$, we denote the packing number with respect to the $\ell_2$-norm by
\begin{equation*}
M(\mathcal{X},\rho) = \sup\{n: \exists \x_1,...,\x_n\in \mathcal{X} \text{ s.t } \forall i\neq j, \|\x_i-\x_j\|_2>\rho\}.
\end{equation*}
To apply Theorem C, we need to bound the packing number of $\Theta(k,n,r)\cap \mathcal{B}_2(\x_0, c_0\rho)$, where $\x_0\in\Theta(k,n,r)$ is an arbitrary vector. More precisely, we will show that, for all $\rho>0$ and $c_0\ge 2$,
\begin{equation}
\label{eq:packing_number_estimate}
\sup_{\x_0\in \Theta(k,n,r)}\log^{1/2}M(\Theta(k,n,r)\cap \mathcal{B}_2(\x_0, c_0\rho), \rho) \asymp \sqrt{k}.
\end{equation}
Theorem \ref{thm:minimax} then follows from an application of Theorem C of \cite{LM15} (c.f.\ Corollary 6.2 in \cite{LM15}).

In the minimax rate given in Corollary 6.2 in \cite{LM15} and Theorem 3.2 in \cite{CLM16}, the set $\Theta(n,k,r)$ is defined without the restriction $\min_{i:x_i\neq 0}|x_i|>c_\star r/\sqrt{k}$. In that case, standard estimates on the packing number yield 
\begin{equation*}
\log^{1/2} M(\widetilde{\Theta}(k,n,r)\cap \mathcal{B}_2(\x_0, c_0\rho), \rho) \asymp \sqrt{k\log (en/k)},
\end{equation*}
where we write $\widetilde{\Theta}(k,n,r) = \{\x\in\R^n: \|\x\|_2 = r, \|\x\|_0=k\}$, see e.g.\ Lemma 1.4.2 and Lemma 2.2.17 in \cite{CGLP12}). On the other hand, with the restriction $\min_{i:x_i\neq 0}|x_i|>c_\star r/\sqrt{k}$, when we fix any $\x_0\in \Theta(k,n,r)$, then any vector $\x\in \Theta(k,n,r)\cap \mathcal{B}(\x_0, c_0\rho)$ must be supported on the support of $\x_0$ for $\rho<\sqrt{2}c_\star r/(c_0\sqrt{k})$, because any non-zero coordinate of $\x_0$ and $\x$ has absolute value at least $c_\star r/\sqrt{k}$, and therefore $\|\x-\x_0\|_2\ge \sqrt{2}c_\star r/\sqrt{k}$ if $\x$ and $\x_0$ have different supports. This leads to a packing number that scales exponentially in $k$ (as opposed to a scaling of $\binom{n}{k}$ in the case of $\widetilde{\Theta}(n,k,r)$), and therefore the estimate in (\ref{eq:packing_number_estimate}).
\end{proof}
 
\textbf{On early stopping and sparsity.} 
The bound (\ref{thm:off_support_discrete}) in Theorem \ref{theorem_discrete} controls the magnitude off-support coordinates can attain while mirror descent runs, up to a time $T_2$. To show the bound (\ref{thm:off_support_discrete}), we crucially use the assumption $x^\star_{min}\gtrsim 1/\sqrt{k}$, which is needed to show that all support coordinates grow faster than any off-support coordinate, which then leads to $\|\X^t_{\S^c}\|_1<\delta\|\x^\star\|_2$ for a sufficiently small $\delta$; see Section \ref{subsection:proof_ideas} for an outline of the proof idea, and Stage (i), part (b) in Section \ref{proof:theorem3} for details. The fact that mirror descent iterates stay approximately $k$-sparse allows us to show bound (\ref{thm:convergence_discrete}): after an initial ``warm-up'' period of length $T_1$, mirror descent converges linearly to the solution set $\{\pm\x^\star\}$ up to a time $T_2$, at which a precision determined by the mirror map parameter $\beta$ and the quantity $\frac{\sigma}{\|\x^\star\|_2^2}\sqrt{\frac{k\log n}{m}}$ is reached. The necessity for early stopping is explained by the fact that, as elaborated in the proof sketch below, we need to control the size of off-support variables $\|\X_{\S^c}^t\|_1$, which is guaranteed to be sufficiently small up to the time $T_2$. 

Using the fact that the Bregman divergence decreases linearly for $T_1\le t\le T_2$, a quick calculations shows that $T_2-T_1 \lesssim \sqrt{k}\log \frac{k\|\x^\star\|_2}{\beta}$, see Stage (ii), part (a) in Section \ref{proof:theorem3} for details. Therefore, an upper bound for $T_2$ is given by $k\log\frac{\|\x^\star\|_2}{\beta}$ (modulo double-logarithmic term). In applications, we can use a data-dependent stopping rule such as the hold-out method outlined in \cite{RWY14} to estimate $T_2$: we can run mirror descent using a fraction $\alpha\in (0,1)$ of the data and return the iterate $\X^t$ which minimizes the empirical risk $F$ evaluated on the remaining $(1-\alpha)$ of the data. Section \ref{section:numerical_simulation} presents a numerical study that also investigates this data-dependent stopping rule.\footnote[1]{Establishing theoretical guarantees for data-dependent stopping rules is outside the scope of our contribution.}

\textbf{Role of the mirror map parameter.}
In Theorem \ref{theorem_discrete}, choosing a small parameter $\beta$ is needed to guarantee that off-support variables stay sufficiently small in the bound (\ref{thm:off_support_discrete}). The quantity $\delta$ in the bound (\ref{thm:off_support_discrete}) depends polynomially on $\beta$, while choosing a small $\beta$ leads to the length of the initial warm-up period $T_1$ increasing as $\log\frac{\|\x^\star\|_2}{\beta}$.
In practice, we can therefore simply choose a very small $\beta$ (e.g.\ $10^{-20}$), so that convergence is linear up to a precision on the order of $\frac{\sigma}{\|\x^\star\|_2^2} \sqrt{\frac{k\log n}{m}}$. A similar trade-off between computational cost and statistical accuracy with respect to the size of the initialization has previously been observed in \cite{VKR19} in the setting of noisy sparse linear regression.

\subsection{Proof ideas}
\label{subsection:proof_ideas}
We give a high-level overview of the main ideas behind the proofs of Theorem \ref{theorem} and Theorem \ref{theorem_discrete}. We begin with the proof of Theorem \ref{theorem}. 
Without loss of generality, let us assume that the initialization in (\ref{eq:initialization}) satisfies $x^\star_{I_0}>0$, so that $\X(0)^\top\x^\star>0$. In this case, we will show that $\X(t)$ converges to $\x^\star$. Otherwise, we can show convergence to $-\x^\star$.

Employing a potential-based analysis in terms of the Bregman divergence, we will bound the Bregman divergence $D_\Phi(\x^\star,\X(t))$, which then also yields a bound on the $\ell_2$-distance $\|\x^\star -  \X(t)\|_2$ via Lemma \ref{lemma2}. A quick calculation yields
\begin{equation}
\label{eq:breg_derivative}
\frac{d}{dt}D_{\Phi}(\x',\X(t)) = -\big\langle\nabla F(\X(t)), \X(t) - \x' \big\rangle,
\end{equation}
where $\x'\in\R^n$ is any reference point. If $F$ were to be convex, then choosing $\x'$ in equation (\ref{eq:breg_derivative}) to be any global minimizer of $F$ would imply
\begin{equation*}
\big\langle\nabla F(\X(t)), \X(t) - \x' \big\rangle \ge F(\X(t)) - F(\x'),
\end{equation*}
which shows that continuous-time mirror descent strictly decreases the Bregman divergence as long as $F(\X(t)) > F(\x')$. As the Bregman divergence is bounded from below by zero by construction due to the convexity of the mirror map, this implies that $F(\X(t))$ must converge to $\min_\x F(\x)$, which in turn implies that $\X(t)$ converges to a global minimizer of $F$, provided the function $F$ is continuous. For non-convex objective functions, the inner product in (\ref{eq:breg_derivative}) has been used to define the notion of \emph{variational coherence} \cite{ZMBBG20}, namely the assumption that
\begin{equation}\label{eq:variational_coherence}
\big\langle \nabla F(\x),\x-\x'\big\rangle \ge 0 \qquad \text{for all}\enskip \x\in\R^n,\; \x'\in\mathcal{X}^\star,
\end{equation}
where $\mathcal{X}^\star = \operatorname{arg}\min_{\mathbf{z}}F(\mathbf{z})$, and that there exists an $\x'\in \mathcal{X}^\star$ such that $\langle \nabla F(\x),\x-\x'\rangle = 0$ implies that also also $\x\in \mathcal{X}^\star$.
Under this assumption, convergence of a stochastic version of mirror descent towards the set of minimizers of $F$ has been shown in \cite{ZMBBG20}. In the batch setting, convergence of mirror descent towards a global minimizer of $F$ can be shown as in the convex case under the assumption of variational coherence. A local version of variational coherence, only assuming that inequality (\ref{eq:variational_coherence}) holds for a local minimizer $\x'$ and for $\x$ belonging to an open neighborhood of $\x'$, was also considered in \cite{ZMBBG20}, yielding a local version of the convergence results.

In order to study convergence rates, a more refined notion, namely quantitative control on a strictly positive lower bound in (\ref{eq:variational_coherence}), would be necessary. For example, if the function $F$ were to satisfy a lower bound of the form
\begin{equation*}
\big\langle \nabla F(\\X(t)),\X(t)-\x'\big\rangle \ge cD_\Phi(\x', \X(t))
\end{equation*}
for some constant $c>0$ and minimizer $\x'\in \mathcal{X}^\star$, then this would imply exponential convergence towards $\x'$, since we have
\begin{equation*}
\frac{d}{dt}D_\Phi(\x', \X(t)) \le -cD_\Phi(\x',\X(t)) \quad \Rightarrow \quad D_\Phi(\x',\X(t))\le e^{-ct}D_\Phi(\x',\X(t)).
\end{equation*}
By design, both the global and local version of variational coherence can only be used to establish convergence towards local minimizers of the objective function. As in our case we are interested in establishing convergence towards the statistical signal $\x^\star$, we consider the inner product $\langle\nabla F(\X(t)), \X(t) -\x^\star \rangle$ and bound this quantity from below within a region where the trajectory of mirror descent is confined to stay up to the prescribed stopping time, upon proper initialization. In order to obtain results on the rate of convergence, it is crucial to quantify the lower bound in (\ref{eq:variational_coherence}). In particular, we derive a lower bound of the form
\begin{equation}
\label{eq:proof_idea_bound}
\big\langle\nabla F(\X(t)), \X(t) -\x^\star \big\rangle \ge \begin{cases}
\frac{c_1}{\sqrt{k}} & \text{if } \|\X(t)\|_2^2\le \frac{2}{5}\|\x^\star\|_2^2 \\
c_2\|\X(t) - \x^\star\|_2^2 & \text{else}
\end{cases}
\end{equation}
for some constants $c_1, c_2 > 0$. The analysis is divided into two stages: (i) an initial warm-up period of length $T_1$, where the Bregman divergence decreases first at a constant rate $1/\sqrt{k}$, after which we can relate the quantity $\|\X(t)-\x^\star\|_2^2$ in the second bound in (\ref{eq:proof_idea_bound}) to the Bregman divergence $D_\Phi(\x^\star, \X(t))$ by counting the number of ``small'' coordinates for which $|X_i(t)| < \frac{1}{2}|x^\star_i|$, and (ii) a subsequent stage where we show linear convergence by relating the Bregman divergence to the quantity $\|\X(t)-\x^\star\|_2^2$ in the lower bound (\ref{eq:proof_idea_bound}) via Lemma \ref{lemma2}.

The proof of the lower bound (\ref{eq:proof_idea_bound}) involves three main steps.
In the first step, writing $f(\x) = \E[F(\x)]$ for the population risk, we assume that we have access to the population gradient, which, by the dominated convergence theorem, can be computed as
\begin{equation*}
\nabla f(\x) = \E[\nabla F(\x)] = \big(3\|\x\|_2^2 - \|\x^\star\|_2\big)\x - 2\big(\x^\top\x^\star\big)\x^\star.
\end{equation*}
In this setting, the lower bounds stated in (\ref{eq:proof_idea_bound}) for the inner product $\langle\nabla f(\X(t)), \X(t) - \x^\star \rangle$ can be obtained via algebraic manipulations: the first bound in (\ref{eq:proof_idea_bound}) relies on Lemma \ref{lemma:support3} and the assumption $x^\star_{min}\ge c_\star/\sqrt{k}$ to bound the inner product $\X(t)^\top\x^\star$ from below by $c_1/\sqrt{k}$, while the second bound only uses the Cauchy-Schwarz inequality.

In the second step, we need to control the deviation of the empirical quantity
\begin{equation}
\label{eq:deviation}
\langle\nabla F(\X(t)), \X(t) -\x^\star \rangle = \frac{1}{m}\sum_{j=1}^m\big((\A_j^\top\X(t))^2 - Y_j\big)\big(\A_j^\top\X(t)\big)\big(\A_j^\top(\X(t) - \x^\star)\big)
\end{equation}
from the population quantity $\langle\nabla f(\X(t)), \X(t) - \x^\star \rangle$. The deviation is bounded in Lemma \ref{lemma:support2}, and we summarize the main proof ideas in the following. 
If the iterate $\X(t)$ were to be independent from the measurement vectors $\{\A_j\}_{j=1}^m$ and noise terms $\{\varepsilon_j\}_{j=1}^m$, then, conditionally on $\X(t)$, the term in (\ref{eq:deviation}) would be a fourth order Gaussian polynomial, which can be controlled using standard concentration bounds. More precisely, we could use a truncation argument and split it into a bounded and an unbounded part, where the unbounded part is non-zero only with small probability, and use Lipschitz concentration and Hoeffding's bounds to control the bounded part, and Chebyshev's inequality for the unbounded part.
However, the assumption of independence is not reasonable. Instead, we use an $\epsilon$-net argument followed by above concentration argument for each fixed vector $\x$ in the $\epsilon$-net. If we considered an $\epsilon$-net in an unit ball in $\R^n$, our argument would require a sample size of $m\gtrsim nk$, with which plenty of algorithms provably recover the signal $\x^\star$ regardless of the assumption on sparsity, e.g.\ \cite{CLS15, CCG15, NJS15}. To obtain our sample requirement of $k^2$ (modulo logarithmic terms), we show that mirror descent exhibits algorithmic regularization in the sense that $\X(t)$ stays ``almost $k$-sparse'' (more precisely, we show that $\|\X_{\S^c}(t)\|_1<\delta\|\x^\star\|_2$), and apply a union bound to control the deviation of the term in (\ref{eq:deviation}) from its expectation for all bounded vectors $\x$ supported on $\S$.

In the third step, in order to show that off-support coordinates stay negligibly small, we consider the trajectory of a coordinate $X_i(t)$, which is defined by (cf.\ (\ref{eq:mirror_descent_continuous}))
\begin{equation*}
\frac{d}{dt}X_i(t) = -\sqrt{X_i(t)^2 + \beta^2}\; \nabla F(\X(t))_i.
\end{equation*}
If we imagined that $\beta=0$ and $\nabla F(\X(t))_i=-c$ was a constant, then the trajectory would simply be given by $X_i(t) = e^{ct}$.
Using the assumption $x^\star_{min}\ge c_\star/\sqrt{k}$ along with the concentration argument outlined above, we can show that, for any $i\in \S$ and $j\notin \S$, 
\begin{equation*}
|\nabla F(\X(t))_j| \le \gamma|\nabla F(\X(t))_i|
\end{equation*}
holds with high probability for some $\gamma<1$. Informally, the behavior of support and off-support coordinates is comparable to $\beta(1+c)^t$ and $\beta(1+\gamma c)^t$, respectively, and the gap can be made large by choosing $\beta$ small and $t$ large. A detailed proof can be found in Section \ref{proof:theorem3}.

The main ideas of the proof of Theorem \ref{theorem_discrete} are similar to those of Theorem \ref{theorem}. To show that the Bregman divergence is decreasing in discrete time, we need to bound the difference 
\begin{equation}
\label{eq:breg_difference}
D_{\Phi}(\x^\star, \X^{t+1}) - D_{\Phi}(\x^\star, \X^t) = -\eta \big\langle \nabla F(\X^t), \X^t-\x^\star\big\rangle + D_{\Phi}(\X^t, \X^{t+1}).
\end{equation}
The first term in (\ref{eq:breg_difference}) can be bounded
as in continuous time, while the second term $D_{\Phi}(\X^t, \X^{t+1})$ measures the distance between consecutive iterates in terms of the Bregman divergence and is due to the discretization and can be bounded using the strong convexity of the mirror map $\Phi$ in (\ref{eq:mirror_map}). Unlike in continuous time, we need to establish coordinate-wise upper and lower bounds for how much each coordinate $X_i^t$ changes in one iteration in the discrete-time case in order to characterize the region to which the trajectory of mirror descent is confined. This is not required in continuous time, since in that case we can use the continuity of $\X_i(t)$. The details of the proof can be found in Appendix \ref{supp:proof_theorem4}.
\section{Numerical study}
\label{section:numerical_simulation}
In this section, we provide numerical experiments demonstrating how the minimum relative estimation error achieved after $t_{max}$ iterations, that is
\begin{equation}\label{eq:min_rel_err}
\min_{1\le t\le t_{max}} \frac{\operatorname{dist}(\x^\star, \X^t)}{\|\x^\star\|_2},
\end{equation}
depends on the noise-to-signal ratio $\sigma/\|\x^\star\|_2^2$, the sample size $m$ and the sparsity level $k$, and how the length of the warm-up period as defined in Section \ref{proof:theorem3}, that is
\begin{equation*}
T_1 = \inf\biggl\{t>0: \min_{i\in\S}\frac{|X_i^t|}{|x^\star_i|}> \frac{1}{2} \biggr\},
\end{equation*}
depends on the mirror map parameter $\beta$ and the sparsity level $k$. The main purpose of this section is to validate how the theoretical bounds in Theorem \ref{theorem_discrete} on the estimation error and $T_1$, as well as the convergence behavior suggested by Theorem \ref{theorem_discrete}, can be representative of the actual performance of mirror descent. 

The exponentiated gradient algorithm EG$\pm$ without normalization \cite{KW97} is an equivalent formulation of mirror descent equipped with the hyperbolic entropy map (Theorem 24 \cite{GHS20}), which is given by
\begin{equation}
\label{eq:update_eg}
\begin{gathered}
\X^t = \mathbf{U}^t - \mathbf{V}^t, \\
\mathbf{U}^{t+1} = \mathbf{U}^t\odot \exp\bigl(-\eta\nabla F(\X^t)\bigr), \quad \mathbf{V}^{t+1} = \mathbf{V}^t\odot \exp\bigl(\eta\nabla F(\X^t)\bigr),
\end{gathered}
\end{equation}
where $\odot$ denotes the elementwise Hadamard product, and the initialization (\ref{eq:initialization}) becomes
\begin{align*}
U_i(0) = \begin{cases} \frac{\hat{\theta}}{2\sqrt{3}} + \sqrt{\frac{\hat{\theta}^2}{12}+ \frac{\beta^2}{4}} \quad &i=I_0 \\ \frac{\beta}{2} & i\neq I_0 \end{cases}, \qquad V_i(0) = \begin{cases} -\frac{\hat{\theta}}{2\sqrt{3}} + \sqrt{\frac{\hat{\theta}^2}{12}+ \frac{\beta^2}{4}} \quad &i=I_0 \\ \frac{\beta}{2} & i\neq I_0 \end{cases},
\end{align*}
where we write $\hat{\theta} = (\sum_{j=1}^mY_j/m)^{1/2}$ for the estimate of the signal size $\|\x^\star\|_2$.
In our simulations, we found EG$\pm$ to be numerically more stable than the update (\ref{eq:mirror_descent_discrete}), especially for small values of $\beta$ due to the division by $\beta$ in (\ref{eq:mirror_descent_discrete}).

In all experiments, we generate a $k$-sparse vector $\x^\star\in\R^n$ by sampling $x^\star_i$ uniformly from $[-1,-0.15]\cup[0.15,1]$ for $i=1,\dots,n$ and setting $(n-k)$ random entries of $\x^\star$ to zero. We sample $m$ i.i.d.\ measurement vectors $\A_j\sim\gauss (0, \mathbf{I}_n)$ and noise terms $\varepsilon_j\sim\gauss (0, \sigma^2)$ (note that we have $\|\varepsilon_j\|_{\psi_1} = \sqrt{2/\pi}\sigma$), and generate observations $Y_j = (\A_j^\top\x^\star)^2 + \varepsilon_j$. For mirror descent, we find that the step size $\eta = 0.3 / (\sum_{j=1}^mY_j/m)^{3/2}$ works well, which is consistent with the value prescribed by Theorem 4.

\subsection{Estimation error}
First, we examine how the relative estimation error of early-stopped mirror descent depends on the noise-to-signal ratio $\sigma/\|\x^\star\|_2^2$, the sample size $m$ and the sparsity level $k$. Theorem \ref{theorem_discrete} provides an upper bound for the relative estimation error of order $\frac{\sigma}{\|\x^\star\|_2^2}\sqrt{\frac{k\log n}{m}}$, provided $\beta$ is chosen small enough so that $\sqrt{c_\star\sqrt{k}\delta} \le \frac{\sigma}{\|\x^\star\|_2^2} \sqrt{\frac{k\log n}{m}}$. In the following experiments, we fix $n=2000$ and $\beta = 10^{-20}$. We run mirror descent for $t_{max}=5000$ iterations and consider the minimum relative estimation error defined in (\ref{eq:min_rel_err}), averaged across $100$ independent Monte Carlo trials.

Evaluating the minimum relative error defined in (\ref{eq:min_rel_err}) requires oracle knowledge of the underlying signal $\|\x^\star\|_2$.
For the sake of completeness, in our simulations we also consider the hold-out method to implement early stopping, even if this data-dependent rule goes beyond the reach of the theoretical results that we have developed (in this, we follow the same approach considered in prior related literature \cite{RWY14}). The hold-out method only uses observed data to select a stopping time. More precisely, we run mirror descent using $90\%$ of the data, and select the stopping time $T_{stop}\in\{1,...,t_{max}\}$ for which the iterate $\X^{T_{stop}}$ minimizes the risk $F$ evaluated on the remaining $10\%$ of the data.
\begin{figure}[!t]
\centering
\includegraphics[width=\textwidth]{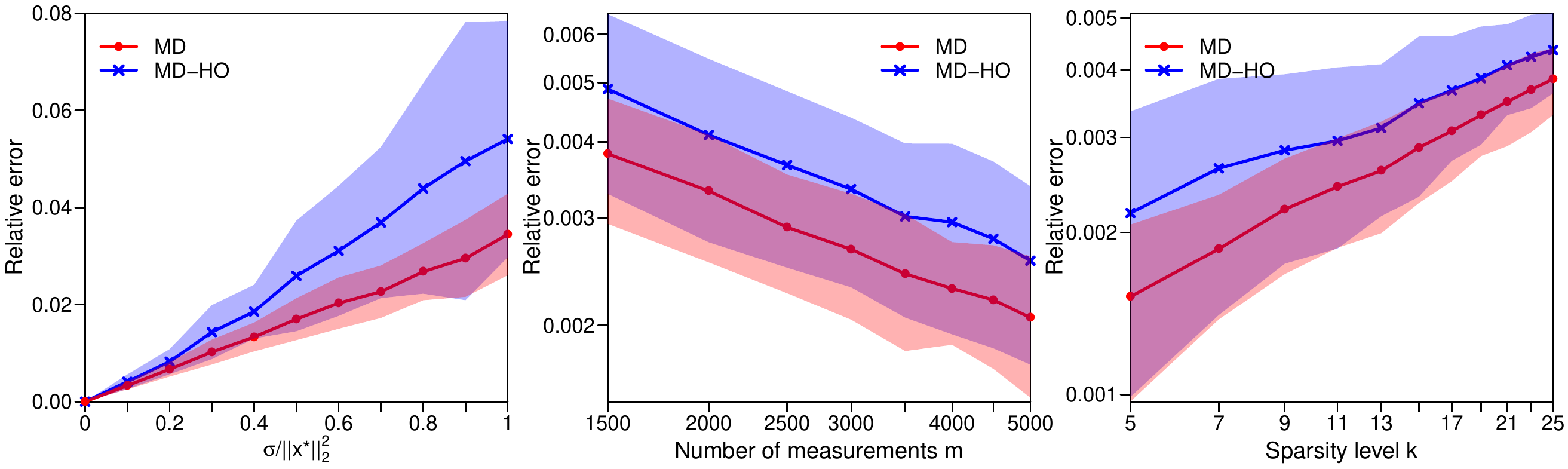}
\caption{Relative error $\operatorname{dist}(\x^\star, \X^{T_{stop}}) / \|\x^\star\|_2$ plus/minus one standard deviation (shaded area) of early-stopped mirror descent, with the stopping time $T_{stop}$ selected with oracle knowledge in red (circles), and using the hold-out method in blue (crosses). \textbf{Left:} relative error against noise-to-signal ratio $\sigma/\|\x^\star\|_2^2$, with $n=m=2000$, $k=10$. \textbf{Center:} log-log plot of relative error against sample size $m$, with $n=2000$, $k=10$, $\sigma/\|\x^\star\|_2^2 = 0.1$. \textbf{Right:} log-log plot of relative error against sparsity level $k$, with $n=2000$, $m=4000$, $\sigma/\|\x^\star\|_2^2 = 0.1$.}
\label{fig1}
\end{figure}

In the first experiment, we fix $k=10$, $m=2000$, and vary the noise-to-signal ratio $\sigma/\|\x^\star\|_2^2$. The left plot in Figure \ref{fig1} verifies the linear relationship between the relative estimation error and $\sigma/\|\x^\star\|_2^2$ suggested by the upper bound in Theorem \ref{theorem_discrete}. 

Next, we fix $k=10$, $\sigma/\|\x^\star\|_2^2 =0.1$, and increase $m$ from $1500$ to $5000$. The log-log plot (Figure \ref{fig1}, center) shows a slope of $-0.5137$, which validates the prediction of Theorem \ref{theorem_discrete} that the estimation error decreases as $1/\sqrt{m}$ as the sample size increases.

Finally, we fix $m=4000$, $\sigma/\|\x^\star\|_2^2 =0.1$, and vary the sparsity $k$. The log-log plot (Figure \ref{fig1}, right) shows a slope of $0.5775$, which, considering the standard deviation, is in line with the upper bound of Theorem \ref{theorem_discrete} which increases as $\sqrt{k}$ as the sparsity level increases. 

In all experiments, the relative error and its standard deviation are larger when the stopping time is selected using the hold-out method compared to a stopping time selected using oracle knowledge of the signal $\x^\star$, but the same trends are exhibited.

\subsection{Warm-up time \texorpdfstring{$T_1$}{}} We note that the asymptotic computational complexity of the algorithm is controlled by the growth of $T_1$, since $T_2-T_1 \lesssim \sqrt{k}\log \frac{k\|\x^\star\|_2}{\beta} \lesssim T_1$, see Stage (ii) part (a) in Section \ref{proof:theorem3}. As a result, we focus on $T_1$ in the following simulations, where we examine how the length of the warm-up period $T_1$ depends on the mirror map parameter $\beta$ and the sparsity level $k$. 
Theorem \ref{theorem_discrete} provides an upper bound of order $k\log\frac{\|\x^\star\|_2}{\beta}$ (modulo double-logarithmic terms).

We first fix $n=2000$, $m=1500$, $k=10$, $\sigma/\|\x^\star\|_2^2 = 0.1$ and decrease $\beta/\|\x^\star\|_2$ from $10^{-4}$ to $10^{-40}$. The left plot in Figure \ref{fig2} validates the upper bound in Theorem \ref{theorem_discrete}, the length of the warm-up period $T_1$ increases linearly with $\log\frac{\|\x^\star\|_2}{\beta}$. 

In the second experiment, we set $n=2000$, $m=4000$, $\sigma/\|\x^\star\|_2^2 = 0.1$, $\beta =10^{-20}$ and increase $k$ from $5$ to $25$. Again, the right plot in Figure \ref{fig2} shows that $T_1$ increases linearly with the sparsity level $k$, validating the upper bound in Theorem \ref{theorem_discrete}.
\begin{figure}[!t]
\centering
\includegraphics[width=\textwidth]{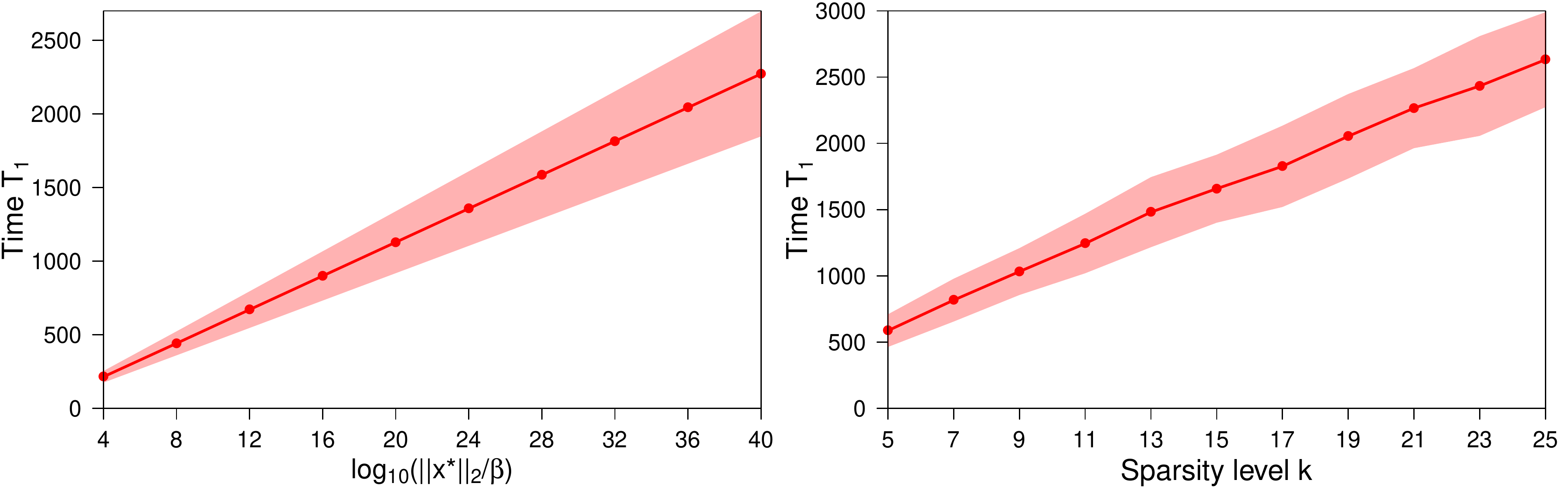}
\caption{Length of warm-up period $T_1$ plus/minus one standard deviation (shaded area) of early-stopped mirror descent. \textbf{Left}: $T_1$ against $\log(\|\x^\star\|_2/\beta)$, with $n=1000$, $m=1500$, $k=10$, $\sigma/\|\x^\star\|_2^2=0.1$. \textbf{Right}: $T_1$ against sparsity level $k$, with $n=2000$, $m=4000$, $\sigma/\|\x^\star\|_2^2=0.1$, $\beta/\|\x^\star\|_2=10^{-20}$.}
\label{fig2}
\end{figure}

\subsection{Convergence behavior of mirror descent}
The next experiment examines the convergence behavior of mirror descent and its dependence on the parameter $\beta$. We only present the $\ell_2$-error $\operatorname{dist}(\x^\star, \X^t)$. The Bregman divergence term displays the same qualitative behavior. We set $n=50000$, $m=1000$, $k=10$ and run mirror descent for different values of $\beta$. In both the noiseless ($\sigma/\|\x^\star\|_2^2 =0$) and noisy ($\sigma/\|\x^\star\|_2^2 = 0.5$) case, we observe the expected dependence on $\beta$: as $\beta$ decreases from $10^{-6}$ to $10^{-14}$, the length of the initial warm-up period increases as $\log \frac{1}{\beta}$, while we have linear convergence up to an improved accuracy that depends polynomially on $\beta$. In the noisy case, the precision up to which we have linear convergence barely improves as we decrease $\beta$ from $10^{-10}$ to $10^{-14}$, which we attribute to the noise term $\frac{\sigma}{\|\x^\star\|_2^2}\sqrt{\frac{k\log n}{m}}$ dominating the achievable accuracy. Figure \ref{fig3} suggests that early stopping is necessary in the presence of noise.
\begin{figure}[!t]
\centering
\includegraphics[width=\textwidth]{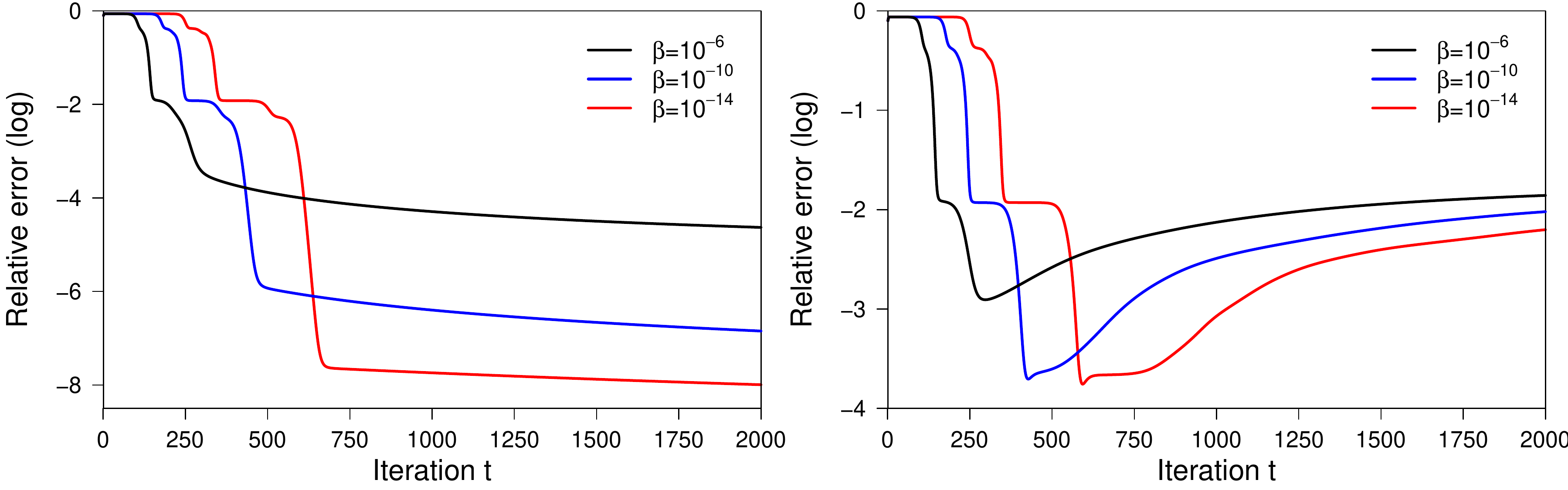}
\caption{Relative error $\operatorname{dist}(\x^\star, \X^t) / \|\x^\star\|_2$ (log-scale) of mirror descent for $n=50000$, $m=1000$, $k=10$, and different values of $\beta$. \textbf{Left}: noiseless case with $\sigma/\|\x^\star\|_2^2 = 0$. \textbf{Right}: noisy case with $\sigma/\|\x^\star\|_2^2 = 0.5$.}
\label{fig3}
\end{figure}
\section{Proofs}
\label{section:proofs}
In this section, we prove the results stated in Section \ref{section:main_results}. Since the proof of Theorem \ref{theorem_discrete} largely follows the same steps as the proof of Theorem \ref{theorem}, with a few technical difficulties arising due to the discretization, we defer the proof of Theorem \ref{theorem_discrete} to Appendix \ref{supp:proof_theorem4}. 

\subsection{Proof of Lemma \ref{lemma2}}
\label{proof:lemma2}
The proof of Lemma \ref{lemma2} relies on the fact that the Bregman divergence associated to the hyperbolic entropy mirror map can be decomposed into a sum of functions, each of which depends only on a single coordinate $x_i$ and is (locally) bounded from above and below by quadratic functions. 
\begin{proof}[Proof of Lemma \ref{lemma2}]
Using the identiy $\operatorname{arcsinh}(x) = \log (x + \sqrt{1+x^2})$, we can write
\begin{align*}
D_{\Phi}(\x^\star,\x) &= \sum_{i=1}^n g_i(x_i),
\end{align*}
where 
\begin{equation*}
g_i(x_i) = \sqrt{x_i^2 + \beta^2} - \sqrt{(x_i^\star)^2 + \beta^2} - x^\star_i \log\frac{x_i + \sqrt{x_i^2 + \beta^2}}{x_i^\star + \sqrt{(x_i^\star)^2 + \beta^2}}. 
\end{equation*}
We begin by showing the bound (\ref{eq:lemma2ub}). For any $i\in [n]$, we have $g_i(x^\star_i) = 0$ and
\begin{align}\label{eq:gderiv}
g_i'(x_i) &=\frac{x_i}{\sqrt{x_i^2+\beta^2}} - x^\star_i\frac{1 + x_i/\sqrt{x_i^2+\beta^2}}{x_i+\sqrt{x_i^2 + \beta^2}} = \frac{x_i-x^\star_i}{\sqrt{x_i^2+\beta^2}}.
\end{align}
As we assume $x_ix^\star_i\ge 0$ and $|x_i|\ge \frac{1}{2}|x^\star_i|$, we have $\sqrt{z^2 + \beta^2} \ge \frac{x^\star_{min}\|\x^\star\|_2}{2} \ge \frac{c_\star\|\x^\star\|_2}{2\sqrt{k}}$ for any $z\in (x_i,x^\star_i)$ (or $(x^\star_i, x_i)$ if $x^\star_i < x_i$). Hence, using the convention $\int_a^bf(x)dx = -\int_b^af(x)dx$ when $a>b$, we can bound
\begin{equation*}
g_i(x_i) = g_i(x^\star_i) + \int_{x^\star_i}^{x_i}g'(z)dz < \int_{x^\star_i}^{x_i}\frac{2\sqrt{k}}{c_\star\|\x^\star\|_2}(z - x^\star_i)dz= \frac{\sqrt{k}}{c_\star\|\x^\star\|_2}(x_i-x^\star_i)^2.
\end{equation*}
Summing over all $i\in \S$, this gives
\begin{equation*}
\sum_{i\in \S} g_i(x_i) \le \frac{\sqrt{k}}{c_\star\|\x^\star\|_2}\|\x_\S - \x^\star_\S\|_2^2
\end{equation*}
as claimed. On the other hand, if $x^\star_i=0$, we have 
\begin{equation*}
g_i(x_i) = \sqrt{x_i^2+\beta^2} - \beta \le |x_i|,
\end{equation*}
which gives
\begin{equation*}
\sum_{i\notin \S} g_i(x_i) \le \|x_{\S^c}\|_1,
\end{equation*}
and completes the proof of the bound (\ref{eq:lemma2ub}) in Lemma \ref{lemma2}. The other bound (\ref{eq:lemma2lb}) can be shown similarly using (\ref{eq:gderiv}). We have $\sqrt{z^2 + \beta^2} \le \sqrt{\max\{\|\x\|_\infty^2, \|\x^\star\|_\infty^2\}+\beta^2}$ for all $z\in(x_i,x_i^\star)$ (or $(x^\star_i,x_i)$ if $x^\star_i<x_i$), which gives
\begin{align*}
g(x_i) = g(x^\star_i) + \int_{x_i}^{x_i^\star}g'(z)dz &> \int_{x_i}^{x_i^\star}\left(\max\{\|\x\|_\infty^2, \|\x^\star\|_\infty^2\}+\beta^2\right)^{-\frac{1}{2}}(z - x^\star_i)dz \\
&= \frac{1}{2}\big(\max\{\|\x\|_\infty^2, \|\x^\star\|_\infty^2\}+\beta^2\big)^{-\frac{1}{2}}(x_i-x^\star_i)^2.
\end{align*}
Summing over all $i\in [n]$ gives the bound (\ref{eq:lemma2lb}) of Lemma \ref{lemma2}.
\end{proof}

\subsection{Proof of Theorem \ref{theorem}}
\label{proof:theorem3}
In this section, we will make the following assumptions, which are purely for notational simplicity.
\begin{itemize}
\item We will assume $\|\x^\star\|_2 = 1$. The general case $\|\x^\star\|_2 \neq 1$ immediately follows by writing $\x^\star = \frac{\x^\star}{\|\x^\star\|_2}\|\x^\star\|_2$, $\x = \frac{\x}{\|\x^\star\|_2}\|\x^\star\|_2$ and $\varepsilon_j = \frac{\varepsilon_j}{\|\x^\star\|_2}\|\x^\star\|_2$ in what follows.

\item We will assume $\X(0)^\top\x^\star\ge 0$, in which case we show convergence towards $\x^\star$. Otherwise, we can show convergence towards $-\x^\star$.
\end{itemize}  
The gradient of the objective function (\ref{eq:objective_function}) is given by
\begin{equation*}
\nabla F(\x) = \frac{1}{m}\sum_{j=1}^m\bigl((\A_j^\top\x)^2 - (\A_j^\top\x^\star)^2 - \varepsilon_j\bigr)\bigl(\A_j^\top\x\bigr)A_j.
\end{equation*}
A brief calculation gives the following expression for its expectation, the population gradient.
\begin{equation*}
\nabla f(\x) = \E[\nabla F(\x)] = \big(3\|\x\|_2^2 - 1\big)\x - 2\big(\x^\top\x^\star\big)\x^\star.
\end{equation*}

First, we provide three supporting lemmas characterizing the behavior of mirror descent. We defer the proofs of the following lemmas to Appendix \ref{supp:proof_supporting_lemmas}.
Lemmas \ref{lemma:support1} and \ref{lemma:support2} show that the gradient $\nabla F(\x)$ is close to its expectation $\nabla f(\x)$.
\begin{lemma}\label{lemma:support1}
Let $\x^\star\in \R^n$ be a $k$-sparse vector, and let $\S = \{i\in [n]: x^\star_i\neq 0\}$ be its support. Let $\{\A_j\}_{j=1}^m$ be a collection of i.i.d.\ $\gauss(\mathbf{0},\mathbf{I}_n)$ random vectors and $\{\varepsilon_j\}_{j=1}^m$ a collection of independent centered sub-exponential random variables with maximum sub-exponential norm $\sigma = \max_{j}\|\varepsilon_j\|_{\psi_1}$. There exist universal constants $c,c_p>0$ such that for any constant $\gamma >0$, there is a constant $c_s(\gamma)>0$ such that the following holds. For any $0 \le \delta \le \frac{1}{4}$, let 
\begin{equation*}
\mathcal{X}= \{\x\in\R^n : \|\x\|_2\le \|\x^\star\|_2, \|\x_{\S^c}\|_1\le \delta \|\x^\star\|_2 \}.
\end{equation*}
If $m\ge c_s(\gamma) (1 + \frac{\sigma^2}{\|\x^\star\|_2^4})\max\{k^2\log^2 n, \;\log^5n\}$, then, with probability at least $1-c_pn^{-10}$,
\begin{align*}
\frac{\| \nabla F(\x) - \nabla f(\x)\|_{\infty}}{\|\x^\star\|_2^2} &\le \gamma\min\biggl\{\frac{\|\x_\S\|_1}{k},\; \frac{\|\x_\S-\x_\S^\star\|_2}{\sqrt{k}}\biggr\} + c\delta\|\x^\star\|_2 \\
&\quad + \frac{c\sigma}{\|\x^\star\|_2^2}\sqrt{\frac{\log n}{m}}\min\Bigl\{\|\x\|_1,\; (1+\delta)\|\x^\star\|_2 +  \sqrt{k}\|\x_\S-\x^\star_\S\|_2 \Bigr\}
\end{align*}
holds for all $\x\in\mathcal{X}$.

In particular, if $\frac{\|\x - \x^\star\|_2}{\|\x^\star\|_2} \ge \max\Bigl\{\frac{c\sqrt{k}\delta}{\gamma}, \frac{c(1+\delta)\sigma}{\gamma\|\x^\star\|_2^2}\sqrt{\frac{k\log n}{m}}\Bigr\}$, $\|\x_\S\|_1 \ge \frac{1}{2}\|\x^\star\|_2$, $\delta \le \frac{\gamma}{2ck}$ and $c_s(\gamma)\ge \frac{c^2}{\gamma^2}$, then we have the simplified bound
\begin{align*}
\frac{\|\nabla F(\x) - \nabla f(\x)\|_{\infty}}{\|\x^\star\|_2^2} \le 4\gamma\min\biggl\{\frac{\|\x_\S\|_1}{k},\; \frac{\|\x-\x^\star\|_2}{\sqrt{k}}\biggr\} \quad \text{for all } \x\in\mathcal{X}.
\end{align*}
\end{lemma}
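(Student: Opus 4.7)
The plan is to split $\nabla F(\x) - \nabla f(\x) = G(\x) - N(\x)$ into a ``signal'' deviation $G(\x) = \frac{1}{m}\sum_j((\A_j^\top\x)^2-(\A_j^\top\x^\star)^2)(\A_j^\top\x)\A_j - \nabla f(\x)$, whose $i$-th coordinate is a centered quartic Gaussian polynomial in $(\A_j)_{j=1}^m$, and a ``noise'' contribution $N(\x) = \frac{1}{m}\sum_j \varepsilon_j(\A_j^\top\x)\A_j$. Each piece is controlled pointwise by concentration and then lifted to a uniform bound over $\x\in\mathcal{X}$ by an $\epsilon$-net argument that exploits the approximate sparsity $\|\x_{\S^c}\|_1\le\delta\|\x^\star\|_2$ built into $\mathcal{X}$ to cover only the $k$-dimensional support component, rather than all of $\R^n$.

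For a fixed $\x$ and coordinate $i$, I would factor $(\A_j^\top\x)^2-(\A_j^\top\x^\star)^2=(\A_j^\top(\x+\x^\star))(\A_j^\top(\x-\x^\star))$ and apply Gaussian polynomial concentration (Latala inequality / Hermite hypercontractivity) to $G_i(\x)$ to obtain two complementary bounds. An $\ell_2$-type bound of order $\|\x+\x^\star\|_2\|\x-\x^\star\|_2\|\x\|_2\sqrt{\log n/m}$ becomes $\gamma\|\x-\x^\star\|_2\|\x^\star\|_2^2/\sqrt{k}$ once one substitutes $\|\x\|_2\le\|\x^\star\|_2$ and $m\gtrsim k\log n/\gamma^2$; a complementary $\ell_1$-type bound of order $\|\x_\S\|_1\|\x^\star\|_2^2/k$ is obtained by controlling the $\ell_\infty$-to-$\ell_\infty$ action of the third-order Hermite tensor restricted to $\S$. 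For $N_i(\x)$, each summand $\varepsilon_j(\A_j^\top\x)A_{ji}$ is independent and centered sub-exponential, so sub-exponential Bernstein gives the $\sqrt{\log n/m}$ rate; the $\|\x\|_1$ branch follows from $|N_i(\x)|\le\|\x\|_1\max_{i'}|\frac{1}{m}\sum_j\varepsilon_jA_{ji'}A_{ji}|$ together with a union bound over $i'$, while the $(1+\delta)\|\x^\star\|_2+\sqrt{k}\|\x_\S-\x^\star_\S\|_2$ branch comes from decomposing $\x=\x^\star+(\x_\S-\x^\star_\S)+\x_{\S^c}$ and treating each piece with Cauchy--Schwarz (using $\|\x_\S-\x^\star_\S\|_1\le\sqrt{k}\|\x_\S-\x^\star_\S\|_2$) together with the off-support bound.

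To pass from pointwise to uniform control, I would cover the $k$-dimensional ball $\{\mathbf{z}\in\R^\S:\|\mathbf{z}\|_2\le\|\x^\star\|_2\}$ by an $\epsilon$-net of cardinality $\exp(O(k\log(1/\epsilon)))$, choose $\epsilon$ inverse polynomial in $n$ so that the polynomial local Lipschitz constants of $G$ and $N$ absorb the discretization error, and union bound the pointwise concentration over the net and over the $n$ output coordinates $i$. Balancing the covering-number factor $\exp(O(k\log n))$ against the per-point failure probability $\exp(-c\gamma^2 m/k)$ forces $m\gtrsim k^2\log^2 n/\gamma^2$ (matching the hypothesis), and the union over $i$ produces the $\log n$ factor in the noise term. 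The off-support tail $\x_{\S^c}$ is then handled deterministically via $\|\x_{\S^c}\|_1\le\delta\|\x^\star\|_2$, which by H\"older contributes at most $O(\delta\|\x^\star\|_2^3)$ to $|G_i|$ and is already absorbed into the noise branches for $|N_i|$. The main obstacle is maintaining \emph{both} branches of the $\min$ through the uniformization simultaneously, and verifying that the reduction from covering $\R^n$ (which would cost $m\gtrsim nk$) to covering $\R^\S$ (costing $m\gtrsim k^2$) does not spoil either branch---this is precisely where the approximate-sparsity structure of $\mathcal{X}$ is indispensable. The final ``in particular'' simplification is then purely algebraic: the stated lower bound on $\|\x-\x^\star\|_2/\|\x^\star\|_2$, together with $\delta\le\gamma/(2ck)$, $\|\x_\S\|_1\ge\tfrac{1}{2}\|\x^\star\|_2$, and $c_s(\gamma)\ge c^2/\gamma^2$, is exactly what is needed to absorb the $c\delta\|\x^\star\|_2$ and noise contributions into constant multiples of $\gamma\min\{\|\x_\S\|_1/k,\,\|\x-\x^\star\|_2/\sqrt{k}\}$, collapsing the inequality to the stated $4\gamma$ bound.
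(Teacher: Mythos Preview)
Your high-level decomposition matches the paper's: separate the noise contribution, peel off the off-support coordinates deterministically via $\|\x_{\S^c}\|_1\le\delta\|\x^\star\|_2$, and control the support-restricted quartic uniformly over a $k$-dimensional set. Your treatment of $N_i(\x)$ via H\"older plus a union bound over coordinate pairs is essentially the paper's Step~1.

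There is, however, a real gap in the signal part. You invoke Latala/hypercontractivity and claim a per-point failure probability $\exp(-c\gamma^2 m/k)$. But $G_i(\x)$ is a degree-four Gaussian chaos, and hypercontractivity only delivers sub-Weibull tails of order $1/2$: for i.i.d.\ summands with $\|W_j\|_{\psi_{1/2}}=O(1)$ the generalized Bernstein bound is $\exp(-c\min\{mt^2,(mt)^{1/2}\})$. To survive a union over an $\epsilon$-net of cardinality $\exp(O(k\log n))$ at the target deviation $t\asymp\gamma/\sqrt{k}$, the heavy-tailed branch $(mt)^{1/2}\gtrsim k\log n$ forces $m\gtrsim k^{5/2}\log^2 n/\gamma$, not $k^2\log^2 n$. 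The sub-Gaussian per-point rate you assert is not available from polynomial concentration alone.

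The paper circumvents this by never invoking chaos concentration directly. It writes $\nabla\widetilde{F}(\mathbf{w})_i$ as an explicit sum of eight monomial-type pieces $B_1,\dots,B_8$ and bounds each via H\"older, reducing to max-type quantities over fixed index tuples that are handled by truncation-plus-Hoeffding/Chebyshev (their Lemma~\ref{lemma:tech3}); for the pieces that genuinely require uniformity in $\x$ (their Lemma~\ref{lemma:tech7}), it restricts to a high-probability convex event on which the relevant functional is Lipschitz, builds a Lipschitz extension, and applies Gaussian Lipschitz concentration---thereby recovering sub-Gaussian tails despite the quartic input. Your $\epsilon$-net route can be made to work, but only after inserting a comparable truncation or Lipschitz-extension step; as written, the tail arithmetic does not close at $m\asymp k^2\log^2 n$.
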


\begin{lemma}\label{lemma:support2}
Let $\x^\star\in \R^n$ be a $k$-sparse vector, and let $\S = \{i\in [n]: x^\star_i\neq 0\}$ be its support. Let $\{\A_j\}_{j=1}^m$ be a collection of i.i.d.\ $\gauss(\mathbf{0},\mathbf{I}_n)$ random vectors and $\{\varepsilon_j\}_{j=1}^m$ a collection of independent centered sub-exponential random variables with maximum sub-exponential norm $\sigma = \max_{j}\|\varepsilon_j\|_{\psi_1}$. There exist universal constants $c,c_p>0$ such that for any constant $\gamma >0$, there is a constant $c_s(\gamma)>0$ such that the following holds. For any $0 \le \delta \le \frac{1}{4}\min\{\frac{\gamma}{c}, 1\}$, let
\begin{equation*}
\mathcal{X}= \{\x\in\R^n : \|\x\|_2\le \|\x^\star\|_2, \|\x_{\S^c}\|_1\le \delta \|\x^\star\|_2 \}.
\end{equation*}
If $m\ge c_s(\gamma) (1 + \frac{\sigma^2}{\|\x^\star\|_2^4})\max\{k^2\log^2 n, \;\log^5n\}$, then, with probability at least $1-c_pn^{-10}$,
\begin{align*}
\frac{\big|\langle \nabla F(\x) - \nabla f(\x), \x-\x^\star\rangle\big|}{\|\x^\star\|_2^3}
&\le \gamma\min \biggl\{\frac{\|\x_S\|_1}{\sqrt{k}} + \delta \|\x^\star\|_2,\; \frac{\|\x_\S-\x^\star_\S\|_2^2}{\|\x^\star\|_2} + \delta^2\|\x^\star\|_2\biggr\} \\
&\quad + \frac{c\sigma}{\|\x^\star\|_2^2}\sqrt{\frac{\log n}{m}}\Bigl(\delta \|\x^\star\|_2 + \min\bigl\{\|\x_\S\|_1, \sqrt{k}\|\x_\S-\x^\star_\S\|_2\bigr\}\Bigr)  
\end{align*}
holds for all $\x\in\mathcal{X}$. 

In particular, if $\frac{\|\x - \x^\star\|_2}{\|\x^\star\|_2} \ge \max\Bigl\{\delta,\; \frac{c\sigma}{\gamma\|\x^\star\|_2^2}\sqrt{\frac{k\log n}{m}}\Bigr\}$, $\|\x_\S\|_1 \ge \frac{1}{2}\|\x^\star\|_2$, $\delta \le \frac{1}{2\sqrt{k}}$ and $c_s(\gamma)\ge \frac{c^2}{\gamma^2}$, then we have the simplified bound
\begin{align*}
\frac{\big|\langle \nabla F(\x), \x-\x^\star\rangle - \langle \nabla f(\x), \x-\x^\star\rangle\big|}{\|\x^\star\|_2^3}
\le 4\gamma\min \biggl\{\frac{\|\x_S\|_1}{\sqrt{k}}, \; \frac{\|\x-\x^\star\|_2^2}{\|\x^\star\|_2}\biggr\} \quad \text{ for all } \x\in\mathcal{X}.
\end{align*}
\end{lemma}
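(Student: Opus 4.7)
The plan is to adapt the pointwise-concentration plus $\epsilon$-net machinery underlying Lemma \ref{lemma:support1} to the scalar quantity $\langle \nabla F(\x) - \nabla f(\x), \x - \x^\star\rangle$ directly, rather than bounding this inner product indirectly through Lemma \ref{lemma:support1}. A direct scalar analysis is necessary because invoking Lemma \ref{lemma:support1} through H\"older's inequality $|\langle\nabla F - \nabla f, \x - \x^\star\rangle| \le \|\nabla F - \nabla f\|_\infty \|\x - \x^\star\|_1$ would inflate the $c\delta\|\x^\star\|_2$ contamination term of Lemma \ref{lemma:support1} into $\sqrt{k}\,\delta\|\x^\star\|_2$ via the $\ell_1$-$\ell_2$ gap on $\S$, and would not yield the $\delta^2$ off-support scaling that Lemma \ref{lemma:support2} claims.

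First, I would write $\langle \nabla F(\x) - \nabla f(\x), \x - \x^\star\rangle = \frac{1}{m}\sum_{j=1}^m(\Xi_j(\x) - \E\,\Xi_j(\x))$ with $\Xi_j(\x) = ((\A_j^\top\x)^2 - (\A_j^\top\x^\star)^2 - \varepsilon_j)(\A_j^\top\x)(\A_j^\top(\x - \x^\star))$, and split $\Xi_j$ into a quartic-Gaussian piece and a noise piece linear in $\varepsilon_j$. For a fixed $\x$, pointwise concentration would be obtained by the same truncation-and-Bernstein recipe sketched in the paper for Lemma \ref{lemma:support1}: Lipschitz / Hoeffding for the bounded portion of the quartic, Chebyshev for its heavy tail, and sub-exponential Bernstein for the noise piece. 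The variance of $\Xi_j$ admits two dual upper bounds --- an $\ell_1$-type bound of order $\|\x\|_1\|\x - \x^\star\|_1\|\x^\star\|_2^2$ and an $\ell_2$-type bound of order $\|\x\|_2^2\|\x - \x^\star\|_2^2$ --- which, combined with $\|\x_{\S^c}\|_1 \le \delta\|\x^\star\|_2$ and the pointwise consequence $\|\x_{\S^c}\|_2 \le \|\x_{\S^c}\|_1 \le \delta\|\x^\star\|_2$, produce the two branches $\|\x_\S\|_1/\sqrt{k} + \delta\|\x^\star\|_2$ and $\|\x_\S - \x^\star_\S\|_2^2/\|\x^\star\|_2 + \delta^2\|\x^\star\|_2$ inside the minimum in the statement.

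Second, I would promote the pointwise bound to a uniform bound on $\mathcal{X}$ via an $\epsilon$-net argument. The set $\mathcal{X}$ decomposes as the Minkowski sum of a $k$-dimensional Euclidean ball on $\S$ and an $\ell_1$-ball of radius $\delta\|\x^\star\|_2$ on $\S^c$, so a net of $\mathcal{X}$ is built from an $\epsilon$-net of the Euclidean ball (log-cardinality $O(k\log(1/\epsilon))$) together with a net of the $\ell_1$-ball (log-cardinality $O(k\log n)$ at the relevant resolution, via a Maurey-type construction exploiting the effective sparsity enforced by the $\ell_1$-constraint). A union bound over this net, under the assumption $m \gtrsim k^2\log^2 n$, yields the uniform estimate. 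The main obstacle is preserving the $\delta^2$ (not $\sqrt{k}\,\delta$) scaling on the off-support contribution; this is achieved because, in the direct scalar analysis, the off-support leakage enters the inner product as the \emph{product} of $\|\x_{\S^c}\|_1 \le \delta\|\x^\star\|_2$ with a gradient-error coefficient that itself carries a factor of $\delta\|\x^\star\|_2$ coming from the same off-support leakage of $\x$ inside $\nabla F(\x)$.

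Finally, the simplified bound at the end of the lemma follows by direct algebra: under the preconditions $\|\x - \x^\star\|_2/\|\x^\star\|_2 \ge \max\{\delta,(c\sigma/(\gamma\|\x^\star\|_2^2))\sqrt{k\log n/m}\}$, $\|\x_\S\|_1 \ge \|\x^\star\|_2/2$, $\delta \le 1/(2\sqrt{k})$, and $c_s(\gamma) \ge c^2/\gamma^2$, each of the $\delta\|\x^\star\|_2$, $\delta^2\|\x^\star\|_2$, and noise contributions in the main bound is absorbed into a constant multiple of the leading $\gamma$ term, producing the clean form $4\gamma\min\{\|\x_\S\|_1/\sqrt{k},\, \|\x - \x^\star\|_2^2/\|\x^\star\|_2\}$ claimed.
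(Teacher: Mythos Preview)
Your overall structure---separate the noise piece, then handle the quartic-Gaussian piece via pointwise concentration plus a covering argument---matches the paper's, but there is a substantive gap in how you propose to make the bound uniform over $\mathcal{X}$.

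You claim that an $\epsilon$-net of the $\ell_1$-ball of radius $\delta\|\x^\star\|_2$ on $\S^c$ has log-cardinality $O(k\log n)$ ``at the relevant resolution.'' This does not hold. The quartic $\Xi_j(\x)$ involves $(\A_j^\top\x)^3$, so stability under perturbation of $\x$ requires resolution $\epsilon \asymp n^{-3}$ (this is exactly the resolution the paper uses in its net argument). A Maurey-type construction approximates a point of the $\ell_1$-ball of radius $\delta$ by an $s$-sparse vector with $\ell_2$-error $\delta/\sqrt{s}$, so achieving $\ell_2$-resolution $\epsilon$ needs $s \asymp (\delta/\epsilon)^2$. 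With $\delta$ allowed to be of constant order (the lemma only assumes $\delta \le \frac{1}{4}\min\{\gamma/c,1\}$) and $\epsilon \asymp n^{-3}$, this gives $s$ of order $n^6$; even capping $s$ at $n-k$ leaves a net of log-cardinality $\asymp n\log n$, which overwhelms the sample budget $m \asymp k^2\log^2 n$. Your union bound cannot close.

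The paper sidesteps this by \emph{not} netting over the off-support coordinates at all. It first decomposes $\x = \mathbf{w} + (\x - \mathbf{w})$ with $\mathbf{w}_\S = \x_\S$, $\mathbf{w}_{\S^c} = \mathbf{0}$, and expands $\langle\nabla\widetilde{F}(\x),\x-\x^\star\rangle - \langle\nabla\widetilde{F}(\mathbf{w}),\mathbf{w}-\x^\star\rangle$ as an explicit polynomial in $\x_{\S^c}$. Each monomial is then bounded \emph{algebraically} via H\"older, $|\sum_{l\notin\S} x_l Z_l| \le \|\x_{\S^c}\|_1 \max_{l\notin\S}|Z_l|$, with the maxima $\max_l|Z_l|$ controlled by a union bound over only $O(n)$ or $O(n^2)$ indices (Lemmas \ref{lemma:tech3} and \ref{lemma:tech7}). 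This is precisely where the clean $\delta$ and $\delta^2$ scalings come from. The $\epsilon$-net is applied \emph{only} to the on-support vector $\mathbf{w}$, i.e.\ to the $k$-dimensional unit sphere, for which log-cardinality $O(k\log(1/\epsilon))$ is genuinely available. A secondary difference: for pointwise concentration of the on-support quartic the paper uses a Lipschitz-extension argument (bounding $\|\nabla h_{i,\mathbf{w}}\|_2$ on a high-probability convex set and extending via the Kirszbraun/McShane formula), rather than truncation-plus-Bernstein; either mechanism could be made to work, but the real obstacle is the off-support covering, and your proposal does not address it.
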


The following lemma characterizes the region to which the trajectory of early-stopped mirror descent (both in continuous and discrete time) is confined.
\begin{lemma} \label{lemma:support3}
Let Assumption \ref{assumption} hold. There exist universal constants $c_p,c_1,c_2,c_3>0$ such that the following holds. Let $\delta\le c_1/n$ and let $\X(t)$ be defined by the continuous-time mirror descent algorithm given by (\ref{eq:mirror_descent_continuous}) with mirror map (\ref{eq:mirror_map}) and initialization (\ref{eq:initialization}) with $\beta \le \delta$. Assume that there is a $T>0$ such that
\begin{align*}
\|\X_{\S^c}(t)\|_1 \le \delta \|\x^\star\|_2 \qquad \text{and} \qquad \frac{\|\X(t) - \x^\star\|_2}{\|\x^\star\|_2} \ge c_2\max\biggl\{\sqrt{k}\delta,\; \frac{\sigma}{\|\x^\star\|_2^2}\sqrt{\frac{k\log n}{m}}\biggr\}
\end{align*}
are satisfied for all $t\le T$.
Then, there is a $\xi \in \{-1,+1\}$, such that, with probability at least $1 - c_pn^{-10}$, the following holds for all $t\le T$:
\begin{align}
\xi X_i(t)x^\star_i &\ge 0 \quad \text{for all } i\in [n], \label{eq:claim1}\\
\frac{1}{3} - (3+9\sigma)\sqrt{\frac{\log n}{m}} &\le \frac{\|\X(t)\|_2^2}{\|\x^\star\|_2^2} \le 2, \label{eq:claim2}\\
\sqrt{3} \cdot 2\big|\X(t)^\top\x^\star\big| &\ge 3\|\X(t)\|_2^2 - \|\x^\star\|_2^2. \label{eq:claim3}
\end{align}
The same result holds for $\X^t$ defined by the discrete-time mirror descent algorithm given by (\ref{eq:mirror_descent_discrete}) with $\eta \le c_3/\|\x^\star\|_2^3$.
\end{lemma}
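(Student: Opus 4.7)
The plan is a simultaneous continuity argument in continuous time (replaced by induction in discrete time) for all three claims, bootstrapped from the initialization. Without loss of generality I take $\xi = +1$: by Lemma \ref{lemma} and the construction (\ref{eq:initialization}), $X_{I_0}(0) x^\star_{I_0} > 0$ on the high-probability event I condition on (i.e.\ $x^\star_{I_0} > 0$ since the initialization is positive by construction). At $t = 0$ the only nonzero coordinate of $\X(0)$ is $X_{I_0}(0) \approx \|\x^\star\|_2/\sqrt{3}$, using the standard concentration $|\tfrac{1}{m}\sum_j Y_j - \|\x^\star\|_2^2| \lesssim (\|\x^\star\|_2^2 + \sigma)\sqrt{\log n/m}$. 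Hence (\ref{eq:claim1}) is immediate, (\ref{eq:claim2}) holds with slack on both sides, and (\ref{eq:claim3}) holds because its right-hand side is approximately zero while its left-hand side is of order $|x^\star_{I_0}|\|\x^\star\|_2 \gtrsim \|\x^\star\|_2^2/\sqrt{k}$.

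Define the stopping time $T^\star = \inf\{t \le T : \text{one of (\ref{eq:claim1})--(\ref{eq:claim3}) fails}\}$; by continuity $T^\star > 0$, and the goal is to derive a contradiction from $T^\star < T$. On $[0, T^\star]$ all three claims hold, so by (\ref{eq:claim2}) and the standing hypothesis $\|\X_{\S^c}(t)\|_1 \le \delta\|\x^\star\|_2$, the iterate $\X(t)$ (after a harmless rescaling absorbing the factor $\sqrt{2}$) lies in the set $\mathcal{X}$ of Lemmas \ref{lemma:support1} and \ref{lemma:support2}. Applying those lemmas with a fixed small universal $\gamma$, and using the hypothesis $\|\X(t)-\x^\star\|_2/\|\x^\star\|_2 \ge c_2 \max\{\sqrt{k}\delta,\,(\sigma/\|\x^\star\|_2^2)\sqrt{k\log n/m}\}$ with $c_2$ chosen sufficiently large, yields the simplified bounds, so both $\nabla F(\X(t))$ and $\langle \nabla F(\X(t)), \X(t) - \x^\star\rangle$ are controlled by their population counterparts computed from $\nabla f(\x) = (3\|\x\|_2^2 - 1)\x - 2(\x^\top\x^\star)\x^\star$, up to errors strictly dominated by the leading-order terms.

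The verification at $t = T^\star$ proceeds claim by claim. For the upper bound in (\ref{eq:claim2}), differentiating $\|\X\|_2^2$ along (\ref{eq:mirror_descent_continuous}) and using the identity $\X^\top\nabla f(\X) = (3\|\X\|_2^2-1)\|\X\|_2^2 - 2(\X^\top\x^\star)^2$, at $\|\X\|_2^2 = 2\|\x^\star\|_2^2$ the positive term $(3\|\X\|_2^2-1)\|\X\|_2^2$ dominates the bound $2(\X^\top\x^\star)^2 \le 2\|\X\|_2^2\|\x^\star\|_2^2$, forcing $\frac{d}{dt}\|\X\|_2^2 < 0$ and contradicting the boundary hit. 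For the lower bound, I track $X_{I_0}(t)$, which by the sign-preserving argument below grows monotonically toward $|x^\star_{I_0}|$, so $\|\X\|_2^2 \ge X_{I_0}(t)^2$ stays close to its initial value. Claim (\ref{eq:claim1}) is trivial for $i \notin \S$ except to note that $X_i(0) = 0$ and the factor $\sqrt{X_i^2+\beta^2}$ keeps the coordinate very slow; for $i \in \S$, at any time where $X_i = 0$, the coordinate dynamics is $\dot X_i = -\beta\nabla F(\X)_i$, the population drift $-\nabla f(\X)_i = 2(\X^\top\x^\star)x^\star_i$ has sign $x^\star_i$, and since all other support coordinates are on the ``correct side'' by the inductive hypothesis, $\X^\top\x^\star \ge X_{I_0}(t) x^\star_{I_0} \gtrsim \|\x^\star\|_2^2/\sqrt{k}$, which combined with $x^\star_{\min}\gtrsim 1/\sqrt{k}$ strictly exceeds the Lemma \ref{lemma:support1} deviation. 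Finally, (\ref{eq:claim3}) (assuming $\X^\top\x^\star > 0$, which is maintained by (\ref{eq:claim1})) is equivalent to $3\|\X - \x^\star/\sqrt{3}\|_2^2 \le 2\|\x^\star\|_2^2$; I bound the time derivative of $\|\X-\x^\star/\sqrt{3}\|_2^2$ along the flow via the population gradient and Lemma \ref{lemma:support2}, showing it is strictly negative at equality.

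The main obstacle is the cross-dependence among the three claims: the proof of (\ref{eq:claim1}) uses (\ref{eq:claim3}) and the invariance of the orthant, while the proof of (\ref{eq:claim2}) and (\ref{eq:claim3}) relies on (\ref{eq:claim1}) to relate $\sum_i \sqrt{X_i^2+\beta^2}\,X_i\,(\cdot)$ to quantities involving $\X$ and $\x^\star$ with definite sign. This is precisely why the three claims must be closed under a single stopping-time argument rather than analyzed in sequence. In discrete time, the continuity argument is replaced by induction on $t$; the additional technical work is to obtain one-step upper and lower bounds on each coordinate change using the step-size condition $\eta \le c_3/\|\x^\star\|_2^3$ and local properties of $\nabla \Phi$, so that no single step overshoots the boundary of any claim; this is deferred to Appendix \ref{supp:proof_theorem4}.
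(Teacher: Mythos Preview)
Your overall architecture---a single stopping time closing all three claims simultaneously, with the initialization check and the appeal to Lemmas~\ref{lemma:support1}--\ref{lemma:support2} on $[0,T^\star]$---matches the paper, and your treatment of (\ref{eq:claim1}) is essentially the paper's argument.

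The gaps lie in (\ref{eq:claim2}) and (\ref{eq:claim3}), and they share a common source: under (\ref{eq:mirror_descent_continuous}) one has $\frac{d}{dt}X_i=-\sqrt{X_i^2+\beta^2}\,\nabla F(\X)_i$, so time derivatives of quadratic functionals carry the coordinate weights $\sqrt{X_i^2+\beta^2}$ and are \emph{not} the inner products $\X^\top\nabla F(\X)$ or $\langle\nabla F(\X),\X-\x^\star\rangle$ you invoke. For the upper bound in (\ref{eq:claim2}), the paper avoids differentiating altogether: (\ref{eq:claim3}) plus Cauchy--Schwarz gives $2\sqrt{3}\|\X\|_2\ge 3\|\X\|_2^2-1$, hence $\|\X\|_2^2<2$ algebraically. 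For the lower bound, tracking $X_{I_0}(t)$ does not work: $X_{I_0}(0)\approx\|\x^\star\|_2/\sqrt{3}$ may exceed $|x^\star_{I_0}|$, so $X_{I_0}(t)$ can \emph{decrease} toward $|x^\star_{I_0}|$, and by Lemma~\ref{lemma} one only has $|x^\star_{I_0}|^2\gtrsim\|\x^\star\|_2^2/k$, far below the target $\|\x^\star\|_2^2/3$. The paper instead shows $\frac{d}{dt}\|\X\|_2^2>0$ at the lower boundary by a coordinate-wise computation that handles the weights explicitly, using (\ref{eq:claim1}) and (\ref{eq:claim3}) to sign each term.

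For (\ref{eq:claim3}), your rewriting as $3\|\X-\x^\star/\sqrt{3}\|_2^2\le 2\|\x^\star\|_2^2$ is correct, and at equality the population drift is the nonpositive weighted sum $-4\sqrt{3}(\X^\top\x^\star)\sum_i\sqrt{X_i^2+\beta^2}(X_i-x^\star_i/\sqrt{3})^2$. But Lemma~\ref{lemma:support2} controls the \emph{unweighted} $\langle\nabla F-\nabla f,\X-\x^\star\rangle$, which is not what appears here; you would need Lemma~\ref{lemma:support1} coordinate-wise and then a nontrivial argument that the weighted negative term dominates the error (the difficulty is that coordinates with $X_i\approx 0$ carry weight $\approx\beta$ yet may hold most of the $\ell_2$ mass of $\X-\x^\star/\sqrt{3}$). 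The paper resolves this by differentiating the ratio $R(t)=2(\X^\top\x^\star)/(3\|\X\|_2^2-1)$ and partitioning $\S$ into five regimes according to $|X_i|/|x^\star_i|$; the crux is that at $R=1/\sqrt{3}$ a pigeonhole argument forces a set $\S_4=\{|X_i|\ge(2/\sqrt{3})|x^\star_i|\}$ with $\|\X_{\S_4}\|_2^2\gtrsim 1$, and those coordinates supply enough weighted drift to dominate the ambiguous contributions near $|X_i|\approx|x^\star_i|/\sqrt{3}$.
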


The following inequalities will be useful throughout the following proofs.
Under the assumptions of Lemma \ref{lemma:support3}, we can use (\ref{eq:claim2}) to bound
\begin{equation}\label{eq:bound_size}
\|\X_\S(t)\|_1 = \|\X(t)\|_1 - \|\X_{\S^c}(t)\|_1 \ge \biggl(\frac{1}{3} - (3+9\sigma)\sqrt{\frac{\log n}{m}}\biggr)^{-\frac{1}{2}} - \delta \ge \frac{1}{2},
\end{equation}
and, using (\ref{eq:claim1}) of Lemma \ref{lemma:support3}, we can bound the inner product
\begin{equation}\label{eq:bound_innerproduct}
\X(t)^\top\x^\star \ge \|\X_\S(t)\|_1x^\star_{min} \ge \frac{c_\star}{2\sqrt{k}}.
\end{equation}

\begin{proof}[Proof of Theorem \ref{theorem}]
We first consider the population dynamics to illustrate the main ideas of the proof of Theorem \ref{theorem}. That is, we first assume that we had access to the population gradient $\nabla f$, or in other words that $m=\infty$. 
The proof of Theorem \ref{theorem} relies on the identity
\begin{equation}\label{eq:bregman_derivative}
\frac{d}{dt}D_{\Phi}(\x^\star,\X(t)) = - \big\langle\nabla F(\X(t)), \X(t) - \x^\star \big\rangle,
\end{equation}
and we bound the inner product on the right hand side to show that the Bregman divergence $D_{\Phi}(\x^\star, \X(t))$ decreases as claimed.

\textbf{Initial Bregman divergence}\\
Writing $\hat{\theta} = (\sum_{j=1}^mY_j/m)^{1/2}$ for the estimate of the signal size $\|\x^\star\|_2$, we can use standard concentration bounds for sub-exponential random variables (see e.g.\ Prop.\ 5.16 of \cite{V12}) to bound, with probability $1-4n^{10}$, 
\begin{equation*}
\hat{\theta}^2 = \frac{1}{m}\sum_{j=1}^mY_j = \frac{1}{m}\sum_{j=1}^m(\A_j^\top\x^\star)^2 + \frac{1}{m}\sum_{j=1}^m\varepsilon_j < \biggl(1 + (9 + 25\sigma)\sqrt{\frac{\log n}{m}}\biggr) \|\x^\star\|_2^2,
\end{equation*}
Using the definition of the initialization (\ref{eq:initialization}), we can bound the initial Bregman divergence
\begin{align}
\label{eq:bregman_divergence_initial}
D_{\Phi}(\x^\star,\X(0)) &= \sum_{i\neq I_0} \beta - \sqrt{(x_i^\star)^2 + \beta^2} - x^\star_i \log\frac{\beta}{x_i^\star + \sqrt{(x_i^\star)^2 + \beta^2}} \nonumber\\
&\quad + \sqrt{\hat{\theta}^2/3 + \beta^2} - \sqrt{(x_{I_0}^\star)^2 + \beta^2} - x^\star_{I_0} \log\frac{\hat{\theta}/\sqrt{3} + \sqrt{\hat{\theta}^2/3 + \beta^2}}{x_{I_0}^\star + \sqrt{(x_{I_0}^\star)^2 + \beta^2}} \nonumber\\
&\le \sum_{i\neq I_0}|x^\star_i|\log \frac{1}{\beta} + \underbrace{\beta - \sqrt{(x^\star_i)^2+\beta^2} + |x^\star_i|\log \Big(|x^\star_i| + \sqrt{(x^\star_i)^2 + \beta^2}\Big)}_{\le 0} + 1 \nonumber\\
&\le \|\x^\star\|_1\log \frac{1}{\beta} + 1,
\end{align}
where we used the assumption $\|\x^\star\|_2 = 1$ and the fact that the function $x\log\frac{\beta}{x+\sqrt{x^2+\beta^2}}$ is symmetric, that is $x\log\frac{\beta}{x+\sqrt{x^2+\beta^2}} = -x\log\frac{\beta}{-x+\sqrt{x^2+\beta^2}}$.

\textbf{Bounding $\langle\nabla f(\X(t)), \X(t) - \x^\star \rangle$}\\
We can compute
\begin{align*}
\big\langle\nabla f(\X(t)), \X(t) - \x^\star \big\rangle &= -\big[3\big(\|\X(t)\|_2^2 - 1\big) + 2\big(\X(t)^\top\x^\star\big)\big]\big(\X(t)^\top\x^\star\big) \\
&\quad + \big(3\|\X(t)\|_2^2 - 1\big)\|\X(t)\|_2^2.
\end{align*}
To bound this quantity, we distinguish two cases.
\begin{itemize}
\item \textbf{Case 1: $\|\X(t)\|_2^2 \le \frac{2}{5}$}\\
In this case, we use the fact that $3\|\X(t)\|_2^2-1 \ge -(9+27\sigma)\sqrt{\frac{\log n}{m}}$ by Lemma \ref{lemma:support3} to bound
\begin{align}\label{eq:population_dynamics1}
\big\langle \nabla f(\X(t)), \X(t) - \x^\star \big\rangle &\ge \biggl(\frac{9}{5} - 2\sqrt{\frac{2}{5}}\biggr)\big(\X(t)^\top\x^\star\big) - \frac{18 + 54\sigma}{5}\sqrt{\frac{\log n}{m}} \nonumber\\
& \ge \frac{1}{2}\big(\X(t)^\top\x^\star\big), 
\end{align}
where for the first inequality we used that $\X(t)^\top\x^\star\le \|\X(t)\|_2\|\x^\star\|_2$ by the Cauchy-Schwarz inequality, and the second inequality holds by (\ref{eq:bound_innerproduct}) since $m\ge c_sk^2\log^2 n$.

\item \textbf{Case 2: $\|\X(t)\|_2^2 > \frac{2}{5}$}\\
In this case, we can write $2(\X(t)^\top\x^\star) = \|\X(t)\|_2^2 + \|\x^\star\|_2^2 - \|\X(t)-\x\|_2^2$ to compute
\begin{align*}
\big\langle \nabla f(\X(t)), \X(t) - \x^\star \big\rangle = \big(2 - 4\|\X(t)\|_2^2 + \Delta \big)\big(\X(t)^\top\x^\star\big) + \big(3\|\X(t)\|_2^2 - 1\big) \|\X(t)\|_2^2,
\end{align*}
where we denote $\Delta = \|\X(t)-\x^\star\|_2^2$.
If $2 - 4\|\X(t)\|_2^2 + \Delta\ge 0$, this is lower bounded by
\begin{equation}
\label{eq:population_dynamics20}
\big\langle \nabla f(\X(t)), \X(t) - \x^\star \big\rangle \ge \frac{2}{25}.
\end{equation}
Otherwise, again using $\X(t)^\top\x^\star \le \|\X(t)\|_2\|\x^\star\|_2$, we have
\begin{align}\label{eq:population_dynamics2}
\big\langle \nabla f(\X(t)), \x^\star-\X(t) \big\rangle &\ge \underbrace{3\|\X(t)\|_2^4 - 4\|\X(t)\|_2^3 - \|\X(t)\|_2^2 + 2\|\X(t)\|_2}_{\ge 0} + \Delta \|\X(t)\|_2 \nonumber\\
&\ge \sqrt{\frac{2}{5}} \Delta .
\end{align}
\end{itemize}
The inequalities (\ref{eq:population_dynamics1})--(\ref{eq:population_dynamics2}) can be used to show that the Bregman divergence $D_{\Phi}(\x^\star,\X(t))$ decreases as claimed. In particular, we can use Lemma \ref{lemma2} to bound $\|\X(t)-\x^\star\|_2^2$ in terms of $D_{\Phi}(\x^\star, \X(t))$, and inequality (\ref{eq:population_dynamics2}) yields a bound leading to linear convergence as long as $\|\X_{\S^c}(t)\|_1$ is sufficiently small compared to $\|\X_\S(t)-\x^\star_\S\|_2^2$. In order to make the proof of Theorem \ref{theorem} rigorous, we need to replace the population gradient $\nabla f$ by the empirical gradient $\nabla F$ in the outline we provided above. To this end, we divide the analysis of the convergence of mirror descent into two stages bounded by
\begin{align*}
T_1 &= \inf\biggl\{t>0 : \min_{i\in \S} \frac{|X_i(t)|}{|x^\star_i|} > \frac{1}{2} \biggr\}, \text{ and}\\
T_2 &= \inf\biggl\{t>0: \frac{\|\X(t)-\x^\star\|_2^2}{\|\x^\star\|_2^2} \le c^2\max\Bigl\{c_\star\sqrt{k}\delta, \; \frac{\sigma^2}{\|\x^\star\|_2^4}\frac{k\log n}{m}\Bigr\} \biggr\},
\end{align*}
respectively, where $\delta = \sqrt{n\beta}$.
We consider the stages (i) $t\le T_1$ and (ii) $T_1< t \le T_2$. Note that we have $T_2>T_1$, because if there is an index $i\in \S$ with $|X_i(t)|<\frac{1}{2}|x^\star_i|$, then we also have $\|\X(t)-\x^\star\|_2 > \frac{c_\star}{2\sqrt{k}}$.
In both stages, we will (a) bound the length of the stage by using (\ref{eq:bregman_derivative}) to bound $T_i$, and (b) show that off-support coordinates satisfy $\|\X_{\S^c}(t)\|_1\le \delta$. Throughout the proof we will assume that the inequalities in Lemmas \ref{lemma:support1}--\ref{lemma:support3} are satisfied, which happens with probability at least $1-c_pn^{-10}$. Note that as $\|\X_\S(t)\|_1 \ge \|\X(t)\|_2 - \delta \ge \frac{1}{2}$ by the lower bound in (\ref{eq:claim2}) of Lemma \ref{lemma:support3} and since $\delta\lesssim n^{-1}$, the conditions for the simplified bounds in Lemmas \ref{lemma:support1} and \ref{lemma:support2} are satisfied for $t\le T_2$.

\textbf{Stage (i), part (a): $t\le T_1$, bound $T_1$}\\
Assume for now that we have already shown $\|\X_{\S^c}(t)\|_1< \delta_1 = n\beta^{3/4}$ for all $t\le T_1$. Note that we have $\delta_1\le \delta$ since $\beta \le n^{-2}$. We have already computed the rate (\ref{eq:bregman_derivative}) at which the Bregman divergence $D_{\Phi}(\x^\star,\X(t))$ decreases if we had access to the population gradient $\nabla f$ in (\ref{eq:population_dynamics1}) and (\ref{eq:population_dynamics2}). Using these bounds, we can bound the inner product in (\ref{eq:bregman_derivative}) with the empirical gradient $\nabla F$.
\begin{itemize}
\item \textbf{Case 1: $\|\X(t)\|_2^2\le \frac{2}{5}$}\\
By Lemma \ref{lemma:support2} and the definitions of $\delta_1$ and $T_2$, we have
\begin{equation*}
\big|\big\langle \nabla F(\X(t)), \X(t) - \x^\star \big\rangle - \big\langle \nabla f(\X(t)), \X(t) - \x^\star \big\rangle\big| \le  \frac{c_\star}{4\sqrt{k}}\|\X_\S(t)\|_1 \le \frac{1}{4}\big(\X(t)^\top\x^\star\big).
\end{equation*}
Together with (\ref{eq:population_dynamics1}), this bound leads to
\begin{align} \label{eq:bound1}
\frac{d}{dt}D_{\Phi}(\x^\star, \X(t)) &\le -\big\langle \nabla f(\X(t)), \X(t) - \x^\star \big\rangle \nonumber\\*
&\quad + \big|\big\langle \nabla F(\X(t)), \X(t) - \x^\star \big\rangle - \big\langle \nabla f(\X(t)), \X(t) - \x^\star \big\rangle\big| \nonumber\\
&\le - \frac{1}{4}\big(\X(t)^\top\x^\star\big) 
\le - \frac{c_\star}{8\sqrt{k}},
\end{align}
where for the last line we used (\ref{eq:bound_innerproduct}). 

\item \textbf{Case 2: $\|\X(t)\|_2^2\ge \frac{2}{5}$}\\
As in the previous case, we can bound the difference $\langle \nabla F(\X(t)) - \nabla f(\X(t)), \X(t) - \x^\star \rangle$ using Lemma \ref{lemma:support2}. Recalling (\ref{eq:population_dynamics20}) and (\ref{eq:population_dynamics2}), we obtain
\begin{equation}\label{eq:bound21}
\frac{d}{dt}D_{\Phi}(\x^\star, \X(t)) = -\big\langle \nabla F(\X(t)), \X(t) - \x^\star \big\rangle \le - \frac{1}{25}
\end{equation}
if $2 - 4\|\X(t)\|_2^2 + \|\X(t)-\x^\star\|_2^2\ge 0$, and
\begin{equation}\label{eq:bound2}
\frac{d}{dt}D_{\Phi}(\x^\star, \X(t))= -\big\langle \nabla F(\X(t)), \X(t) - \x^\star \big\rangle \le -\frac{1}{2}\|\X(t) - \x^\star\|_2^2
\end{equation}
if $2 - 4\|\X(t)\|_2^2 + \|\X(t)-\x^\star\|_2^2 < 0$, where we used that 
\begin{align*}
\left|\big\langle \nabla F(\X(t)) - \nabla f(\X(t), \; \X(t) - \x^\star \big\rangle \right| \le \biggl(\sqrt{\frac{2}{5}}-\frac{1}{2}\biggr)\|\X(t) - \x^\star\|_2^2,
\end{align*}
for all $t\le T_2$ by the simplified bound in Lemma \ref{lemma:support2}.
\end{itemize}
We can now bound $T_1$.
Define
\begin{equation*}
T_0 = \inf\biggl\{t>0: \|\X(t)\|_2^2>\frac{2}{5}\biggr\},
\end{equation*}
as the time until which Case 1 holds.
Then, the bound (\ref{eq:bound1}) from Case 1 implies
\begin{equation*}
\min\{T_1,T_0\} \le \frac{8\sqrt{k}D_{\Phi}(\x^\star,\X(0))}{c_\star} \le \frac{8}{c_\star} k\log \frac{1}{\beta} + \frac{8\sqrt{k}}{c_\star} \le \frac{c_2}{2}k\log\frac{1}{\beta},
\end{equation*}
provided that $c_2\ge \frac{16}{c_\star} + \frac{16}{c_\star\sqrt{k}\log \frac{1}{\beta}}$, where we used (\ref{eq:bregman_divergence_initial}) to bound the initial Bregman divergence $D_{\Phi}(\x^\star,\X(0))$.
If $T_1<T_0$, then we have bounded $T_1$ as desired.
Otherwise, we can use the bounds (\ref{eq:bound21}) and (\ref{eq:bound2}) from Case 2 to control $T_1 - T_0$. The first bound (\ref{eq:bound21}) can apply at most for $t\le 25D_{\Phi}(\x^\star, \X(0))\le 25\sqrt{k}\log \frac{1}{\beta} + 25$, where we again used the bound (\ref{eq:bregman_divergence_initial}).
As the bound (\ref{eq:bound2}) depends on the $\ell_2$-distance $\|\X(t)-\x^\star\|_2^2$, the idea is to show that this quantity is sufficiently large as long as the Bregman divergence $D_{\Phi}(\x^\star,\X(t))$ is large.
To this end, define
\begin{equation*}
S(t) = \biggl\{i\in \S: \frac{|X_i(t)|}{|x^\star_i|} < \frac{1}{2}\biggr\}.
\end{equation*}
With this, we can bound
\begin{align}
\|\X(t)-\x^\star\|_2^2 &\ge \sum_{i\in S(t)} \Bigl(\frac{1}{2}x^\star_i\Bigr)^2 + \sum_{i\notin S(t)}\bigl(X_i(t) - x^\star_i\bigr)^2 \nonumber \\
&\ge \frac{c_\star^2}{4k}|S(t)| + \sum_{i\notin S(t)}\bigl(X_i(t) - x^\star_i\bigr)^2. \label{eq:boundbreg1}
\end{align}
As in the computation for (\ref{eq:bregman_divergence_initial}) and the proof of Lemma \ref{lemma2}, we can bound
\begin{align}\label{eq:boundbreg2}
D_{\Phi}(\x^\star,\X(t)) \le \sum_{i\in S(t)} |x^\star_i| \log \frac{1}{\beta} + \frac{\sqrt{k}}{c_\star} \sum_{i\notin S(t)}\bigl(X_i(t) - x^\star_i\bigr)^2 + \|\X_{\S^c}(t)\|_1.
\end{align}
Because $\|\x^\star\|_\infty\le \|\x^\star\|_2=1$, we have $\sum_{i\in S(t)}|x^\star_i|\le |S(t)|$. Further, since $t< T_1$, we also have $\|\X_{\S^c}(t)\|_1\le \delta_1\le \frac{1}{4}(x^\star_{min})^2\le \sum_{i\notin S(t)}(X_i(t)-x^\star_i)^2$. With this, we can combine (\ref{eq:bound2}), (\ref{eq:boundbreg1}) and (\ref{eq:boundbreg2}) to bound
\begin{equation*}
\frac{d}{dt}D_{\Phi}(\x^\star,\X(t)) \le -\frac{1}{2} \|\X(t)-\x^\star\|_2^2 \le - \frac{c_\star^2}{8k\log \frac{1}{\beta}} D_{\Phi}(\x^\star,\X(t)),
\end{equation*}
which shows that $D_{\Phi}(\x^\star,\X(t))$ decreases linearly at the rate $c_\star^2/(8k\log \frac{1}{\beta})$ for $T_0 < t < T_1$. We have
\begin{equation*}
D_{\Phi}(\x^\star,\X(T_0)) \le D_{\Phi}(\x^\star,\X(0)) \le \sqrt{k}\log \frac{1}{\beta}+1,
\end{equation*}
and, for $t<T_1$,
\begin{equation*}
\frac{c_\star^2}{4k}\le \|\X(t)-\x^\star\|_2^2 \le 2\sqrt{2+\beta^2}\; D_{\Phi}(\x^\star,\X(t))\le 3D_{\Phi}(\x^\star,\X(t)),
\end{equation*}
where for the second inequality we used that $\|\X(t)\|_\infty\le \sqrt{2}$ by Lemma \ref{lemma:support3} and the bound (\ref{eq:lemma2lb}) of Lemma \ref{lemma2}.
This implies
\begin{equation*}
T_1-T_0 \le \frac{8k\log\frac{1}{\beta}}{c_\star^2} \log \biggl(\frac{12}{c_\star^2} k^{\frac{3}{2}}\log \frac{1}{\beta} + \frac{12}{c_\star^2}k\biggr) \le \frac{c_2}{2}k\log\biggl(\frac{1}{\beta}\biggr) \log\biggl(k\log \frac{1}{\beta}\biggr),
\end{equation*}
provided that $c_2\ge \frac{24}{c_\star^2} + \frac{16}{c_\star^2}\log\frac{24}{c_\star^2}$.

\textbf{Stage (i), part (b): $t\le T_1$, bound $\|\X_{\S^c}(t)\|_1$}\\
The idea to controlling $\|\X_{\S^c}(t)\|_1$ is as follows: we will show that for coordinates $j\notin \S$ and $i\in \S$, $X_j(t)$ can only grow at a comparatively slower rate than $X_i(t)$. We will show that the growth of both coordinates is bounded by exponentials, and use the fact that, for any fixed $\epsilon>0$, the gap between $\beta(1+2\epsilon)^t$ and $\beta(1+\epsilon)^t$ can be made arbitrarily large by choosing $t$ large and $\beta$ small enough. Recall that
\begin{equation*}
\frac{d}{dt}X_i(t) = -\sqrt{X_i(t)^2 + \beta^2} \; \nabla F(\X(t))_i.
\end{equation*}
By Lemma \ref{lemma:support3}, we have $\sqrt{3}\cdot 2(\X(t)^\top\x^\star) \ge 3\|\X(t)\|_2^2 - 1 $. For any $i\in \S$ with $x^\star_i>0$ and $X_i(t)\le \frac{1}{2}x^\star_i$, we have
\begin{equation*}
\nabla f(\X(t))_i \le \biggl(\frac{\sqrt{3}}{2}-1\biggr)\cdot 2\big(\X(t)^\top\x^\star\big)x^\star_i \le - \frac{c_\star^2\|\X_\S(t)\|_1}{4k}.
\end{equation*}
As before, Lemma \ref{lemma:support1} gives
\begin{equation*}
\big|\nabla F(\X(t))_i - \nabla f(\X(t))_i\big| \le \frac{c_\star^2\|\X_\S(t)\|_1}{8k},
\end{equation*}
for $t\le T_2$. This allows us to bound
\begin{equation*}
\nabla F(\X(t))_{i} \le - \frac{c_\star^2\|\X_\S(t)\|_1}{8k},
\end{equation*}
from which it follows that $\frac{d}{dt}X_i(t)>0$. The analogous result holds for coordinates $i\in \S$ with $x^\star_i<0$. In other words, once we have $|X_i(t_0)|\ge \frac{1}{2} |x^\star_i|$ for some $t_0>0$, then $|X_i(t)|\ge \frac{1}{2}|x^\star_i|$ continues to hold for $t\ge t_0$.
With this, we can define $I_1\in \S$ to be the last coordinate which crosses this threshold, that is for which $\frac{|X_i(t)|}{|x^\star_i|}\ge\frac{1}{2}$, and assume without loss of generality that $x^\star_{I_1}>0$. By definition, we have $|X_{I_1}(t)| \le \frac{1}{2} |x^\star_{I_1}|$ for all $t\le T_1$.
For any $j\notin \S$ with  $X_j(t)\ge 0$, we have  
\begin{equation*}
\nabla f(\X(t))_j \ge -(9 + 27\sigma)\sqrt{\frac{\log n}{m}} X_j(t).
\end{equation*}
As before, we can use Lemma \ref{lemma:support1} to obtain
\begin{equation*}
\nabla F(\X(t))_j \ge -\frac{1}{4\sqrt{2}}\frac{c_\star^2\|\X_\S(t)\|_1}{8k}.
\end{equation*}
The analogous result holds for $X_j(t)<0$, which shows that, for any $j\notin \S$,
\begin{equation}
\label{eq:stage1b}
|\nabla F(\X(t))_j| \le \frac{1}{4\sqrt{2}} |\nabla F(\X(t))_{I_1}|
\end{equation}
holds for all $t\le T_1$.
The idea is that $X_{I_1}(t)$ and $X_j(t)$ both grow (approximately) exponentially, but at different (time-varying) rates. By the definition of $I_1$, we have $X_{I_1}(T_1) = \frac{1}{2}x^\star_{I_1}$, and $X_j(T_1)$ can be made arbitrarily small by choosing a sufficiently small parameter $\beta$.

To make this rigorous, let $T_\beta = \inf\{t>0: \X_{I_1}(t) \ge \beta\}$ be the time when $\X_{I_1}(t)$ first reaches $\beta$. Then, since $\sqrt{X_{I_1}(t)^2 + \beta^2} \ge X_{I_1}(t)$, we can bound
\begin{equation*}
X_{I_1}(t) \ge \beta \exp\biggl(-\int_{T_\beta}^{t}\nabla F(\X(s))ds\biggr) \quad \Rightarrow \quad \exp\biggl(-\int_{T_\beta}^{t}\nabla F(\X(s))ds\biggr) \le \frac{X_{I_1}(t)}{\beta}.
\end{equation*}
Recalling the bound (\ref{eq:stage1b}), we have for $j\notin \S$,
\begin{align*}
|X_j(T_\beta)| \le \beta, \qquad \biggl|\frac{d}{dt}X_j(t)\biggr| \le \frac{1}{4\sqrt{2}}\sqrt{X_j(t)^2 + \beta^2} \; |\nabla F(\X(t))_{I_1}|.
\end{align*}
Then, since $\sqrt{X_j(t)^2 + \beta^2} \le \sqrt{2} |X_j(t)|$ for $X_j(t)\ge \beta$, we can bound
\begin{equation*}
X_j(t) \le \beta \exp\biggl(-\frac{1}{4}\int_{T_\beta}^{t}\nabla F(\X(s))ds\biggr) \le \beta \biggl(\frac{X_{I_1}(t)}{\beta}\biggr)^{1/4} \le \beta^{3/4},
\end{equation*}
for all $t\le T_1$, where we used the fact that $X_{I_1}(t)\le \frac{1}{2}x^\star_{I_1}\le 1$ for $t\le T_1$.
As this holds for every $j\notin \S$, we have, for all $t\le T_1$,
\begin{equation*}
\|\X_{\S^c}(t)\|_1 \le n\beta^{3/4} = \delta_1
\end{equation*}

\textbf{Stage (ii), part (a): $T_1<t\le T_2$, bound $D_{\Phi}(\x^\star,\X(t))$}\\
As in Stage (i), we first assume that we have already shown $\|\X_{\S^c}(t)\|_1\le \delta$ for all $t\le T_2$. As before, we can use Lemma \ref{lemma:support2} to derive the bound (\ref{eq:bound2}). By the definition of $T_1$, we have $|X_i(t)|\ge \frac{1}{2}|x^\star_i|$ for all $i\in [n]$. By Lemma \ref{lemma:support3} we also have $X_i(t)x^\star_i\ge 0$ (since we assumed $x^\star_{I_0}>0$), so the assumptions for inequality (\ref{eq:lemma2ub}) of Lemma \ref{lemma2} are satisfied, and we can bound
\begin{equation*}
D_{\Phi}(\x^\star,\X(t)) \le \frac{\sqrt{k}}{c_\star}\|\X_\S(t)-\x^\star_\S\|_2^2 + \|\X_{\S^c}(t)\|_1 \le \frac{2\sqrt{k}}{c_\star}\|\X(t)-\x^\star\|_2^2,
\end{equation*}
where we used that $\|\X_{\S^c}(t)\|_1 \le \delta \le \frac{\sqrt{k}}{c_\star}\|\X(t)-\x^\star\|_2^2$. With this, inequality (\ref{eq:bound2}) reads
\begin{equation*}
\frac{d}{dt}D_{\Phi}(\x^\star,\X(t)) \le -\frac{c_\star}{4\sqrt{k}}D_{\Phi}(\x^\star,\X(t)).
\end{equation*}
Using the fact that $\|\X(t)-\x^\star\|_2^2\le 3$ by Lemma \ref{lemma:support3}, we can bound $D_{\Phi}(\x^\star,\X(T_1)) \le \frac{6\sqrt{k}}{c_\star}$. Hence, we have for $T_1\le t\le T_2$,
\begin{align*}
D_{\Phi}(\x^\star,\X(t)) &\le D_{\Phi}(\x^\star,\X(T_1)) \exp\biggl(-\frac{c_\star}{4\sqrt{k}}(t-T_1)\biggr) 
\le \frac{6\sqrt{k}}{c_\star} \exp\biggl(-\frac{c_\star}{4\sqrt{k}}(t-T_1)\biggr).
\end{align*}
Together with the fact that $\|\X(t)\|_\infty^2\le 2$ by Lemma \ref{lemma:support3}, bound (\ref{eq:lemma2lb}) of Lemma \ref{lemma2} implies
\begin{equation*}
D_{\Phi}(\x^\star, \X(t)) > \frac{1}{3}\|\X(t)-\x^\star\|_2^2 \ge \frac{c^2c_\star\sqrt{k}\delta}{3}
\end{equation*}
for $t\le T_2$, so we have
$
T_2-T_1 \le \frac{4\sqrt{k}}{c_\star}\log \frac{18}{c^2c_\star^2\delta}.
$

\textbf{Stage (ii), part (b): $T_1<t\le T_2$, bound $\|\X_{\S^c}(t)\|_1$}\\
Recall that $|X_j(t)|\le \beta^{3/4}$ for all $j\notin \S$ and $t\le T_1$. Further, as in the previous stage we can use Lemma \ref{lemma:support1} to show
\begin{equation*}
|\nabla F(\X(t))_j| \le \frac{c_\star\|\X_\S(t)\|_1}{32k} \le \frac{c_\star}{16\sqrt{2k}},
\end{equation*}
where we used that $\|\X_\S(t)\|_1\le \sqrt{k}\|\X_\S(t)\|_2\le \sqrt{2k}$. Noting $\sqrt{x^2+\beta^2}\le \sqrt{2}x$ for $x\ge \beta$, we can bound, for $t\le T_2$,
\begin{align*}
|X_j(t)| &\le \beta^{3/4}\exp\biggl(\frac{c_\star}{16\sqrt{k}}(T_2-T_1)\biggr) 
\le \beta^{3/4}\exp\biggl(\frac{c_\star}{16\sqrt{k}} \frac{4\sqrt{k}}{c_\star}\log \frac{18}{c^2c_\star^2\delta}\biggr) \\
&\le \beta^{3/4}\biggl(\frac{18}{c^2c_\star^2\delta}\biggr)^{1/4}
 \le \frac{\delta}{n},
\end{align*}
provided that $\beta \le (c^2c_\star^2/18)^2/n^3$, where we used the definition $\delta = \sqrt{n\beta}$. This completes the proof that $\|\X_{\S^c}(t)\|_1\le \delta$ for all $t\le T_2$.
\end{proof}
\section{Conclusion}
\label{section:conclusion}
In this paper, we establish a general theory of mirror descent with early stopping for the problem of noisy sparse phase retrieval. We provide a full convergence analysis in both continuous and discrete time and we show that, when equipped with the hyperbolic entropy mirror map and with proper initialization, early-stopped mirror descent achieves a nearly minimax-optimal rate of convergence, provided the number of measurements $m$ is sufficiently large and the minimum (in modulus) non-zero signal component is on the order of $\|\x^\star\|_2/\sqrt{k}$. These conditions have been previously considered to establish theoretical results for sparse phase retrieval \cite{NJS15, WZGAC18, ZWGC18}. 

Previous procedures achieving a nearly minimax-optimal rate of convergence include empirical risk minimization with sparsity constraint \cite{LM15}, which does not lead to a tractable algorithm, and thresholded Wirtinger flow \cite{CLM16}, which relies on thresholding steps to promote sparsity. In contrast, unlike most existing algorithms designed to solve sparse phase retrieval, mirror descent does not rely on added regularization terms or thresholding steps to promote sparsity, and no a-priori knowledge of the sparsity $k$ or the noise level $\sigma$ is needed to \emph{run} the algorithm. Our numerical simulations attest that a standard data-dependent stopping rule that does not require knowledge of $k$ or $\sigma$ yields results that validate our theoretical findings.

Most of the literature on early stopping for iterative gradient-based methodologies has focused on the convex setting of ridge regression and kernel methods via Euclidean gradient descent and boosting. Our results establish connections between early stopping and sparsity in the non-convex setting of sparse phase retrieval via the general framework of mirror descent. The potential-based analysis that we present unveils and exploits a quantitative version of variational coherence that is satisfied along the path traced by the iterates of mirror descent, and it might inspire similar approaches to investigate implicit regularization via early stopping in other non-convex settings in statistical inference.

\bibliography{references}
\bibliographystyle{plainnat}

\clearpage
\appendix

{\LARGE\textbf{Appendix}}
\vspace{3mm}

The Appendix is organized as follows. We prove Theorem \ref{theorem_discrete} in Appendix \ref{supp:proof_theorem4}. In Appendix \ref{supp:proof_supporting_lemmas}, we prove Lemmas \ref{lemma:support1}--\ref{lemma:support3} from Section \ref{proof:theorem3}. Further technical lemmas are provided in Appendix \ref{supp:technical_lemmas}.

\section{Proof of Theorem \ref{theorem_discrete}}
\label{supp:proof_theorem4}
As in the proof of Theorem \ref{theorem}, we will assume $\|\x^\star\|_2 = 1$ and $(\X^0)^\top\x^\star \ge 0$ for notational simplicity. 
The proof of Theorem \ref{theorem_discrete} largely follows the same steps as the proof of Theorem \ref{theorem}, and we can reuse many of the bounds shown before. Before proving the theorem, we derive some generic bounds on how much $X_i^t$ can increase or decrease in one iteration. The update (\ref{eq:update_eg}) can be written as (see e.g.\ Theorem 24 \cite{GHS20}) 
\begin{align}
X_i^{t+1} - X_i^t &= \frac{\sqrt{(X_i^t)^2 + \beta^2} + X_i^t}{2}\bigl(\exp(-\eta\nabla F(\X^t)_i) - 1\bigr) \nonumber\\
&\quad - \frac{\sqrt{(X_i^t)^2 + \beta^2} - X_i^t}{2}\bigl(\exp(\eta\nabla F(\X^t)_i) - 1\bigr). \label{eq:update}
\end{align}
Assuming that the conditions for the simplified bound in Lemma \ref{lemma:support1} are satisfied, we can bound
\begin{equation}\label{eq:bound_gradient_difference}
\big|\nabla F(\X^t)_i - \nabla f(\X^t)_i\big|\le \frac{0.1c_\star\big((\X^t)^\top\x^\star\big)}{\sqrt{k}}\le 0.1\big((\X^t)^\top\x^\star\big)|x^\star_i|,
\end{equation}
with probability $1 - c_pn^{-10}$, where we used (\ref{eq:bound_innerproduct}) to bound $\|\X_\S^t\|_1\le \frac{\sqrt{k}}{c_\star}((\X^t)^\top\x^\star)$ in the first inequality, and for the second inequality we used the assumption $x^\star_{min}\le \frac{c_\star}{\sqrt{k}}$. Recalling the definition of the population gradient 
\begin{equation*}
\nabla f(\x) = \big(3\|\x\|_2^2 - 1\big)\x - 2\big(\x^\top\x^\star\big)\x^\star,
\end{equation*}
an application of the triangle inequality yields
\begin{equation}\label{eq:bound_gradient_abs}
|\nabla F(\X^t)_i| \le \big|\big(3\|\X^t\|_2^2 - 1\big)X_i^t\big| + \big|2.1\big((\X^t)^\top\x^\star\big)x^\star_i\big| \le 7.1\sqrt{2},
\end{equation}
provided that $\|\X^t\|_2\le \sqrt{2}$, where we used that $(\X^t)^\top\x^\star\le \|\X^t\|_2\|\x^\star\|_2$ by the Cauchy-Schwarz inequality. Hence, we have $|\eta \nabla F(\X^t)_i|\le \frac{1}{2}$, provided that $\eta \le \frac{1}{14.2\sqrt{2}}$. 

To bound the exponentials in (\ref{eq:update}), we use that $e^x \le 1+\frac{3}{2}x$ for $0\le x \le \frac{1}{2}$, and $1+x\le e^x$ for $-\frac{1}{2}\le x \le 0$. Without loss of generality, we assume $X_i^t\ge 0$; analogous bounds can be derived for $X_i^t<0$.
If $\nabla F(\X^t)_i<0$ (i.e.\ $X_i^{t+1}>X_i^t$), then we can bound
\begin{align*}
X_i^{t+1} - X_i^t &\le \frac{\sqrt{(X_i^t)^2 + \beta^2} + X_i^t}{2}\Bigl(-\frac{3}{2}\eta\nabla F(\X^t)_i\Bigr) - \frac{\sqrt{(X_i^t)^2 + \beta^2} - X_i^t}{2}\eta\nabla F(\X^t)_i \\
&\le -\frac{3}{2}\eta\nabla F(\X^t)_i\sqrt{(X_i^t)^2 + \beta^2}.
\end{align*}
If $\nabla F(\X^t)_i> 0$, we obtain the analogous lower bound
\begin{align*}
X_i^{t+1} - X_i^t &\ge -\frac{5}{4}\eta\nabla F(\X^t)_i\sqrt{(X_i^t)^2 + \beta^2}.
\end{align*}
Combining both bounds yields an upper bound on how much $X_i^t$ can increase or decrease in one iteration:
\begin{equation}\label{eq:flag}
|X_i^{t+1}-X_i^t| \le \frac{3}{2}\eta |\nabla F(\X^t)_i|\sqrt{(X_i^t)^2 + \beta^2}.
\end{equation}
Similarly, we can use the bounds $1+x\le e^x$ for $0\le x\le\frac{1}{2}$ and $e^x\le 1 + \frac{x}{2}$ for $-\frac{1}{2}\le x\le 0$ to obtain the lower bound
\begin{equation}\label{eq:flag_lb}
|X_i^{t+1}-X_i^t| \ge \frac{1}{2}\eta |\nabla F(\X^t)_i| |X_i^t|.
\end{equation}

\begin{proof}[Proof of Theorem \ref{theorem_discrete}]
The proof of Theorem \ref{theorem_discrete} relies on the following identity, which follows from the definition of the Bregman divergence and the mirror descent update (\ref{eq:mirror_descent_discrete}), to show that $D_{\Phi}(\x^\star, \X^t)$ decreases as claimed:
\begin{equation}\label{eq:bregman_difference}
D_{\Phi}(\x^\star,\X^{t+1}) - D_{\Phi}(\x^\star, \X^t) = -\eta\big\langle\nabla F(\X^t), \X^t - \x^\star \big\rangle + D_{\Phi}(\X^t,\X^{t+1}).
\end{equation}
The first term in (\ref{eq:bregman_difference}) has been bounded in the continuous-time case, and we can reuse those bounds. The second term measures the distance between the iterates $\X^t$ and $\X^{t+1}$ in terms of the Bregman divergence due to the discretization, and we will derive bounds for the second term similar to those shown in the continuous-time case for the first term. The function $\Phi$ is $1/2$-strongly convex with respect to the $\ell_2$-norm on the $\ell_2$-ball $\{\x\in\R^n:\|\x\|_2^2\le 2\}$ provided $\beta\le \sqrt{2}$, since the Hessian $\nabla^2\Phi(\x)$ is a diagonal matrix with entries 
\begin{equation*}
\nabla^2\Phi(\x)_{ii} = \frac{1}{\sqrt{x_i^2 + \beta^2}} \ge \frac{1}{\sqrt{2 + \beta^2}} \ge \frac{1}{2}.
\end{equation*}
Hence, we can bound
\begin{align}
D_{\Phi}(\X^t,\X^{t+1}) &= \Phi(\X^t) - \Phi(\X^{t+1}) - \langle \nabla\Phi(\X^{t+1}), \X^t - \X^{t+1}\rangle \nonumber\\ 
&\le \langle \nabla\Phi(\X^t) - \nabla\Phi(\X^{t-1}), \X^t - \X^{t+1}\rangle - \frac{1}{4}\|\X^t-\X^{t+1}\|_2^2 \nonumber\\
&\le \langle\eta\nabla F(\X^t), \X^t - \X^{t+1}\rangle - \frac{1}{4}\|\X^t-\X^{t+1}\|_2^2 \nonumber\\
&\le \eta\|\nabla F(\X^t)\|_2 \|\X^t - \X^{t+1}\|_2 - \frac{1}{4}\|\X^t - \X^{t+1}\|_2^2 \nonumber\\
&\le \eta^2\|\nabla F(\X^t)\|_2^2, \label{eq:discretization}
\end{align}
where the first inequality follows from strong convexity of $\Phi$, the second inequality from the definition of the mirror descent update, the third inequality from the Cauchy-Schwarz inequality, and for the last inequality we optimized the quadratic function in $\|\X^t - \X^{t+1}\|_2$.

As in continuous time, we divide the proof of Theorem \ref{theorem_discrete} into two stages bounded by 
\begin{align*}
T_1 &= \inf\biggl\{t>0 : \min_{i\in \S} \frac{|X_i^t|}{|x^\star_i|} > \frac{1}{2} \biggr\}, \text{ and}\\
T_2 &= \inf\biggl\{t>0: \frac{\|\X^t-\x^\star\|_2^2}{\|\x^\star\|_2^2} \le c^2\max\Bigl\{c_\star\sqrt{k}\delta, \; \frac{\sigma^2}{\|\x^\star\|_2^4} \frac{k\log n}{m}\Bigr\} \biggr\},
\end{align*}
respectively, where $\delta = \sqrt{n\beta}$, and (a) bound the length of the stage by using (\ref{eq:bregman_difference}) to bound $T_i$, and (b) show that off-support coordinates $\|\X_{\S^c}^t\|_1$ stay sufficiently small. Throughout the proof we will assume that the inequalities in Lemmas \ref{lemma:support1}--\ref{lemma:support3} are satisfied, which happens with probability at least $1-c_pn^{-10}$. As in the continuous-time case, the conditions for the simplified bounds in Lemmas \ref{lemma:support1} and \ref{lemma:support2} are satisfied for $t\le T_2$.

\textbf{Stage (i), part (a): $t\le T_1$, bound $T_1$}\\
Assume for now that we have already shown $\|\X_{\S^c}^t\|_1< \delta_1 = 2n\beta^{3/4}$ for all $t\le T_1$. As in continuous time, we consider the following two cases:
\begin{itemize}
\item \textbf{Case 1: $\|\X^t\|_2^2\le \frac{2}{5}$}\\
In this case, we have shown in (\ref{eq:bound1}) that 
\begin{equation*}
-\eta\big\langle \nabla F(\X^t), \X^t - \x^\star \big\rangle \le -\frac{\eta}{4}\big((\X^t)^\top\x^\star\big).
\end{equation*}
It remains to bound the term $D_\Phi(\X^t, \X^{t+1})$. 
Using the bounds (\ref{eq:bound_gradient_abs}), (\ref{eq:discretization}) and the inequality $(a+b)^2\le 2a^2+2b^2$, we can bound
\begin{align*}
D_{\Phi}(\X^t, \X^{t+1}) &\le 2\eta^2 \sum_{i=1}^n \bigl(3\|\X^t\|_2^2 - 1\bigr)^2(X_i^t)^2 + 2.1^2\bigl((\X^t)^\top\x^\star\bigr)^2(x^\star_i)^2 \\
&\le 2\eta^2 \bigl((\X^t)^\top\x^\star\bigr)^2\sum_{i=1}^n\bigl(12(X_i^t)^2 + 2.1^2(x^\star_i)^2\bigr) \\
&\le \frac{\eta}{8}\bigl((\X^t)^\top\x^\star\bigr)
\end{align*}
for $\eta \le \frac{1}{94}$, where we used (\ref{eq:claim3}) of Lemma \ref{lemma:support3} to bound $3\|\X^t\|_2^2-1\le \sqrt{3}\cdot 2((\X^t)^\top\x^\star)$ for the second inequality, and for the last inequality we used the assumptions $\|\X^t\|_2^2\le 2/5$ and $\|\x^\star\|_2=1$, and the Cauchy-Schwarz inequality to bound $(\X^t)^\top\x^\star\le \sqrt{2/5}$. 
Plugging this bound into (\ref{eq:bregman_difference}), we have
\begin{equation}\label{eq:bound1_discrete}
D_{\Phi}(\x^\star, \X^{t+1}) - D_{\Phi}(\x^\star, \X^t) \le -\frac{\eta}{8}\big((\X^t)^\top\x^\star\big) \le -\frac{c_\star\eta}{16\sqrt{k}},
\end{equation}
where for the second inequality we used (\ref{eq:bound_innerproduct}).

\item \textbf{Case 2: $\|\X^t\|_2^2\ge \frac{2}{5}$}\\
In this case, we have already shown that 
\begin{equation}\label{eq:bound21_discrete}
-\eta\big\langle \nabla F(\X^t), \X^t - \x^\star \big\rangle \le  -\frac{\eta}{25}
\end{equation}
if $4\|\X^t\|_2^2 -2 - \|\X^t-\x^\star\|_2^2\le 0$, and
\begin{equation}\label{eq:bound2_discrete}
-\eta\big\langle \nabla F(\X^t), \X^t - \x^\star\big\rangle \le -\frac{\eta}{2}\|\X^t - \x^\star\|_2^2
\end{equation}
if $4\|\X^t\|_2^2 -2 - \|\X^t-\x^\star\|_2^2> 0$. It remains to bound $D_\Phi(\X^t, \X^{t+1})$ in terms of the expressions in (\ref{eq:bound21_discrete}) and (\ref{eq:bound2_discrete}). As the bound corresponding to (\ref{eq:bound21_discrete}) can be obtained the same (but easier) way as for (\ref{eq:bound2_discrete}), we only show the computation for the latter case.

Substituting $\mathbf{Z} = \X^t - \x^\star$, we can write
\begin{align*}
|\nabla f(\X(t))_i| &= \big|\big(3\|\X^t\|_2^2 -1\big)Z_i + \big(3\|\mathbf{Z} + \x^\star\|_2^2 - 1 - 2\big((\mathbf{Z} + \x^\star)^\top\x^\star\big) \big)x^\star_i\big| \\
&= \big|\big(3\|\X^t\|_2^2 -1\big)Z_i + \big(\mathbf{Z}^\top(3\X^t+\x^\star)\big)x^\star_i\big| \\
&\le 5 |Z_i| + (1 + 3\sqrt{2})\|\mathbf{Z}\|_2 |x^\star_i|,
\end{align*}
where we used that $\|\X^t\|_2\le \sqrt{2}$ by (\ref{eq:claim2}) of Lemma \ref{lemma:support3}. By Lemma \ref{lemma:support1}, we can bound
\begin{equation*}
|\nabla F(\X^t)_i - \nabla f(\X^t)_i| \le \frac{0.1c_\star\|\X^t-\x^\star\|_2}{\sqrt{k}}\le 0.1\|\X^t-\x^\star\|_2|x^\star_i|,
\end{equation*}
for $t\le T_2$, so that we can bound 
\begin{equation*}
|\nabla F(\X^t)_i| \le 5.4\|\X^t - \x^\star\|_2 |x^\star_i| + 5 |X_i^t - x^\star_i|.
\end{equation*}
As in the previous case, we can use this bound together with (\ref{eq:bound_gradient_abs}) and (\ref{eq:discretization}) to obtain
\begin{align*}
D_{\Phi}(\X^t, \X^{t+1}) &\le 2\eta^2 \sum_{i=1}^n \bigl(5.4^2\|\X^t-\x^\star\|_2^2(x^\star_i)^2 + 5^2(\X^t_i - x^\star_i)^2\bigr) \\
&\le 2\eta^2 \|\X^t-\x^\star\|_2^2\biggl(5^2 + \sum_{i=1}^n5.4^2(x^\star_i)^2 \biggr) \\
&\le \frac{\eta}{4}\|\X^t - \x^\star\|_2^2
\end{align*}
for $\eta \le \frac{1}{434}$, where we used that $\|\x^\star\|_2 = 1$. Plugging this bound into (\ref{eq:bregman_difference}), we have
\begin{equation}\label{eq:bound2_difference}
D_{\Phi}(\x^\star, \X^{t+1}) - D_{\Phi}(\x^\star, \X^t) \le -\frac{\eta}{4}\|\X^t-\x^\star\|_2^2.
\end{equation}
\end{itemize}
We can now bound $T_1$.
Define
\begin{equation*}
T_0 = \inf\biggl\{t>0: \|\X^t\|_2^2>\frac{2}{5}\biggr\},
\end{equation*}
as the time until which Case 1 holds.
Then, the bound (\ref{eq:bound1_discrete}) from Case 1 shows that
\begin{equation*}
\min\{T_1,T_0\} \le \frac{16\sqrt{k}D_{\Phi}(\x^\star,\X^0)}{c_\star\eta} \le \frac{16}{c_\star\eta} k\log \frac{1}{\beta} + \frac{16\sqrt{k}}{c_\star\eta} \le \frac{c_2}{2\eta}k\log\frac{1}{\beta},
\end{equation*}
provided that $c_2\ge \frac{32}{c_\star} + \frac{32}{c_\star\sqrt{k}\log \frac{1}{\beta}}$, where we used (\ref{eq:bregman_divergence_initial}) to bound the initial Bregman divergence $D_{\Phi}(\x^\star, \X^0)$.
If $T_1<T_0$, then we have bounded $T_1$ as desired.
Otherwise, we can use (\ref{eq:bound2_difference}) and the bound corresponding to (\ref{eq:bound21_discrete}) to control $T_1 - T_0$. As in the continuous-time case, the bound corresponding to (\ref{eq:bound21_discrete}) can apply for at most $t\le 50\sqrt{k}\log \frac{1}{\beta} + 50$ iterations.
Following the same steps as in the continuous-time case, we can derive the bounds (\ref{eq:boundbreg1}) and (\ref{eq:boundbreg2}), which allow us to bound the $\ell_2$-distance $\|\X^t-\x^\star\|_2^2$ in terms of the Bregman divergence $D_{\Phi}(\x^\star,\X^t)$. Combined with (\ref{eq:bound2_difference}), this gives
\begin{equation*}
D_{\Phi}(\x^\star, \X^{t+1}) - D_{\Phi}(\x^\star, \X^t) \le -\frac{\eta}{4} \|\X^t-\x^\star\|_2^2 \le -\frac{c_\star^2\eta}{16k\log \frac{1}{\beta}} D_{\Phi}(\x^\star,\X^t),
\end{equation*}
which shows that $D_{\Phi}(\x^\star,\X^t)$ decreases at least linearly at the rate $(1 - c_\star^2\eta/(16k\log \frac{1}{\beta}))$ for $T_0 \le t < T_1$. We have
\begin{equation*}
D_{\Phi}(\x^\star,\X^{T_0}) \le D_{\Phi}(\x^\star,\X^0) \le \sqrt{k}\log \frac{1}{\beta}+1,
\end{equation*}
and, for $t\le T_1$,
\begin{equation*}
\frac{c_\star^2}{4k}\le \|\X^t-\x^\star\|_2^2 \le 2\sqrt{2+\beta^2}\; D_{\Phi}(\x^\star,\X^t)\le 3D_{\Phi}(\x^\star,\X^t),
\end{equation*}
where for the second inequality we used the bound (\ref{eq:lemma2lb}) of Lemma \ref{lemma2} and $\|\X^t\|_\infty\le \sqrt{2}$ by Lemma \ref{lemma:support3}.
This implies
\begin{equation*}
T_1-T_0 \le \frac{16k\log\frac{1}{\beta}}{c_\star^2\eta} \log \biggl(\frac{12}{c_\star^2} k^{\frac{3}{2}}\log \frac{1}{\beta} + \frac{12}{c_\star^2}k\biggr) \le \frac{c_2}{2\eta}k\log\biggl(\frac{1}{\beta}\biggr) \log\biggl(k\log \frac{1}{\beta}\biggr)
\end{equation*}
provided that $c_2\ge \frac{48}{c_\star^2} + \frac{32}{c_\star^2}\log\frac{24}{c_\star^2}$.

\textbf{Stage (i), part (b): $t\le T_1$, bound $\|\X_{\S^c}^t\|_1$}\\
As in the continuous-time case, let $I_1\in \S$ be the last coordinate which satisfies $|X_i^t|\ge \frac{1}{2}|x^\star_i|$ for some $t>0$, and let $j\notin \S$. To simplify notation, assume $X_{I_1}^t, X_j^t > 0$, in which case we can show $\nabla F(\X^t)_{I_1}<0$ for all $t\le T_1$ as in the continuous-time case.

As shown in (\ref{eq:stage1b}), we have (using different constants in Lemma \ref{lemma:support1})
\begin{equation}
\label{eq:stage1b_discrete}
|\nabla F(\X^t)_j| \le \frac{\sqrt{2}}{30} |\nabla F(\X^t)_{I_1}|
\end{equation}
for all $t\le T_1$. The idea of the proof is the same as in continuous time: the growth of both $X_{I_1}^t$ and $X_j^t$ is bounded by exponentials with different (time-varying) rates. Using (\ref{eq:flag_lb}), we have
\begin{align}\label{eq:stage1b_discrete_i}
X_{I_1}^{t+1} \ge X_{I_1}^t - \frac{1}{2}\eta \nabla F(\X^t)_{I_1} X_{I_1}^t \ge \Bigl(1 - \frac{1}{2}\eta \nabla F(\X^t)_{I_1}\Bigr)X_{I_1}^t,
\end{align}
and similarly, using (\ref{eq:flag}), we have, for $X_j^t\ge \beta$, 
\begin{align}\label{eq:stage1b_discrete_j}
X_j^{t+1} &\le \Bigl(1 + \frac{3}{2}\sqrt{2}\eta |\nabla F(\X^t)_j|\Bigr)X_j^t.
\end{align}
Denote $G_t=-\nabla F(\X^t)_{I_1}$, and let $T_\beta = \min\{t: |X_{I_1}^t|\ge \beta\}$. Then, we use (\ref{eq:stage1b_discrete_i}) to bound
\begin{equation*}
X_{I_1}^{T_1} \ge X_{I_1}^{T_\beta}\prod_{t=T_\beta}^{T_1}\Bigl(1 + \frac{1}{2}\eta G_t\Bigr),
\end{equation*}
and, since $|X_j^{T_\beta}|\le 2\beta$ and $\sqrt{(X_j^{T_\beta})^2 + \beta^2} \le \sqrt{2}|X_j^{T_\beta}|$ for $|X_j^{T_\beta}| \ge \beta$, we can use (\ref{eq:stage1b_discrete}) and (\ref{eq:stage1b_discrete_j}) to bound
\begin{align*}
X_j^{T_1} &\le 2\beta \prod_{t=T_\beta}^{T_1} \Bigl(1 + \frac{1}{10}\eta G_t\Bigr) \le 2\beta \Biggl(\prod_{t=T_\beta}^{T_1}1 + \frac{1}{2}\eta G_t\Biggr)^{\frac{1}{4}} \le 2\beta\Biggl(\frac{X_{i_1}^{T_1}}{X_{i_1}^{T_\beta}}\Biggr)^{\frac{1}{4}} \le 2\beta \biggl(\frac{1}{2\beta}\biggr)^{\frac{1}{4}} \le 2\beta^{\frac{3}{4}},
\end{align*}
where for the second inequality we used the inequality $\frac{x}{x+1}\le \log (1+x)\le x$ for $x\ge 0$, with which we have, for $\eta G_t \le \frac{1}{2}$,
\begin{equation*}
\log\Bigl(1 + \frac{1}{10}\eta G_t\Bigr) \le \frac{1}{10}\eta G_t \le \frac{1}{4} \frac{\eta G_t/2}{1 + \eta G_t/2} \le \frac{1}{4} \log \Bigl(1 + \frac{1}{2}\eta G_t\Bigr).
\end{equation*}

\textbf{Stage (ii), part (a): $T_1<t\le T_2$, bound $D_{\Phi}(\x^\star,\X^t)$}\\
As in the continuous-time case, we can use Lemma \ref{lemma2} to bound
\begin{equation*}
D_{\Phi}(\x^\star,\X^t) \le \frac{2\sqrt{k}}{c_\star}\|\X^t-\x^\star\|_2^2,
\end{equation*}
and Lemma \ref{lemma:support3} to bound $D_{\Phi}(\x^\star,\X^{T_1}) \le \frac{6\sqrt{k}}{c_\star}$. With this, inequality (\ref{eq:bound2_difference}) reads
\begin{equation*}
D_{\Phi}(\x^\star,\X^{t+1}) - D_{\Phi}(\x^\star, \X^t) \le -\frac{c_\star\eta}{8\sqrt{k}}D_{\Phi}(\x^\star,\X^t).
\end{equation*}
Hence, we have for $T_1\le t\le T_2$,
\begin{align*}
D_{\Phi}(\x^\star,\X^t) &\le D_{\Phi}(\x^\star,\X^{T_1}) \biggl(1-\frac{c_\star\eta}{8\sqrt{k}}\biggr)^{t-T_1} 
\le \frac{6\sqrt{k}}{c_\star} \biggl(1-\frac{c_\star\eta}{8\sqrt{k}}\biggr)^{t-T_1}.
\end{align*}
Together with the fact that $\|\X^t\|_\infty^2\le 2$ by Lemma \ref{lemma:support3}, bound (\ref{eq:lemma2lb}) of Lemma \ref{lemma2} implies
\begin{equation*}
D_{\Phi}(\x^\star, \X^t) \ge \frac{1}{3}\|\X^t - \x^\star\|_2^2 \ge \frac{c^2c_\star\sqrt{k}\delta}{3}
\end{equation*}
for $t\le T_2$, so we can bound
\begin{equation*}
T_2-T_1 \le \frac{8\sqrt{k}}{c_\star\eta}\log \frac{18}{c^2c_\star^2\delta}.
\end{equation*}

\textbf{Stage (ii), part (b): $T_1<t\le T_2$, bound $\|\X_{\S^c}^t\|_1$}\\
Recall that we have shown $|X_j^t|\le 2\beta^{\frac{3}{4}}$ for all $j\notin \S$ and $t\le T_1$. As in the continuous-time case, we can use Lemma \ref{lemma:support1} to show
\begin{equation*}
|\nabla F(\X^t)_j| \le \frac{c_\star\|\X_\S^t\|_1}{64k} \le \frac{c_\star}{32\sqrt{2k}},
\end{equation*}
where we used that $\|\X_\S^t\|_1\le \sqrt{k}\|\X_\S^t\|_2 \le \sqrt{2k}$.
As in Stage (i), we can bound, for $t\le T_2$ (as $\sqrt{x^2+\beta^2}\le \sqrt{2}x$ for $x\ge \beta$),
\begin{align*}
|X_j(t)| &\le 2\beta^{\frac{3}{4}}\biggl(1 + \frac{3}{\sqrt{2}}\eta \frac{c_\star}{32\sqrt{k}}\biggr)^{T_2-T_1} \\
&\le 2\beta^{\frac{3}{4}}\exp\biggl(\frac{\eta c_\star}{32\sqrt{k}} \frac{8\sqrt{k}}{c_\star\eta}\log \frac{18}{c^2c_\star^2\delta}\biggr) \\
&\le 2\beta^{\frac{3}{4}}\biggl(\frac{18}{c^2c_\star^2\delta}\biggr)^{\frac{1}{4}} \\
 &\le \frac{\delta}{n},
\end{align*}
provided that $\beta \le (c^2c_\star^2 / 288)^2/ n^3$, where we used the definition $\delta = \sqrt{n\beta}$. This completes the proof that $\|\X_{\S^c}^t\|_1\le \delta$ for all $t\le T_2$.
\end{proof}
\section{Proof of supporting lemmas}
\label{supp:proof_supporting_lemmas}
In this section, we provide the proofs of the supporting lemmas stated in Section \ref{proof:theorem3}.
Throughout this section, we will assume $\|\x^\star\|_2 = 1$ for notational simplicity's sake. The general case $\|\x^\star\|_2\neq 1$ follows by writing $\x^\star = \frac{\x^\star}{\|\x^\star\|_2}\|\x^\star\|_2$, $\x = \frac{\x}{\|\x^\star\|_2}\|\x^\star\|_2$ and $\varepsilon_j = \frac{\varepsilon_j}{\|\x^\star\|_2}\|\x^\star\|_2$ in what follows.

\begin{proof}[Proof of Lemma \ref{lemma:support1}]
To prove Lemma \ref{lemma:support1}, we will we will bound the difference $|\nabla F(\x)_i - \nabla f(\x)_i|$ for any $i\in [n]$. The lemma then follows by taking a union bound. First, we write $\mathbf{w}\in \R^n$ for the vector $\x_\S$ padded with zeroes, that is $w_i = x_i$ for $i\in \S$ and $w_i = 0$ otherwise. For convenience, we denote by 
\begin{equation*}
\nabla \widetilde{F}(\x) = \frac{1}{m}\sum_{j=1}^m\bigl((\A_j^\top\x)^2 - (\A_j^\top\x^\star)^2\bigr)\bigl(\A_j^\top\x\bigr)\mathbf{A}_j
\end{equation*}
the gradient if the measurements were noiseless. Then, as $\nabla f(\x) = \E[\nabla F(\x)] = \E[\nabla \widetilde{F}(\x)]$ since the random variables $\{\varepsilon_j\}_{j=1}^m$ are centered and independent of $\{\mathbf{A}_j\}_{j=1}^m$, we can decompose and bound the difference by
\begin{align}\label{eq:split}
\big|\nabla F(\x)_i - \nabla f(\x)_i\big| &\le \big|\nabla F(\x)_i - \nabla\widetilde{F}(\x)_i\big| + \big|\nabla \widetilde{F}(\x)_i - \nabla \widetilde{F}(\mathbf{w})_i\big| \nonumber\\
&\quad + \big|\nabla \widetilde{F}(\mathbf{w})_i - \E[\nabla \widetilde{F}(\mathbf{w})_i]\big| + \big|\E[\nabla \widetilde{F}(\mathbf{w})_i] - \E[\nabla \widetilde{F}(\x)_i]\big|,
\end{align}
and we will bound the four terms separately.

\textbf{Step 1: Bound the term $|\nabla F(\x)_i - \nabla\widetilde{F}(\x)_i|$}\\
We begin by bounding the first term of (\ref{eq:split}) by 
\begin{equation*}
c\sigma\sqrt{\frac{\log n}{m}}\min\Bigl\{\|\x\|_1,\; 1 + \delta + \sqrt{k}\|\x_\S - \x_\S^\star\|_2\Bigr\}.
\end{equation*}
To obtain the second bound in the minimum, we write $\mathbf{z} = \x-\x^\star$ and bound
\begin{align*}
\big|\nabla F(\x)_i - \nabla\widetilde{F}(\x)_i\big| &= \Bigg|\frac{1}{m}\sum_{j=1}^m\varepsilon_j(\A_j^\top\x)A_{ji}\Bigg| \\
&\le \Bigg|\frac{1}{m}\sum_{j=1}^m\varepsilon_j(\A_{j,\S}^\top\x_\S)A_{ji}\Bigg| + \Bigg|\frac{1}{m}\sum_{j=1}^m\varepsilon_j(\A_{j,\S^c}^\top\x_{\S^c})A_{ji}\Bigg| \\
&\le \Bigg|\frac{1}{m}\sum_{j=1}^m\varepsilon_j(\A_{j,\S}^\top\mathbf{z}_\S)A_{ji}\Bigg| + \Bigg|\frac{1}{m}\sum_{j=1}^m\varepsilon_j(\A_{j,\S}^\top\x^\star_\S)A_{ji}\Bigg| + \Bigg|\sum_{l\notin \S}x_l\frac{1}{m}\sum_{j=1}^m\varepsilon_jA_{jl}A_{ji}\Bigg| \\
&\le \big(\|\mathbf{z}_{\S}\|_1 + \|\x_{\S^c}\|_1\big)\max_{l\in [n]}\Bigg|\frac{1}{m}\sum_{j=1}^m\varepsilon_jA_{jl}A_{ji}\Bigg|  + \Bigg|\frac{1}{m}\sum_{j=1}^m\varepsilon_j(\A_{j,\S}^\top\x^\star_\S)A_{ji}\Bigg| \\
&\le c\sigma \sqrt{\frac{\log n}{m}}\big(\|\mathbf{z}_\S\|_1 + 1 + \delta\big),
\end{align*}
with probability $1-\frac{c_p}{3}n^{-11}$ for any constant $c>0$ that is at least the universal constant from Lemma \ref{lemma:tech3}, where we used H\"{o}lder's inequality in the penultimate line. The second bound in the minimum follows as $\|\mathbf{z}_\S\|_1\le \sqrt{k}\|\mathbf{z}_\S\|_2$. The first bound in the minimum follows by directly bounding $\frac{1}{m}\sum_{j=1}^m\varepsilon_j(\A_j^\top\x)A_{ji}$ using H\"{o}lder's inequality and Lemma \ref{lemma:tech3} as above, without substituting $\mathbf{z} = \x - \x^\star$.

\textbf{Step 2: Bound the term $|\nabla \widetilde{F}(\x)_i - \nabla \widetilde{F}(\mathbf{w})_i|$}\\
Next, we bound the second term of (\ref{eq:split}) by $\frac{c}{2}\delta$. We can write
\begin{equation*}
\nabla \widetilde{F}(\x)_i = \frac{1}{m}\sum_{j=1}^m\bigl((\A_{j,\S}^\top\x_\S + \A_{j,\S^c}\x_{\S^c})^3 - (\A_{j,\S}^\top\x_\S + \A_{j,\S^c}^\top\x_{\S^c})(\A_j^\top\x^\star)^2\bigr)A_{ji}.
\end{equation*}
Then, we have
\begin{align}\label{eq:split2}
\nabla \widetilde{F}(\x)_i - \nabla \widetilde{F}(\mathbf{w})_i &= \frac{3}{m}\sum_{j=1}^m A_{ji}(\A_{j,\S}^\top\x_\S)^2(\A_{j,\S^c}^\top\x_{\S^c}) + \frac{3}{m}\sum_{j=1}^m A_{ji}(\A_{j,\S}^\top\x_\S)(\A_{j,\S^c}^\top\x_{\S^c})^2 \nonumber\\
&\quad + \frac{1}{m}\sum_{j=1}^m A_{ji}(\A_{j,\S^c}^\top\x_{\S^c})^3 - \frac{1}{m}\sum_{j=1}^m A_{ji}(\A_{j,\S^c}^\top\x_{\S^c})(\A_j^\top\x^\star)^2.
\end{align}
These four terms can be bounded as follows: for the first term, we have
\begin{align*}
\Bigg|\frac{1}{m}\sum_{j=1}^m A_{ji}(\A_{j,\S}^\top\x_\S)^2(\A_{j,\S^c}^\top\x_{\S^c})\Bigg| &= \Bigg|\sum_{l\notin \S}x_l \frac{1}{m}\sum_{j=1}^mA_{ji}A_{jl}(\A_{j,\S}^\top\x_{\S})^2\Bigg|\\
&\le \|\x_{\S^c}\|_1 \max_{l\notin \S}\Bigg|\frac{1}{m}\sum_{j=1}^mA_{ji}A_{jl}(\A_{j,\S}^\top\x_{\S})^2\Bigg| \\
&\le \|\x_{\S^c}\|_1 \max_{l\notin \S}\sqrt{\frac{1}{m}\sum_{j=1}^mA_{ji}^2A_{jl}^2}\sqrt{\frac{1}{m}\sum_{j=1}^m(\A_{j,\S}^\top\x_\S)^4}, 
\end{align*}
where we used H\"{o}lder's inequality to obtain both inequalities. The first sum is bounded by Lemma \ref{lemma:tech3}: recalling $m\ge c_s(\gamma)k^2\log^2 n$, we have with probability at least $1-c_2n^{-13}$, where $c_2$ is the universal constant from Lemma \ref{lemma:tech3},
\begin{equation*}
\max_{l\notin \S}\Bigg|\frac{1}{m}\sum_{j=1}^mA_{ji}^2A_{jl}^2\Bigg| \le 1 + \frac{1}{k} \le 2.
\end{equation*}
By Lemma \ref{lemma:tech1} with $t=5\sqrt{\log n}$, we can bound, with probability $1-4n^{-12.5}$,
\begin{equation*}
\sqrt{\frac{1}{m}\sum_{j=1}^m (\A_{j,\S}^\top\x_\S)^4} \le \frac{1}{\sqrt{m}}\Bigl((3m)^{\frac{1}{4}} + \sqrt{k} + 5\sqrt{\log n}\Bigr)^2 \le 11
\end{equation*}
for all $\x\in\mathcal{X}$, where we used $5\sqrt{\log n} \le m^{\frac{1}{4}}$, which holds if $c_s(\gamma) \ge 5^4\min\{k^{-2},\; \log^{-3}n\}$. 
Put together, this gives 
\begin{equation*}
\frac{1}{m}\sum_{j=1}^m A_{ji}(\A_{j,\S}^\top\x_\S)^2(\A_{j,\S^c}^\top\x_{\S^c}) \le c'\delta,
\end{equation*}
where $c' = 11\sqrt{2}$.
The other terms in (\ref{eq:split2}) can be bounded the same way. For instance, we can write
\begin{equation*}
\Bigg|\frac{1}{m}\sum_{j=1}^m A_{ji}(\A_{j,\S}^\top\x_S)(\A_{j,\S^c}^\top\x_{\S^c})^2\Bigg| = \Bigg|\sum_{l\notin \S}x_l\sum_{r\notin \S}x_r \frac{1}{m}\sum_{j=1}^mA_{ji}A_{jl}A_{js}(\A_{j,\S}^\top\x_\S)\Bigg|
\end{equation*}
and, following the same steps as above, we obtain the bounds
\begin{align*}
\Bigg|\frac{1}{m}\sum_{j=1}^m A_{ji}(\A_{j,\S^c}^\top\x_{\S^c})(\A_j^\top\x^\star)^2\Bigg| &\le c' \delta,\\
\Bigg|\frac{1}{m}\sum_{j=1}^m A_{ji}(\A_{j,\S}^\top\x_S)(\A_{j,\S^c}^\top\x_{\S^c})^2\Bigg| &\le  c''\delta^2,\\
\Bigg|\frac{1}{m}\sum_{j=1}^m A_{ji}(\A_{j,\S^c}^\top\x_{\S^c})^3\Bigg| &\le c'''\delta^3,
\end{align*}
for all $\x\in\mathcal{X}$ with probability $1-3(c_2+4)n^{-12.5}$. Recalling that $\delta\le \frac{1}{4}$, we have for any constant $c>0$ satisfying $4c'\delta + 3c''\delta^2 + c'''\delta^3 \le \frac{c}{2}\delta$,
\begin{equation*}
\big|\nabla \widetilde{F}(\x)_i - \nabla \widetilde{F}(\mathbf{w})_i\big| \le \frac{c}{2}\delta \quad \text{for all } \x\in\mathcal{X}
\end{equation*}
with probability at least $1-\frac{c_p}{3}n^{-11}$.

\textbf{Step 3: Bound the term $|\E[\nabla \widetilde{F}(\x)_i] - \E[\nabla \widetilde{F}(\mathbf{w})_i]|$}\\
We use the Cauchy-Schwarz inequality to bound each of the four terms in (\ref{eq:split2}) in expectation. The first term in (\ref{eq:split2}) can be bounded by
\begin{align*}
\E\Bigg[\frac{1}{m}\sum_{j=1}^m A_{ji}(\A_{j,\S}^\top\x_\S)^2(\A_{j,\S^c}^\top\x_{\S^c})\Bigg] &\le \E\big[A_{1i}^2\big]^{\frac{1}{2}} \E\big[(\A_{1,\S}^\top\x_\S)^4(\A_{1,\S^c}^\top\x_{\S^c})^2\big]^{\frac{1}{2}} \\
&= \Bigl(\E\big[(\A_{1,\S}^\top\x_\S)^4\big]\E\big[(\A_{1,\S^c}^\top\x_{\S^c})^2\big]\Bigr)^{\frac{1}{2}} \\
&\le \sqrt{3\|\x_{\S^c}\|_2^2} \\
&\le \sqrt{3}\delta
\end{align*}
for all $\x\in\mathcal{X}$, where we used that $\A_{1,\S}^\top\x_\S \sim \gauss(0,\|\x_\S\|_2^2)$ and $\A_{1,\S^c}^\top\x_{\S^c} \sim \gauss(0,\|\x_{\S^c}\|_2^2)$ are independent, and that $\|\x_\S\|_2\le 1$ and $\|\x_{\S^c}\|_2\le \delta$ by the definition of $\mathcal{X}$. Similarly, we can bound the other terms:
\begin{align*}
\E\Bigg[\frac{1}{m}\sum_{j=1}^m A_{ji}(\A_{j,\S^c}^\top\x_{\S^c})(\A_j^\top\x^\star)^2\Bigg] &\le \sqrt{3}\delta,\\
\E\Bigg[\frac{1}{m}\sum_{j=1}^m A_{ji}(\A_{j,\S}^\top\x_\S)(\A_{j,\S^c}^\top\x_{\S^c})^2\Bigg] &\le  \sqrt{3}\delta^2,\\
\E\Bigg[\frac{1}{m}\sum_{j=1}^m A_{ji}(\A_{j,\S^c}^\top\x_{\S^c})^3\Bigg] &\le \sqrt{15}\delta^3.
\end{align*}
This completes the proof that 
\begin{equation*}
\big|\E[\nabla \widetilde{F}(\x)_i] - \E\big[\nabla \widetilde{F}(\mathbf{w})_i]\big| \le \frac{c}{2}\delta.
\end{equation*}

\textbf{Step 4, part (a): Bound the term $|\nabla \widetilde{F}(\mathbf{w})_i - \E[\nabla \widetilde{F}(\mathbf{w})_i]|$ by $\gamma\frac{\|\x_\S\|_1}{k}$}\\
Finally, we need to bound the term $|\nabla \widetilde{F}(\mathbf{w})_i - \E[\nabla \widetilde{F}(\mathbf{w})_i]|$ in (\ref{eq:split}) for all $\x\in\mathcal{X}$ and $i\in [n]$ with probability $1-\frac{c_p}{3}n^{-10}$, which then completes the proof of Lemma \ref{lemma:support1}. We begin by showing the bound $\gamma\frac{\|\x_\S\|_1}{k}$.

We decompose $\nabla \widetilde{F}(\mathbf{w})_i$ in a straightforward, albeit somewhat lengthy manner. We have
\begin{align*}
\nabla \widetilde{F}(\mathbf{w})_i &= \frac{1}{m}\sum_{j=1}^m\bigl((\A_j^\top\mathbf{w})^3 - (\A_j^\top\mathbf{w})(\A_j^\top\x^\star)^2\bigr)A_{ji} \\
&= \big(w_i^3 - w_i(x^\star_i)^2\big)\frac{1}{m}\sum_{j=1}^mA_{ji}^4 + \big(3w_i^2 - (x^\star_i)^2\big)\frac{1}{m}\sum_{j=1}^mA_{ji}^3 (\A_{j,-i}^\top\mathbf{w}_{-i}) \\
&\quad - 2w_ix^\star_i\frac{1}{m}\sum_{j=1}^mA_{ji}^3(\A_{j,-i}^\top\x^\star_{-i}) + 3w_i\frac{1}{m}\sum_{j=1}^mA_{ji}^2(\A_{j,-i}^\top\mathbf{w}_{-i})^2 \\
&\quad - 2x^\star_i\frac{1}{m}\sum_{j=1}^mA_{ji}^2(\A_{j,-i}^\top\mathbf{w}_{-i})(\A_{j,-i}^\top\x^\star_{-i}) - w_i\frac{1}{m}\sum_{j=1}^mA_{ji}^2(\A_{j,-i}^\top\x^\star_{-i})^2 \\
&\quad + \frac{1}{m}\sum_{j=1}^mA_{ji} (\A_{j,-i}^\top\mathbf{w}_{-i})^3 - \frac{1}{m}\sum_{j=1}^mA_{ji}(\A_{j,-i}^\top\mathbf{w}_{-i})(\A_{j,-i}^\top\x^\star_{-i})^2 \\
&=: B_1 + B_2 + B_3 + B_4 + B_5 + B_6 + B_7 + B_8,
\end{align*}
and we will show that $|B_l-\E[B_l]|$ is small for all $l=1,...,8$. By Lemmas \ref{lemma:tech3} and \ref{lemma:tech7}, all the following statements hold with probability $1-\frac{c_p}{3}n^{-10}$ for all $\mathbf{w}\in \R^n$ with $\|\mathbf{w}\|_2\le 1$ and $\mathbf{w}_{\S^c}=\mathbf{0}$, and for all $i=1,...,n$. We write the bounds in terms of $k$ instead of $m$ using the assumption $m\ge c_s(\gamma)k^2\log^2 n$.
\begin{itemize}
\item For the first term, we have $\E[B_1] = 3(w_i^3 - w_i(x^\star_i)^2)$, and 
\begin{equation*}
\big|B_1 - \E[B_1]\big| = \Bigg|\big(w_i^3 - w_i(x^\star_i)^2\big)\Biggl(\frac{1}{m}\sum_{j=1}^mA_{ji}^4 - 3\Biggr)\Bigg| \le \frac{\gamma\|\mathbf{w}\|_1}{8k}
\end{equation*}
by (\ref{eq:tech3_1}) of Lemma \ref{lemma:tech3}, where we used $|w_i^3 - w_i(x^\star_i)^2| \le \|\mathbf{w}\|_1$.

\item For the second term, we have $\E[B_2] = 0$, and
\begin{equation*}
|B_2| \le \big|3w_i^2 - (x^\star_i)^2\big| \|\mathbf{w}\|_1 \max_{l\neq i} \Bigg|\frac{1}{m}\sum_{j=1}^mA_{ji}^3A_{jl}\Bigg| \le \frac{\gamma\|\mathbf{w}\|_1}{8k}
\end{equation*}
by (\ref{eq:tech3_1}) of Lemma \ref{lemma:tech3}.

\item For the third term, we have $\E[B_3] = 0$, and
\begin{equation*}
|B_3| \le |2w_ix^\star_i|\frac{\gamma}{16k} \le \frac{\gamma\|\mathbf{w}\|_1}{8k}
\end{equation*}
by (\ref{eq:tech3_2}) of Lemma \ref{lemma:tech3}.

\item For the fourth term, we have $\E[B_4] = 3w_i\|\mathbf{w}_{-i}\|_2^2$, and
\begin{align*}
\big|B_4 - \E[B_4]\big| &\le 3|w_i|\Bigg|\sum_{l\neq i}w_l^2\Biggl(\frac{1}{m}\sum_{j=1}^mA_{ji}^2A_{jl}^2 - 1\Biggr) + \sum_{l\neq i}w_l\sum_{r\neq l,i} w_r \frac{1}{m}\sum_{j=1}^mA_{ji}^2A_{jl}A_{jr}\Bigg| \\
&\le \frac{\gamma\|\mathbf{w}\|_1}{16k} + \frac{\gamma\|\mathbf{w}\|_1}{16k} \\
&= \frac{\gamma\|\mathbf{w}\|_1}{8k},
\end{align*}
where we used Lemma (\ref{eq:tech3_1}) of \ref{lemma:tech3} to bound the first and (\ref{eq:tech7_2}) of Lemma \ref{lemma:tech7} to bound the second term.

\item For the fifth term, we have $\E[B_5] = -2x^\star_i\sum_{l\neq i}w_lx^\star_l $, and
\begin{align*}
\big|B_5 - \E[B_5]\big| &\le |2x^\star_i|\Bigg|\sum_{l\neq i} w_l \Biggl(\frac{1}{m}\sum_{j=1}^mA_{ji}^2A_{jl}(\A_{j,-i}^\top\x^\star_{-i}) - x^\star_l\Biggr)\Bigg| \le \frac{\gamma\|\mathbf{w}\|_1}{8k}
\end{align*}
where we used H\"{o}lder's inequality and (\ref{eq:tech3_2}) of Lemma \ref{lemma:tech3}.

\item For the sixth term, we have $\E[B_6] = -w_i\|\x^\star_{-i}\|_2^2$, and
\begin{align*}
\big|B_6 - \E[B_6]\big| &\le |w_i|\frac{\gamma}{8k} \le \frac{\gamma\|\mathbf{w}\|_1}{8k}
\end{align*} 
by (\ref{eq:tech3_3}) of Lemma \ref{lemma:tech3}.

\item For the seventh term, we have $\E[B_7] = 0$, and
\begin{align*}
|B_7| &\le \frac{\gamma\|\mathbf{w}\|_1}{8k}
\end{align*} 
by (\ref{eq:tech7_1}) of Lemma \ref{lemma:tech7}.

\item Finally, for the eighth term, we have $\E[B_8] = 0$, and
\begin{align*}
|B_8|&\le \Bigg|\sum_{l\neq i} w_l \frac{1}{m}\sum_{j=1}^mA_{ji}A_{jl}(\A_{j,-i}^\top\x^\star_{-i})^2\Bigg| \le \frac{\gamma\|\mathbf{w}\|_1}{8k}
\end{align*}
where we used H\"{o}lder's inequality and (\ref{eq:tech3_3}) of Lemma \ref{lemma:tech3}.
\end{itemize}

\textbf{Step 4, part (b): Bound the term $|\nabla \widetilde{F}(\mathbf{w})_i - \E[\nabla \widetilde{F}(\mathbf{w})_i]|$ by $\gamma\frac{\|\x_\S-\x^\star_\S\|_2}{\sqrt{k}}$}\\
To show this bound, we parametrize $\mathbf{z} = \mathbf{w} - \x^\star$ and write
\begin{align*}
\nabla \widetilde{F}(\mathbf{w})_i &= \frac{1}{m}\sum_{j=1}^m\bigl((\A_j^\top(\mathbf{z} + \x^\star))^3 - (\A_j^\top(\mathbf{z} + \x^\star))(\A_j^\top\x^\star)^2\bigr)A_{ji} \\
&= \frac{1}{m}\sum_{j=1}^m\bigl((\A_j^\top\mathbf{z})^3 + 3(\A_j^\top\mathbf{z})^2(\A_j^\top\x^\star) +2(\A_j^\top\mathbf{z})(\A_j^\top\x^\star)^2\bigr)A_{ji}. 
\end{align*}
We can decompose this expression as before and observe that each term depends at least linearly on $\|\mathbf{z}\|_2$. Further, because $\mathbf{z}_{\S^c} = \mathbf{w}_{\S^c} = \mathbf{0}$, we have $\|\mathbf{z}\|_1 \le \sqrt{k}\|\mathbf{z}\|_2$. Hence, we obtain the bound $\gamma\frac{\|\mathbf{z}\|_1}{k}\le \gamma\frac{\|\mathbf{z}\|_2}{\sqrt{k}}$ the same way as above using the bounds in Lemmas \ref{lemma:tech3} and \ref{lemma:tech7}. We omit the details to avoid repetition. 

All in all, putting everything together we have, with probability $1-\frac{c_p}{3}n^{-10}$,
\begin{equation*}
\big|\nabla \widetilde{F}(\mathbf{w})_i - \E[\nabla \widetilde{F}(\mathbf{w})_i]\big| \le \gamma \min\biggl\{\frac{\|\mathbf{w}\|_1}{k},\; \frac{\|\mathbf{z}\|_2}{\sqrt{k}}\biggr\} = \gamma \min\biggl\{\frac{\|\x_\S\|_1}{k},\; \frac{\|\x_\S - \x^\star_\S\|_2}{\sqrt{k}}\biggr\},
\end{equation*}
for all $\mathbf{w}\in \R^n$ with $\|\mathbf{w}\|_2\le 1$ and $\mathbf{w}_{\S^c}=\mathbf{0}$, and all $i=1,...,n$, which completes the proof of Lemma \ref{lemma:support1}. Finally, the simplified bound can be obtained directly by plugging in the additional assumptions.
\end{proof}

\begin{proof}[Proof of Lemma \ref{lemma:support2}]
As in the proof of Lemma \ref{lemma:support1}, we assume $\|\x^\star\|_2=1$ for notational simplicity, and denote the gradient corresponding to noiseless measurements by $\nabla \widetilde{F}$. Then,
\begin{equation*}
\big|\langle \nabla F(\x) - \nabla f(\x), \x-\x^\star\rangle\big| \le \big|\langle \nabla F(\x) - \nabla \widetilde{F}(\x), \x-\x^\star\rangle\big| + \big|\langle \nabla \widetilde{F}(\x) - \nabla f(\x), \x-\x^\star\rangle\big|,
\end{equation*}
and we begin by bounding the first of the two terms. 

\textbf{Step 1: Bound the term $\big|\langle \nabla F(\x) - \nabla \widetilde{F}(\x), \x-\x^\star\rangle\big|$}\\
We begin by bounding the first term $\big|\langle \nabla F(\x) - \nabla \widetilde{F}(\x), \x-\x^\star\rangle\big|$ by
\begin{equation*}
c\sigma\sqrt{\frac{\log n}{m}}\Bigl(\delta + \min\{\|\x_\S\|_1,\; \sqrt{k}\|\x_\S-\x^\star_\S\|_2\}\Bigr).
\end{equation*}
To obtain the second bound in the minimum, we can substitute $\mathbf{z} = \x - \x^\star$ and write
\begin{align*}
\big|\langle \nabla F(\x) - \nabla \widetilde{F}(\x), \x-\x^\star\rangle\big| &= \Bigg|\frac{1}{m}\sum_{j=1}^m\varepsilon_j (\A_j^\top\x)(\A_j^\top(\x-\x^\star))\Bigg| \\
&\le \Bigg|\frac{1}{m}\sum_{j=1}^m\varepsilon_j (\A_j^\top\mathbf{z})^2\Bigg| + \Bigg|\frac{1}{m}\sum_{j=1}^m\varepsilon_j (\A_j^\top\mathbf{z})(\A_j^\top\x^\star)\Bigg|.
\end{align*}
Using Lemmas \ref{lemma:tech3} and \ref{lemma:tech7} we can bound the first term, with probability $1 - \frac{c_p}{3}n^{10}$, by
\begin{align*}
\Bigg|\frac{1}{m}\sum_{j=1}^m\varepsilon_j (\A_j^\top\mathbf{z})^2\Bigg| &\le \Bigg|\frac{1}{m}\sum_{j=1}^m\varepsilon_j (\A_{j,\S}^\top\mathbf{z}_\S)^2\Bigg| + 2 \Bigg|\frac{1}{m}\sum_{j=1}^m\varepsilon_j (\A_{j,\S}^\top\mathbf{z}_\S)(\A_{j,\S^c}^\top\mathbf{z}_{\S^c})\Bigg| \\
&\quad + \Bigg|\frac{1}{m}\sum_{j=1}^m\varepsilon_j (\A_{j,\S^c}^\top\mathbf{z}_{\S^c})^2\Bigg| \\
&\le \frac{c}{4}\sigma \|\mathbf{z}_\S\|_1\sqrt{\frac{\log n}{m}} + 2 \|\mathbf{z}_{\S^c}\|_1 \max_{l\notin \S}\Bigg|\sum_{s\in \S}z_s \frac{1}{m}\sum_{j=1}^m\varepsilon_jA_{js}A_{jl}\Bigg| \\
&\quad + \|\mathbf{z}_{\S^c}\|_1^2\max_{l,s\notin \S} \Bigg|\frac{1}{m}\sum_{j=1}^m\varepsilon_jA_{jl}A_{js}\Bigg| \\
&\le \frac{c}{4}\sigma \|\mathbf{z}_\S\|_1\sqrt{\frac{\log n}{m}} + 2\frac{c}{4}\sigma \|\mathbf{z}_\S\|_1\sqrt{\frac{\log n}{m}} \delta + \frac{c}{4}\sigma \sqrt{\frac{\log n}{m}}\delta^2,
\end{align*}
provided that $\frac{c}{4}$ is at least as large as the universal constants from Lemmas \ref{lemma:tech3} and \ref{lemma:tech7}, where we used 
\begin{equation*}
\frac{1}{m}\sum_{j=1}^m\varepsilon_j (\A_{j,\S}^\top\mathbf{z}_\S)^2 = \sum_{l\in \S}z_l^2\frac{1}{m}\sum_{j=1}^m\varepsilon_j A_{jl}^2 + \sum_{\substack{l,s\in \S \\ l\neq s}}z_lz_s\frac{1}{m}\sum_{j=1}^m\varepsilon_j A_{jl}A_{js}
\end{equation*}
and Lemmas \ref{lemma:tech3} and \ref{lemma:tech7} for the second inequality, and Lemma \ref{lemma:tech3} to bound the other terms. Similarly, we can bound
\begin{align*}
\Bigg|\frac{1}{m}\sum_{j=1}^m\varepsilon_j (\A_j^\top\mathbf{z})(\A_j^\top\x^\star)\Bigg| &\le \Bigg|\sum_{l=1}^nz_l\frac{1}{m}\sum_{j=1}^m\varepsilon_j A_{jl}(\A_j^\top\x^\star)\Bigg| \le \frac{c}{2}\sigma\big(\|\mathbf{z}_\S\|_1 + \delta\big)\sqrt{\frac{\log n}{m}},
\end{align*}
where we used Lemma \ref{lemma:tech3}. All in all, this yields with probability $1-\frac{c_p}{3}n^{-10}$, recalling that $\|\mathbf{z}_\S\|_1 \le \sqrt{k}\|\x_\S-\x^\star_\S\|_2$,
\begin{align*}
\big|\langle \nabla F(\x) - \nabla \widetilde{F}(\x), \x-\x^\star\rangle\big| \le c\sigma\sqrt{\frac{\log n}{m}} \Bigl(\delta + \sqrt{k}\|\x_\S-\x^\star_\S\|_2\Bigr).
\end{align*}
The first bound in the minimum can be obtained following exactly the same steps as above without substituting $\mathbf{z} = \x- \x^\star$.

\textbf{Step 2: Bound the term $\big|\langle \nabla \widetilde{F}(\x) - \nabla f(\x), \x-\x^\star\rangle\big|$ by $\gamma(\frac{\|\x_\S\|_1}{\sqrt{k}} + \delta )$}\\
Writing $\mathbf{w}\in\R^n$ for the vector $\x_\S$ padded with zeroes, i.e.\ $w_i = x_i$ for $i\in \S$ and $w_i =0$ otherwise, we can bound
\begin{align}
\label{eq:decomposition}
\big|\langle \nabla \widetilde{F}(\x) - \nabla f(\x), \x-\x^\star\rangle\big| &\le \big|\langle \nabla \widetilde{F}(\x), \x-\x^\star\rangle - \langle \nabla \widetilde{F}(\mathbf{w}), \mathbf{w}-\x^\star\rangle \big| \nonumber\\
&\quad + \big|\langle \nabla \widetilde{F}(\mathbf{w}), \mathbf{w}-\x^\star\rangle - \E[\langle \nabla \widetilde{F}(\mathbf{w}), \mathbf{w}-\x^\star\rangle] \big| \nonumber\\
&\quad +\big|\E[\langle \nabla \widetilde{F}(\mathbf{w}), \mathbf{w}-\x^\star\rangle] - \E[\langle \nabla \widetilde{F}(\x), \x-\x^\star\rangle] \big|,
\end{align} 
where we used that $\nabla f(\x) = \E[\nabla F(\x)] = \E[\nabla \widetilde{F}(\x)]$. To bound these three terms, we write
\begin{align*}
\langle \nabla \widetilde{F}(\x), \x-\x^\star\rangle  &= \frac{1}{m}\sum_{j=1}^m(\A_j^\top\x)^4 - \frac{1}{m}\sum_{j=1}^m (\A_j^\top\x)^3(\A_j^\top\x^\star) \nonumber\\
&\quad - \frac{1}{m}\sum_{j=1}^m(\A_j^\top\x)^2(\A_j^\top\x^\star)^2 + \frac{1}{m}\sum_{j=1}^m(\A_j^\top\x)(\A_j^\top\x^\star)^3 \nonumber\\
&= g_{1,\x}(\A) - g_{2,\x}(\A) - g_{3,\x}(\A) + g_{4,\x}(\A),
\end{align*}
and note that we have analogous expressions for $\langle\widetilde{F}(\mathbf{w}),\mathbf{w}-\x^\star\rangle$, $\E[\langle\widetilde{F}(\mathbf{w}),\mathbf{w}-\x^\star\rangle]$ and $\E[\langle\widetilde{F}(\x),\x-\x^\star\rangle]$.
We will show that, with probability $1-\frac{c_p}{3}n^{-10}$, the four terms $g_{i,\x}(\A)$ deviate by at most $\frac{\gamma}{4}(\frac{\|\x\|_1}{\sqrt{k}} + \delta )$ from their means by bounding each difference as in (\ref{eq:decomposition}). 

\textbf{Step 2, part (a): Split $g_{i,\x}(\A)$ into a term depending on $\x_\S$ and a residual term}\\
We first show that the first and last terms in (\ref{eq:decomposition}) are both bounded by $\frac{\gamma}{2}\delta$. To this end, we split each of the four terms $g_{i,\x}(\A)$ into a term that only depends on $\x_\S$ (which corresponds to $\langle \nabla \widetilde{F}(\mathbf{w}), \mathbf{w}-\x^\star\rangle$) and a residual term that also depends on $\x_{\S^c}$ (which corresponds to the difference $\langle \nabla \widetilde{F}(\mathbf{x}), \mathbf{x}-\x^\star\rangle - \langle \nabla \widetilde{F}(\mathbf{w}), \mathbf{w}-\x^\star\rangle$).
We only present the computation for $g_{1,\x}(\A)$, since the other three terms can be bounded following the same steps. We have
\begin{align*}
g_{1,\x}(\A) &= \frac{1}{m} \sum_{j=1}^m(\A_{j,\S}^\top\x_\S + \A_{j,\S^c}^\top\x_{\S^c})^4 \\*
&= \frac{1}{m}\sum_{j=1}^m(\A_{j,\S}^\top\x_\S)^4 + \frac{4}{m}\sum_{j=1}^m(\A_{j,\S}^\top\x_\S)^3(\A_{j,\S^c}^\top\x_{\S^c}) + \frac{6}{m}\sum_{j=1}^m(\A_{j,\S}^\top\x_\S)^2(\A_{j,\S^c}^\top\x_{\S^c})^2 \\
&\quad + \frac{4}{m}\sum_{j=1}^m(\A_{j,\S}^\top\x_\S)(\A_{j,\S^c}^\top\x_{\S^c})^3 + \frac{1}{m}\sum_{j=1}^m(\A_{j,\S^c}^\top\x_{\S^c})^4.
\end{align*} 
The first term only depends on $\x_\S = \mathbf{w}_\S$, and we denote it by $h_{1,\mathbf{w}}(\A) = \frac{1}{m}\sum_{j=1}^m(\A_{j,\S}^\top\mathbf{w}_\S)^4$.
The other terms can be bounded as follows:
\begin{align*}
\frac{4}{m}\sum_{j=1}^m(\A_{j,\S}^\top\x_\S)^3(\A_{j,\S^c}^\top\x_{\S^c}) &= 4\sum_{l\notin \S}x_l \frac{1}{m}\sum_{j=1}^mA_{jl}(\A_{j,\S}^\top\x_\S)^3 \\
&\le 4\|\x_{\S^c}\|_1 \max_{l\notin \S}\Bigg|\frac{1}{m}\sum_{j=1}^mA_{jl}(\A_{j,\S}^\top\x_\S)^3\Bigg| \\
&\le \frac{\gamma}{11}\delta,
\end{align*}
provided that $m \ge (\frac{44\tilde{c}}{\gamma})^2k\log^2 n$, where we used that $\|\x_\S\|_1\le \sqrt{k}$ and $\tilde{c}$ is the universal constant from Lemma \ref{lemma:tech7}. For the first inequality we used H\"{o}lder's inequality, and the second inequality holds by (\ref{eq:tech7_1}) of Lemma \ref{lemma:tech7} with probability $1-\frac{c_p}{9}n^{-10}$.
Similarly, we can bound
\begin{align*}
\frac{6}{m}\sum_{j=1}^m(\A_{j,\S}^\top\x_\S)^2(\A_{j,\S^c}^\top\x_{\S^c})^2 &= 6\sum_{l\notin \S}x_l\sum_{r\notin \S}x_r \frac{1}{m} \sum_{j=1}^mA_{jl}A_{jr}(\A_{j,\S}^\top\x_\S)^2\\
&\le 6\|\x_{\S^c}\|_1^2 \max_{l,r\notin \S}\sqrt{\frac{1}{m}\sum_{j=1}^mA_{jl}^2A_{jr}^2} \sqrt{\frac{1}{m}\sum_{j=1}^m(\A_{j,\S}^\top\x_\S)^4}\\
&\le \frac{c}{11}\delta^2,
\end{align*}
provided that $c>0$ is large enough, where we used H\"{o}lder's inequality in the second line and the last inequality holds by Lemmas \ref{lemma:tech1} and \ref{lemma:tech3} with probability $1 - \frac{2c_p}{9}n^{-10}$. The same computation yields
\begin{align*}
\frac{4}{m}\sum_{j=1}^m(\A_{j,\S}^\top\x_\S)(\A_{j,\S^c}^\top\x_{\S^c})^3 &\le \frac{c}{11} \delta^3,  \\
\frac{1}{m}\sum_{j=1}^m(\A_{j,\S^c}^\top\x_{\S^c})^4 &\le \frac{c}{11}\delta^4. 
\end{align*}
Putting everything together, we can bound the residual term by $\frac{c}{8}\delta$ since $\delta \le \frac{1}{4}\min\{\frac{\gamma}{c}, 1\}$.
Bounding $g_{2,\x}(\A)$, $g_{3,\x}(\A)$ and $g_{4,\x}(\A)$ the same way, we have, with probability $1-\frac{c_p}{3}n^{-10}$, 
\begin{equation*}
\big| \langle \nabla \widetilde{F}(\mathbf{x}), \mathbf{x}-\x^\star\rangle - \langle \nabla \widetilde{F}(\mathbf{w}), \mathbf{w}-\x^\star\rangle \big| \le \frac{\gamma}{2}\delta.
\end{equation*}
For the difference in expectation, we can bound, using $\|\x\|_2\le 1$ and $\|\x_{\S^c}\|_2\le \|\x_{\S^c}\|_1\le\delta$,
\begin{align*}
\E\Bigg[\frac{1}{m}\sum_{j=1}^m(\A_{j,\S}^\top\x_\S)^3(\A_{j,\S^c}^\top\x_{\S^c})\Bigg] &= \E\big[(\A_{1,\S}^\top\x_\S)^3\big]\E\big[\A_{j,\S^c}^\top\x_{\S^c}\big] = 0, \\
\E\Bigg[\frac{1}{m}\sum_{j=1}^m(\A_{j,\S}^\top\x_\S)^2(\A_{j,\S^c}^\top\x_{\S^c})^2\Bigg] &= \E\big[(\A_{1,\S}^\top\x_\S)^2\big]\E\big[(\A_{j,\S^c}^\top\x_{\S^c})^2\big] \le \delta^2, \\
\E\Bigg[\frac{1}{m}\sum_{j=1}^m(\A_{j,\S}^\top\x_\S)(\A_{j,\S^c}^\top\x_{\S^c})^3\Bigg] &= \E\big[\A_{1,\S}^\top\x_\S\big]\E\big[(\A_{j,\S^c}^\top\x_{\S^c})^3\big] = 0, \\
\E\Bigg[\frac{1}{m}\sum_{j=1}^m(\A_{j,\S^c}^\top\x_{\S^c})^4\Bigg] &= \E\big[(\A_{j,\S^c}^\top\x_{\S^c})^4\big] \le 3\delta^4.
\end{align*}
Repeating this for $g_{2,\x}(\A)$, $g_{3,\x}(\A)$ and $g_{4,\x}(\A)$ shows 
\begin{equation*}
\big| \E\big[\langle \nabla F(\mathbf{x}), \mathbf{x}-\x^*\rangle\big] - \E\big[\langle \nabla F(\mathbf{w}), \mathbf{w}-\x^*\rangle\big] \big| \le \frac{\gamma}{2}\delta.
\end{equation*}

\textbf{Step 2, part (b): Bound the term $|\langle \nabla F(\mathbf{w},\mathbf{w}-\x^\star\rangle - \E[\langle \nabla F(\mathbf{w}), \mathbf{w}-\x^\star\rangle] |$}\\
Finally, we need to bound the second term in (\ref{eq:decomposition}), i.e.\ we need to show that the terms $h_{i,\mathbf{w}}(\A)$ concentrate around their respective expectations.
Let $N_{\epsilon}$ be a smallest $\epsilon$-net of the set $\mathcal{X} = \{\mathbf{w}\in\R^n: \mathbf{w}_{\S^c} = \mathbf{0}, \|\mathbf{w}\|_2 = 1\}$ (which corresponds to the unit sphere in $\R^k$), where $\epsilon = c_1\gamma n^{-3}$ for some constant $c_1 > 0$. We will first show that we can bound the difference $|h_{i,\mathbf{w}}(\A)- \E[h_{i,\mathbf{w}}(\A)]|\le \frac{\gamma}{8}\frac{\|\mathbf{w}\|_1}{\sqrt{k}}$ for every $\mathbf{w}\in N_{\epsilon}$ via concentration of Lipschitz functions for Gaussian random variables. Then, we will extend this bound to every $\mathbf{w}\in \mathcal{X}$.

As the functions $h_{i,\mathbf{w}}$ are not globally Lipschitz continuous, we cannot directly apply Theorem \ref{thm:ref1}. We will first bound the Lipschitz constant of $h_{i,\mathbf{w}}$ restricted to the set $\mathcal{A}$ defined as the intersection of the sets defined in Lemma \ref{lemma:tech2} and Lemma \ref{lemma:tech4}, and then extend this restricted function to a function $\tilde{h}_{i,\mathbf{w}}$ on the entire space such that $\tilde{h}_{i,\mathbf{w}}$ is globally Lipschitz continuous. We can then apply Theorem \ref{thm:ref1} to $\tilde{h}_{i,\mathbf{w}}$, which also provides a high probability bound for $h_{i,\mathbf{w}}$, since by construction $h_{i,\mathbf{w}}(\A)=\tilde{h}_{i,\mathbf{w}}(\A)$ with high probability.

\textbf{Step 2, part (b.1): Bound the Lipschitz constant of $h_{i,\mathbf{w}}$ restricted to $\mathcal{A}$}\\
Let the subset $\mathcal{A}\subset\R^{m\times n}$ be defined as the intersection of the sets defined in Lemma \ref{lemma:tech2} with $t = 5\sqrt{\log n}$ and Lemma \ref{lemma:tech4}. More precisely, we require the projection of $\mathcal{A}$ onto $\R^{m\times\S}$ to be as in Lemma \ref{lemma:tech2}.
By the aforementioned lemmas, we have $\P[\mathcal{A}^c]\le c_2n^{-12}$ for a universal constant $c_2>0$. Since $\mathcal{A}$ is convex, it follows from the mean-value theorem that the Lipschitz constant of $h_{i,\mathbf{w}}$ restricted to $\mathcal{A}$ is bounded by the norm of its gradient $\|\nabla h_{i,\mathbf{w}}\|_2$. For any $\{\mathbf{a}_j\}_{j=1}^m\in\mathcal{A}$, we can compute the following.
\begin{itemize}
\item $\frac{\partial}{\partial a_{jl}} h_{1,\mathbf{w}}(\mathbf{a}) = \frac{4}{m}w_l(\mathbf{a}_j^\top\mathbf{w})^3$. We can bound the Lipschitz constant by the squareroot of 
\begin{align*}
\|\nabla h_{1,\mathbf{w}}(\mathbf{a})\|_2^2 &= \frac{16}{m^2}\sum_{l=1}^nw_l^2\sum_{j=1}^m(\mathbf{a}_j^\top\mathbf{w})^6 \\
&\le \frac{16}{m^2}\bigl((15m)^{\frac{1}{6}} + \sqrt{8\log (2k)}\|\mathbf{w}\|_1 + 5\sqrt{\log n}\bigr)^6 \nonumber\\
&\le c_3 \frac{\|\mathbf{w}\|_1^2\log k}{m},
\end{align*}
where we used $\|\mathbf{w}\|_2= 1$, $\|\mathbf{w}\|_1\le \sqrt{k}$, $m \ge c_s(\gamma) \max\{k^2 \log^2 n, \log^5 n\}$ and Lemma \ref{lemma:tech2}.

\item $\frac{\partial}{\partial a_{jl}} h_{2,\mathbf{w}}(\mathbf{a}) = \frac{1}{m}(x_l^\star(\mathbf{a}_j^\top\mathbf{w})^3 + 3w_l (\mathbf{a}_j^\top\mathbf{w})^2(\mathbf{a}_j^\top\mathbf{w}^\star))$. Hence, we can bound the Lipschitz constant by the squareroot of
\begin{align*}
\|\nabla h_{2,\mathbf{w}}(\mathbf{a})\|_2^2 &= \frac{2}{m^2}\sum_{l=1}^n\sum_{j=1}^m(x^\star_l)^2(\mathbf{a}_j^\top\mathbf{w})^6 + 9 w_l^2(\mathbf{a}_j^\top\mathbf{w})^4(\mathbf{a}_j^\top\x^\star)^2 \\
&\le \frac{2}{m^2}\bigl((15m)^{\frac{1}{6}} + 2\sqrt{2\log (2k)}\|\mathbf{w}\|_1 + 5\sqrt{\log n} \bigr)^6 \\
&\quad + \frac{18}{m} \sqrt{\frac{1}{m}\sum_{j=1}^m(\mathbf{a}_j^\top\mathbf{w})^8} \sqrt{\frac{1}{m}\sum_{j=1}^m(\mathbf{a}_j^\top\x^\star)^4}\\
&\le c_4 \frac{\|\mathbf{w}\|_1^2\log k}{m},
\end{align*}
where we use Lemma \ref{lemma:tech2} to bound the sum $\frac{1}{m}\sum_{j=1}^m(\mathbf{a}_j^\top\mathbf{w})^8$ and Lemma \ref{lemma:tech4} to bound $\frac{1}{m}\sum_{j=1}^m(\mathbf{a}_j^\top\x^*)^4$.

\item $\frac{\partial}{\partial a_{ji}} h_{3,\mathbf{w}}(\mathbf{a}) = \frac{2}{m}(x_l^\star(\mathbf{a}_j^\top\mathbf{w})^2(\mathbf{a}_j^\top\x^\star) + w_l (\mathbf{a}_j^\top\mathbf{w})(\mathbf{a}_j^\top\x^\star)^2)$. Hence, we can bound the Lipschitz constant by the squareroot of
\begin{align*}
\|\nabla h_{3,\mathbf{w}}(\mathbf{a})\|_2^2 &= \frac{4}{m^2}\sum_{l=1}^n\sum_{j=1}^m2(x^\star_l)^2(\mathbf{a}_j^\top\mathbf{w})^4(\mathbf{a}_j^\top\x^\star)^2 + 2 w_l^2(\mathbf{a}_j^\top\mathbf{w})^2(\mathbf{a}_j^\top\x^\star)^4 \\
&\le \frac{8}{m}\sqrt{\frac{1}{m}\sum_{j=1}^m(\mathbf{a}_j^\top\mathbf{w})^8} \sqrt{\frac{1}{m}\sum_{j=1}^m(\mathbf{a}_j^\top\x^\star)^4} \\
&\quad + \frac{8}{m} \sqrt{\frac{1}{m}\sum_{j=1}^m(\mathbf{a}_j^\top\mathbf{w})^4} \sqrt{\frac{1}{m}\sum_{j=1}^m(\mathbf{a}_j^\top\x^\star)^8}\\
&\le c_5 \frac{\|\mathbf{w}\|_1^2\log k}{m},
\end{align*}
again by Lemmas \ref{lemma:tech2} and \ref{lemma:tech4}.

\item $\frac{\partial}{\partial a_{ji}} h_{4,\mathbf{w}}(\mathbf{a}) = \frac{1}{m}(3x_l^\star(\mathbf{a}_j^\top\mathbf{w})(\mathbf{a}_j^\top\x^\star)^2 + w_l(\mathbf{a}_j^\top\x^\star)^3)$. Hence, we can bound the Lipschitz constant by the squareroot of
\begin{align*}
\|\nabla h_{4,\mathbf{w}}(\mathbf{a})\|_2^2 &= \frac{2}{m^2}\sum_{l=1}^n\sum_{j=1}^m9(x^\star_l)^2(\mathbf{a}_j^\top\mathbf{w})^2(\mathbf{a}_j^\top\x^\star)^4 + w_l^2(\mathbf{a}_j^\top\x^\star)^6 \\
&\le \frac{18}{m}\sqrt{\frac{1}{m}\sum_{j=1}^m(\mathbf{a}_j^\top\mathbf{w})^4} \sqrt{\frac{1}{m}\sum_{j=1}^m(\mathbf{a}_j^\top\x^\star)^8} + \frac{2}{m} \frac{1}{m}\sum_{j=1}^m(\mathbf{a}_j^\top\x^\star)^6\\
&\le c_6 \frac{\|\mathbf{w}\|_1^2\log k}{m},
\end{align*}
again by Lemmas \ref{lemma:tech2} and \ref{lemma:tech4}.
\end{itemize}

\textbf{Step 2, part (b.2): Construct a globally Lipschitz continuous extension of $h_{i,\mathbf{w}}$}\\
We only present the following steps for the first term $h_{1,\mathbf{w}}$, as the proofs for the other three terms follow the exact same steps.
Consider the following Lipschitz extension of $h_{1,\mathbf{w}}$:
\begin{equation*}
\tilde{h}_{1,\mathbf{w}}(\mathbf{a}) = \inf_{\mathbf{a'}\in \mathcal{A}} \bigl(h_{1,\mathbf{w}}(\mathbf{a}') + \operatorname{Lip}(h_{1,\mathbf{w}}) \|\mathbf{a} - \mathbf{a}'\|_2\bigr),
\end{equation*} 
where we write $\operatorname{Lip}(h_{1,\mathbf{w}}) =  \sqrt{c_3 \frac{\|\mathbf{w}\|_1^2\log k}{m}}$. By definition, we have $\tilde{h}=h$ on $\mathcal{A}$, and it follows from an application of the triangle inequality that $\tilde{h}$ is globally Lipschitz continuous with Lipschitz constant $\operatorname{Lip}(h)$ (see e.g.\ Theorem 7.2 of \cite{M95}). 

We will show that $\tilde{h}_{1,\mathbf{w}}$ concentrates around its mean, which can potentially differ from the mean of $h_{1,\mathbf{w}}$.
Since $h_{1,\mathbf{w}}$ and $\tilde{h}_{1,\mathbf{w}}$ differ only on $\mathcal{A}^c$ (which has probability less than $c_2n^{-12}$), we can bound, using the Cauchy-Schwarz inequality,
\begin{equation*}
\E\big[|h_{1,\mathbf{w}}(\A)|\Eins_{\mathcal{A}^c}(\A)\big] \le \sqrt{\E\big[h_{1,\mathbf{w}}(\A)^2\big]} \sqrt{\E\big[\Eins_{\mathcal{A}^c}(\A)\big]} \le \frac{c'}{n^6},
\end{equation*}
where we used $\E[h_{1,\mathbf{w}}(\A)^2] \le 3 + 105/m$. Similarly, since $\tilde{h}_{1,\mathbf{w}}(\mathbf{a}) \le \operatorname{Lip}(h_{1,\mathbf{w}}) \|\mathbf{a}\|_2$, 
\begin{equation*}
\E\big[|\tilde{h}_{1,\mathbf{w}}(\A)|\Eins_{\mathcal{A}^c}(\A)\big] \le \operatorname{Lip}(h_{1,\mathbf{w}})\sqrt{\E\big[\|\A\|_2^2\big]} \sqrt{\E\big[\Eins_{\mathcal{A}^c}(\A)\big]} \le \frac{c''}{n^5}.
\end{equation*}
All in all, this shows that
\begin{equation*}
\big|\E[h_{1,\mathbf{w}}(\A)] - \E[\tilde{h}_{1,\mathbf{w}}(\A)]\big| \le \frac{c_{7}}{n^5}
\end{equation*}
for a constant $c_7 > 0$. Finally, using the triangle inequality and Theorem \ref{thm:ref1}, we have
\begin{equation*}
\P\biggl[\big|\tilde{h}_{1,\mathbf{w}}(\A) - \E[h_{1,\mathbf{w}}(\A)]\big| > \frac{\gamma}{8}\frac{\|\mathbf{w}\|_1}{\sqrt{k}} \biggr] \le 2\exp\Biggl(- \frac{(\frac{\gamma}{8}\frac{\|\mathbf{w}\|_1}{\sqrt{k}} - \frac{c_{7}}{n^5})^2}{2c_3\frac{\|\mathbf{w}\|_1^2\log k}{m}}\Biggr) \le 2\exp(-c_{8} k \log n)
\end{equation*}
for a constant $c_{8} \le \frac{\gamma^2c_s(\gamma)}{128c_3} - \frac{\gamma c_s(\gamma)c_{7}}{8c_3n^5}$.

\textbf{Step 2, part (b.3): Union bound over $\mathbf{w}\in N_{\epsilon}$}\\
Taking the union bound over all $\mathbf{w}\in N_{\epsilon}$, which has cardinality bounded by $(3/\epsilon)^k$, we have
\begin{equation*}
\P\biggl[\big|\tilde{h}_{1,\mathbf{w}}(\A) - \E[h_{1,\mathbf{w}}(\A)]\big| > \frac{\gamma}{8}\frac{\|\mathbf{w}\|_1}{\sqrt{k}} \text{ for some } \mathbf{w}\in N_{\epsilon}\biggr] \le 2\exp\biggl(-c_{8} k \log n + k\log \frac{3}{\epsilon}\biggr).
\end{equation*}
Since $h_{1,\mathbf{w}}(\mathbf{a}) = \tilde{h}_{1,\mathbf{w}}(\mathbf{a})$ for all $\mathbf{w}$ provided that $\mathbf{a}\in\mathcal{A}$, this implies
\begin{equation*}
\P\biggl[\big|h_{1,\mathbf{w}}(\A) - \E[h_{1,\mathbf{w}}(\A)]\big| > \frac{\gamma}{8}\frac{\|\mathbf{w}\|_1}{\sqrt{k}} \text{ for some } \mathbf{w}\in N_{\epsilon}\biggr] \le 2\exp(-c_9 k \log n) + c_2n^{-12},
\end{equation*} 
for a constant $c_{9} \le c_{8}-3 - \frac{1}{\log n}\log\frac{3}{c_1\gamma}$, as we have $\epsilon = c_1\gamma n^{-3}$.

\textbf{Step 2, part (b.4): From $\epsilon$-net to the full sphere}\\
Next, we show that $|h_{1,\mathbf{w}}(\A) - \E[h_{1,\mathbf{w}}(\A)]|\le \frac{\gamma}{4}\frac{\|\mathbf{w}\|_1}{\sqrt{k}}$ for any $\mathbf{w}\in \mathcal{X}$. The case $\|\mathbf{w}\|_2<1$ follows by considering the vector $\mathbf{w}/\|\mathbf{w}\|_2$ and rescaling.
For any $\mathbf{w}\in \mathcal{X}$, let $\mathbf{w}'\in N_{\epsilon}$ with $\|\mathbf{w}-\mathbf{w}'\|_2 \le \epsilon$. Then,
\begin{align*}
\big|h_{1,\mathbf{w}}(\A) - \E[h_{1,\mathbf{w}}(\A)]\big| &\le \big|h_{1,\mathbf{w}}(\A) - h_{1,\mathbf{w}'}(\A)\big| +\big|h_{1,\mathbf{w}'}(\A) - \E[h_{1,\mathbf{w}'}(\A)]\big| \\
&\quad +\big|\E[h_{1,\mathbf{w}'}(\A)] - \E[h_{1,\mathbf{w}}(\A)]\big|.
\end{align*}
The first and third term can be bounded using the indentity $a^4-b^4 = (a^2 + b^2)(a+b)(a-b)$:
\begin{align*}
\big|h_{1,\mathbf{w}}(\A) - h_{1,\mathbf{w}'}(\A)\big| &= \Bigg|\frac{1}{m}\sum_{j=1}^m \bigl((\A_j^\top\mathbf{w})^2 + (\A_j^\top\mathbf{w}')^2\bigr) \bigl(\A_j^\top(\mathbf{w}+\mathbf{w}')\bigr) \bigl(\A_j^\top(\mathbf{w}-\mathbf{w}')\bigr) \Bigg| \\
&\le \max_j 2\|\A_j\|_2^2 \cdot 2\|\A_j\|_2 \|\A_j\|_2 \|\mathbf{w}-\mathbf{w}'\|_2 \\
&\le 4 \big(\sqrt{k} + 5\sqrt{\log n}\big)^4 \|\mathbf{w}-\mathbf{w}'\|_2 \\
&\le \frac{\gamma}{16} \frac{\|\mathbf{w}\|_1}{\sqrt{k}},
\end{align*}
with probability $1-mn^{-12.5}$, provided $c_1$ is sufficiently large, where we used the fact that the norm $\|\cdot \|_2$ is 1-Lipschitz continuous and applied Theorem \ref{thm:ref1} to bound the term $\|\A_j\|_2$, and used $\|\mathbf{w}-\mathbf{w}'\|_2 \le \epsilon$ for the last inequality.
For the expectation, the same argument yields
\begin{align*}
\big|\E[h_{1,\mathbf{w}}(\A)] - \E[h_{1,\mathbf{w}'}(\A)]\big| &\le \E\bigl[2\|\A_j\|_2^2 \cdot 2\|\A_j\|_2 \|\A_j\|_2 \|\mathbf{w}-\mathbf{w}'\|_2\bigr] \\
&= 4 \big(3k + k(k-1)\big) \|\mathbf{w}-\mathbf{w}'\|_2 \\
&\le \frac{\gamma}{16} \frac{\|\mathbf{w}\|_1}{\sqrt{k}}.
\end{align*}
This completes the proof that 
\begin{equation*}
\P\biggl[\big|h_{1,\mathbf{w}}(\A) - \E[h_{1,\mathbf{w}}(\A)]\big| < \frac{\gamma}{4}\frac{\|\mathbf{w}\|_1}{\sqrt{k}} \text{ for all } \mathbf{w}\in \mathcal{X} \biggr] \ge 1 - \frac{c_p}{24}n^{-10},
\end{equation*}
provided that $c_p \ge 24 + 24c_2n^{-2}$, and if $m\le n^{2.5}$. The other case $m>n^{2.5}$ is simpler and can be shown following the same steps, writing the probabilities in terms of $m$ instead of $n$. We omit the details to avoid repetition.
Repeating the same steps for the terms $h_{2,\mathbf{w}}$, $h_{3,\mathbf{w}}$ and $h_{4,\mathbf{w}}$ shows that
\begin{align*}
&\P\biggl[\big|\langle \nabla \widetilde{F}(\mathbf{w}), \mathbf{w}-\x^\star\rangle - \E\big[\langle \nabla \widetilde{F}(\mathbf{w}), \mathbf{w}-\x^\star\rangle\big]\big|\le \gamma\frac{\|\mathbf{w}\|_1}{\sqrt{k}} \text{ for all } \mathbf{w}\in\mathcal{X} \biggr] \ge& 1 - \frac{c_p}{6}n^{-10}.
\end{align*}
Finally, the bound also holds for any vector $\mathbf{w}\in\R^n$ with $\mathbf{w}_{\S^c} = \mathbf{0}$ and $\|\mathbf{w}\|_2<1$ by considering $\mathbf{w}/\|\mathbf{w}\|_2$, and noting that each of the four terms which make up $\langle \nabla \widetilde{F}(\mathbf{w}), \mathbf{w}-\x^\star\rangle$ scale at least linearly in $\|\mathbf{w}\|_2$.

\textbf{Step 3: Bound the term $\big|\langle \nabla \widetilde{F}(\x) - \nabla f(\x), \x-\x^\star\rangle\big|$ by $\gamma(\|\x_\S-\x_\S^\star\|_2^2 +\delta^2)$}\\
Substituting $\mathbf{z} = \x - \x^\star$, we have
\begin{equation}\label{eq:decomp_bound2}
\langle \nabla \widetilde{F}(\x), \x - \x^\star\rangle = \frac{1}{m}\sum_{j=1}^m(\A_j^\top\mathbf{z})^4 + \frac{3}{m}\sum_{j=1}^m(\A_j^\top\mathbf{z})^3(\A_j^\top\x^\star) + \frac{2}{m}\sum_{j=1}^m (\A_j^\top\mathbf{z})^2(\A_j^\top\x^\star)^2.
\end{equation}
The proof follows the same steps as the proof of the bound $\gamma (\frac{\|\x_\S\|_1}{\sqrt{k}} + \delta)$, writing $\mathbf{z}_\S = \|\mathbf{z}_\S\|_2\frac{\mathbf{z}_\S}{\|\mathbf{z}_\S\|_2}$ and using the fact that $\|\mathbf{z}_{\S^c}\|_1 = \|\x_{\S^c}\|_1 \le \delta$ and $\frac{\|\mathbf{z}_\S\|_1}{\sqrt{k}}\le \|\mathbf{z}_\S\|_2$. We then obtain the desired bound since each of the three terms depends at least quadratically on $\mathbf{z}$.

We illustrate this argument for the last term in (\ref{eq:decomp_bound2}). The other two terms can be controlled similarly (they are easier to control because of the higher order dependence on $\mathbf{z}$). We have
\begin{align*}
\frac{1}{m}\sum_{j=1}^m (\A_j^\top\mathbf{z})^2(\A_j^\top\x^\star)^2 &= \frac{1}{m}\sum_{j=1}^m (\A_{j,\S}^\top\mathbf{z}_\S)^2(\A_j^\top\x^\star)^2 + \frac{2}{m}\sum_{j=1}^m (\A_{j,\S}^\top\mathbf{z}_\S)(\A_{j,\S^c}^\top\mathbf{z}_{\S^c})(\A_j^\top\x^\star)^2 \\
&\quad + \frac{1}{m}\sum_{j=1}^m (\A_{j,\S^c}^\top\mathbf{z}_{\S^c})^2(\A_j^\top\x^\star)^2\\
&=: B_1 + B_2 + B_3.
\end{align*}
The same computation as in Step 2, part (b) shows that, with probability $1-\frac{c_p}{27}n^{-10}$,
\begin{equation*}
\big|B_1 - \E[B_1]\big| \le \frac{\gamma}{12} \|\mathbf{z}_\S\|_2^2.
\end{equation*}
As in Step 2, part (a), we can bound with probability $1-\frac{c_p}{27}n^{-10}$,
\begin{align*}
\big|B_2 - \E[B_2]\big| &= 2\Bigg| \sum_{i\notin \S}z_i \sum_{l\in \S}z_l \frac{1}{m}\sum_{j=1}^mA_{ji}A_{jl}(\A_j^\top\x^\star)^2 \Bigg| \\*
&\le 2\|\mathbf{z}_{\S^c}\|_1 \|\mathbf{z}_\S\|_1 \max_{i\notin \S, l\in \S}\Bigg|\frac{1}{m}\sum_{j=1}^mA_{ji}A_{jl}(\A_j^\top\x^\star)^2 \Bigg|  \\
&\le \frac{\gamma}{6} \|\mathbf{z}_\S\|_2 \delta,
\end{align*}
where we used H\"{o}lder's inequality, (\ref{eq:tech3_3}) of Lemma \ref{lemma:tech3} together with $m\ge c_s(\gamma)k^2\log^2 n$, and $\|\mathbf{z}_\S\|_1\le \sqrt{k}\|\mathbf{z}_\S\|_2$. The same argument gives
\begin{equation*}
|B_3-\E[B_3]| \le \frac{\gamma}{12}\delta^2
\end{equation*}
with probability $1-\frac{c_p}{27}n^{-10}$. Combining these bounds, we have, with probability $1-\frac{c_p}{9}n^{-10}$,
\begin{equation*}
\Bigg|\frac{2}{m}\sum_{j=1}^m (\A_j^\top\mathbf{z})^2(\A_j^\top\x^\star)^2 - \E\Biggl[\frac{2}{m}\sum_{j=1}^m (\A_j^\top\mathbf{z})^2(\A_j^\top\x^\star)^2\Biggr]\Bigg| \le \frac{\gamma}{3} \big(\|\mathbf{z}_\S\|_2^2 + \delta^2\big),
\end{equation*}
where we used the inequality $2ab\le a^2 + b^2$.
Repeating these steps for the other two terms in (\ref{eq:decomp_bound2}) completes the proof that, with probability $1-\frac{c_p}{3}n^{-10}$, we have
\begin{equation*}
\big|\langle \nabla \widetilde{F}(\x), \x - \x^\star\rangle - \E[\langle \nabla \widetilde{F}(\x), \x - \x^\star\rangle]\big| \le \gamma\big(\|\x_\S-\x_\S^\star\|_2^2 + \delta^2\big) \quad\text{ for all } \x\in \mathcal{X}.
\end{equation*}
Finally, the simplified bound can be obtained directly by plugging in the additional assumptions.
\end{proof}

As the proof of Lemma \ref{lemma:support3} involves additional technical challenges in discrete time, we present the proof of Lemma \ref{lemma:support3} separately for the continuous-time and discrete-time cases. 

\begin{proof}[Proof of Lemma \ref{lemma:support3} in the continuous-time case] 
We will show that the three inequalities (\ref{eq:claim1}), (\ref{eq:claim2}) and (\ref{eq:claim3}) are satisfied by showing that, as long as all inequalities are satisfied, neither can be violated first. 

Let $I_0$ be the index for which we have the non-zero initialization $X_{I_0}(0) >0$. As we can only recover the signal $\x^\star$ up to a global sign from phaseless measurements, we can assume without loss of generality that $x^\star_{I_0} >0$, since we can otherwise replace $\x^\star$ by $-\x^\star$ in the proof below. That is, we need to show (\ref{eq:claim1}) with $\xi = +1$.

At $t=0$, (\ref{eq:claim1}) is satisfied by the definition of the initialization. Using standard concentration bounds for sub-exponential random variables (see e.g.\ Prop. 5.16 of \cite{V12}), we can bound with probability $1-4n^{-10}$,
\begin{equation*}
\frac{1}{m}\sum_{j=1}^mY_j = \frac{1}{m}\sum_{j=1}^m(\A_j^\top\x^\star)^2 + \frac{1}{m}\sum_{j=1}^m\varepsilon_j > 1-(9 + 25\sigma)\sqrt{\frac{\log n}{m}}.
\end{equation*}
Hence, the initialization (\ref{eq:initialization}) satisfies $\|\X(0)\|_2^2\ge \frac{1}{3}-(3+9\sigma)\sqrt{\frac{\log n}{m}}$. Similarly, we can show $\|\X(0)\|_2^2\le 2$ and therefore (\ref{eq:claim2}). Finally, inequality (\ref{eq:claim3}) is satisfied at $t=0$ since we have 
\begin{equation*}
\X(0)^\top\x^\star\ge \|\X_\S(0)\|_1x^\star_{min}\ge \frac{c_\star}{2\sqrt{k}} \ge (9 + 25\sigma )\sqrt{\frac{\log n}{m}} \ge 3\|\X(0)\|_2^2 - 1.
\end{equation*}

\textbf{Step 1: (\ref{eq:claim1}) continues to hold as long as (\ref{eq:claim2}) holds}\\
We prove this inequality by contradiction. Define $T_1 = \inf\{t\ge 0: X_i(t)x^\star_i <0 \text{ for some } i\}$ as the first time inequality (\ref{eq:claim1}) is violated. Assume that $T_1<T$ and that (\ref{eq:claim2}) holds for all $t\le T_1$. Let $i$ be the index for which $X_i(t)x^\star_i<0$ first occurs. This is only possible for a coordinate $i\in \S$, and by continuity we must have $X_i(T_1) = 0$.
Without loss of generality, assume that $x^\star_i>0$. We will show that
\begin{equation*}
\frac{d}{dt}X_i(T_1) = -\sqrt{X_i(T_1)^2 + \beta^2} \; \nabla F(\X(T_1))_i > 0,
\end{equation*}
that is $X_i(t)$ must become positive for $t$ close enough to $T_1$, which is a contradiction to the definition of $T_1$, and we hence must have $T_1 \ge T$.
We can bound, since both (\ref{eq:claim1}) and (\ref{eq:claim2}) hold at $T_1$, 
\begin{equation}\label{eq:innerproduct}
\X(T_1)^\top\x^\star \ge \|\X_\S(T_1)\|_1 x^\star_{min} \ge \|\X_\S(t_1)\|_1\frac{c_\star}{\sqrt{k}},
\end{equation}
and hence
\begin{equation*}
\nabla f(\X(T_1))_i = -2\big(\X(T_1)^\top\x^\star\big)x^\star_i \le - 2\|\X_\S(T_1)\|_1\frac{c_\star^2}{k},
\end{equation*}
where we used $x^\star_i \ge x^\star_{min} \ge \frac{c_\star}{\sqrt{k}}$.
As we assume $T_1 < T$, we can use the simplified bound in Lemma \ref{lemma:support1} to bound
\begin{equation*}
|\nabla F(\X(T_1))_i - \nabla f(\X(T_1))_i| \le 0.1\frac{c_\star^2\|\X_\S(T_1)\|_1}{k} \le \frac{1}{20}|\nabla f(\X(T_1))_i|
\end{equation*}
with probability $1-c_pn^{-10}$ if $c_s$ is sufficiently large, since we have, by assumption, $\|\X_{\S^c}(T_1)\|_1\le \delta\le c_1/n$.
Hence, we have $\nabla F(\X(T_1))_i < 0$, which implies $\frac{d}{dt}X_i(T_1)>0$ and contradicts the definition of $T_1$. Therefore, we must have $T_1 \ge T$.

\textbf{Step 2: (\ref{eq:claim2}) continues to hold as long as (\ref{eq:claim1}) and (\ref{eq:claim3}) hold}\\
We begin by showing the lower bound in (\ref{eq:claim2}). 

Define $T_2 = \inf \{t\ge 0: \|\X(t)\|_2^2<\frac{1}{3} - (3+9\sigma)\sqrt{\frac{\log n}{m}}\}$, and assume that $T_2<T$ and that inequalities (\ref{eq:claim1}) and (\ref{eq:claim3}) are satisfied for all $t\le T_2$. By continuity, we must have $\|\X(T_2)\|_2^2 = \frac{1}{3} - (3+9\sigma)\sqrt{\frac{\log n}{m}}$, and we will show that
\begin{align*}
\frac{d}{dt}\|\X(T_2)\|_2^2 &= -2\sum_{i=1}^nX_i(T_2)\frac{d}{dt}X_i(T_2) = -2\sum_{i=1}^n X_i(T_2) \sqrt{X_i(T_2)^2+\beta^2}\; \nabla F(\X(T_2))_i
\end{align*}
is positive, which implies $\|\X(t)\|_2^2>\frac{1}{3} - (3+9\sigma)\sqrt{\frac{\log n}{m}}$ for $t$ close enough to $T_2$ and contradicts the definition of $T_2$. Hence, we must have $T_2\ge T$.
Since $3\|\X(T_2)\|_2^2 - 1 < 0$, we can bound the population gradient using (\ref{eq:innerproduct}) and the assumption $\|\X_{\S^c}(t)\|_1\le \delta$:
\begin{align}\label{eq:popgradient}
\nabla f(\X(T_2))_i \begin{cases} \le -2\|\X_\S(T_2)\|_1\frac{c_\star^2}{k} \quad & x^\star_i>0 \\ \in (-\delta, \delta) & x^\star_i=0  \\ \ge 2\|\X_\S(T_2)\|_1\frac{c_\star^2}{k} & x^\star_i<0 \end{cases}
\end{align}
Recall that in the previous step we have shown, with probability $1-c_pn^{-10}$ and for all $i\in [n]$,
\begin{equation}\label{eq:difgradient}
|\nabla F(\X(T_1))_i - \nabla f(\X(T_1))_i| \le 0.1\frac{c_\star^2\|\X_\S(T_1)\|_1}{k},
\end{equation}
which results in bounds analogous to (\ref{eq:popgradient}) for the gradient $\nabla F(\X(T_2))_i$. In order to bound $\frac{d}{dt}\|\X(T_2)\|_2^2$, we write
\begin{align*}
\Bigg|\sum_{i\notin \S} X_i(T_2) \sqrt{X_i(T_2)^2+\beta^2}\; \nabla F(\X(T_2))_i\Bigg| \le \|\X_{\S^c}(T_2)\|_1 \le \delta,
\end{align*}
where we used $|\sqrt{X_i(T_2)^2+\beta^2}\; \nabla F(\X(T_2))_i|\le 1$. Using $\|\X_\S(T_2)\|_1 \ge \frac{1}{2}$, we have
\begin{align*}
-\sum_{i\in \S} X_i(T_2) \sqrt{X_i(T_2)^2+\beta^2}\; \nabla F(\X(T_2))_i \ge\sum_{i\in \S} X_i(T_2)^2  \frac{c_\star^2}{2k} \ge 0.15 \frac{c_\star^2}{k},
\end{align*}
where we used $\|\X_\S(T_2)\|_2^2 = \|\X(T_2)\|_2^2 - \|\X_{\S^c}(T_2)\|_2^2 \ge 0.3$. This shows that we must have $\frac{d}{dt}\|\X(T_2)\|_2^2 >0$, which contradicts the definition of $T_2$. Therefore, we must have $T_2 \ge T$.

The upper bound in (\ref{eq:claim2}) is an immediate consequence of (\ref{eq:claim3}): by the Cauchy-Schwarz inequality we have $\X(t)^\top\x^\star\le \|\X(t)\|_2$, so, for $3\|\X(t)\|_2^2 - 1 > 0$, we can bound
\begin{equation*}
\frac{2\|\X(t)\|_2}{3\|\X(t)\|_2^2-1} \ge \frac{2(\X(t)^\top\x^\star)}{3\|\X(t)\|_2^2-1} \ge \frac{1}{\sqrt{3}} \quad \Rightarrow \quad \|\X(t)\|_2 \le \frac{1+\sqrt{2}}{\sqrt{3}} < \sqrt{2}
\end{equation*}
by solving the quadratic form.

\textbf{Step 3: (\ref{eq:claim3}) continues to hold as long as (\ref{eq:claim1}) and (\ref{eq:claim2}) hold}\\
The proof of (\ref{eq:claim3}) follows the same recipe as the two previous steps, although the calculations are more complicated. When $3\|\X(t)\|_2^2-1 \le 0$, there is nothing to show. Otherwise, we can consider the ratio 
\begin{equation*}
R(t) = \frac{2(\X(t)^\top\x^\star)}{3\|\X(t)\|_2^2 - 1}
\end{equation*}
and show that it is bounded from below by $\frac{1}{\sqrt{3}}$ for all $t\le T$. 

Let $T_3 = \inf\{t\ge 0: R(t) < \frac{1}{\sqrt{3}}\}$ and assume $T_3<T$ as before. For notational simplicity, we will omit the argument $T_3$ in $X_i(T_3)$ in what follows. We can compute
\begin{align}\label{eq:diffratio}
\frac{d}{dt}R(T_3) = \sum_{i=1}^n-\sqrt{X_i^2 + \beta^2}\; \nabla F(\X)_i  \frac{2x^\star_i(3\|\X\|_2^2-1) - 2(\X^\top\x^\star)\cdot 6X_i}{(3\|\X\|_2^2-1)^2}.
\end{align}
By continuity, we have $R(T_3) = \frac{1}{\sqrt{3}}$. Then, for $X_i,x^\star_i>0$, we have
\begin{align}\label{eq:ratiosign}
2x^\star_i(3\|\X\|_2^2-1) - 2(\X^\top\x^\star)\cdot 6X_i &> 0 \nonumber\\
\Leftrightarrow \hspace{53mm}  X_i &< \frac{1}{\sqrt{3}}x^\star_i, 
\end{align}
and the analogous result for $X_i,x^*_i<0$. The idea to showing $\frac{d}{dt}R(t_3) >0$ is to show that coordinates with small magnitude $|X_i| < \frac{1}{\sqrt{3}}|x^\star_i|$ are increasing in magnitude, and conversely coordinates with large magnitude $|X_i|>\frac{1}{\sqrt{3}}|x^\star_i|$ are decreasing in magnitude. To this end, we split the coordinates $i\in [n]$ into five subsets:
\begin{align*}
\S^c &= \{i\in [n]: x^\star_i = 0\}, \\
\S_1 &= \biggl\{i\in \S: |X_i| < \biggl(\frac{1}{\sqrt{3}} - 0.1\biggr) |x^\star_i|\biggr\}, \\
\S_2 &= \biggl\{i\in \S: \biggl(\frac{1}{\sqrt{3}} - 0.1\biggr) |x^\star_i| \le |X_i| < \biggl(\frac{1}{\sqrt{3}} + 0.1\biggr) |x^\star_i|\biggr\}, \\
\S_3 &= \biggl\{i\in \S: \biggl(\frac{1}{\sqrt{3}} + 0.1\biggr) |x^\star_i| \le |X_i| < \frac{2}{\sqrt{3}} |x^\star_i|\biggr\}, \\
\S_4 &= \biggl\{i\in \S: |X_i| \ge \frac{2}{\sqrt{3}} |x^\star_i|\biggr\}.
\end{align*}
We bound the sum (\ref{eq:diffratio}) on each of these five sets.
\begin{itemize}
\item For $\S^c$, we have
\begin{equation*}
\Bigg|\sum_{i\in \S^c} \sqrt{X_i^2 + \beta^2}\; \nabla F(\X)_i \cdot 2\big(\X^\top\x^\star\big)\cdot 6X_i\Bigg| \le \frac{2c_\star^2\delta^2}{k}\|\X_\S\|_1 \big(\X^\top\x^\star\big),
\end{equation*}
where we used the fact that $\sqrt{X_i^2 + \beta^2}\le \sqrt{2}\delta$, $\|\X_{\S^c}\|_1^2\le \delta^2$ and that we can bound $|\nabla F(\X)_i|\le 0.1\|\X_\S\|_1\frac{c_\star^2}{k}$ using Lemma \ref{lemma:support1} as in (\ref{eq:difgradient}).

\item For an $i\in \S_1$ with $x^\star_i>0$, we have
\begin{equation*}
\nabla f(\X)_i = \big(3\|\X\|_2^2 - 1\big)X_i - 2\big(\X^\top\x^\star\big)x^\star_i \le -0.2\cdot \sqrt{3} \|\X_\S\|_1\frac{c_\star^2}{k},
\end{equation*}
where we used that $3\|\X\|_2^2-1 = \sqrt{3}\cdot 2(\X^\top\x^\star)$, $X_i < (\frac{1}{\sqrt{3}} - 0.1)x^\star_i$ and (\ref{eq:innerproduct}). Together with the bound (\ref{eq:difgradient}), this shows that $\nabla F(\X)_i <0$.
 Similarly, we can show $\nabla F(\X)_i >0$ for $i\in \S_1$ with $x^\star_i<0$. Hence, recalling (\ref{eq:ratiosign}),
\begin{equation*}
\sum_{i\in \S_1} -\sqrt{X_i^2 + \beta^2}\; \nabla F(\X)_i  \big(2x^\star_i\big(3\|\X\|_2^2-1\big) - 2\big(\X^\top\x^\star\big)\cdot 6X_i\big) \ge 0,
\end{equation*}
as each summand is non-neagtive.

\item For $\S_3$, we can use the same argument as for $\S_1$ to show that
\begin{equation*}
\sum_{i\in \S_3} -\sqrt{X_i^2 + \beta^2}\; \nabla F(\X)_i \big(2x^\star_i\big(3\|\X\|_2^2-1\big) - 2\big(\X^\top\x^\star\big)\cdot 6X_i\big) \ge 0.
\end{equation*}

\item For $\S_2$, we need to show that the sum  
\begin{align*}
\sum_{i\in \S_2}-\sqrt{X_i^2 + \beta^2}\; \nabla F(\X)_i \big(2x^\star_i\big(3\|\X\|_2^2-1\big) - 2\big(\X^\top\x^\star\big)\cdot 6X_i\big)
\end{align*}
is bounded from below. Let $i\in \S_2$ with $x^\star_i>0$. If $X_i < \frac{1}{\sqrt{3}}x^\star_i$ and $\nabla F(\X)_i<0$, or if $X_i > \frac{1}{\sqrt{3}}x^\star_i$ and $\nabla F(\X)_i>0$, then the summand is non-negative, i.e.\ bounded from below by zero. For $i\in \S_2$ with $X_i < \frac{1}{\sqrt{3}}x^\star_i$ and $\nabla F(\X)_i>0$, we can bound
\begin{align*}
\sqrt{X_i^2 + \beta^2} &\le (1+\beta) |X_i|, \\ 
\nabla f(\X)_i &<0 ,\\
\nabla F(\X)_i &\le \nabla f(\X)_i + \big|\nabla f(\X)_i - \nabla F(\X)_i\big|\le 0.1 \|\X_\S\|_1\frac{c_\star^2}{k}, 
\end{align*}
where we used the bound (\ref{eq:difgradient}). Recalling the definition of $\S_2$, we have
\begin{align*}
2x^\star_i\big(3\|\X\|_2^2-1\big) - 2\big(\X^\top\x^\star\big)\cdot 6X_i &= 12\big(\X^\top\x^\star\big) \biggl(\frac{1}{\sqrt{3}}x^\star_i - X_i\biggr) \le 3\big(\X^\top\x^\star\big)X_i.
\end{align*}
Putting this together, we can bound
\begin{align*}
&-\sqrt{X_i^2 + \beta^2}\; \nabla F(\x)_i  \big(2x^\star_i\big(3\|\X\|_2^2-1\big) - 2\big(\X^\top\x^\star\big)\cdot 6X_i\big) \\
 \ge &-\frac{0.3(1+\beta)c_\star^2}{k} \|\X_\S\|_1 \big(\X^\top\x^\star\big) X_i^2.
\end{align*}
Together with the analogous bounds for the cases $X_i > \frac{1}{\sqrt{3}}x^\star_i$ and $\nabla F(\X)_i<0$, and for $i\in\S_2$ with $x^\star_i<0$, this yields
\begin{align*}
&\sum_{i\in \S_2}-\sqrt{X_i^2 + \beta^2}\; \nabla F(\X)_i \big(2x^\star_i\big(3\|\X\|_2^2-1\big) - 2\big(\X^\top\x^\star\big)\cdot 6X_i\big) \nonumber\\
\ge &-\frac{0.3(1+\beta)c_\star^2}{k} \|\X_\S\|_1 \big(\X^\top\x^\star\big) \|\X_{\S_2}\|_2^2.
\end{align*}

\item Finally, for $i\in\S_4$ we have
\begin{align*}
\sqrt{X_i^2 + \beta^2} &\ge |X_i|, \\ 
|\nabla f(\X)_i| &= 2\big(\X^\top\x^\star\big) \big|\sqrt{3}X_i-x^\star_i\big| \ge 2\big(\X^\top\x^\star\big)|x_i^\star| \ge 2\|\X_\S\|_1\frac{c_\star^2}{k},\\
|\nabla F(\X)_i| &\ge |\nabla f(\X)_i| - |\nabla f(\X)_i - \nabla F(\X)_i|\ge 1.9 \|\X_\S\|_1\frac{c_\star^2}{k}, 
\end{align*}
where we used the bound (\ref{eq:difgradient}). Recalling the definition of $\S_4$, we can bound
\begin{align*}
\big|2x^\star_i\big(3\|\X\|_2^2-1\big) - 2\big(\X^\top\x^\star\big)\cdot 6X_i\big| = 12\big(\X^\top\x^\star\big) \biggl|\frac{1}{\sqrt{3}}x^\star_i - X_i\biggr| \ge 6\big(\X^\top\x^\star\big)X_i.
\end{align*}
Putting everything together, we can bound
\begin{align*}
&\sum_{i\in \S_2}-\sqrt{X_i^2 + \beta^2}\; \nabla F(\X)_i \big(2x^\star_i\big(3\|\X\|_2^2-1\big) - 2\big(\X^\top\x^\star\big)\cdot 6X_i\big) \nonumber\\
\ge &\frac{11.4c_\star^2}{k} \|\X_\S\|_1 \big(\X^\top\x^\star\big) \|\X_{\S_4}\|_2^2.
\end{align*}
\end{itemize}
Putting these five sums together, we have shown that $\frac{d}{dt}R(T_3)>0$ if we can show that
\begin{equation*}
\frac{11.4c_\star^2}{k} \|\X_\S\|_1 \big(\X^\top\x^\star\big) \|\X_{\S_4}\|_2^2\ge\biggl(\frac{0.3(1+\beta)c_\star^2}{k} \|\X_{\S_2}\|_2^2 + \frac{2c_\star^2\delta^2}{k}\biggr) \|\X_\S\|_1 \big(\X^\top\x^\star\big) 
\end{equation*}
Since $\delta \le c_1/n$ is sufficiently small compared to $\|\X_{\S_4}\|_2$, this reduces to showing
\begin{equation}\label{eq:s2s4}
11\|\X_{\S_4}\|_2^2 \ge 0.3(1+\beta)\|\X_{\S_2}\|_2^2.
\end{equation}
Now, we can rearrange the equality $R(t) = \frac{1}{\sqrt{3}}$ to obtain
\begin{align*}
&3\Bigl(\|\X_{\S_1}\|_2^2 + \|\X_{\S_2}\|_2^2 + \|\X_{\S_3}\|_2^2 + \|\X_{\S_4}\|_2^2 + \|\X_{\S^c}\|_2^2\Bigr) - 1 \\
=& \sqrt{3}\cdot 2\Bigl(\X_{\S_1}^\top\x^\star_{\S_1} + \X_{\S_2}^\top\x^\star_{\S_2} + \X_{\S_3}^\top\x^\star_{\S_3} + \X_{\S_4}^\top\x^\star_{\S_4}\Bigr).
\end{align*} 
By definition, we have $3X_i^2 < \sqrt{3}\cdot 2X_ix_i^\star$ for $i\in S_1\cup S_2\cup S_3 \cup S_4$, which gives
\begin{equation*}
3\|\X_{\S_j}\|_2^2 < \sqrt{3}\cdot 2\X_{\S_j}^\top\x^\star_{\S_j}, \quad \text{for } j=1,...,4.
\end{equation*} . Further, we have $\|\X_{\S^c}\|_2^2 \le \delta$, so
\begin{equation*}
3\|\X_{\S_4}\|_2^2 - 2\sqrt{3}\cdot \X_{\S_4}^\top\x^\star_{\S_4} > 1-3\delta + \Bigl(2\sqrt{3}\cdot \X_{\S_2}^\top\x^\star_{\S_2} - 3\|\X_{\S_2}\|_2^2\Bigr). 
\end{equation*}
By the definition of $\S_2$, we have for $i\in\S_2$,
\begin{equation*}
2\sqrt{3} X_ix^\star_i - 3X_i^2 \ge \biggl(\frac{2\sqrt{3}}{1/\sqrt{3} + 0.1} - 3\biggr)X_i^2 \ge 2.2X_i^2.
\end{equation*}
Since $\X_{\S_4}^\top\x_{\S_4}^\star \ge 0$ by (\ref{eq:claim1}), this gives
\begin{equation*}
3\|\X_{\S_4}\|_2^2 > 1- 3\delta + 2.2 \|\X_{\S_2}\|_2^2,
\end{equation*}
which shows that (\ref{eq:s2s4}) holds, thus completing the proof of (\ref{eq:claim3}).
\end{proof}

\begin{proof}[Proof of Lemma \ref{lemma:support3} in the discrete-time case]
We will prove the inequalities (\ref{eq:claim1}), (\ref{eq:claim2}) and (\ref{eq:claim3}) via induction. 
Let $I_0$ be the index for which we have the non-zero initialization $X_{I_0}^0 >0$. As in the continuous-time case, we can assume without loss of generality that $x^\star_{I_0} >0$ and show (\ref{eq:claim1}) with $\xi = +1$. The inequalities (\ref{eq:claim1}), (\ref{eq:claim2}) and (\ref{eq:claim3}) hold at $t=0$ as in the continuous-time case.

\textbf{Step 1: (\ref{eq:claim1}) continues to hold as long as (\ref{eq:claim2}) holds}\\
For $t\ge 0$, let $i\in \S$ with $x^\star_i,X_i^t \ge 0$. We will show that also $X_i^{t+1}\ge 0$. Negative coordinates can be treated the same way, and for $i\notin \S$ there is nothing to show. Rearranging (\ref{eq:update}) shows that $X_i^{t+1}\ge 0$ if and only if
\begin{equation}\label{eq:condition}
\frac{\beta^2}{\bigl(\sqrt{(X_i^t)^2+\beta^2} + X_i^t\bigr)^2} = \frac{\sqrt{(X_i^t)^2+\beta^2} - X_i^t}{\sqrt{(X_i^t)^2 + \beta^2} + X_i^t} \le \exp\bigl(-2\eta\nabla F(\X^t)_i \bigr).
\end{equation}
In order to show (\ref{eq:condition}), we need to bound the gradient $\nabla F(\X^t)_i$ from above. 
Since by the induction hypothesis we have $\|\X_\S^t\|_1 = \|\X^t\|_1 - \|\X_{\S^c}^t\|_1 \ge \frac{1}{2}$, we can apply the simplified bound in Lemma \ref{lemma:support1} to bound, for $t\le T$,
\begin{equation*}
|\nabla F(\X^t)_i - \nabla f(\X^t)_i| \le \frac{c_\star^2\|\X_\S^t\|_1}{k}
\end{equation*}
with probability $1-c_pn^{-10}$, provided that $c_s$ is sufficiently large, where we used the assumption that $\delta \le c_1/n$.
Further, we can bound, since (\ref{eq:claim1}) holds at time $t$ by induction hypothesis, 
\begin{equation*}
(\X^t)^\top\x^\star \ge \|\X_\S^t\|_1 x^\star_{min} \ge \|\X_\S^t\|_1\frac{c_\star}{\sqrt{k}},
\end{equation*}
which leads to
\begin{align}
\nabla F(\X^t)_i &\le \nabla f(\X^t)_i + |\nabla F(\X^t)_i - \nabla f(\X^t)_i| \nonumber\\
&\le \big(3\|\X^t\|_2^2 - 1\big)X_i^t - \big((\X^t)^\top\x^\star\big)x^\star_i. \label{eq:bound_gradient}
\end{align}
If $X_i^t \le \frac{c_\star}{2\sqrt{3k}}$, then $\nabla F(\X^t)_i \le 0$ by (\ref{eq:claim3}). In this case, (\ref{eq:condition}) holds as 
\begin{equation*}
\frac{\beta^2}{\big(\sqrt{(X_i^t)^2+\beta^2} + X_i^t\big)^2} \le 1 \le \exp\bigl(-2\eta\nabla F(\X^t)_i \bigr).
\end{equation*}
If on the other hand $X_i^t>\frac{c_\star}{2\sqrt{3k}}$, then we can bound $\nabla F(\X^t)_i \le 5X_i^t$ since $\|\X^t\|_2^2\le 2$, and
\begin{equation*}
\frac{\beta^2}{\bigl(\sqrt{(X_i^t)^2+\beta^2} + X_i^t\bigr)^2} \le \frac{\beta^2}{4(X_i^t)^2} \le \exp\bigl(-10\eta X_i^t\bigr)
\end{equation*}
holds for all $\eta \le \frac{1}{\sqrt{200}}$, where we used that $X_i^t\le \sqrt{2}$ by (\ref{eq:claim2}) and $\log \frac{c_\star^2}{3k\beta^2}\ge 1$ since by assumption $\beta \le c_1/n$. This completes the proof that (\ref{eq:condition}) holds. Repeating the same argument for $i\in\S$ with $x^\star_i,X^t_i \le 0$ shows that (\ref{eq:claim1}) is satisfied at $t+1$.

\textbf{Step 2: (\ref{eq:claim2}) continues to hold as long as (\ref{eq:claim1}) and (\ref{eq:claim3}) hold}\\
We begin by inductively showing the lower bound in (\ref{eq:claim2}). For $t\ge 0$, we write
\begin{equation*}
\|\X^{t+1}\|_2^2 - \|\X^t\|_2^2 = \sum_{i=1}^n \bigl(X_i^{t+1} + X_i^t\bigr)\bigl(X_i^{t+1}-X_i^t\bigr),
\end{equation*}
and consider the partial sums containing all $i\in \S$ and $i\notin \S$, respectively.
For $i\notin \S$, we have
\begin{equation*}
\sum_{i\notin \S} \bigl(X_i^{t+1} + X_i^t\bigr)\bigl(X_i^{t+1}-X_i^t\bigr) \ge -\sum_{i\notin \S} (X_i^t)^2 \ge -n\delta^2,
\end{equation*}
where we used $\|\X_{\S^c}^t\|_2\le \sqrt{n}\|\X_{\S^c}^t\|_1 \le \sqrt{n}\delta$.

To bound the partial sum containing $i\in\S$, we distinguish two cases: $\|\X^t\|_2^2 - \frac{1}{3}\ge 2n\delta^2$ and $\|\X^t\|_2^2 - \frac{1}{3} < 2n\delta^2$. In the first case, let $i\in \S$ and without loss of generality $x^\star_i > 0$. If $X_i^{t+1} \ge X_i^t$, then the summand $(X_i^{t+1} + X_i^t)(X_i^{t+1}-X_i^t)$ is non-negative. If on the other hand $X_i^{t+1} < X_i^t$, then $\nabla F(\X^t)_i>0$, and we can bound, recalling (\ref{eq:flag}), 
\begin{align*}
X_i^{t+1} - X_i^t \ge -\frac{3}{2}\eta\nabla F(\X^t)_i \sqrt{(X_i^t)^2 + \beta^2} \ge -\frac{3}{2}\eta\big(3\|\X^t\|_2^2 - 1\big) X_i^t \sqrt{(X_i^t)^2 + \beta^2},
\end{align*}
where we used (\ref{eq:bound_gradient}) for the last inequality. An analogous upper bound can be derived for negative coordinates with $x^\star_i<0$, so, writing $\S' = \{i\in \S: |X_i^{t+1}|<|X_i^t|\}$ for the coordinates which decrease in magnitude, we can bound
\begin{align*}
\sum_{i\in \S'}\bigl(X_i^{t+1} + X_i^t\bigr)\bigl(X_i^{t+1}-X_i^t\bigr) &\ge -3\eta\sum_{i\in \S'}X_i^t\big(3\|\X^t\|_2^2 - 1\big) X_i^t \sqrt{(X_i^t)^2 + \beta^2} \\
&\ge -\biggl(9\eta\sum_{i\in \S'} (X_i^t)^2\big(|X_i^t|+\beta \big)\biggr) \Bigl(\|\X^t\|_2^2 - \frac{1}{3}\Bigr),
\end{align*}
where for the last inequality we used that $\sqrt{x^2+\beta^2}\le |x| + \beta$ for $\beta \ge 0$. Hence, we have 
\begin{equation*}
\|\X^{t+1}\|_2^2 - \|\X^t\|_2^2 \ge \frac{1}{3}-\|\X^t\|_2^2
\end{equation*}
for $\eta \le \frac{1}{54}$, where we used that $\sum_{i\in \S'}(X_i^t)^2(|X_i^t|+\beta)\le 3$ since the upper bound in (\ref{eq:claim2}) holds at time $t$ by induction hypothesis.

Next, we consider the case $\|\X^t\|_2^2 - \frac{1}{3} < 2n\delta^2$. In this case, it follows from (\ref{eq:bound_gradient}) that every coordinate $i\in \S$ must be increasing in magnitude, i.e.\ $|X_i^{t+1}|>|X_i^t|$ for all $i\in\S$. Assuming without loss of generality that $x^\star_i>0$, we can use (\ref{eq:flag_lb}) to bound by how much $X_i^t$ must at least increase:
\begin{align*}
X_i^{t+1}- X_i^t &\ge -\frac{1}{2}\eta\nabla F(\X^t)_i \sqrt{(X_i^t)^2 + \beta^2} \\
&\ge -\frac{1}{2}\eta\bigl(\big(3\|\X^t\|_2^2 - 1\big) X_i^t - \big((\X^t)^\top\x^\star\big)x^\star_i\bigr) \sqrt{(X_i^t)^2 + \beta^2}\\
&\ge -\frac{1}{2}\eta \biggl(6n\delta^2 X_i^t - \frac{c_\star^2}{2k}\biggr) \sqrt{(X_i^t)^2 + \beta^2} \\
&\ge \frac{c_\star^2\eta}{5k} X_i^t,
\end{align*}
where we used (\ref{eq:bound_innerproduct}) together with the assumption $x^\star_{min}\ge \frac{c_\star}{\sqrt{k}}$, and $\delta \le \frac{c_1}{n}$. Hence, we have 
\begin{align*}
\sum_{i\in \S}\big(X_i^{t+1} + X_i^t\big)\big(X_i^{t+1}-X_i^t\big) &\ge \sum_{i\in \S}2X_i^t \frac{c_\star^2\eta}{5k}X_i^t\\
&\ge \eta \frac{2c_\star^2}{5k}\biggl(\frac{1}{3} - (3 + 9\sigma)\sqrt{\frac{\log n}{m}} - \delta^2\biggr),
\end{align*}
where we used the induction hypothesis that the lower bound in (\ref{eq:claim2}) is satisfied at time $t$ along with the bound $\|\X_{\S^c}^t\|_2^2 \le \|\X_{\S^c}^t\|_1^2\le \delta^2$. 
Thus, we have $\|\X^{t+1}\|_2^2 - \|\X^t\|_2^2 >0$ if $\delta\le c_1/n$ for $c_1>0$ sufficiently small.

The upper bound in (\ref{eq:claim2}) can be shown the same way as in the continuous-time case.

\textbf{Step 3: (\ref{eq:claim3}) continues to hold as long as (\ref{eq:claim1}) and \ref{eq:claim2} hold}\\
As in the continuous-time case, we can restrict our attention to the case $3\|\X^t\|_2^2 - 1 > 0$ (since otherwise there is nothing to show), and consider the ratio 
\begin{equation*}
R_t = \frac{2((\X^t)^\top\x^\star)}{3\|\X^t\|_2^2 - 1}.
\end{equation*} 
As in the continuous-time case, we can show $R_0 \ge 1/\sqrt{3}$.

To show $R_{t+1}\ge 1/\sqrt{3}$ for any $t\ge 0$, we consider the two cases $R_t\ge 1/\sqrt{3} + 0.05$ and $1/\sqrt{3} \le R_t < 1/\sqrt{3} + 0.05$. We assume $3\|\X^t\|_2^2 - 1 > 0$ for convenience's sake, as the other case $3\|\X^t\|_2^2 - 1 < 0$ can be treated the same way, but requires distinguishing cases at various points in the following argument, which we omit to avoid repetition.

\textbf{Step 3, Case 1: $R_t \ge 1/\sqrt{3} + 0.05$}\\
We consider four types of coordinates, defined by
\begin{align*}
\S^c &= \{i\in [n]: x^\star_i = 0\} && \text{(off-support)}\\
\S_1 &= \biggl\{i\in \S: |X_i^t|\le \frac{1}{3}|x^\star_i|\biggr\}  && \text{(small coordinates)}\\
\S_2 &= \biggl\{i\in \S: |X_i^t|> \frac{1}{3}|x^\star_i|,\; |X_i^{t+1}| > |X_i^t|\biggr\} && \text{(large increasing coordinates)} \\
\S_3 &= \biggl\{i\in \S: |X_i^t|> \frac{1}{3}|x^\star_i|,\; |X_i^{t+1}| \le |X_i^t|\biggr\} && \text{(large decreasing coordinates)}
\end{align*}
We will show $R_{t+1} \ge 1/\sqrt{3}$ by considering the following sequence of vectors defined by
\begin{equation*}
\begin{gathered}
\X^{(1)} = \X^t, \quad 
X^{(2)}_i = \begin{cases}X^{(1)}_i & i\notin \S^c \\ X_i^{t+1} & i\in \S^c\end{cases}, \quad 
X^{(3)}_i = \begin{cases}X^{(2)}_i & i\notin \S_2 \\ X_i^{t+1} & i\in \S_2\end{cases}, \\
X^{(4)}_i = \begin{cases}X^{(3)}_i & i\notin \S_3 \\ X_i^{t+1} & i\in \S_3\end{cases}, \quad
\X^{(5)} = \X^{t+1},
\end{gathered}
\end{equation*}
and bound the ratio $\frac{2((\X^{(i)})^\top\x^\star)}{3\|\X^{(i)}\|_2^2 - 1}$ for each $i=2,...,5$. 

\textbf{Exchanging off-support coordinates}\\
By assumption, we have $R_t \ge 1/\sqrt{3} + 0.05$, which can also be written as 
\begin{equation*}
\biggl(\frac{1}{\sqrt{3}} + 0.05\biggr)^{-1} \cdot 2\big((\X^{(1)})^\top\x^\star\big) \ge 3\|\X^{(1)}\|_2^2 - 1.
\end{equation*}
As shown in Step 2, $\|\X^{(2)}\|_2^2 \le \|\X^{(1)}\|_2^2 + n\delta^2$, so we have, since $((\X^{(1)})^\top\x^\star) \ge \frac{c_\star}{2\sqrt{k}}$,
\begin{equation}\label{eq:exchange2}
\biggl(\frac{1}{\sqrt{3}} + 0.04\biggr)^{-1}\cdot 2\big((\X^{(2)})^\top\x^\star\big) \ge 3\|\X^{(2)}\|_2^2 - 1,
\end{equation}
provided that $\delta \le c_1/n$ is sufficiently small.

\textbf{Exchanging large increasing coordinates}\\
Let $i\in \S_2$ with $x^\star_i>0$. Analogous bounds for the case $x^\star_i<0$ can be derived the same way. As $X_i^{t+1}>X_i^t$, we must have $\nabla F(\X^t)_i<0$. First, we bound by how much $X_i^t$ can increase in one iteration. We can write
\begin{align*}
\nabla f(\X^t)_i = \frac{2((\X^t)^\top\x^\star)}{R_t}X_i^t - 2\big((\X^t)^\top\x^\star\big)x^\star_i = -2\big((\X^t)^\top\x^\star\big) \biggl(x^\star_i - \frac{1}{R_t}X_i^t\biggr). 
\end{align*}
Using the simplified bound of Lemma \ref{lemma:support1}, we have with probability $1-c_pn^{-10}$,
\begin{equation*}
|\nabla F(\X^t)_i - \nabla f(\X^t)_i| \le 0.1 \|\X_\S^t\|_1\frac{c_\star^2}{k}\le 0.1\big((\X^t)^\top\x^\star\big)x^\star_i,
\end{equation*}
where we used (\ref{eq:innerproduct}). This gives the lower bound
\begin{equation}\label{eq:gradient_lowerbound}
\nabla F(\X^t)_i \ge -2\big((\X^t)^\top\x^\star\big) \biggl(1.05x^\star_i - \frac{1}{R_t}X_i^t\biggr) \ge -1.05\cdot 2\big((\X^t)^\top\x^\star\big) x^\star_i,
\end{equation}
since we assumed $x^\star_i>0$, and hence $X_i^t\ge 0$ by (\ref{eq:claim1}). With this, we can use (\ref{eq:flag}) to bound
\begin{align}
X_i^{t+1} - X_i^t &\le -\frac{3}{2}\eta\nabla F(\X^t)_i \sqrt{(X_i^t)^2 + \beta^2} \nonumber \\ &\le 1.7\eta \cdot 2\big((\X^t)^\top\x^\star\big) x^\star_i|X_i^t|, \label{eq:difference_upperbound}
\end{align}
where we used that $\sqrt{(X_i^t)^2+\beta^2} \le 1.05 |X_i^t|$ for $i\in \S_2$.
Further, we can use the upper bound in (\ref{eq:claim2}) to bound $X_i^{t+1} + X_i^t \le 3$. With this, we obtain
\begin{align*}
\frac{2((\X^{(3)})^\top\x^\star)}{3\|\X^{(3)}\|_2^2-1} &\ge \frac{2((\X^{(2)})^\top\x^\star)}{3\|\X^{(2)}\|_2^2 - 1 + \sum_{i\in \S_2}3(X_i^{t+1}+X_i^t)(X_i^{t+1} - X_i^t)} \\
&\ge \frac{2((\X^{(2)})^\top\x^\star)}{3\|\X^{(2)}\|_2^2 - 1 + 3 \cdot 3 \cdot 1.7 \cdot 2\eta ((\X^t)^\top\x^\star)^2}, 
\end{align*}
which is bounded from below by $1/\sqrt{3}$, provided that
\begin{equation*}
\eta \le \frac{1}{193} \le \frac{1}{15.3((\X^t)^\top\x^\star)}\biggl(\sqrt{3} - \frac{3\|\X^{(2)}\|_2^2 - 1}{2((\X^{(2)})^\top\x^\star)}\biggr),
\end{equation*}
where we used that $(\X^{(2)})^\top\x^\star = (\X^t)^\top\x^\star \le \sqrt{2}$ by (\ref{eq:claim2}) and inequality (\ref{eq:exchange2}).

\textbf{Exchanging large decreasing coordinates}\\
Let $i\in \S_3$ with $x^\star_i>0$. Analogous bounds for the case $x^\star_i < 0$ can be derived the same way. We first bound by how much $X_i^t$ can decrease in one iteration. Following the same steps as in the derivation of the lower bound (\ref{eq:gradient_lowerbound}), we obtain the analogous upper bound
\begin{equation}\label{eq:gradient_upperbound}
\nabla F(\X^t)_i \le 2\big((\X^t)^\top\x^\star\big) \biggl(\frac{1}{R_t}X_i^t - 0.95 x^\star_i\biggr).
\end{equation}
Since we are considering a positive decreasing coordinate, we must have $\nabla F(\X^t)_i>0$ and therefore $X_i^t \ge 0.95R_tx^\star_i$. As before, we can use (\ref{eq:flag}) to bound
\begin{align}
X_i^{t+1} - X_i^t &\ge -\frac{3}{2}\eta \nabla F(\X^t)_i \sqrt{(X_i^t)^2+\beta^2} \nonumber\\
&\ge -3.2\eta \big((\X^t)^\top\x^\star\big)\frac{1}{R_t}X_i^t\big(X_i^t - 0.95R_t x^\star_i\big), \label{eq:difference_lowerbound}
\end{align}
where we used that $\sqrt{(X_i^t)^2+\beta^2} \le 1.05 |X_i^t|$ for $i\in \S_3$. Further, we can write
\begin{equation*}
X_i^{t+1} \ge \biggl(1 - 3.2\eta \big((\X^t)^\top\x^\star\big)\frac{1}{R_t}X_i^t\biggr) X_i^t + 3.2\eta \big((\X^t)^\top\x^\star\big)\frac{1}{R_t}X_i^t \cdot 0.95R_tx^\star_i \ge 0.95R_tx^\star_i,
\end{equation*}
provided that $\eta \le \frac{1}{6.4\sqrt{3}} \le \frac{R_t}{3.2((\X^t)^\top\x^\star)X_i^t}$, where we used (\ref{eq:claim2}) to bound $((\X^t)^\top\x^\star)X_i^t\le 2$ and $X_i^t\ge 0.95R_tx^\star_i$. We can now bound
\begin{align*}
3\|\X^{(4)}\|_2^2 - 1 &= 3\|\X^{(3)}\|_2^2 - 1 + \sum_{i\in \S_3}3(X_i^{t+1}+X_i^t)(X_i^{t+1}-X_i^t) \\
&\le 3\|\X^{(3)}\|_2^2 - 1 + \sum_{i\in \S_3}3\cdot 2\cdot 0.95R_tx^\star_i(X_i^{t+1}-X_i^t) \\
&\le \sqrt{3}\biggl( 2((\X^{(3)})^\top\x^\star) + \sum_{i\in \S_3}2x^\star_i (X_i^{t+1}-X_i^t)\biggr)\\
&= \sqrt{3}\cdot 2\big((\X^{(4)})^\top\x^\star\big),
\end{align*}
where the second inequality holds because $3\cdot 0.95R_t \ge \sqrt{3}$ for $R_t\ge \frac{1}{\sqrt{3}} + 0.05$, and $x^\star_i(X_i^{t+1}-X_i^t)<0$.

\textbf{Exchanging small coordinates}\\
Let $i\in \S_1$ with $x^\star_i>0$. Analogous bounds for the case $x^\star_i<0$ can be derived the same way. Then, we have $\nabla F(\X^t)_i<0$ by (\ref{eq:gradient_upperbound}) as $X_i^t \le \frac{1}{3}x^\star_i$, that is all small coordinates must increase in magnitude. Further, we can use (\ref{eq:flag}) to bound
\begin{align*}
X_i^{t+1} + X_i^t 
\le |X_i^{t+1} - X_i^t| + 2X_i^t 
\le -\frac{3}{2}\eta\nabla F(\X^t)_i \sqrt{(X_i^t)^2 + \beta^2} + \frac{2}{3}x^\star_i
\le \frac{2}{\sqrt{3}}x^\star_i,
\end{align*}
if $\eta \le 0.3$, where we used that $(\X^t)^\top\x^\star\le \sqrt{2}$ by (\ref{eq:claim2}), $\sqrt{(X_i^t)^2+\beta^2}\le 0.35$ by the definition of $\S_1$, and the lower bound (\ref{eq:gradient_lowerbound}) for the gradient $\nabla F(\X^t)_i$. Hence, we can bound
\begin{align*}
3\|\X^{(5)}\|_2^2 - 1 &= 3\|\X^{(4)}\|_2^2 - 1 + \sum_{i\in \S_1}3(X_i^{t+1} + X_i^t)(X_i^{t+1} - X_i^t) \\*
& \le \sqrt{3}\biggl(2((\X^{(4)})^\top\x^\star) + \sum_{i\in \S_1}2x^\star_i(X_i^{t+1} - X_i^t)\biggr)\\
&= \sqrt{3} \cdot 2\big((\X^{(5)})^\top\x^\star\big).
\end{align*}
This completes the proof of $R_{t+1} \ge 1/\sqrt{3}$ if $R_t\ge 1/\sqrt{3} + 0.05$.

\textbf{Step 3, Case 2: $1/\sqrt{3} \le R_t < 1/\sqrt{3} + 0.05$}\\
The proof that $R_{t+1}\ge 1/\sqrt{3}$ largely follows the same steps as the previous case, although it requires a different sequence of vectors interpolating between $\X^t$ and $\X^{t+1}$. Define $\widetilde{\X}^{t+1}$ by
\begin{equation*}
\nabla \Phi(\widetilde{\X}^{t+1}) = \nabla \Phi(\X^t) - \eta \nabla f(\X^t),
\end{equation*}
that is the vector obtained by applying one step of mirror descent with the population gradient. 
Replacing $X_i^{t+1}$ by $\widetilde{X}_i^{t+1}$ in the definitions of the sets $\S_i$, we consider the sequence
\begin{equation*}
\begin{gathered}
\X^{(1)} = \X^t, \quad 
X^{(2)}_i = \begin{cases}X^{(1)}_i & i\notin \S_2 \\ \widetilde{X}_i^{t+1} & i\in \S_2\end{cases}, \quad 
X^{(3)}_i = \begin{cases}X^{(2)}_i & i\notin \S_3 \\ \widetilde{X}_i^{t+1} & i\in \S_3\end{cases}, \\
X^{(4)}_i = \begin{cases}X^{(3)}_i & i\notin \S_2\cup \S_3 \\ X_i^{t+1} & i\in \S_2\cup\S_3\end{cases}, \quad
X^{(5)}_i = \begin{cases}X^{(4)}_i & i\notin \S^c \\ X_i^{t+1} & i\in \S^c\end{cases}, \quad
\X^{(6)} = \X^{t+1}.
\end{gathered}
\end{equation*}

\textbf{Exchanging large increasing coordinates with population gradient}\\
Let $i\in \S_2$ with $x^\star_i>0$. Analogous bounds for the case $x^\star_i<0$ can be established the same way.
Since we are considering an increasing coordinate, the population gradient
\begin{equation*}
\nabla f(\X^t)_i = -\big(3\|\X^t\|_2^2-1\big)\big(R_tx^\star_i-X_i^t\big)
\end{equation*}
must be negative, so we must have $X_i^t \le R_tx^\star_i$. The bound (\ref{eq:difference_upperbound}) becomes
\begin{equation*}
\widetilde{X}_i^{t+1} - X_i^t \le 1.6\eta \big(3\|\X^t\|_2^2 -1\big)\big(R_tx^\star_i - X_i^t\big) X_i^t,
\end{equation*}
and we can bound
\begin{equation*}
\widetilde{X}_i^{t+1}\le \Bigl(1 + 1.6\eta\big(3\|\X^t\|_2^2 - 1\big)\big(R_tx^\star_i - X_i^t\big)\Bigr)X_i^t \le R_tx^\star_i
\end{equation*}
if $\eta\le \frac{1}{8\sqrt{2}} \le \frac{1}{1.6(3\|\X^t\|_2^2-1)X_i^t}$, as the expression is increasing in $\eta$. We can bound the ratio
\begin{align*}
\frac{2((\X^{(2)})^\top\x^\star)}{3\|\X^{(2)}\|_2^2 - 1} &= \frac{2((\X^t)^\top\x^\star) + \sum_{i\in\S_2}2x^\star_i(\widetilde{X}_i^{t+1} - X_i^t)}{3\|\X^t\|_2^2 - 1 + \sum_{i\in\S_2}3(\widetilde{X}_i^{t+1} + X_i^t)(\widetilde{X}_i^{t+1} - X_i^t)}\\
&\ge \frac{2((\X^t)^\top\x^\star) + B}{3\|\X^t\|_2^2-1 + 3R_tB},
\end{align*}
where we write $B=\sum_{i\in \S_2}2x^\star_i(\widetilde{X}_i^{t+1} - X_i^t)$. This ratio is bounded from below by $1/\sqrt{3}$ if 
\begin{equation*}
B\le \frac{2((\X^t)^\top\x^\star) - (3\|\X^t\|_2^2 - 1)/\sqrt{3}}{\sqrt{3}R_t-1} = \frac{3\|\X^t\|_2^2-1}{\sqrt{3}},
\end{equation*}
which is satisfied as, using $|X_i^t| \le R_t|x^\star_i|$, we can bound
\begin{align*}
B &= \sum_{i\in \S_2}2x^\star_i(\widetilde{X}_i^{t+1} - X_i^t) \\
&\le \sum_{i\in \S_2}2x^\star_i\cdot 1.6\eta\big(3\|\X^t\|_2^2 - 1\big) \big(R_tx^\star_i - X_i^t\big) |X_i^t| \\
&\le \big(3\|\X^t\|_2^2-1\big) \cdot 3.2\eta R_t^2 \sum_{i\in \S_2}|x^\star_i|^3 \\
&\le \frac{3\|\X^t\|_2^2-1}{\sqrt{3}}
\end{align*}
for $\eta \le 0.45 \le \frac{1}{3.2\sqrt{3}R_t^2}$, where we used that $\sum_{i\in\S_2}|x^\star_i|^3\le 1$ and $R_t \le 1/\sqrt{3} + 0.05$.

\textbf{Exchanging large decreasing coordinates with population gradient}\\
Let $i\in \S_3$ with $x^\star_i>0$. Analogous bounds for the case $x^\star_i < 0$ can be established the same way.
Following the same steps as before, we can bound by how much $X_i^t$ can decrease after one step of mirror descent using the population gradient, and the bound (\ref{eq:difference_lowerbound}) becomes
\begin{equation*}
X^{(3)}_i - X_i^t \ge -3.2\eta \big((\X^t)^\top\x^\star\big)\biggl(\frac{1}{R_t}X_i^t - x^\star_i \biggr)X_i^t.
\end{equation*}
As in the previous step, rearranging this inequality shows that $X^{(3)}_i \ge R_tx^\star_i$ provided that $\eta \le 0.39 \le \frac{R_t}{3.2((\X^t)^\top\x^\star)X_i^t}$. We will show
\begin{equation*}
3\|\X^{(3)}\|_2^2 - 1 \le \sqrt{3}\cdot 2\big((\X^{(3)})^\top\x^\star\big) - \Delta,
\end{equation*} 
where $\Delta > 0$ is a buffer term due to very large coordinates decreasing. To this end, define $\S'_3 = \{i\in \S_3: |X_i^t| \ge \frac{2}{\sqrt{3} + 0.15}|x^\star_i|\}$. Then, since $R_t \le 1/\sqrt{3} + 0.05$, we have 
\begin{align*}
\frac{1}{1/\sqrt{3} + 0.05} \cdot 2\big((\X^t)^\top\x^\star\big) &\le 3\|\X^t\|_2^2-1 \\
\Rightarrow 3\sum_{i\in \S'_3} (X_i^t)^2 - \frac{2}{1/\sqrt{3} + 0.05}X_i^tx^\star_i &\ge 1 + \sum_{i\notin \S'_3} \frac{2}{1/\sqrt{3} + 0.05}X_i^t x^\star_i - 3(X_i^t)^2 \ge 1-3\delta^2,
\end{align*}
since $\frac{2}{1/\sqrt{3} + 0.05}X_i^t x^\star_i - 3(X_i^t)^2 \ge 0$ if $i\notin \S'_3\cup \S^c$, and $\|\X_{\S^c}^t\|_2^2\le \delta^2$. This implies
\begin{equation*}
\sum_{i\in\S'_3}(X_i^t)^2\ge \frac{1}{3}. 
\end{equation*}
Next, we bound by how much coordinates in $\S'_3$ must at least decrease. Let $i\in \S'_3$ with $x^\star_i>0$. We have
\begin{align*}
\nabla f(\X^t)_i &= 2\big((\X^t)^\top\x^\star\big)\biggl(\frac{1}{R_t}X_i^t - x^\star_i\biggr) \\
&\ge 2\big((\X^t)^\top\x^\star\big)\Biggl(\frac{2}{(\sqrt{3} + 0.15)(\frac{1}{\sqrt{3}} + 0.05)} - 1\Biggr)x^\star_i \\
&\ge 2\big((\X^t)^\top\x^\star\big) \frac{2c_\star}{3\sqrt{k}}.
\end{align*}
With this, we can bound, using (\ref{eq:flag_lb}),
\begin{align*}
\widetilde{X}_i^{t+1} - X_i^t \le -\frac{1}{2}\eta \nabla f(\X^t)_i X_i^t \le -\eta \cdot 2\big((\X^t)^\top\x^\star\big) \frac{c_\star}{3\sqrt{k}}X_i^t.
\end{align*}
Following the same steps as in Case 1, we can show that $\widetilde{X}_i^{t+1},X_i^t \ge R_tx^\star_i \ge x^\star_i/\sqrt{3}$ provided that $\eta \le \frac{1}{6.4\sqrt{3}}$, where we used that $R_t \ge 1/\sqrt{3}$. With this, we can bound 
\begin{align*}
3\|\X^{(3)}\|_2^2 - 1 &= 3\|\X^{(2)}\|_2^2 - 1 + \sum_{i\in \S_3\backslash \S'_3}3(\widetilde{X}_i^{t+1} + X_i^t)(\widetilde{X}_i^{t+1}-X_i^t) \\
&\quad + \sum_{i\in \S'_3}3(\widetilde{X}_i^{t+1} + X_i^t)(\widetilde{X}_i^{t+1} - X_i^t) \\
&\le \sqrt{3}\biggl(2\big((\X^{(2)})^\top\x^\star\big) + \sum_{i\in \S_3\backslash \S'_3}2x^\star_i (\widetilde{X}_i^{t+1} - X_i^t)\biggr) \\
&\quad + \sum_{i\in \S'_3}3(R_t + 1)X_i^t(\widetilde{X}_i^{t+1} - X_i^t) \\
&\le \sqrt{3}\cdot 2\big((\X^{(3)})^\top\x^\star\big) + \sum_{i\in \S'_3}\Big(\sqrt{3} + 3 - \big(\sqrt{3} + 0.15\big)\Big)X_i^t\big(\widetilde{X}_i^{t+1} - X_i^t\big),
\end{align*}
where in the last line we used that $2|x^\star_i| \le (\sqrt{3} + 0.15)|X_i^t|$ for $i\in \S'_3$.
Finally, we have
\begin{align*}
\sum_{i\in \S'_3}(3 - 0.15)X_i^t(\widetilde{X}_i^{t+1} - X_i^t) &\le -\sum_{i\in \S'_3}2.85X_i^t \eta\cdot 2\big((\X^t)^\top\x^\star\big) \frac{c_\star}{3\sqrt{k}}X_i^t\\
&\le -2.85\eta \cdot 2\big((\X^t)^\top\x^\star\big) \frac{c_\star}{3\sqrt{k}} \cdot \frac{1}{3}\\
&=: -\Delta,
\end{align*}
that is 
\begin{equation}\label{eq:buffer}
3\|\X^{(3)}\|_2^2 - 1 \le \sqrt{3}\cdot 2\big((\X^{(3)})^\top\x^\star\big) - \Delta.
\end{equation}

\textbf{Exchanging large coordinates with empirical gradient}\\
Let $i\in \S_2\cup \S_3$ with $x^\star_i>0$. Analogous bounds for the case $x^\star_i<0$ can be established the same way.
 
In the following, we write $a=\frac{1}{2}(X_i^t + \sqrt{(X_i^t)^2+\beta^2})$, $b = \frac{1}{2}(-X_i^t + \sqrt{(X_i^t)^2 + \beta^2})$, $G = \nabla F(\X^t)_i$ and $g = \nabla f(\X^t)_i$ for notational brevity. We can bound the ratio
\begin{align*}
\frac{X_i^{t+1}}{\widetilde{X}_i^{t+1}} &= \frac{ae^{-\eta G} - be^{\eta G}}{ae^{-\eta g} - be^{\eta g}} \\
&\le \frac{ae^{-\eta G}}{ae^{-\eta g} - be^{\eta g}}\\
&=\frac{ae^{-\eta G}}{ae^{-\eta g}} + \frac{ae^{-\eta G} be^{\eta g}}{ae^{-\eta g}(ae^{-\eta g} - be^{\eta g})}\\
&\le \exp\biggl(\eta \frac{0.01c_\star((\X^t)^\top\x^\star)}{\sqrt{k}}\biggr) + \frac{3\sqrt{k}}{c_\star}\beta \\
&\le 1 + 0.02c_\star\eta \frac{((\X^t)^\top\x^\star)}{\sqrt{k}},
\end{align*}
where for the second last line we used the inequality $-x + \sqrt{x^2 + \beta^2} \le \beta$ for $x>0$, the fact that $|X_i^t| \ge |x^\star_i|/3 \ge c_\star/(3\sqrt{k})$ for $i\in \S_2\cup\S_3$, and that, as in (\ref{eq:bound_gradient_difference}), we use Lemma \ref{lemma:support1} to bound, with probability $1-c_pn^{-10}$,
\begin{equation*}
|\nabla F(\X^t)_i - \nabla f(\X^t)_i| \le \frac{0.01c_\star((\X^t)^\top\x^\star)}{\sqrt{k}}.
\end{equation*}
Similarly, we can also bound $X_i^{t+1} / \widetilde{X}_i^{t+1} \ge 1 - 0.02c_\star\eta \frac{((\X^t)^\top\x^\star)}{\sqrt{k}}$. With this, we have the bounds
\begin{equation*}
\|\X^{(4)}\|_2^2 \le \biggl(1 + 0.02c_\star\eta \frac{((\X^t)^\top\x^\star)}{\sqrt{k}}\biggr)^2\|\X^{(3)}\|_2^2 \le \biggl(1 + 2.1\cdot 0.02c_\star\eta \frac{((\X^t)^\top\x^\star)}{\sqrt{k}}\biggr)\|\X^{(3)}\|_2^2,
\end{equation*}
provided that $\eta \le 5/(\sqrt{2}c_\star)$, and similarly
\begin{equation*}
\big((\X^{(4)})^\top\x^\star\big) \ge \biggl(1 - 0.02c_\star\eta \frac{((\X^t)^\top\x^\star)}{\sqrt{k}}\biggr)\big((\X^{(3)})^\top\x^\star\big).
\end{equation*}
Recalling the definition of $\Delta$ and using (\ref{eq:buffer}), we can now bound
\begin{align*}
3\|\X^{(4)}\|_2^2 - 1 &\le 3\|\X^{(3)}\|_2^2 - 1 + 3\|\X^{(3)}\|_2^2 \cdot 2.1\cdot 0.02c_\star\eta \frac{((\X^t)^\top\x^\star)}{\sqrt{k}} \\
&\le 3\|\X^{(3)}\|_2^2 - 1 + \frac{\Delta}{2} \\
&\le \sqrt{3} \cdot 2\big((\X^{(3)})^\top\x^\star\big) - \frac{\Delta}{2}\\
&\le \sqrt{3} \cdot 2\big((\X^{(4)})^\top\x^\star\big) + \sqrt{3} \cdot 2\big((\X^{(3)})^\top\x^\star\big)\cdot 0.02c_\star\eta \frac{((\X^t)^\top\x^\star)}{\sqrt{k}} - \frac{\Delta}{2}\\
&\le \sqrt{3} \cdot 2\big((\X^{(4)})^\top\x^\star\big) - \frac{\Delta}{4},
\end{align*}
where the last inequality holds because we can show $\|\X^{(3)}\|_2^2 \le 2$ the same way as in the proof of the upper bound of (\ref{eq:claim2}).

\textbf{Exchanging off-support coordinates}\\
We have $\|\X^{(5)}\|_2^2 - \|\X^{(4)}\| \le n\delta^2\le \Delta/4$ for $\delta\le c_1/n$ small enough, so
\begin{equation*}
3\|\X^{(5)}\|_2^2 - 1 \le \sqrt{3} \cdot 2\big((\X^{(5)})^\top\x^\star\big).
\end{equation*}

\textbf{Exchanging small coordinates}\\
Following the same steps for small coordinates in the previous case, we can show that also
\begin{equation*}
3\|\X^{(6)}\|_2^2 - 1 \le \sqrt{3} \cdot 2\big((\X^{(6)})^\top\x^\star\big),
\end{equation*}
which completes the proof that $R_{t+1} \ge 1/\sqrt{3}$ if $1/\sqrt{3}\le R_t<1/\sqrt{3} + 0.05$.
\end{proof}
\section{Technical lemmas}
\label{supp:technical_lemmas}
In this section, we collect technical lemmas and concentration bounds used in the proofs of Theorems \ref{theorem} and \ref{theorem_discrete} and the supporting Lemmas \ref{lemma:support1}--\ref{lemma:support3}.

\begin{theorem}\label{thm:ref1}(Proposition 34 \cite{V12})  
Let $g:\R^n\rightarrow \R$ be a Lipschitz continuous function with Lipschitz constant $\lambda$, i.e.\ $|g(\x) - g(\mathbf{y})|\le \lambda\|\x-\mathbf{y}\|_2$ for all $\x,\mathbf{y}\in\R^n$. Let $\A\in\R^n$ be a standard normal random vector. Then, for any $\epsilon>0$, we have
\begin{equation*}
\P\left[|g(\A) - \E[g(\A)]| \ge \epsilon \right] \le 2\exp\biggl(-\frac{\epsilon^2}{2\lambda^2}\biggr)
\end{equation*}
\end{theorem}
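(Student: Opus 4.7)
The plan is to establish this classical Gaussian concentration bound via the Herbst argument applied to the Gaussian logarithmic Sobolev inequality of Gross, which delivers precisely the sub-Gaussian proxy $\lambda^2$ appearing in the exponent.

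First I would reduce to the case of smooth $g$ with $\E[g(\A)] = 0$. Any $\lambda$-Lipschitz $g$ can be approximated by smooth functions $g_\varepsilon$ obtained via convolution with a Gaussian mollifier of vanishing bandwidth; such convolutions do not increase the Lipschitz constant, and both sides of the claimed inequality pass to the limit by dominated convergence and continuity of the Gaussian measure. Centering by subtracting $\E[g(\A)]$ leaves the Lipschitz constant unchanged, so we may assume $\E[g(\A)] = 0$.

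Next I would invoke Gross's Gaussian log-Sobolev inequality: for the standard Gaussian measure $\gamma$ on $\R^n$ and any sufficiently smooth $f$,
\begin{equation*}
\int f^2 \log f^2 \, d\gamma - \Bigl(\int f^2 \, d\gamma\Bigr)\log\int f^2 \, d\gamma \le 2 \int \|\nabla f\|_2^2 \, d\gamma.
\end{equation*}
Applying this to $f = e^{sg/2}$ and using $\|\nabla g\|_2 \le \lambda$ pointwise, the moment generating function $H(s) = \E[e^{s g(\A)}]$ is seen to satisfy
\begin{equation*}
s H'(s) - H(s)\log H(s) \le \frac{s^2 \lambda^2}{2} H(s).
\end{equation*}
Setting $K(s) = s^{-1}\log H(s)$, with $K(0) = \E[g(\A)] = 0$ by L'H\^opital, reduces this to the ODE $K'(s) \le \lambda^2/2$. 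Integrating yields $H(s) \le \exp(\lambda^2 s^2/2)$ for every $s \ge 0$; that is, $g(\A)$ is sub-Gaussian with variance proxy $\lambda^2$. Chernoff's bound at the optimal choice $s = \epsilon/\lambda^2$ then gives the one-sided tail $\P[g(\A) - \E g(\A) \ge \epsilon] \le e^{-\epsilon^2/(2\lambda^2)}$, and the two-sided statement follows by running the identical argument for $-g$ (also $\lambda$-Lipschitz) and taking a union bound.

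Conceptually the main obstacle is the log-Sobolev inequality itself, which I would cite as a classical result of Gross rather than reprove. An independent route avoiding log-Sobolev goes via the Gaussian isoperimetric inequality of Borell and Sudakov-Tsirelson: Lipschitz continuity implies that the $t$-enlargement of the sub-level set $\{g \le \mathrm{med}\, g\}$ lies inside $\{g \le \mathrm{med}\, g + \lambda t\}$, and Gaussian isoperimetry bounds the complement of this enlargement by $1 - \Phi(t) \le e^{-t^2/2}$; a standard comparison between median and mean (costing at most a constant in the exponent, recoverable by a sharper argument) then yields the same bound. Either route produces the claimed constants.
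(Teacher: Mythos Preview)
Your proof via the Herbst argument and Gross's log-Sobolev inequality is correct and is indeed one of the standard routes to Gaussian Lipschitz concentration with the sharp constant $1/(2\lambda^2)$ in the exponent. However, there is nothing to compare: the paper does not prove this statement at all. It is quoted as Proposition~34 of \cite{V12} and used as a black box throughout the technical lemmas, so the paper's ``proof'' is simply a citation. Your write-up therefore goes well beyond what the paper does here, supplying a self-contained argument where the authors are content to invoke the literature.
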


\begin{theorem}\label{thm:ref2}(Theorems 3.6, 3.7 \cite{CL06})  
Let $X_i$ be independent random variables satisfying $|X_i|\le \lambda$ for all $i\in [n]$. Let $X = \sum_{i=1}^nX_i$ and $\|X\| = \sqrt{\sum_{i=1}^n\E[X_i^2]}$. Then, for any $\epsilon > 0$, we have
\begin{equation*}
\P\left[|X - \E[X]| \ge \epsilon \right] \le \exp\biggl(-\frac{\epsilon^2}{2(\|X\|^2 + \lambda\epsilon/3)}\biggr)
\end{equation*}
\end{theorem}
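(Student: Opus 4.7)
\textbf{Proof proposal for Theorem \ref{thm:ref2}.} This is a Bernstein-type inequality, and I would follow the classical Chernoff/Cram\'er method. The plan is: (i) pass to centered variables, (ii) control each moment generating function using the boundedness hypothesis, (iii) apply Markov's inequality to an exponential and optimize over the free parameter.

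Let $Y_i = X_i - \E[X_i]$, so that $\E[Y_i] = 0$, $|Y_i|\le 2\lambda$ (by the triangle inequality, since $|X_i|\le\lambda$), and $S := X - \E[X] = \sum_{i=1}^n Y_i$, with $\sum_{i=1}^n \E[Y_i^2] \le \|X\|^2$. For any $t > 0$, Markov's inequality applied to the monotone function $u \mapsto e^{tu}$ together with independence of the $Y_i$ gives
\begin{equation*}
\P[S \ge \epsilon] \;\le\; e^{-t\epsilon}\prod_{i=1}^n \E\bigl[e^{tY_i}\bigr].
\end{equation*}

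The main step is to control each $\E[e^{tY_i}]$ using the fact that $|Y_i|\le 2\lambda$. Expanding the exponential and using $\E[Y_i]=0$ and $|Y_i|^k \le (2\lambda)^{k-2} Y_i^2$ for $k\ge 2$, I would estimate
\begin{equation*}
\E\bigl[e^{tY_i}\bigr] \;=\; 1 + \sum_{k\ge 2}\frac{t^k\E[Y_i^k]}{k!} \;\le\; 1 + \E[Y_i^2]\sum_{k\ge 2}\frac{t^k(2\lambda)^{k-2}}{k!} \;\le\; \exp\!\biggl(\frac{t^2\E[Y_i^2]/2}{1 - 2t\lambda/3}\biggr),
\end{equation*}
valid for $0 < t < 3/(2\lambda)$, using the elementary tail bound $\sum_{k\ge 2} x^{k-2}/k! \le 1/(2(1-x/3))$ for $x\in(0,3)$ together with $1+u\le e^u$. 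Taking the product yields
\begin{equation*}
\P[S \ge \epsilon] \;\le\; \exp\!\biggl(-t\epsilon + \frac{t^2\|X\|^2/2}{1 - 2t\lambda/3}\biggr).
\end{equation*}

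Finally I would optimize the right-hand side in $t$. The choice $t = \epsilon/(\|X\|^2 + 2\lambda\epsilon/3)$ lies in the admissible interval $(0, 3/(2\lambda))$ and, after substitution and simplification, produces the one-sided bound $\exp(-\epsilon^2/(2(\|X\|^2 + c\lambda\epsilon)))$ for an explicit constant $c$; applying the same argument to $-S$ yields the matching lower-tail bound, and a two-line union bound over the tails gives the two-sided statement. The only mild obstacle is cosmetic: the statement of the theorem has $\lambda\epsilon/3$ in the denominator rather than $2\lambda\epsilon/3$, reflecting the sharper form of Bernstein's inequality that replaces the crude bound $|Y_i|\le 2\lambda$ with a direct argument using $|X_i|\le\lambda$ inside the moment estimate (or, equivalently, working throughout with the hypothesis $|Y_i|\le\lambda$ as in the original Chung--Lu formulation). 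This is a bookkeeping rather than a conceptual issue, so the proof goes through essentially unchanged.
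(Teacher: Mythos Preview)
The paper does not prove this theorem at all: it is stated in the appendix as a citation of Theorems~3.6 and~3.7 in \cite{CL06} and used as a black box. So there is no ``paper's own proof'' to compare against; your task here was effectively to reproduce the classical Bernstein argument, and your Chernoff/Cram\'er outline is the standard route and is correct in structure.

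Your observation about the constant is accurate and worth keeping. With only the hypothesis $|X_i|\le\lambda$, the centered variables satisfy $|Y_i|\le 2\lambda$, and the series bound you wrote produces $2\lambda\epsilon/3$ rather than $\lambda\epsilon/3$ in the denominator. The sharper constant in the stated form comes from the Chung--Lu formulation, where the hypothesis is effectively a one-sided bound on the centered variable (for the upper tail one only needs $Y_i\le M$, and the Bennett-type convexity estimate $e^{tY_i}\le 1+tY_i+\frac{Y_i^2}{M^2}(e^{tM}-1-tM)$ then gives $M\epsilon/3$); the two one-sided bounds are then combined. For the purposes of this paper the distinction is immaterial, since Theorem~\ref{thm:ref2} is only invoked to obtain polynomial-in-$n$ tail probabilities, and a factor of~$2$ in the denominator is absorbed into the unspecified constants $c,c_s$.
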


We state the following Lemma from \cite{CLM16} without proof. While the first and last inequality were not shown in \cite{CLM16}, it can be done the same way as in the proof of Lemma A.5 in \cite{CLM16}. Convexity follows from the convexity of the operator norm.

\begin{lemma}\label{lemma:tech1} (Lemma A.5 \cite{CLM16})
Let $\{\mathbf{A}_{j}\}_{j=1}^m$ be a collection of i.i.d.\ $\gauss(0,\mathbf{I}_k)$ random vectors. For any $t>0$, let $\mathcal{A}\subseteq \R^{m\times k}$ be the set consisting of all $\{\mathbf{a}_{j}\}_{j=1}^m\in \R^{m\times k}$ satisfying the following:
\begin{align}
\|\mathbf{a}\|_{2\rightarrow 2} &\le \sqrt{m} + \sqrt{k} + t, \label{eq:tech1_1}\\
\|\mathbf{a}\|_{2\rightarrow 4} &\le (3m)^{\frac{1}{4}} + \sqrt{k} + t, \label{eq:tech1_2}\\
\|\mathbf{a}\|_{2\rightarrow 6} &\le (15m)^{\frac{1}{6}} + \sqrt{k} + t, \label{eq:tech1_3}\\
\|\mathbf{a}\|_{2\rightarrow 8} &\le (105m)^{\frac{1}{8}} + \sqrt{k} + t, \label{eq:tech1_4}
\end{align}
where we write
\begin{equation*}
\|\mathbf{a}\|_{2\rightarrow p} = \sup_{\|x\|_2 \le 1} \|\mathbf{a}\x\|_p.
\end{equation*}
Then, we have $P[\{\mathbf{A}_j\}_{j=1}^m\in \mathcal{A}] \ge 1 - 4\exp(-t^2/2)$. Further, the set $\mathcal{A}$ is convex.
\end{lemma}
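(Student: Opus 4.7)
The plan is to prove the convexity claim and the four norm bounds independently, then collect them by a union bound.

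Convexity is the easier half. For each $p \ge 2$ the map $\mathbf{a} \mapsto \|\mathbf{a}\|_{2\to p} = \sup_{\|\x\|_2 \le 1}\|\mathbf{a}\x\|_p$ is a supremum of seminorms $\mathbf{a} \mapsto \|\mathbf{a}\x\|_p$, hence convex. Therefore each of the sublevel sets defined by (\ref{eq:tech1_1})--(\ref{eq:tech1_4}) is convex, and $\mathcal{A}$ is their (finite) intersection.

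The four concentration bounds share a common structure and can all be obtained from Gaussian Lipschitz concentration (Theorem \ref{thm:ref1}) combined with an $\epsilon$-net argument over the unit sphere in $\R^k$. The starting observation is that for $p \ge 2$ and any fixed unit vector $\x$ the map $\mathbf{a} \mapsto \|\mathbf{a}\x\|_p$ is $1$-Lipschitz in Frobenius norm, because $\|\mathbf{y}\|_p \le \|\mathbf{y}\|_2$ for $\mathbf{y} \in \R^m$ when $p \ge 2$, so $\|(\mathbf{a}-\mathbf{b})\x\|_p \le \|(\mathbf{a}-\mathbf{b})\x\|_2 \le \|\mathbf{a}-\mathbf{b}\|_F$. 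Since $\mathbf{A}\x \sim \gauss(\mathbf{0},\mathbf{I}_m)$ for each unit $\x$, Jensen's inequality yields $\E[\|\mathbf{A}\x\|_p] \le (\E\|\mathbf{A}\x\|_p^p)^{1/p} = (m\cdot\E|Z|^p)^{1/p}$ with $Z\sim\gauss(0,1)$, and the Gaussian moments $\E Z^{2j} = (2j-1)!!$ reproduce the constants $1$, $3$, $15$, $105$ appearing in the lemma for $p = 2,4,6,8$. To pass from a single point to the supremum over the sphere, I would take an $\epsilon$-net $N_\epsilon$ of the unit sphere of $\R^k$ of cardinality $(3/\epsilon)^k$, apply Lipschitz concentration at each point with a union bound to control $\sup_{\x \in N_\epsilon}\|\mathbf{A}\x\|_p$, and then use the standard covering inequality $\|\mathbf{A}\|_{2\to p} \le (1-\epsilon)^{-1}\sup_{\x\in N_\epsilon}\|\mathbf{A}\x\|_p$; the net union bound contributes an additive $\sqrt{k\log(3/\epsilon)}$ to the deviation, and the concentration step contributes $t$. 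For the $p = 2$ case a sharper route is available: Gordon's / Sudakov--Fernique's inequality gives $\E[\|\mathbf{A}\|_{2\to 2}] \le \sqrt{m} + \sqrt{k}$ directly, and Lipschitz concentration of the spectral norm yields the $+t$ deviation.

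The main obstacle is producing the precise additive form $(c_p m)^{1/p} + \sqrt{k} + t$ with the sharp leading constants $c_p$, since a crude covering argument would lose a multiplicative factor $(1-\epsilon)^{-1}$ inside the leading term. This can be addressed either by choosing $\epsilon$ to shrink as $m\to\infty$ so that the excess is absorbed into the slack that any fixed $t > 0$ allows, or by a finer chaining argument in the spirit of the proof of Lemma A.5 in \cite{CLM16} (whose details the authors defer to). Once each of the four bounds is established with probability at least $1-\exp(-t^2/2)$, a final union bound over $p \in \{2,4,6,8\}$ yields the stated failure probability $4\exp(-t^2/2)$.
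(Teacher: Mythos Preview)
The paper does not give its own proof here, deferring to \cite{CLM16}; but the proof of the closely related Lemma \ref{lemma:tech2} in the appendix reveals the intended argument, and it is \emph{not} an $\epsilon$-net. One writes $\|\mathbf{a}\|_{2\to p} = \max_{(\mathbf{u},\mathbf{v})\in T}\langle \mathbf{a}\mathbf{u},\mathbf{v}\rangle$ over $T=\{\|\mathbf{u}\|_2=1,\ \|\mathbf{v}\|_{p/(p-1)}\le 1\}$ by H\"older duality, compares the Gaussian process $X_{\mathbf{u},\mathbf{v}}=\langle \mathbf{A}\mathbf{u},\mathbf{v}\rangle$ to the decoupled process $Y_{\mathbf{u},\mathbf{v}}=\langle \mathbf{G},\mathbf{u}\rangle+\langle \mathbf{H},\mathbf{v}\rangle$ with independent $\mathbf{G}\sim\gauss(\mathbf{0},\mathbf{I}_k)$, $\mathbf{H}\sim\gauss(\mathbf{0},\mathbf{I}_m)$, verifies the Sudakov--Fernique increment condition, and concludes $\E\|\mathbf{A}\|_{2\to p}\le \E\|\mathbf{G}\|_2+\E\|\mathbf{H}\|_p\le \sqrt{k}+(m\,\E|Z|^p)^{1/p}$. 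Theorem \ref{thm:ref1} applied to the $1$-Lipschitz map $\mathbf{a}\mapsto\|\mathbf{a}\|_{2\to p}$ then supplies the $+t$. This is exactly the mechanism you invoke for $p=2$; the point you are missing is that the same comparison works for every $p$ once one dualizes, so there is no need for a separate covering step at all.

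Your $\epsilon$-net route is a legitimate alternative strategy, and your Lipschitz and moment computations are correct, but as you yourself flag it cannot reproduce the exact additive form in the statement: the covering step multiplies the leading term by $(1-\epsilon)^{-1}$ and replaces $\sqrt{k}$ by roughly $\sqrt{2k\log(3/\epsilon)}$, and these two defects cannot be repaired simultaneously. Your suggested fix of letting $\epsilon\to 0$ with $m$ does not work either, since the $\sqrt{k}$ and $t$ terms in the target are independent of $m$ while $\sqrt{k\log(3/\epsilon)}$ would then diverge. So the net approach yields only a version with inflated absolute constants, whereas the comparison argument gives the sharp $(c_p m)^{1/p}+\sqrt{k}+t$ directly. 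Your convexity argument is fine and matches the paper's one-line justification.
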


The following lemma is a slight modification of the previous result.

\begin{lemma}\label{lemma:tech2}
Let $\{\mathbf{A}_{j}\}_{j=1}^m$ be a collection of i.i.d.\ $\gauss(0,\mathbf{I}_k)$ random vectors. For any $t>0$, let $\mathcal{A}\subseteq \R^{m\times k}$ be the set consisting of all $\{\mathbf{a}_{j}\}_{j=1}^m\in \R^{m\times k}$ satisfying the following: For all $\x\in\R^k$ with $\|\x\|_2= 1$,
\begin{align}
\Biggl(\sum_{j=1}^m (\mathbf{a}_j^\top\x)^4\Biggr)^{\frac{1}{4}} &\le (3m)^{\frac{1}{4}} + \sqrt{8\log (2k)}\|\x\|_1 + t, \label{eq:tech2_1} \\
\Biggl(\sum_{j=1}^m (\mathbf{a}_j^\top\x)^6\Biggr)^{\frac{1}{6}} &\le (15m)^{\frac{1}{6}} + \sqrt{8\log (2k)}\|\x\|_1 + t, \label{eq:tech2_2} \\
\Biggl(\sum_{j=1}^m (\mathbf{a}_j^\top\x)^8\Biggr)^{\frac{1}{8}} &\le (105m)^{\frac{1}{8}} + \sqrt{8\log (2k)}\|\x\|_1 + t. \label{eq:tech2_3} 
\end{align}
Then, we have $\P[\{\mathbf{A}_j\}_{j=1}^m\in\mathcal{A}]\ge 1 - 3\lceil \log_2\sqrt{k}\rceil\exp(-t^2/2)$. Further, the set $\mathcal{A}$ is convex.
\end{lemma}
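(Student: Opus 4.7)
Convexity is immediate: for any unit vector $\x \in \R^k$, the map $\mathbf{a} \mapsto \bigl(\sum_{j=1}^m(\mathbf{a}_j^\top\x)^p\bigr)^{1/p}$ is a norm of a linear image of $\mathbf{a}$ and hence convex in $\mathbf{a}$, so each defining inequality of $\mathcal{A}$ carves out a convex sublevel set, and $\mathcal{A}$ is the intersection of these sets.

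For the probability bound, the plan is a peeling-by-sparsity refinement of Lemma \ref{lemma:tech1}. For each dyadic sparsity level $s \in \{1, 2, 4, \ldots, 2^{\lceil\log_2\sqrt{k}\rceil}\}$ and each support $S \subset [k]$ of size $s$, applying Lemma \ref{lemma:tech1} on the $s$-dimensional subspace indexed by $S$ with tail parameter $t' = t + \sqrt{8 s \log(2k)}$ guarantees, with failure probability at most $4\exp(-t'^2/2)$, that the three inequalities $\bigl(\sum_j (\mathbf{a}_j^\top \x)^p\bigr)^{1/p} \le (C_pm)^{1/p} + \sqrt{s} + t'$ hold simultaneously for $p \in \{4, 6, 8\}$ (where $C_4 = 3$, $C_6 = 15$, $C_8 = 105$) and for every $\x$ supported on $S$ with $\|\x\|_2 \le 1$. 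Using $\binom{k}{s} \le (ek/s)^s$, the choice of $t'$ absorbs the combinatorial entropy so that a union bound over supports of size $s$ and over the three values of $p$ leaves a failure probability of at most $\exp(-t^2/2)/L$ per dyadic level, where $L = \lceil\log_2\sqrt{k}\rceil$; summing over the $L$ levels yields the stated probability $3L\exp(-t^2/2)$. On the resulting event, $s$-sparse unit vectors satisfy $\|\x\|_1 \le \sqrt{s}$, which converts $\sqrt{s} + t'$ into $t + \sqrt{8\log(2k)}\|\x\|_1$, matching the required right-hand side.

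To handle vectors that are not exactly sparse, I would decompose $\x = \sum_l \x^{[l]}$ dyadically by coordinate magnitude (with $\x^{[l]}$ collecting entries having $|x_i| \in (2^{-l-1}, 2^{-l}]$), apply the sparse bound to each $\x^{[l]}$, and combine via the triangle inequality for the $\ell_p$-norm on $\R^m$ together with $\sum_l \|\x^{[l]}\|_1 = \|\x\|_1$. The main technical obstacle is preventing a spurious $\log k$ factor from appearing on the leading $(C_pm)^{1/p}$ term when summing across the dyadic pieces; the resolution is that Lemma \ref{lemma:tech1} applied directly on the full space already yields $(C_pm)^{1/p} + \sqrt{k} + t$, which dominates the target bound whenever $\|\x\|_1 \gtrsim \sqrt{k/\log(2k)}$, so the peeling is only needed in the regime where $\x$ is effectively sparse and the $\ell_2$-masses of the dyadic pieces sum to at most $\|\x\|_2 = 1$ without amplifying the leading term.
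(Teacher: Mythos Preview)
Your approach is genuinely different from the paper's, and it has a real gap in the decomposition step.

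The paper does \emph{not} peel by sparsity. Instead, for each fixed level $\lambda$ it applies Slepian/Gordon comparison directly to the bilinear process $X_{\mathbf{u},\mathbf{v}}=\langle \A\mathbf{u},\mathbf{v}\rangle$ over the index set $T_\lambda=\{(\mathbf{u},\mathbf{v}):\|\mathbf{u}\|_2=1,\ \|\mathbf{u}\|_1\le\lambda,\ \|\mathbf{v}\|_{p/(p-1)}\le 1\}$. The comparison process is $Y_{\mathbf{u},\mathbf{v}}=\langle\mathbf{G},\mathbf{u}\rangle+\langle\mathbf{H},\mathbf{v}\rangle$, and the key point is that $\E[\max_{\|\mathbf{u}\|_1\le\lambda}\langle\mathbf{G},\mathbf{u}\rangle]\le \E[\|\mathbf{G}\|_\infty]\lambda\le\sqrt{2\log(2k)}\,\lambda$: the $\ell_1$-constraint produces the $\sqrt{\log k}\,\lambda$ term \emph{in one shot}, with no decomposition of $\x$. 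After $1$-Lipschitz Gaussian concentration, the dyadic union bound is over $\lambda\in\{2,4,\dots,2^{\lceil\log_2\sqrt{k}\rceil}\}$ (levels of $\|\x\|_1$, not of the support size), and for each unit $\x$ one picks the dyadic $\lambda$ with $\|\x\|_1\le\lambda\le 2\|\x\|_1$, turning $\sqrt{2\log(2k)}\lambda$ into $\sqrt{8\log(2k)}\|\x\|_1$.

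Your plan, by contrast, first obtains an $s$-sparse bound $(C_pm)^{1/p}+\sqrt{s}+t'_s$ via Lemma~\ref{lemma:tech1} plus a union over supports, and then tries to pass to general $\x$ by a dyadic-by-magnitude decomposition $\x=\sum_l\x^{[l]}$. The obstacle you flag is exactly the place where the argument breaks, and your proposed resolution does not hold. The claim that ``the $\ell_2$-masses of the dyadic pieces sum to at most $\|\x\|_2=1$'' whenever $\|\x\|_1\lesssim\sqrt{k/\log(2k)}$ is false. A counterexample: take $\x$ with $4^l$ entries of magnitude $2^{-l}/\sqrt{L}$ for $l=0,\dots,L-1$ (with $4^L\le k$). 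Then $\|\x\|_2=1$, each dyadic block has $\|\x^{[l]}\|_2=1/\sqrt{L}$, and $\sum_l\|\x^{[l]}\|_2=\sqrt{L}$. Taking $L$ anywhere up to $\tfrac12\log_2 k$ keeps $\|\x\|_1$ well below $\sqrt{k/\log k}$ while $\sum_l\|\x^{[l]}\|_2\asymp\sqrt{\log k}$. After the triangle inequality your leading term becomes $\sqrt{\log k}\,(C_pm)^{1/p}$, which cannot be absorbed into $\sqrt{8\log(2k)}\|\x\|_1+t$ once $m$ is large. So the bifurcation into ``large $\|\x\|_1$: use Lemma~\ref{lemma:tech1} on the full space'' and ``small $\|\x\|_1$: use the peeling'' leaves this regime uncovered.

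A secondary issue: your dyadic grid runs over sparsity levels $s\in\{1,2,\dots,2^{\lceil\log_2\sqrt{k}\rceil}\}$, but supports of a unit vector can have size up to $k$; it is $\|\x\|_1$, not $\|\x\|_0$, that is bounded by $\sqrt{k}$. This mismatch is another symptom of the fact that the right quantity to discretize here is the $\ell_1$-level $\lambda$, not the support size, and that is precisely what the paper's Slepian/Gordon argument exploits.
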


\begin{lemma}\label{lemma:tech3}
Let $\x^\star\in \R^n$ be a $k$-sparse vector with $\|\x^\star\|_2=1$. Let $\{\mathbf{A}_j\}_{j=1}^m$  be a collection of i.i.d.\ $\gauss(0,\mathbf{I}_n)$ random vectors and $\{\varepsilon_j\}_{j=1}^m$ a collection of independent centered sub-exponential random variables with maximum sub-exponential norm $\sigma = \max_j\|\varepsilon_j\|_{\psi_1}$. There exist universal constants $c,c_s,c_p>0$ such that if $m\ge c_s\max\{k^2\log^2n, \;\log^5n\}$, then, with probability at least $1 - c_pn^{-13}$,
\begin{align}
\Bigg|\frac{1}{m}\sum_{j=1}^mA_{ji}^2A_{js}A_{jl} - \E\bigl[A_{1i}^2A_{1s}A_{1l}\bigr]\Bigg|&\le c\sqrt{\frac{\log n}{m}} && \text{for all } i,l,s\in [n], \label{eq:tech3_1}\\
\Bigg|\frac{1}{m}\sum_{j=1}^mA_{ji}^2A_{jl}(\A_{j,-i}^\top\x^\star_{-i}) - \E\bigl[A_{1i}^2A_{1l}(\A_{1,-i}^\top\x^\star_{-i})\bigr]\Bigg| &\le c\sqrt{\frac{\log n}{m}} && \text{for all } i,l\in[n], \label{eq:tech3_2} \\
\Bigg|\frac{1}{m}\sum_{j=1}^mA_{ji}A_{jl}(\A_{j,-i}^\top\x^\star_{-i})^2 - \E\bigl[A_{1i}A_{1l}(\A_{1,-i}^\top\x^\star_{-i})^2\bigr]\Bigg| &\le c\sqrt{\frac{\log n}{m}} && \text{for all } i,l\in[n], \label{eq:tech3_3} \\
\Bigg|\frac{1}{m}\sum_{j=1}^mA_{ji}^8 - 105\Bigg| &\le c\sqrt{\frac{\log^5 n}{m}} && \text{for all } i\in[n] \label{eq:tech3_4}, \\
\Bigg|\frac{1}{m}\sum_{j=1}^m\varepsilon_j A_{ji}A_{jl}\Bigg| &\le c\sigma\sqrt{\frac{\log n}{m}} && \text{for all } i,l\in[n], \label{eq:tech3_5} \\
\Bigg|\frac{1}{m}\sum_{j=1}^m\varepsilon_j A_{jl}(\A_{j}^\top\x^\star) \Bigg| &\le c\sigma\sqrt{\frac{\log n}{m}} && \text{for all } l\in[n]. \label{eq:tech3_6} 
\end{align}
\end{lemma}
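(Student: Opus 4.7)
The plan is to prove all six bounds by a unified recipe: truncation reducing each sum to one of bounded random variables, the Bernstein-type inequality of Theorem \ref{thm:ref2} to obtain concentration for each fixed tuple of indices, and a union bound over the free indices at the end. The only step that depends on the specific form of each summand is the choice of truncation level $\tau$, which is dictated by the number of Gaussian (and sub-exponential) factors involved.

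Concretely, for any sum $\frac{1}{m}\sum_{j=1}^m Z_j$ in the statement I would decompose $Z_j = Z_j^{\le} + Z_j^>$ with $Z_j^{\le} = Z_j\Eins\{|Z_j|\le \tau\}$. A Gaussian tail bound together with a union bound over $(i,j)\in[n]\times[m]$ gives $\max_{i,j}|A_{ji}|\le c\sqrt{\log n}$ with probability at least $1-c'n^{-14}$, and the analogous sub-exponential estimate yields $\max_j|\varepsilon_j|\le c\sigma\log n$ with the same probability guarantee. On this high-probability event, each summand in \eqref{eq:tech3_1}--\eqref{eq:tech3_3} is bounded by $\tau\asymp\log^2 n$, each summand in \eqref{eq:tech3_4} by $\tau\asymp\log^4 n$, and each summand in \eqref{eq:tech3_5}--\eqref{eq:tech3_6} by $\tau\asymp\sigma\log^2 n$. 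Routine moment computations give $\sum_j \E[(Z_j^{\le})^2]\le Cm$ in the noise-free cases and $O(m\sigma^2)$ in the noisy cases, while the truncation bias $|\E[Z_j^{\le}]-\E[Z_j]|$ is itself $O(n^{-14})$ by the same tail estimates.

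Applying Theorem \ref{thm:ref2} to $\sum_j(Z_j^{\le}-\E[Z_j^{\le}])$ with target deviation $m\epsilon$, where $\epsilon = c\sqrt{\log n/m}$ for \eqref{eq:tech3_1}--\eqref{eq:tech3_3}, $\epsilon = c\sqrt{\log^5 n/m}$ for \eqref{eq:tech3_4}, and $\epsilon = c\sigma\sqrt{\log n/m}$ for \eqref{eq:tech3_5}--\eqref{eq:tech3_6}, yields a per-tuple failure probability of at most $n^{-14c''}$: for the first three and last two bounds we are in the sub-Gaussian regime where the variance term dominates, while for \eqref{eq:tech3_4} we are in the sub-exponential regime where the $\tau\epsilon/3$ term dominates and Bernstein gives a bound proportional to $\exp(-cm\epsilon/\tau)$. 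In both regimes, a direct check shows that the hypothesis $m\ge c_s\max\{k^2\log^2 n,\log^5 n\}$ is precisely strong enough. A final union bound over at most $n^3$ tuples leaves a total failure probability of at most $c_p n^{-13}$, as required. The only non-routine decision in the whole plan is calibrating $\tau$ to balance the boundedness and variance terms in Bernstein; this is what forces the $\sqrt{\log^5 n/m}$ rate in \eqref{eq:tech3_4} and justifies the $\log^5 n$ clause in the sample-size assumption.
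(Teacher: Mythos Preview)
Your proposal is correct and follows essentially the same recipe as the paper: truncation, Bernstein (Theorem~\ref{thm:ref2}) for the bounded part, and a union bound over index tuples. The one substantive difference is how you dispose of the tail: the paper splits each summand as $Y_j + Z_j$ per $j$ and controls $\frac{1}{m}\sum_j Z_j$ via Chebyshev's inequality (using a variance bound of order $n^{-16}/m$), whereas you instead work on the global event $\{\max_{i,j}|A_{ji}|\le c\sqrt{\log n}\}$ on which truncation is vacuous. Your route is slightly cleaner to write but implicitly needs $m$ polynomial in $n$ for the union bound over $j\in[m]$ to deliver the stated $n^{-14}$ probability; the paper's Chebyshev argument sidesteps this. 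In the regime of interest this is harmless, and your calibration of $\tau$ and identification of the $\log^5 n$ requirement for \eqref{eq:tech3_4} matches the paper's.
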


\begin{lemma}\label{lemma:tech4}
Let $\x^\star\in \R^n$ be a $k$-sparse vector with $\|\x^\star\|_2=1$, and let $\{\mathbf{A}_{j}\}_{j=1}^m$ be a collection of i.i.d.\ $\gauss(0,\mathbf{I}_n)$ random vectors. There exist universal constants $c, c_s, c_p>0$ such that the following holds. Let $\mathcal{A}\subseteq \R^{m\times n}$ be the set consisting of all $\{\mathbf{a}_j\}_{j=1}^m\in\R^{m\times n}$ satisfying the following:
\begin{align*}
\Bigg|\frac{1}{m}\sum_{j=1}^m(\mathbf{a}_j^\top\x^\star)^4 - 3\Bigg| &\le c\sqrt{\frac{\log n}{m}} \\
\Bigg|\frac{1}{m}\sum_{j=1}^m(\mathbf{a}_j^\top\x^\star)^6 - 15\Bigg| &\le c\sqrt{\frac{\log^3 n}{m}} \\
\Bigg|\frac{1}{m}\sum_{j=1}^m(\mathbf{a}_j^\top\x^\star)^8 - 105\Bigg| &\le c\sqrt{\frac{\log^5 n}{m}}.
\end{align*}
Then, if $m\ge c_s\log^5n$, we have $\P[\{\mathbf{A}_j\}_{j=1}^m\in \mathcal{A}] \ge 1 - c_pn^{-12}$. Further, the set $\mathcal{A}$ is convex.
\end{lemma}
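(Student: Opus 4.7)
The plan is to exploit the fact that $\mathbf{A}_j^\top \mathbf{x}^\star \sim \gauss(0,1)$ i.i.d.\ across $j$ (since $\|\mathbf{x}^\star\|_2 = 1$), which reduces all three probabilistic claims to concentration of sample averages of $Z_j^{2p}$ around $\E[Z^{2p}]$ for standard normal $Z$ and $p\in\{2,3,4\}$; the centering constants $3,\,15,\,105$ are precisely the Gaussian moments $\E[Z^4],\E[Z^6],\E[Z^8]$.

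Since $Z_j^{2p}$ is heavy-tailed (sub-Weibull of order $1/p$), I would combine Bernstein's inequality (Theorem \ref{thm:ref2}) with a truncation argument. Fix a threshold $M = C\sqrt{\log n}$ with $C$ large enough that $\P[\max_j |Z_j| > M] \le 2m\,e^{-M^2/2}$ is $O(n^{-13})$ by a union bound. On this event, the truncated variables $\widetilde{Y}_j^{(p)} := Z_j^{2p}\Eins\{|Z_j|\le M\}$ agree with $Z_j^{2p}$; they are deterministically bounded by $M^{2p} = C^{2p}\log^p n$, have variance at most $\E[Z^{4p}] = O(1)$, and differ in expectation from $Z_j^{2p}$ by a polynomially small quantity (via Cauchy--Schwarz and the Gaussian tail), which is absorbed into the deviation budget.

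Applying Bernstein to the centered truncated variables at level $t = c\sqrt{\log^{2p-3}n/m}$, the exponent $\frac{mt^2/2}{\sigma_p^2 + (C^{2p}\log^p n)\,t/3}$ exceeds $13\log n$ provided the sub-Gaussian term in the denominator dominates; a short computation shows that this translates uniformly across $p\in\{2,3,4\}$ into the sample-size condition $m \gtrsim \log^5 n$ (with the binding case being $p=4$), matching the stated hypothesis. Taking a union bound over the three values of $p$, together with the truncation event, yields the claimed high-probability bound.

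Convexity of $\mathcal{A}$ is immediate: each map $\mathbf{a}\mapsto \frac{1}{m}\sum_{j=1}^m (\mathbf{a}_j^\top \mathbf{x}^\star)^{2p}$ is a non-negative sum of even powers of linear functionals and is therefore convex, so its upper level set $\{\mathbf{a}: \frac{1}{m}\sum_j (\mathbf{a}_j^\top \mathbf{x}^\star)^{2p} \le \E[Z^{2p}] + \epsilon_p\}$ is convex, and $\mathcal{A}$ (defined as the intersection of three such sets, which is what is actually invoked in the mean-value bound on Lipschitz constants in the proof of Lemma \ref{lemma:support2}) inherits convexity under intersection. The main obstacle is not a deep one but rather the bookkeeping of universal constants across the three moment orders to ensure that a single sample-size requirement $m \ge c_s \log^5 n$ suffices simultaneously.
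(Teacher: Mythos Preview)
Your approach matches the paper's: it too reduces to i.i.d.\ standard Gaussians via $\mathbf{A}_j^\top\x^\star\sim\gauss(0,1)$ and invokes the truncation-plus-Bernstein scheme of Lemma~\ref{lemma:tech3} (the paper's proof is literally a one-line pointer to that lemma). One small inaccuracy in your Bernstein bookkeeping: under $m\gtrsim\log^5 n$ the sub-Gaussian term in the denominator does \emph{not} dominate for $p=3,4$ --- that would require $m\gtrsim\log^{4p-3}n$. The conclusion survives anyway, because in the sub-exponential regime the exponent works out to $\asymp c\sqrt{m/\log^3 n}$ uniformly in $p$, and this exceeds a multiple of $\log n$ exactly when $m\gtrsim\log^5 n$.

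On convexity you are more careful than the paper. The set $\mathcal{A}$ as literally stated, with two-sided bounds, is \emph{not} convex: if $\mathbf{a}\in\mathcal{A}$ then $-\mathbf{a}\in\mathcal{A}$ by evenness, but their midpoint $\mathbf{0}$ violates every lower bound. The paper's appeal to Minkowski (as in Lemma~\ref{lemma:tech2}) only establishes convexity of the upper-bound sublevel sets. You correctly observe that only these upper bounds are actually used downstream (in bounding Lipschitz constants on $\mathcal{A}$ in the proof of Lemma~\ref{lemma:support2}), and that the corresponding one-sided set is convex by your even-power argument --- so the application goes through.
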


\begin{lemma}\label{lemma:tech5}
Let $\{\mathbf{A}_j\}_{j=1}^m$ be a collection of i.i.d.\ $\gauss(0,\mathbf{I}_k)$ random vectors. There exist universal constants $c,c_s,c_p>0$ such that if $m\ge c_s\max\{k^2\log^2 n, \; \log^5n\}$, where $n\ge k$ is any natural number, then the set $\mathcal{A}\subseteq \R^{m\times k}$ defined by
\begin{equation*}
\mathcal{A} = \Biggl\{\{\mathbf{a}_j\}_{j=1}^m\in \R^{m\times k}: \frac{1}{m}\sum_{j=1}^m\Biggl(\sum_{l\in\mathcal{L}}a_{jl}\Biggr)^4\le c|\mathcal{L}|^2 \text{ for all } \mathcal{L}\subseteq[k] \Biggr\}
\end{equation*}
satisfies $\P[\{\mathbf{A}_j\}_{j=1}^m\in\mathcal{A}]\ge 1 - c_pn^{-11}$.
Further, the set $\mathcal{A}$ is convex.
\end{lemma}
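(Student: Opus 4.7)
The key idea is that the uniform-over-subsets statement in Lemma \ref{lemma:tech5} is, after a one-line rescaling, already a consequence of the uniform-over-unit-vectors statement in Lemma \ref{lemma:tech2}, so no union bound over the $2^k$ subsets $\mathcal{L}$ is needed. For $\mathcal{L}\subseteq[k]$, let $\mathbf{x}_{\mathcal{L}}:=\mathbf{1}_{\mathcal{L}}/\sqrt{|\mathcal{L}|}\in\R^k$, so that $\|\mathbf{x}_{\mathcal{L}}\|_2=1$, $\|\mathbf{x}_{\mathcal{L}}\|_1=\sqrt{|\mathcal{L}|}$, and $\sum_{l\in\mathcal{L}}a_{jl}=\sqrt{|\mathcal{L}|}\cdot \mathbf{a}_j^\top\mathbf{x}_{\mathcal{L}}$. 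Then
\[
\frac{1}{m}\sum_{j=1}^m\Bigl(\sum_{l\in\mathcal{L}}a_{jl}\Bigr)^4 \;=\; \frac{|\mathcal{L}|^2}{m}\sum_{j=1}^m (\mathbf{a}_j^\top\mathbf{x}_{\mathcal{L}})^4,
\]
so it suffices to bound the right-hand side by a universal multiple of $|\mathcal{L}|^2$, uniformly in $\mathcal{L}$.

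\textbf{Step 1.} I would apply Lemma \ref{lemma:tech2} with $t=c_0\sqrt{\log n}$ for a sufficiently large universal constant $c_0>0$; since $k\le n$, the prefactor $\lceil\log_2\sqrt{k}\rceil$ in the failure probability of that lemma is at most $\log n$, so this choice produces an event of probability at least $1-c_p n^{-11}$ on which (\ref{eq:tech2_1}) holds for every unit vector $\mathbf{x}\in\R^k$. Specializing to $\mathbf{x}=\mathbf{x}_{\mathcal{L}}$, raising both sides to the fourth power, and using $(a+b+c)^4\le 27(a^4+b^4+c^4)$ yields, simultaneously for all $\mathcal{L}\subseteq [k]$,
\[
\frac{1}{m}\sum_{j=1}^m\Bigl(\sum_{l\in\mathcal{L}}a_{jl}\Bigr)^4 \;\le\; 27\,|\mathcal{L}|^2\biggl(3 + \frac{64\,|\mathcal{L}|^2\log^2(2k)}{m} + \frac{c_0^4\log^2 n}{m}\biggr).
\]
Using $|\mathcal{L}|\le k$ together with the two regimes in the hypothesis $m\ge c_s\max\{k^2\log^2 n,\,\log^5 n\}$, the last two summands inside the parentheses are each at most $O(1/c_s)$ (noting $\log^2 n/\log^5 n = \log^{-3}n \le 1$), so for $c_s$ large enough the bound becomes $\le c\,|\mathcal{L}|^2$ for some universal $c>0$, as required.

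\textbf{Convexity.} For each fixed $\mathcal{L}$, the map $\{\mathbf{a}_j\}_{j=1}^m\mapsto \frac{1}{m}\sum_{j=1}^m(\sum_{l\in\mathcal{L}}a_{jl})^4$ is a sum of compositions of a linear functional with the convex scalar function $u\mapsto u^4$, hence convex, and the sublevel set $\{\mathbf{a}:\frac{1}{m}\sum_{j=1}^m(\sum_{l\in\mathcal{L}}a_{jl})^4 \le c\,|\mathcal{L}|^2\}$ is therefore convex. $\mathcal{A}$ is the intersection of such sublevel sets over all $\mathcal{L}\subseteq[k]$, and so is convex.

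\textbf{Main obstacle.} There is essentially no obstacle: the single conceptual step is to notice that normalizing the indicator $\mathbf{1}_{\mathcal{L}}$ puts it on the unit sphere of $\R^k$, which lets Lemma \ref{lemma:tech2} play the role of a ``union bound'' over subsets without actually paying the $2^k$ cost. The remaining work is constant-chasing; the two lower bounds $m\ge c_s k^2\log^2 n$ and $m\ge c_s\log^5 n$ in the hypothesis are tailored to absorb, respectively, the $\|\mathbf{x}_{\mathcal{L}}\|_1^2\log(2k)$ and $t^4$ contributions, and any slack in $c_s$ translates directly into the universal constant $c$ in the definition of $\mathcal{A}$.
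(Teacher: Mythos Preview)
Your proof is correct and takes a genuinely different, much shorter route than the paper's. The paper proves Lemma~\ref{lemma:tech5} by induction on $|\mathcal{L}|$: the base case comes from Lemma~\ref{lemma:tech3}, and the induction step expands $(A_{jr}+\sum_{l\in\mathcal{L}}A_{jl})^4$ binomially into five terms, the most delicate of which, $\frac{4}{m}\sum_j A_{jr}(\sum_{l\in\mathcal{L}}A_{jl})^3$, is handled by conditioning on the column $\mathbf{A}_{\cdot r}$, bounding the Lipschitz constant on the inductive event $\mathcal{A}_\lambda$, constructing a Lipschitz extension, applying Gaussian concentration (Theorem~\ref{thm:ref1}), and then unraveling via union bounds over $\mathcal{L}$ and integration over $\mathbf{a}_{\cdot r}$. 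Your observation that $\mathbf{1}_{\mathcal{L}}/\sqrt{|\mathcal{L}|}$ is a unit vector in $\R^k$ with $\ell_1$-norm $\sqrt{|\mathcal{L}|}$ lets you read off the bound directly from the uniform-in-$\mathbf{x}$ conclusion (\ref{eq:tech2_1}) of Lemma~\ref{lemma:tech2}, bypassing the induction entirely; the $\|\mathbf{x}\|_1$-dependent term in Lemma~\ref{lemma:tech2} is exactly what absorbs the would-be union bound over $2^k$ subsets. The paper's approach is self-contained in the sense that it reproves the needed concentration from scratch for this specific family of test vectors, whereas your approach leverages the heavier machinery already packaged in Lemma~\ref{lemma:tech2}; since Lemma~\ref{lemma:tech2} is proved independently (via Slepian comparison and Theorem~\ref{thm:ref1}) and does not invoke Lemma~\ref{lemma:tech5}, there is no circularity. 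Your convexity argument via sublevel sets of convex functions is also valid and arguably cleaner than the paper's Minkowski-inequality computation.
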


\begin{lemma}\label{lemma:tech6}
Let $\x^\star\in\R^n$ be a $k$-sparse vector, and let $\S = \{i\in [n]: x^\star_i\neq 0\}$ be its support. Let $\{\mathbf{A}_j\}_{j=1}^m$ be a collection of i.i.d.\ $\gauss(0,\mathbf{I}_n)$ random variables and $\{\varepsilon_j\}_{j=1}^m$ a collection of independent centered sub-exponential random variables with maximum sub-exponential norm $\sigma = \max_{j}\|\varepsilon_j\|_{\psi_1}$. There exist universal constants $c_s, c_p>0$ such that, for any constant $c_1>0$, there is a $c>0$ such that if $m\ge c_s\max\{k^2\log n, \;\log^5n \}$, then, with probability at least $1 - c_pn^{-10}$, the following holds:
For any $i\in [n]$, $\mathcal{K}\subseteq \S\backslash \{i\}$ and $\epsilon \ge c_1/m$, let $N_{\epsilon}^{\mathcal{K}}$ be a smallest $\epsilon$-net of $\{\x\in\R^n:\x_{\mathcal{K}^c} = \mathbf{0}, \|\x\|_2 = 1\}$. Then, for any $\mathcal{L}\subseteq \mathcal{K}$ with $|\mathcal{L}|\ge \frac{1}{2}|\mathcal{K}|$, and $\x\in N_{\epsilon}^{\mathcal{K}}$,
\begin{align}
\Bigg|\sum_{l\in \mathcal{L}}\frac{1}{m}\sum_{j=1}^mA_{ji}A_{jl}(\mathbf{A}_{j}^\top\x)^2\Bigg| &\le c |\mathcal{L}|\frac{\log n}{\sqrt{m}}, \label{eq:tech6_1}, \\
\Bigg|\sum_{l\in \mathcal{L}}\frac{1}{m}\sum_{j=1}^mA_{ji}^2A_{jl}(\mathbf{A}_{j, -l}^\top\x_{-l})\Bigg| &\le c |\mathcal{L}|\sqrt{\frac{\log n}{m}} \label{eq:tech6_2}, \\
\Bigg|\sum_{l\in \mathcal{L}}\frac{1}{m}\sum_{j=1}^m\varepsilon_jA_{jl}(\mathbf{A}_{j, -l}^\top\x_{-l})\Bigg| &\le c\sigma |\mathcal{L}|\sqrt{\frac{\log n}{m}}, \label{eq:tech6_3} \\ 
\Bigg|\sum_{l\in \mathcal{L}}\frac{1}{m}\sum_{j=1}^mA_{ji}A_{jl}(\mathbf{A}_{j}^\top\x)(\mathbf{A}_{j, -i}^\top\x^\star_{-i})\Bigg| &\le c |\mathcal{L}|\frac{\log n}{\sqrt{m}} \label{eq:tech6_4}.
\end{align}
\end{lemma}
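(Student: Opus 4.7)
The plan is to establish, for each of the four inequalities \eqref{eq:tech6_1}--\eqref{eq:tech6_4}, sharp concentration for a fixed choice of parameters $(i,\mathcal{K},\mathcal{L},\x)$, and then take a union bound over all valid configurations. Before doing so, I would impose the high-probability events from Lemmas \ref{lemma:tech2}--\ref{lemma:tech5}, which provide uniform moment control on the sensing vectors across all index sets $\mathcal{L}\subseteq\mathcal{K}\subseteq\S$ and unit vectors $\x$ supported on $\mathcal{K}$. The union bound cost has logarithm at most $\log n + |\mathcal{K}|\log 4 + |\mathcal{K}|\log(3/\epsilon) \lesssim |\mathcal{K}|\log n$ (using that $m$ is polynomial in $n$ under the assumed sample size), decomposing as $n$ choices of $i$, $2^{|\mathcal{K}|}$ choices each of $\mathcal{K}$ and of $\mathcal{L}\subseteq\mathcal{K}$, and $(3/\epsilon)^{|\mathcal{K}|}\le(3m/c_1)^{|\mathcal{K}|}$ elements in $N_\epsilon^{\mathcal{K}}$.

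For \eqref{eq:tech6_1}, I would exploit that $i\notin\mathcal{K}$ implies $\{A_{ji}\}_j$ is independent of all remaining Gaussian coordinates appearing in each summand. Conditioning on the $\sigma$-algebra generated by $\{\A_{j,-i}\}_j$, the left-hand side becomes $\frac{1}{m}\sum_j B_j A_{ji}$ with $B_j = \bigl(\sum_{l\in\mathcal{L}}A_{jl}\bigr)(\A_j^\top\x)^2$, which is conditionally Gaussian with variance $\frac{1}{m^2}\sum_j B_j^2$. Cauchy--Schwarz bounds this by $\frac{1}{m^2}\bigl(\sum_j(\sum_{l\in\mathcal{L}}A_{jl})^4\bigr)^{1/2}\bigl(\sum_j(\A_j^\top\x)^8\bigr)^{1/2}$, which is at most $C|\mathcal{L}|/m$ uniformly in $(\mathcal{L},\x)$ via Lemmas \ref{lemma:tech5} and \ref{lemma:tech2}. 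Theorem \ref{thm:ref1} then yields conditional failure probability $2\exp(-c|\mathcal{L}|\log^2 n)$, which beats the union cost $O(|\mathcal{K}|\log n)$ since $|\mathcal{L}|\ge|\mathcal{K}|/2$ and $\log^2 n\gg\log n$. The argument for \eqref{eq:tech6_4} is analogous, conditioning likewise on $\{\A_{j,-i}\}_j$ and using that $i\notin\S\supseteq\mathcal{K}$.

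For \eqref{eq:tech6_2}, I would simplify via the identity $A_{jl}(\A_{j,-l}^\top\x_{-l}) = A_{jl}(\A_j^\top\x) - A_{jl}^2 x_l$ and sum in $l\in\mathcal{L}$ to obtain $V_j := \bigl(\sum_{l\in\mathcal{L}}A_{jl}\bigr)(\A_j^\top\x) - \sum_{l\in\mathcal{L}}A_{jl}^2 x_l$. The left-hand side then equals $\frac{1}{m}\sum_j A_{ji}^2 V_j$, a sub-exponential average with mean zero (as $A_{ji}$ is independent of $V_j$ and $\E[V_j]=0$). Bernstein's inequality (Theorem \ref{thm:ref2}), applied after a truncation to handle the polynomial tails, yields the claimed bound once the variance proxy is controlled via direct moment computations giving $\E[V_j^2]\lesssim|\mathcal{L}|$. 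For \eqref{eq:tech6_3}, conditioning on $\{\A_j\}_j$ reduces the sum to a weighted linear combination of independent centered sub-exponential variables $\{\varepsilon_j\}_j$, and the sub-exponential Bernstein inequality then yields the $\sigma\sqrt{\log n/m}$ scaling, with the conditional weights $\sum_j W_j^2$ (where $W_j$ coincides with the above $V_j$) bounded by $Cm|\mathcal{L}|$ with high probability via the same moment argument together with Lemma \ref{lemma:tech5}.

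The hardest part will be ensuring that the concentration exponent dominates the union bound cost uniformly over $N_\epsilon^{\mathcal{K}}$, which alone contributes $|\mathcal{K}|\log m$ to the union exponent. This is precisely where the sample size requirement $m\ge c_s\max\{k^2\log^2 n,\log^5 n\}$ is needed: the concentration exponent scales as $|\mathcal{L}|\log n$ (Bernstein bounds) or $|\mathcal{L}|\log^2 n$ (Gaussian tails), and only dominates $|\mathcal{K}|\log m$ once $m$ is polynomially bounded in $n$, which the assumed bound guarantees. A related subtlety is that the moment bounds from Lemmas \ref{lemma:tech2}--\ref{lemma:tech5} must hold uniformly in $\x$ on the unit sphere of $\R^{|\mathcal{K}|}$ and in $\mathcal{L}\subseteq\mathcal{K}$; this requires an additional $\epsilon$-net argument in the spirit of the one used in the proof of Lemma \ref{lemma:support2}, and these uniform moment bounds will be absorbed into a single preliminary high-probability event before the main concentration step.
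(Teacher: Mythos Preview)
Your approach for \eqref{eq:tech6_1} conditions on $\{\A_{j,-i}\}_j$ and exploits linearity in $(A_{ji})_j$; the paper does the opposite, conditioning on $(A_{ji})_j$ and applying Gaussian Lipschitz concentration (Theorem~\ref{thm:ref1}) in the remaining variables $\{A_{jr}\}_{r\in\mathcal{K}}$, after a Lipschitz extension off the high-probability set $\mathcal{A}_1\cap\mathcal{A}_2$ furnished by Lemmas~\ref{lemma:tech1} and~\ref{lemma:tech5}. This is not a cosmetic difference: it changes which moment bounds are needed, and your direction forces a bound that is not available at the assumed sample size.

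Concretely, your conditional variance is $\frac{1}{m^2}\sum_j B_j^2$ with $B_j=(\sum_{l\in\mathcal{L}}A_{jl})(\A_j^\top\x)^2$, and your Cauchy--Schwarz step requires a uniform bound $\sum_j(\A_j^\top\x)^8\lesssim m$. Lemma~\ref{lemma:tech2} only gives $(\sum_j(\A_j^\top\x)^8)^{1/8}\le (105m)^{1/8}+\sqrt{8\log(2k)}\,\|\x\|_1+t$, and since $\|\x\|_1$ can be as large as $\sqrt{|\mathcal{K}|}$, the second term contributes a factor $(|\mathcal{K}|\log k)^4$ to the eighth-moment sum. This is $\lesssim m$ only when $m\gtrsim k^4\log^4 k$, far stronger than the hypothesis $m\gtrsim k^2\log^2 n$. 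With the correct bound $\sum_j(\A_j^\top\x)^8\lesssim m+|\mathcal{K}|^4\log^4 k$, your conditional variance becomes $\frac{1}{m^2}\sum_j B_j^2\lesssim \frac{|\mathcal{L}|}{m}+\frac{|\mathcal{L}|\,|\mathcal{K}|^2\log^2 k}{m^{3/2}}$, and the resulting Gaussian tail exponent is no longer $|\mathcal{L}|\log^2 n$ but roughly $|\mathcal{L}|\log^2 n\cdot\sqrt{m}/(|\mathcal{K}|^2\log^2 k)$; under $m\asymp k^2\log^2 n$ and $|\mathcal{K}|\asymp k$ this is only $|\mathcal{L}|\log n/k$, which cannot absorb the $\epsilon$-net union cost $|\mathcal{K}|\log(1/\epsilon)\asymp|\mathcal{K}|\log n$. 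The same obstruction applies to \eqref{eq:tech6_4} under your ``analogous'' treatment. (Aside: your remark ``$i\notin\S$'' there is not correct---$i$ may lie in $\S$; what matters is $i\notin\mathcal{K}$.)

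The paper's conditioning avoids this: after fixing $(a_{ji})_j$ with $\max_j|a_{ji}|\le 5\sqrt{\log n}$, the gradient of $h_{i,\mathcal{K},\mathcal{L},\x}$ in the $\mathcal{K}$-variables satisfies $\|\nabla h\|_2^2\lesssim \frac{\log n}{m}\bigl(\frac{1}{m}\sum_j(\sum_l a_{jl})^4\bigr)^{1/2}\bigl(\frac{1}{m}\sum_j(\A_j^\top\x)^4\bigr)^{1/2}+\frac{|\mathcal{L}|\log n}{m}\cdot\frac{1}{m}\sum_j(\A_j^\top\x)^4$, so only fourth moments of $\A_j^\top\x$ enter, and these \emph{are} uniformly $O(m)$ by Lemma~\ref{lemma:tech1} once $m\gtrsim k^2$. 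This yields Lipschitz constant $\sqrt{c_4|\mathcal{L}|\log n/m}$ and hence concentration exponent $c'|\mathcal{L}|\log n$, which (together with $m\le n^{3/2}$) beats the union cost. Your Bernstein route for \eqref{eq:tech6_2}--\eqref{eq:tech6_3} runs into a related difficulty: after truncating each coordinate at $O(\sqrt{\log n})$, the crude bound $|A_{ji}^2V_j|\lesssim|\mathcal{L}|^{3/2}\log^2 n$ makes the sub-exponential term in Bernstein dominate for large $|\mathcal{L}|$, again spoiling the union bound; the paper instead proves \eqref{eq:tech6_2}--\eqref{eq:tech6_4} by the same Lipschitz-extension argument as \eqref{eq:tech6_1}.
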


\begin{lemma}\label{lemma:tech7}
Let $\x^\star\in\R^n$ be a $k$-sparse vector, and let $\S = \{i\in [n]: x^\star_i\neq 0\}$ be its support. Let $\{\mathbf{A}_j\}_{j=1}^m$  be a collection of i.i.d.\ $\gauss(0,\mathbf{I}_n)$ random vectors and $\{\varepsilon_j\}_{j=1}^m$ a collection of independent centered sub-exponential random variables with maximum sub-exponential norm $\sigma = \max_{j}\|\varepsilon_j\|_{\psi_1}$. Let $\mathcal{X} = \{\x\in \R^n: \x_{\S^c} = \mathbf{0}, \|\x\|_2\le 1\}$. There exist universal constants $c, c_s, c_p >0$ such that if $m\ge c_s\max\{k^2\log^2n, \log^5n\}$, then the following holds with probability at least $1 - c_pn^{10}$:
For all $i\in [n]$ and $\x\in\mathcal{X}$,
\begin{align}
\Bigg|\frac{1}{m}\sum_{j=1}^m A_{ji}(\A_{j,-i}^\top\x_{-i})^3\Bigg| &\le c\|\x\|_1 \frac{\log n}{\sqrt{m}}, \label{eq:tech7_1} \\
\Bigg|\sum_{l\neq i} x_l \sum_{r\neq l,i}x_r \frac{1}{m}\sum_{j=1}^mA_{ji}^2A_{jl}A_{jr}\Bigg| &\le c\|\x\|_1 \sqrt{\frac{\log n}{m}},\label{eq:tech7_2} \\
\Bigg|\sum_{l\neq i} x_l \sum_{r\neq l,i}x_r \frac{1}{m}\sum_{j=1}^m\varepsilon_jA_{jl}A_{jr}\Bigg| &\le c\sigma\|\x\|_1 \sqrt{\frac{\log n}{m}}, \label{eq:tech7_3} \\
\Bigg|\frac{1}{m}\sum_{j=1}^mA_{ji}(\mathbf{A}_{j,-i}^\top\x_{-i})^2(\mathbf{A}_{j,-i}^\top\x^\star_{-i})\Bigg| &\le c\|\x\|_1 \frac{\log n}{\sqrt{m}},\label{eq:tech7_4}. 
\end{align}
\end{lemma}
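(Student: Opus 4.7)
The four inequalities share a common structure: in each, the quantity on the left is a polynomial in the Gaussian entries $\{A_{ji}\}$ (plus noise $\{\varepsilon_j\}$ in (\ref{eq:tech7_3})) whose expectation vanishes by either odd symmetry in $A_{ji}$ (for (\ref{eq:tech7_1}), (\ref{eq:tech7_4})), off-diagonal cancellation (for (\ref{eq:tech7_2})), or independence from mean-zero $\varepsilon_j$ (for (\ref{eq:tech7_3})). Since $\x\in\mathcal{X}$ has $\x_{\S^c}=\mathbf{0}$, each LHS only depends on the restriction $\x_\S\in\R^k$. I will first establish the bounds for unit-norm $\x$ on $\mathcal{X}_1=\{\x\in\R^n:\x_{\S^c}=\mathbf{0},\|\x\|_2=1\}$ and then rescale, noting that each LHS is homogeneous (of degrees $3$, $2$, $2$, $3$ respectively in $\x$) while the RHS degrades at worst linearly in $\|\x\|_2$ on $\mathcal{X}$.

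For each inequality, the recipe follows the template already exhibited in Step 2, part (b) of the proof of Lemma \ref{lemma:support2}:
\begin{enumerate}
\item Fix $i\in[n]$ and $\x\in\mathcal{X}_1$. For (\ref{eq:tech7_1}), (\ref{eq:tech7_2}), (\ref{eq:tech7_4}), view the LHS as a function $g_{i,\x}(\A)$ on $\R^{m\times n}$, restrict to the convex ``good event'' set $\mathcal{A}$ from Lemma \ref{lemma:tech2} (with $t=5\sqrt{\log n}$), and bound its Lipschitz constant on $\mathcal{A}$ via explicit gradient computations. For the sums appearing, the Lipschitz constant will be of order $\|\x\|_1\sqrt{\log k/m}$ after using the moment bounds (\ref{eq:tech2_1})--(\ref{eq:tech2_3}); e.g., for (\ref{eq:tech7_1}) we have $\partial_{a_{jl}}g_{i,\x}=\frac{3}{m}A_{ji}(\A_{j,-i}^\top\x_{-i})^2$ (for $l\neq i$), so $\|\nabla g_{i,\x}\|_2^2\lesssim m^{-1}\big(\frac{1}{m}\sum_j(\A_{j,-i}^\top\x_{-i})^4\big)$. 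Extend $g_{i,\x}$ to a globally Lipschitz function as in the proof of Lemma \ref{lemma:support2} and apply Theorem \ref{thm:ref1}, getting a deviation bound with subgaussian tail $\exp(-c_1k\log n)$ at level $c\|\x\|_1\log n/\sqrt{m}$ (or $c\|\x\|_1\sqrt{\log n/m}$ for (\ref{eq:tech7_2})). For (\ref{eq:tech7_3}), instead condition on $\{\A_j\}$ and apply Theorem \ref{thm:ref2} to the sub-exponential sum $\frac{1}{m}\sum_j\varepsilon_j Z_j$ with $Z_j=(\A_{j,-i}^\top\x_{-i})^2-\sum_{l\neq i}x_l^2 A_{jl}^2$ (which has conditional mean zero); on the Lemma \ref{lemma:tech2} good event $|Z_j|\lesssim\log k$, while $\sum_j Z_j^2\lesssim m$.
\item Union bound over a minimal $\epsilon$-net $N_\epsilon$ of $\mathcal{X}_1$ with $\epsilon\asymp n^{-3}$, of cardinality at most $(3/\epsilon)^k\le e^{ck\log n}$, and over $i\in[n]$. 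The hypothesis $m\ge c_s\max\{k^2\log^2 n,\log^5 n\}$ is chosen precisely so that the per-point exponent $c_1 k\log n$ dominates the net cardinality plus a factor of $n$, leaving total failure probability $\le c_p n^{-10}$.
\item Transfer from $N_\epsilon$ to $\mathcal{X}_1$ by Lipschitz continuity on the good event: for $\x\in\mathcal{X}_1$ and $\x'\in N_\epsilon$ with $\|\x-\x'\|_2\le\epsilon$, the differences $|g_{i,\x}(\A)-g_{i,\x'}(\A)|$ and $|\E g_{i,\x}-\E g_{i,\x'}|$ are bounded using factorizations such as $a^3-b^3=(a^2+ab+b^2)(a-b)$ together with the operator-norm control $\max_j\|\A_j\|_2\le\sqrt{k}+5\sqrt{\log n}$ from Lemma \ref{lemma:tech1}; the polynomial-in-$n$ factor is absorbed by the choice $\epsilon\asymp n^{-3}$.
\item For $\x\in\mathcal{X}$ with $\|\x\|_2<1$, apply the unit-sphere bound to $\x/\|\x\|_2$ and use that each LHS is at least quadratic in $\|\x\|_2$ to retain the stated bound.
\end{enumerate}

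The main obstacle is Step 1: computing sharp Lipschitz constants for the non-globally-Lipschitz polynomials $g_{i,\x}$ on the convex restricted domain $\mathcal{A}$, and carrying out the Kirszbraun-type extension so that Theorem \ref{thm:ref1} applies. The gradient computations involve products like $A_{ji}\cdot(\A_j^\top\x)^2$ or $A_{ji}^2A_{jl}$, whose $\ell_2$-norms across $j$ and the index being differentiated must be bounded by higher-moment sums of $\A_j$; those sums are controlled on $\mathcal{A}$ by Lemma \ref{lemma:tech2}, and the correction $\E[g_{i,\x}\mathbf{1}_{\mathcal{A}^c}]$ is bounded by Cauchy--Schwarz using the crude variance estimate together with $\P[\mathcal{A}^c]\le n^{-12}$. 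Once these per-term Lipschitz estimates are in place, the net argument and union bound in Steps 2--4 are routine and closely mirror the concentration step for $h_{i,\mathbf{w}}$ in the proof of Lemma \ref{lemma:support2}.
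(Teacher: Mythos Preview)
Your Lipschitz--extension--plus--$\epsilon$-net template does not deliver the stated bounds. Take (\ref{eq:tech7_1}): even after conditioning on $\A_{\cdot i}$ and restricting to the good event of Lemma~\ref{lemma:tech2}, the squared Lipschitz constant of $g_{i,\x}(\A)=\frac{1}{m}\sum_j A_{ji}(\A_{j,-i}^\top\x_{-i})^3$ is bounded below by $\frac{1}{m^2}\sum_j(\A_{j,-i}^\top\x_{-i})^6\gtrsim 1/m$ (this sixth-moment sum concentrates around $15$ for $\|\x\|_2=1$, independently of $\|\x\|_1$). With target level $t=c\|\x\|_1\log n/\sqrt{m}$, the Gaussian tail exponent $t^2/(2L^2)$ is therefore at most $O(\|\x\|_1^2\log^2 n)$, which for $\|\x\|_1=O(1)$ is $O(\log^2 n)$. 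But your net on $\mathcal{X}_1$ has cardinality $\exp(\Theta(k\log n))$, so the union bound fails whenever $k\gg\log n$. The same obstruction hits (\ref{eq:tech7_2})--(\ref{eq:tech7_4}). The analogy with the proof of Lemma~\ref{lemma:support2} is misleading: there the target level is $\gamma\|\mathbf{w}\|_1/\sqrt{k}$, and with $m\gtrsim k^2\log^2 n$ the exponent $t^2/L^2\asymp m/(k\log k)\gtrsim k\log n$ does beat the net; here the target $\|\x\|_1\log n/\sqrt{m}$ is smaller by a factor $\sqrt{k}$, and the arithmetic no longer closes.

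The paper circumvents this by a completely different route. For (\ref{eq:tech7_1}) it considers the constrained problem $\max\{g(\x):\|\x\|_2\le 1,\,\|\x\|_1\le b\}$, verifies the KKT conditions, and bounds the Lagrange multiplier $\mu_2^\star$ attached to the $\ell_1$ constraint via the shadow-price identity $\frac{\partial}{\partial b}v(b)=\mu_2^\star$. The key point is that $\mu_2^\star\le 3\min_{l:x'_l\neq 0}|h_l(\x')|$ where $h_l(\x)=\frac{1}{m}\sum_j A_{ji}A_{jl}(\A_{j,-i}^\top\x_{-i})^2$, and this minimum is controlled through Lemma~\ref{lemma:tech6}, which bounds the \emph{sums} $\sum_{l\in\mathcal{L}}h_l(\x)$ for subsets $\mathcal{L}$ with $|\mathcal{L}|\ge|\mathcal{K}|/2$ at level $c|\mathcal{L}|\log n/\sqrt{m}$; a pigeonhole then gives $\min_l|h_l|\le c\log n/\sqrt{m}$. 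For these sums the Lipschitz constant scales like $\sqrt{|\mathcal{L}|\log n/m}$ (using Lemma~\ref{lemma:tech5}, itself proved by induction on $|\mathcal{L}|$), so the exponent $t^2/L^2\asymp|\mathcal{L}|\log n$ now does beat the union bound over nets, subsets $\mathcal{L}$, and supports $\mathcal{K}$. Integrating $\mu_2^\star$ over $b$ yields the $\|\x\|_1$-dependent bound. The proofs of (\ref{eq:tech7_2})--(\ref{eq:tech7_4}) follow the same KKT template using the remaining parts of Lemma~\ref{lemma:tech6}. In short, the missing idea in your plan is precisely this optimization/shadow-price reduction to the subset-sum concentration of Lemmas~\ref{lemma:tech5}--\ref{lemma:tech6}.
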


In the following, we prove Lemmas \ref{lemma:tech2}--\ref{lemma:tech7}.
\begin{proof}[Proof of Lemma \ref{lemma:tech2}]
The proof follows that of Lemma A.5 of \cite{CLM16} closely. Since all three inequalities can be shown the same way, we only present the proof for (\ref{eq:tech2_3}). 
For any fixed $\lambda \ge 1$, define
\begin{equation*}
\|\mathbf{a}\|_{2\rightarrow 8, \lambda} = \max \{\|\mathbf{a}\x\|_8 : \|\x\|_2 = 1, \|\x\|_1\le \lambda\}.
\end{equation*}
We will show that $\|\A\|_{2\rightarrow 8, \lambda}\le (105m)^{\frac{1}{8}} + \sqrt{2\log (2k)}\lambda + t$ with probability $1-2\exp(-t^2/2)$. The result then follows by taking the union bound over $\lambda = 2^1,2^2,\dots,2^{\lceil \log_2\sqrt{k} \rceil}$ for each of the three inequalities (\ref{eq:tech2_1})--(\ref{eq:tech2_3}), since we have $1\le \|\x\|_1\le \sqrt{k}$ and the union bound is chosen such that there always exists a $\lambda$ with $\|\x\|_1 \le \lambda \le 2\|\x\|_1$.

Define $X_{\mathbf{u},\mathbf{v}} = \langle \A \mathbf{u}, \mathbf{v} \rangle$ on the set
\begin{equation*}
T_{\lambda} = \{(\mathbf{u},\mathbf{v}) : \mathbf{u}\in \R^k, \|\mathbf{u}\|_2= 1, \|\mathbf{u}\|_1\le \lambda, \mathbf{v}\in \R^m, \|\mathbf{v}\|_{8/7} \le 1\}.
\end{equation*}
Then, by H\"{o}lder's inequality, we have $\|\A\|_{2\rightarrow 8, \lambda} = \max_{(\mathbf{u},\mathbf{v})\in T_{\lambda}} X_{\mathbf{u}, \mathbf{v}}$.

Define $Y_{\mathbf{u}, \mathbf{v}} = \langle \mathbf{G}, \mathbf{u}\rangle + \langle \mathbf{H}, \mathbf{v} \rangle$, where $\mathbf{G}\in \R^k$ and $\mathbf{H}\in\R^m$ are independent standard normal random vectors.

We have, for any $(\mathbf{u},\mathbf{v}),(\mathbf{u'},\mathbf{v'})\in T_{\lambda}$, 
\begin{equation*}
\E[|X_{\mathbf{u},\mathbf{v}} - X_{\mathbf{u'},\mathbf{v'}}|^2] = \|\mathbf{v}\|_2^2 + \|\mathbf{v'}\|_2^2 - 2 \langle \mathbf{u}, \mathbf{u}'\rangle \langle \mathbf{v}, \mathbf{v}'\rangle,
\end{equation*}
and 
\begin{equation*}
\E[|Y_{\mathbf{u},\mathbf{v}} - Y_{\mathbf{u'},\mathbf{v'}}|^2] = 2 + \|\mathbf{v}\|_2^2 + \|\mathbf{v'}\|_2^2 - 2 \langle \mathbf{u}, \mathbf{u}'\rangle - 2\langle \mathbf{v}, \mathbf{v}'\rangle,
\end{equation*}
where we used $\|\mathbf{u}\|_2= \|\mathbf{u}'\|_2 = 1$. Therefore, we have
\begin{equation*}
\E[|Y_{\mathbf{u},\mathbf{v}} - Y_{\mathbf{u'},\mathbf{v'}}|^2] - \E[|X_{\mathbf{u},\mathbf{v}} - X_{\mathbf{u'},\mathbf{v'}}|^2] = 2(1 - \langle \mathbf{u}, \mathbf{u}'\rangle ) (1 - \langle \mathbf{v}, \mathbf{v}'\rangle ) \ge 0,
\end{equation*}
where we used $\|\mathbf{v}\|_2 \le \|\mathbf{v}\|_{8/7} \le 1$ and $\|\mathbf{v'}\|_2 \le \|\mathbf{v'}\|_{8/7} \le 1$. By Proposition 33 of \cite{V12}, this implies
\begin{equation*}
\E\biggl[\max_{(\mathbf{u},\mathbf{v})\in T_{\lambda}} X_{\mathbf{u}, \mathbf{v}}\biggr] \le \E\biggl[\max_{(\mathbf{u},\mathbf{v})\in T_{\lambda}} Y_{\mathbf{u}, \mathbf{v}}\biggr],
\end{equation*}
and hence
\begin{align*}
\E[\|\mathbf{A}\|_{2\rightarrow 8, \lambda}] &\le \E\biggl[\max_{(\mathbf{u},\mathbf{v})\in T_{\lambda}} Y_{\mathbf{u}, \mathbf{v}}\biggr] \\
&\le \E\biggl[\max_{(\mathbf{u},\mathbf{v})\in T_{\lambda}} \|\mathbf{G}\|_{\infty}\|\mathbf{u}\|_1 + \|\mathbf{H}\|_8\|\mathbf{v}\|_{8/7}\biggr] \\
&\le \sqrt{2\log (2k)}\lambda + (105m)^{\frac{1}{8}},
\end{align*}
where we used H\"{o}lder's inequality in the second line, and for the last inequality the bound
\begin{align*}
\exp(t\E[\|\mathbf{G}\|_{\infty}]) &\le \E[\exp(t\|\mathbf{G}\|_{\infty})]\le \sum_{i=1}^k \E[\exp(t|G_i|)]\le 2k\exp(t^2/2)
\end{align*}
with $t = \sqrt{2\log (2k)}$.

Finally, $\|\cdot \|_{2\rightarrow 8,\lambda}$ is a $1$-Lipschitz function: let $\mathbf{a}, \mathbf{b} \in \R^{m\times k}$ and, without loss of generality, $\|\mathbf{a}\|_{2\rightarrow 8, \lambda}\ge\|\mathbf{b}\|_{2\rightarrow 8, \lambda}$. Then, 
\begin{align*}
\|\mathbf{a}\|_{2\rightarrow 8, \lambda}-\|\mathbf{b}\|_{2\rightarrow 8, \lambda} & = \max_{\|\x\|_2=1, \|\x\|_1\le \lambda} \|\mathbf{a}\x\|_8 - \max_{\|\mathbf{y}\|_2=1, \|\mathbf{y}\|_1\le \lambda} \|\mathbf{b}\mathbf{y}\|_8 \\
&\le \max_{\|\x\|_2=1, \|\x\|_1\le \lambda} \|\mathbf{a}\x\|_8 - \|\mathbf{b}\x\|_8 \\
&\le \max_{\|\x\|_2=1, \|\x\|_1\le \lambda} \|(\mathbf{a} - \mathbf{b})\x\|_8 \\
&\le \max_{\|\x\|_2=1, \|\x\|_1\le \lambda} \|(\mathbf{a} - \mathbf{b})\x\|_2 \\
&\le \|\mathbf{a}-\mathbf{b}\|_F,
\end{align*}
where we write $\|\cdot\|_F$ for the Frobenius norm and used the fact that it is an upper bound to the $\ell_2$-operator norm. Hence, an application of Theorem \ref{thm:ref1} yields
\begin{equation*}
\P\Bigl[\|\A\|_{2\rightarrow 8, \lambda} < \sqrt{2\log (2k)}\lambda + (105m)^{\frac{1}{8}} + t \Bigr] \ge 1 - \exp(-t^2/2).
\end{equation*}
Taking the union bound over $\lambda = 2^1,\dots,2^{\lceil \log_2\sqrt{k} \rceil}$ completes the proof that (\ref{eq:tech2_3}) holds with probability $1 - \lceil \log_2 \sqrt{k}\rceil\exp(-t^2/2)$, since the union bound is chosen such that for any $\x\in\R^k$ with $\|\x\|_2=1$ there is a $\lambda$ satisfying $\|\x\|_1\le \lambda \le 2\|\x\|_1$. The inequalities (\ref{eq:tech2_1}) and (\ref{eq:tech2_2}) can be treated the same way, and another union bound gives the desired result.

Finally, convexity of $\mathcal{A}$ follows from an application of the Minkowski inequality. Let $\mathbf{a}, \mathbf{b} \in \R^{m\times k}$ satisfy (\ref{eq:tech2_3}), and $\alpha \in (0,1)$. Then, for any $\x\in\R$ with $\|\x\|_2=1$, 
\begin{align*}
\Biggl(\sum_{j=1}^m \big(\alpha \mathbf{a}^\top\x + (1-\alpha)\mathbf{b}^\top\x\big)^8\Biggr)^{\frac{1}{8}} &\le \alpha \Biggl(\sum_{j=1}^m (\mathbf{a}^\top\x)^8\Biggr)^{\frac{1}{8}} + (1-\alpha) \Biggl(\sum_{j=1}^m (\mathbf{b}^\top\x)^8\Biggr)^{\frac{1}{8}}\\
&\le (105m)^\frac{1}{8} + \sqrt{8\log (2k)} \|\x\|_1 + t.
\end{align*}
Since (\ref{eq:tech2_1}) and (\ref{eq:tech2_2}) can be treated the same way, this means that $\alpha\mathbf{a} + (1-\alpha)\mathbf{b}\in \mathcal{A}$.
\end{proof}

\begin{proof}[Proof of Lemma \ref{lemma:tech3}]
The proof of Lemma \ref{lemma:tech3} relies on Theorem \ref{thm:ref2} and the following truncation, which allows us to consider bounded random variables. We begin by showing the inequality (\ref{eq:tech3_1}).

Let $i,l,s\in [n]$.
Writing $B_j = \{\max\{|A_{ji}|, |A_{jl}|, |A_{js}|\}\le \sqrt{64\log n}\}$, we split the term $A_{ji}^2A_{jl}A_{js} = A_{ji}^2A_{jl}A_{js} \Eins(B_j) + A_{ji}^2A_{jl}A_{js}  (1 - \Eins(B_j)) = Y_j + Z_j$, where by $\Eins(\cdot)$ we denote the indicator function. We will show that each of the two terms concentrates around its mean. 

Since $\frac{1}{m}|Y_j|\le \frac{1}{m}64^2\log^2n$ is bounded, we can apply Theorem \ref{thm:ref2}. We can bound
\begin{equation*}
\sqrt{\sum_{j=1}^m \frac{1}{m^2}\E[Y_j^2]} = \sqrt{\sum_{j=1}^m\frac{1}{m^2}\E\Bigl[A_{ji}^4A_{jl}^2A_{js}^2\Eins(B_j)\Bigr]} \le \sqrt{\frac{105}{m}}.
\end{equation*}
Then, since $m\ge c_s\log^5 n$, we have, for $c>0$ sufficiently large,
\begin{equation*}
\P\Biggl[\Bigg|\frac{1}{m}\sum_{j=1}^mY_j - \E[Y_j]\Bigg| > \frac{c}{2}\sqrt{\frac{\log n}{m}}\Biggr] \le 2\exp\Biggl(- \frac{c^2\log n/(4m)}{2(\frac{105}{m} + 64^2c\frac{\log^{5/2}n}{6m^{3/2}})}\Biggr) \le 2n^{-16}.
\end{equation*}
For the second term $Z_j$, we can use the Chebyshev inequality to bound
\begin{align*}
\operatorname{Var}\Biggl(\frac{1}{m}\sum_{j=1}^mZ_j\Biggr) &\le \frac{1}{m}\E\bigl[A_{1i}^4A_{1l}^2A_{1s}^2\Eins(B_j^c)\bigr] \nonumber\\
&\le \frac{1}{m}\sqrt{\E\bigl[A_{1i}^{8}A_{1l}^4A_{1s}^4\bigr] \P\Bigl[\max\{|A_{1i}|, |A_{1l}|, |A_{1s}|\}>\sqrt{64\log n}\Bigr]} \nonumber\\
&\le \frac{c'}{mn^{16}},
\end{align*}
for an absolute constant $c'>0$, and hence, by the Chebyshev inequality,
\begin{equation*}
\P\Biggl[\Bigg|\frac{1}{m}\sum_{j=1}^mZ_j - \E[Z_j]\Bigg| > \frac{c}{2}\sqrt{\frac{\log n}{m}}\Biggr] \le \frac{\frac{c'}{mn^{16}}}{\frac{c^2\log n}{4m}} = \frac{4c'}{c^2\log n} n^{-16} \le n^{-16},
\end{equation*}
for $c \ge \sqrt{\frac{4c'}{\log n}}$.

This completes the proof that 
\begin{equation*}
\P\Biggl[\Bigg|\frac{1}{m}\sum_{j=1}^mA_{ji}^2A_{jl}A_{js} - \E\bigl[A_{1i}^2A_{1l}A_{1s}\bigr]\Bigg| > c\sqrt{\frac{\log n}{m}}\Biggr] \le 3n^{-16}.
\end{equation*}
Taking the union bound over all $i,l,s\in [n]$ shows that (\ref{eq:tech3_1}) holds with probability at least $1-3n^{-13}$.

The inequalities (\ref{eq:tech3_2})--(\ref{eq:tech3_6}) can be shown following the same steps. To show (\ref{eq:tech3_4}), we need to control higher order terms, which is the reason for the additional logarithmic factor. The bounds (\ref{eq:tech3_5}) and (\ref{eq:tech3_6}) can be shown the same way, as, for sub-exponential random variables, we have the standard tail bound $\P[|\varepsilon|> t]\le \exp(1-\tilde{c}t/\sigma)$ for all $t\ge 0$, where $\tilde{c}>0$ is an absolute constant, and the bound on the second moment $\E[\varepsilon^2] \le 4\sigma^2$. As the proof of each bound follows exactly the same steps, we omit the details to avoid repetition.
\end{proof}

\begin{proof}[Proof of Lemma \ref{lemma:tech4}]
Using the fact that $\A_j^\top\x^\star \sim \gauss(0, \|\x^\star\|_2^2)$ are independent random variables, this lemma can be shown following the same steps as in the proof of Lemma \ref{lemma:tech3}. The convexity of $\mathcal{A}$ follows, as in Lemma \ref{lemma:tech2}, from the Minkowski inequality. 
\end{proof}

\begin{proof}[Proof of Lemma \ref{lemma:tech5}]
We prove Lemma \ref{lemma:tech5} via induction over the size $|\mathcal{L}|$. For any $1\le \lambda\le k$, define
\begin{equation*}
\mathcal{A}_\lambda = \Biggl\{\{\mathbf{a}_j\}_{j=1}^m\in \R^{m\times k}: \frac{1}{m}\sum_{j=1}^m\Biggl(\sum_{l\in\mathcal{L}}a_{jl}\Biggr)^4\le c|\mathcal{L}|^2 \text{ for all } \mathcal{L}\subseteq[k] \text{ with } |\mathcal{L}| = \lambda \Biggr\}.
\end{equation*}
We will show that $\P[\{\mathbf{A}_j\}_{j=1}^m\in\mathcal{A}_\lambda]\ge 1 - c_1\lambda n^{-13}$ for some constant $c_1>0$, from which the result $\P[\{\mathbf{A}_j\}_{j=1}^m\in\mathcal{A}] \ge 1-c_pn^{-11}$ follows by taking the union bound over all possible sizes $\lambda$.
As in the proof of Lemma \ref{lemma:tech2}, convexity of $\mathcal{A}_\lambda$ follows from the Minkowski inequality. As the intersection of convex sets, $\mathcal{A}$ is also convex.

The base case $\lambda = 1$ follows from the bound (\ref{eq:tech3_2}) of Lemma \ref{lemma:tech3}. For the induction step, assume that we have already shown $\P[\{\mathbf{A}_j\}_{j=1}^m\in\mathcal{A}_\lambda]\ge 1 - c_1\lambda n^{-13}$ for some $1\le\lambda < k$.
Let $r\in [k]$ and $\mathcal{L}\subseteq [k]\backslash\{r\}$ be a subset of coordinates with $|\mathcal{L}|=\lambda$. We have
\begin{align}\label{eq:indstep}
\frac{1}{m}\sum_{j=1}^m\Biggl(A_{jr} + \sum_{l\in \mathcal{L}} A_{jl}\Biggr)^4 &= \frac{1}{m}\sum_{j=1}^mA_{jr}^4 + \frac{4}{m}\sum_{j=1}^m\Biggl(\sum_{l\in \mathcal{L}} A_{jl}\Biggr) A_{jr}^3 + \frac{6}{m}\sum_{j=1}^m\Biggl(\sum_{l\in \mathcal{L}} A_{jl}\Biggr)^2 A_{jr}^2 
\nonumber\\*
&\quad + \frac{4}{m}\sum_{j=1}^m\Biggl(\sum_{l\in \mathcal{L}} A_{jl}\Biggr)^3 A_{jr} + \frac{1}{m}\sum_{j=1}^m\Biggl(\sum_{l\in \mathcal{L}} A_{jl}\Biggr)^4 
\end{align}
We need to show that, with probability $1-c_1(\lambda+1)n^{-13}$, the sum (\ref{eq:indstep}) is bounded by $c(|\mathcal{L}|+1)^2$ for all $r\in [k]$ and $\mathcal{L}\subseteq [k]\backslash\{r\}$ with $|\mathcal{L}|=\lambda$.

By the induction hypothesis, we can bound the last term with probability $1-c_1\lambda n^{-13}$,
\begin{equation*}
\frac{1}{m}\sum_{j=1}^m\Biggl(\sum_{l\in \mathcal{L}} A_{jl}\Biggr)^4 \le c|\mathcal{L}|^2.
\end{equation*}
The first, second and third term in (\ref{eq:indstep}) can be bounded using Lemma \ref{lemma:tech3}. The following holds with probability at least $1-c_2n^{-13}$, where $c_2$ is the universal constant from Lemma \ref{lemma:tech3}. Using the assumption $m\ge c_sk^2\log^2 n$, we can bound the first term by
\begin{equation*}
\frac{1}{m}\sum_{j=1}^mA_{jr}^4 \le 4,
\end{equation*}
the second term by
\begin{equation*}
\frac{4}{m}\sum_{j=1}^m\Biggl(\sum_{l\in \mathcal{L}} A_{jl}\Biggr) A_{jr}^3 = 4\sum_{l\in\mathcal{L}} \frac{1}{m} \sum_{j=1}^mA_{jl}A_{jr}^3 \le 4 \biggl( 4 + \frac{|\mathcal{L}| - 1}{k} \biggr) \le 20,
\end{equation*}
and the third term in (\ref{eq:indstep}) can be bounded, using the Cauchy-Schwarz inequality, by
\begin{align*}
\frac{6}{m}\sum_{j=1}^m\Biggl(\sum_{l\in \mathcal{L}} A_{jl}\Biggr)^2 A_{jr}^2 &\le 6\sqrt{\frac{1}{m}\sum_{j=1}^m\Biggl(\sum_{l\in \mathcal{L}} A_{jl}\Biggr)^4}\sqrt{\frac{1}{m}\sum_{j=1}^mA_{jr}^4} \\
&\le 6 \sqrt{c|\mathcal{L}|^2} \sqrt{4} \\
&\le 0.4 c |\mathcal{L}|,
\end{align*}
for $c \ge 900$, where we used the induction hypothesis in the second line. 

To bound the fourth term in (\ref{eq:indstep}), fix any $r\in [k]$ and subset $\mathcal{L}\subseteq [k]\backslash\{r\}$ with cardinality $|\mathcal{L}| = \lambda$, and consider the function
\begin{equation*}
h_{r, \mathcal{L}}(\mathbf{a}) = \frac{4}{m}\sum_{j=1}^ma_{jr}\Biggl(\sum_{l\in\mathcal{L}}a_{jl}\Biggr)^3.
\end{equation*}
Then, we need to bound the probability of the event
\begin{align*}
B_r = \{|h_{r, \mathcal{L}}(\A)| \le 1.6c|\mathcal{L}| \text{ for all } \mathcal{L}\subseteq [k]\backslash \{r\} \text{ with } |\mathcal{L}|=\lambda\}
\end{align*}
 for all $r\in[k]$. To this end, we condition on $\A_{\cdot r} = \mathbf{a}_{\cdot r}$ using the formula  
\begin{equation*}
\P[B] = \int \P[B|\A_{\cdot r} = \mathbf{a}_{\cdot r}] \mu(\mathbf{a}_{\cdot r})d\mathbf{a}_{\cdot r},
\end{equation*}
which holds for any event $B$, where we write $\mu$ for the standard normal density. 

To show that the conditional probability $\P[B_r|\A_{\cdot r} = \mathbf{a}_{\cdot r}]$ is close to one, we begin by showing that, for any fixed $\mathcal{L}$, $h_{r, \mathcal{L}}(\mathbf{A})|_{\mathbf{A}_{\cdot r} = \mathbf{a}_{\cdot r}}$ (making explicit the fact that we are conditioning on $\mathbf{A}_{\cdot r} = \mathbf{a}_{\cdot r}$) concentrates around its expectation $\E[h_{r, \mathcal{L}}(\A)|\mathbf{A}_{\cdot r} = \mathbf{a}_{\cdot r}] = 0$. 
For the sake of brevity, we will omit explicitly writing the condition $\mathbf{A}_{\cdot r} = \mathbf{a}_{\cdot r}$ in what follows.

\textbf{Step 1: Bound the fourth term in (\ref{eq:indstep}) conditioned on $r$, $\A_{\cdot r}$ and $\mathcal{L}$}\\
Fix an $r\in[k]$, a vector $\mathbf{a}_{\cdot r}$ satisfying $\max_j |a_{jr}|\le 6\sqrt{\log n}$ and a subset $\mathcal{L}\subseteq [k]\backslash \{r\}$ with cardinality $|\mathcal{L}| = \lambda$. 
The idea is to apply concentration of Lipschitz functions of Gaussian random variables to show that the fourth term in (\ref{eq:indstep}), $h_{r, \mathcal{L}}(\A)$, is close to its expectation $\E[h_{r, \mathcal{L}}(\A)] = 0$. However, as $h_{r, \mathcal{L}}$ is not globally Lipschitz continuous, we cannot directly apply Theorem \ref{thm:ref1}. Similar to the proof of Theorem 3 in \cite{F18}, we will first restrict $h_{r, \mathcal{L}}$ to a high probability event on which $h_{r, \mathcal{L}}$ is Lipschitz continuous. Then, we extend this restricted function to a function $\tilde{h}_{r, \mathcal{L}}$ on the entire space in a way such that $\tilde{h}_{r, \mathcal{L}}$ is globally Lipschitz continuous, and apply Theorem \ref{thm:ref1} to $\tilde{h}_{r, \mathcal{L}}$. This also provides a high probability bound for $h_{r, \mathcal{L}}(\A)$, since, by construction, $\tilde{h}_{r, \mathcal{L}}(\A)=h_{r, \mathcal{L}}(\A)$ with high probability. 

\textbf{Step 1, part (a): Bound the Lipschitz constant of $h_{r, \mathcal{L}}$ restricted to $\mathcal{A}_\lambda$}\\ 
Restricted to $\mathcal{A}_\lambda$ and conditioned on $\mathbf{A}_{\cdot r} = \mathbf{a}_{\cdot r}$, we can bound the Lipschitz constant of $h_{r, \mathcal{L}}$ by the norm of its gradient. The norm of the gradient is an upper bound for the Lipschitz constant by the mean-value theorem, since $\mathcal{A}_\lambda$ is a convex set. For any $l\in\mathcal{L}$, we have
\begin{equation*}
\frac{\partial}{\partial a_{jl}}h_{r, \mathcal{L}}(\mathbf{a}) = \frac{12}{m}a_{jr}\Biggl(\sum_{l'\in\mathcal{L}}a_{jl'}\Biggr)^2.
\end{equation*}
Hence, we can bound
\begin{align*}
\|\nabla h_{r, \mathcal{L}}(\mathbf{a})\|_2^2 &= \sum_{l\in \mathcal{L}}\sum_{j=1}^m \biggl(\frac{\partial}{\partial a_{jl}}h_{r, \mathcal{L}}(\mathbf{a})\biggr)^2 \\
&\le \frac{144|\mathcal{L}|\max_ja_{jr}^2}{m} \frac{1}{m}\sum_{j=1}^m\Biggl(\sum_{l\in\mathcal{L}}a_{jl}\Biggr)^4 \\
&\le \frac{5184|\mathcal{L}|\log n}{m} c |\mathcal{L}|^2
\end{align*}
on $\mathcal{A}_\lambda$, where we used the assumption $\max_j a_{jr}^2\le 36\log n$ and the induction hypothesis.

\textbf{Step 1, part (b): Construct a globally Lipschitz continuous extension of $h_{r, \mathcal{L}}$}\\
Consider the following Lipschitz extension of the function $h_{r, \mathcal{L}}$:
\begin{equation*}
\tilde{h}_{r, \mathcal{L}}(\mathbf{a}) = \inf_{\mathbf{a'}\in \mathcal{A}_\lambda} \bigl((h_{r, \mathcal{L}}(\mathbf{a'}) + \operatorname{Lip}(h_{r, \mathcal{L}}) \|\mathbf{a} - \mathbf{a'} \|_2\bigr),
\end{equation*}
where we write $\operatorname{Lip}(h_{r, \mathcal{L}}) = \sqrt{\frac{5184c|\mathcal{L}|^3\log n}{m}}$. By definition, we have $\tilde{h}_{r, \mathcal{L}} = h_{r, \mathcal{L}}$ on $\mathcal{A}_\lambda$, and it follows from an application of the triangle inequality that $\tilde{h}_{r, \mathcal{L}}$ is globally Lipschitz continuous with Lipschitz constant $\operatorname{Lip}(h_{r, \mathcal{L}})$ (see e.g.\ Theorem 7.2 of \cite{M95}). 

We will show that $\tilde{h}_{r, \mathcal{L}}$ concentrates around its mean, which can potentially differ from the mean of $h_{r, \mathcal{L}}$.
Since $h_{r, \mathcal{L}}$ and $\tilde{h}_{r, \mathcal{L}}$ differ only on $\mathcal{A}_\lambda^c$, which has probability less than $c_1\lambda n^{-13}$, we can bound, using the Cauchy-Schwarz inequality,
\begin{equation*}
\E[|h_{r, \mathcal{L}}(\A)|\Eins_{\mathcal{A}_\lambda^c}(\A)] \le \sqrt{\E[h_{r, \mathcal{L}}(\A)^2]} \sqrt{\E[\Eins_{\mathcal{A}_\lambda^c}(\A)]} \le \frac{c'}{n^5},
\end{equation*}
where we used that $\E[h_{r, \mathcal{L}}(\A)^2]= 15|\mathcal{L}|^3/m$. Using $\tilde{h}_{r, \mathcal{L}}(\mathbf{a}) \le \operatorname{Lip}(h_{r, \mathcal{L}}) \|\mathbf{a}\|_2$, we have
\begin{equation*}
\E[|\tilde{h}_{r, \mathcal{L}}(\A)|\Eins_{\mathcal{A}_\lambda^c}(\A)] \le \operatorname{Lip}(h_{r, \mathcal{L}})\sqrt{\E[\|\A\|_2^2]} \sqrt{\E[\Eins_{\mathcal{A}_\lambda^c}(\A)]} \le \frac{c''}{n^4}.
\end{equation*}
All in all, this shows that
\begin{equation*}
\big|\E[h_{r, \mathcal{L}}(\A)] - \E[\tilde{h}_{r, \mathcal{L}}(\A)]\big| \le \frac{c_3}{n^4}
\end{equation*}
for a constant $c_3 > 0$.
Hence, we can apply Theorem \ref{thm:ref1} to obtain
\begin{align*}
\P\bigl[|\tilde{h}_{r, \mathcal{L}}(\A)| > 1.6c|\mathcal{L}| \;\big|\; \A_{\cdot r} = \mathbf{a}_{\cdot r} \bigr] &\le 2\exp\Biggl(- \frac{(1.6c|\mathcal{L}| - c_3/n^4)^2}{2\frac{5184c|\mathcal{L}|^3\log n}{m}}\Biggr) \\
&\le 2 \exp\biggl(- c_4 \frac{m}{|\mathcal{L}|\log n}\biggr),
\end{align*} 
for a constant $c_4 \le \frac{c}{4050} - \frac{c_3}{3240n^4}$.

\textbf{Step 2: Unravel the conditions: take union bounds and integrate over $\mathbf{a_{\cdot r}}$}\\
Let
\begin{equation*}
\widetilde{B}_r = \Bigl\{|\tilde{h}_{r, \mathcal{L}}(\A)| \le 1.6c|\mathcal{L}| \text{ for all } \mathcal{L}\subseteq [k]\backslash \{r\} \text{ with } |\mathcal{L}|=\lambda\Bigr\}.
\end{equation*}
Since we assume $m\ge c_sk^2\log^2 n$, we can take the union bound over all possible subsets $\mathcal{L}\subseteq [k]\backslash \{r\}$ with $|\mathcal{L}| = \lambda$ to obtain, using the upper bound ${\binom{k}{\lambda}} \le (\frac{ek}{\lambda})^\lambda$,
\begin{align*}
\P\bigl[\widetilde{B}_r \;\big|\; \A_{\cdot r} = \mathbf{a}_{\cdot r}\bigr] &\ge 1 - 2\exp\biggl(- c_4 \frac{m}{|\mathcal{L}|\log n} + |\mathcal{L}|\log \frac{ek}{|\mathcal{L}|}\biggr) \\
&\ge 1 - 2\exp(-(c_sc_4-1)k\log n).
\end{align*}
Next, we integrate over all $\mathbf{a}_{\cdot r}$ satisfying $\max_j |a_{jr}|\le 6\sqrt{\log n}$:
\begin{align*}
\P\bigl[\widetilde{B}_r \bigr] &\ge \int_{\{\max_j |a_{jr}|\le 6\sqrt{\log n}\}}\P\bigl[B_r \;\big|\; \A_{\cdot r} = \mathbf{a}_{\cdot r}\bigr]\mu(\mathbf{a}_{\cdot r})d\mathbf{a}_{\cdot r} \\
&\ge \bigl(1 - 2\exp(-(c_sc_4-1)k\log n)\bigr) \P\Bigl[\max_j |a_{jr}| \le 6\sqrt{\log n}\Bigr] \\
&\ge 1 - (m + 2)n^{-18},
\end{align*}
where we write $\mu$ for the standard normal density and the last line follows from standard Gaussian tail bounds and a union bound, provided that $(c_sc_4-1)k\ge 18$. 

By construction, we have $h_{r, \mathcal{L}}(\mathbf{a}) = \tilde{h}_{r, \mathcal{L}}(\mathbf{a})$ for all $r\in [k]$ and subsets $\mathcal{L} \subseteq [k]\backslash\{r\}$ with cardinality $|\mathcal{L}| = \lambda$ on the set $\mathcal{A}_\lambda\cap\{\max_{j,r}|a_{jr}|\le 6\sqrt{\log n}\}$. Hence, a union bound gives a lower bound on the probability that the fourth term in (\ref{eq:indstep}) is bounded by $1.6c|\mathcal{L}|$ for all $r\in [k]$ and $\mathcal{L}\subseteq [k]\backslash\{r\}$ with $|\mathcal{L}| = \lambda$:
\begin{equation*}
\P\Biggl[\bigcap\limits_{r=1}^k B_r\Biggr] \ge \P\Biggl[\bigcap\limits_{r=1}^k \widetilde{B}_r \cap \mathcal{A}_\lambda\cap\Bigl\{\max_{j,r}|a_{jr}|\le 6\sqrt{\log n}\Bigr\}\Biggr] \ge 1 - (m + 2)n^{-17} - c_1\lambda n^{-13}.
\end{equation*}
Putting everything together, this completes the induction step (\ref{eq:indstep}),
\begin{equation*}
\frac{1}{m}\sum_{j=1}^m\Biggl(\sum_{l\in \mathcal{L}} A_{jl} + A_{jr}\Biggr)^4 \le c(|\mathcal{L}|+1)^2 \quad \text{for all } r\in[k], \mathcal{L}\subseteq [k]\backslash \{r\} \text{ with } |\mathcal{L}|=\lambda ,
\end{equation*}
holds with probability at least $1 - c_1(\lambda+1)n^{-13}$, provided that $c_1 \ge c_2 + (m + 2)n^{-4}$, which is satisfied for a universal constant $c_1$ if $m \lesssim n^4$. The case $m\gtrsim n^4$ is simpler and can be shown following the same steps, writing probabilities in terms of $m$ instead of $n$.
\end{proof}

\begin{proof}[Proof of Lemma \ref{lemma:tech6}] We will show that each of the inequalities (\ref{eq:tech6_1})--(\ref{eq:tech6_4}) is satisfied with probability at least $1 - \frac{c_p}{4}n^{-10}$. The lemma then follows by taking a union bound. Throughout this proof, we will assume $m\le n^{3/2}$. The other case $m > n^{3/2}$ is simpler and can be shown following the same steps, writing probabilities in terms of $m$ instead of $n$. 

\textbf{Proof that (\ref{eq:tech6_1}) is satisfied with high probability}\\
In order to show that (\ref{eq:tech6_1}) holds with high probability, we proceed as in the induction step of the proof of Lemma \ref{lemma:tech5}: for any index $i\in [n]$, any subset $\mathcal{K}\subseteq \S\backslash \{i\}$, any subset $\mathcal{L}\subseteq \mathcal{K}$ with $|\mathcal{L}| \ge \frac{1}{2}|\mathcal{K}|$, and any vector $\x\in N^{\mathcal{K}}_{\epsilon}$, we consider the function
\begin{equation*}
h_{i, \mathcal{K},\mathcal{L},\x}(\mathbf{a}) = \sum_{l\in \mathcal{L}} \frac{1}{m}\sum_{j=1}^m a_{ji}a_{jl}(\mathbf{a}_{j}^\top\x)^2.
\end{equation*} 
Then, we define $B_i$ as the following event:
\begin{align*}\label{eq:event}
B_i = \Big\{|h_{i,\mathcal{K},\mathcal{L},\x}(\mathbf{A})|\le c|\mathcal{L}|\frac{\log n}{\sqrt{m}} \quad &\text{for all subsets } \mathcal{K}\subseteq \S\backslash \{i\}, \mathcal{L}\subseteq \mathcal{K} \text{ with } \nonumber\\
&|\mathcal{L}|\ge \frac{1}{2}|\mathcal{K}| \text{ and } \x\in N^{\mathcal{K}}_{\epsilon} \Big\}.
\end{align*}
To bound the probability of $B_i$, we condition on $\A_{\cdot i} = \mathbf{a}_{\cdot i}$ using the formula
\begin{equation*}
\P[B] = \int \P[B|\A_{\cdot i}=\mathbf{a}_{\cdot i}]\mu(\mathbf{a}_{\cdot i})d\mathbf{a}_{\cdot i},
\end{equation*}
which holds for any event $B$, where we write $\mu$ for the standard normal density.

To show that the conditional probability $\P[B_i|\A_{\cdot i} = \mathbf{a}_{\cdot i}]$ is close to one, we begin by showing that, for any fixed $\mathcal{K}$, $\mathcal{L}$ and $\x$ as described above, $h_{i, \mathcal{K},\mathcal{L},\x,}(\mathbf{A})|_{\mathbf{A}_{\cdot i} = \mathbf{a}_{\cdot i}}$ (making explicit the fact that we are conditioning on $\mathbf{A}_{\cdot i} = \mathbf{a}_{\cdot i}$) concentrates around its expectation $\E[h_{i,\mathcal{K},\mathcal{L}, \x}(\A)|\mathbf{A}_{\cdot i} = \mathbf{a}_{\cdot i}] = 0$. To simplify notation, we will omit the subscripts and write $h(\A)$ in what follows.

\textbf{Step 1: Bound $h(\A)$ conditioned on $i, \mathcal{K},\mathcal{L},\x$ and $\A_{\cdot i}$}\\
Let $\mathcal{A}_1$ be as in Lemma \ref{lemma:tech5}, and $\mathcal{A}_2$ as in Lemma \ref{lemma:tech1} with $t = 5\sqrt{\log n}$ (more precisely, we have $\mathcal{A}_1,\mathcal{A}_2\subset \R^{m\times n}$, and we require the projections onto $\R^{m\times \S}$ to be as in the respective lemmas). Then, by these two Lemmas, we have $\P[\mathcal{A}_1\cap\mathcal{A}_2] \ge 1 - c_2n^{-11}$ for a universal constant $c_2>0$, and as the intersection of two convex sets, $\mathcal{A}_1\cap\mathcal{A}_2$ is also convex.

Fix an $i\in [n]$, a subset $\mathcal{K}\subseteq \S\backslash\{i\}$, a subset $\mathcal{L}\subseteq\mathcal{K}$ with $|\mathcal{L}|\ge \frac{1}{2}|\mathcal{K}|$ and a vector $\x\in N^{\mathcal{K}}_{\epsilon}$.
We begin by conditioning on $\A_{\cdot i} = \mathbf{a}_{\cdot i}$ for a vector $\mathbf{a}_{\cdot i}$ satisfying $\max_j|a_{ji}|\le 5\sqrt{\log n}$.
As in the proof of Lemma \ref{lemma:tech5}, the idea is use Theorem \ref{thm:ref1} to show that $h(\A)$ is close to its expectation. However, $h(\A)$ is not globally Lipschitz continuous. We will show that the function $h$ restricted to $\mathcal{A}_1\cap \mathcal{A}_2$ is Lipschitz continuous. Then, we extend this restricted function to a globally Lipschitz continuous function $\tilde{h}$ on the entire space, and apply Theorem \ref{thm:ref1} to $\tilde{h}$. By construction, we have $h(\A) = \tilde{h}(\A)$ with high probability, which therefore also yields a high probability bound for $h(\A)$.

\textbf{Step 1, part (a): Bound the Lipschitz constant of $h$ restricted to $\mathcal{A}_1\cap\mathcal{A}_2$}\\
Restricted to $\mathcal{A}_1\cap\mathcal{A}_2$, we can bound the Lipschitz constant of $h$ by the norm of its gradient. The norm of the gradient is an upper bound for the Lipschitz constant by the mean-value theorem, since $\mathcal{A}_1\cap\mathcal{A}_2$ is a convex set. Note that $h$ only depends on $a_{jr}$ for $r\in\mathcal{K}$, since we have $x_r = 0$ for $r\notin \mathcal{K}$. We can compute
\begin{equation*}
\frac{\partial}{\partial a_{jr}} h(\mathbf{a})= \begin{cases} 
\frac{2}{m} a_{ji}x_r(\mathbf{a}_{j}^\top\x) \sum_{l\in\mathcal{L}}a_{jl}& r\in \mathcal{K}\backslash\mathcal{L}\\
\frac{2}{m} a_{ji}x_r(\mathbf{a}_{j}^\top\x) \sum_{l\in\mathcal{L}}a_{jl} +  \frac{1}{m}a_{ji}(\mathbf{a}_{j}^\top\x)^2 \quad & r\in \mathcal{L} \end{cases}
\end{equation*}
Using the inequality $(a+b)^2 \le 2(a^2+b^2)$, we can bound
\begin{align}\label{eq:lipconst}
\|\nabla h(\mathbf{a})\|_2^2 = \sum_{r\in\mathcal{K}}\sum_{j=1}^m\biggl(\frac{\partial}{\partial a_{jr}} h(\mathbf{a})\biggr)^2 \le & 8\sum_{r\in \mathcal{K}}\sum_{j=1}^m \frac{1}{m^2} x_r^2a_{ji}^2\Biggl(\sum_{l\in \mathcal{L}}a_{jl}\Biggr)^2(\mathbf{a}_{j}^\top\x)^2 \nonumber\\
& + 2\sum_{r\in \mathcal{L}} \sum_{j=1}^m\frac{1}{m^2}a_{ji}^2(\mathbf{a}_{j}^\top\x)^4.
\end{align}
We bound the two sums separately. For the first sum, we have
\begin{align*}
&\quad \sum_{r\in \mathcal{K}}\sum_{j=1}^m \frac{1}{m^2}x_r^2a_{ji}^2\Biggl(\sum_{l\in \mathcal{L}}a_{jl}\Biggr)^2(\mathbf{a}_{j}^\top\x)^2 \\
&= \frac{1}{m} \sum_{r\in \mathcal{K}}x_r^2\sum_{j=1}^m \frac{1}{m}a_{ji}^2\Biggl(\sum_{l\in \mathcal{L}}a_{jl}\Biggr)^2(\mathbf{a}_{j}^\top\x)^2 \\
&\le \Bigl(\max_j a_{ji}^2\Bigr)\frac{1}{m} \sum_{r\in \mathcal{K}}x_r^2\sum_{j=1}^m \frac{1}{m}\Biggl(\sum_{l\in \mathcal{L}}a_{jl}\Biggr)^2(\mathbf{a}_{j}^\top\x)^2 \\
&\le \frac{25\log n}{m} \sum_{r\in \mathcal{K}}x_r^2 \sqrt{\frac{1}{m}\sum_{j=1}^m \Biggl(\sum_{l\in \mathcal{L}}a_{jl}\Biggr)^4 }\sqrt{\frac{1}{m}\sum_{j=1}^m (\mathbf{a}_{j}^\top\x)^4},
\end{align*}
where we used H\"{o}lder's inequality in the second and the Cauchy-Schwarz inequality in the last line. 
Since $\mathbf{a}\in \mathcal{A}_1$, we can apply Lemma \ref{lemma:tech5} to bound
\begin{equation*}
\sqrt{\frac{1}{m}\sum_{j=1}^m \Biggl(\sum_{l\in \mathcal{L}}a_{jl}\Biggr)^4} \le c_3|\mathcal{L}|,
\end{equation*}
where $c_3$ is the universal constant from Lemma \ref{lemma:tech5}. Since also $\mathbf{a}\in \mathcal{A}_2$, Lemma \ref{lemma:tech1} yields  
\begin{equation*}
\sqrt{\frac{1}{m}\sum_{j=1}^m(\mathbf{a}_{j}^\top\x)^4} \le \sqrt{\frac{1}{m}\|\mathbf{a}_{\cdot \mathcal{K}}\|_{2\rightarrow 4}^4} \le 11,
\end{equation*}
as $\|\x\|_2 = 1$.
Putting this together, we can bound the first sum in (\ref{eq:lipconst}),
\begin{equation*}
\sum_{r\in \mathcal{K}}\sum_{j=1}^m \frac{1}{m^2} x_r^2a_{ji}^2\Biggl(\sum_{l\in \mathcal{L}}a_{jl}\Biggr)^2(\mathbf{a}_{j}^\top\x)^2 \le \frac{5\log n}{m}\sum_{r\in \mathcal{K}}11c_3|\mathcal{L}|x_r^2 \le 55c_3\frac{|\mathcal{L}|\log n}{m},
\end{equation*}
where we again used $\|\x\|_2= 1$.
For the second sum in (\ref{eq:lipconst}), we can write
\begin{equation*}
\sum_{r\in \mathcal{L}} \sum_{j=1}^m\frac{1}{m^2}a_{ji}^2(\mathbf{a}_{j}^\top\x)^4 \le \frac{5\log n}{m} |\mathcal{L}| \frac{1}{m}\sum_{j=1}^m(\mathbf{a}_{j}^\top\x)^4 \le 605\frac{|\mathcal{L}|\log n}{m},
\end{equation*}
where we again used Lemma \ref{lemma:tech1} to bound $\frac{1}{m}\sum_{j=1}^m(\mathbf{a}_{j}^\top\x)^4\le 121$.
All in all, we have shown 
\begin{equation*}
\|\nabla h(\mathbf{a})\|_2 = \Biggl(\sum_{r\in \mathcal{K}}\sum_{j=1}^m\biggl(\frac{\partial}{\partial a_{jr}} h(\mathbf{a})\biggr)^2\Biggr)^{\frac{1}{2}} \le \sqrt{c_4\frac{|\mathcal{L}|\log n}{m}}
\end{equation*}
for $\mathbf{a}\in \mathcal{A}_1\cap \mathcal{A}_2$ and $c_4 = 440c_3 + 1210$.

\textbf{Step 1, part (b): Construct a globally Lipschitz continuous extension of $h$}\\
Consider the following Lipschitz extension of the function $h$:
\begin{equation*}
\tilde{h}(\mathbf{a}) = \inf_{\mathbf{a}'\in \mathcal{A}_1\cap \mathcal{A}_2} \big(h(\mathbf{a'}) + \operatorname{Lip}(h) \|\mathbf{a}-\mathbf{a'}\|_2\big),
\end{equation*}
where we write $\operatorname{Lip}(h) = \sqrt{c_4\frac{|\mathcal{L}|\log n}{m}}$. By definition, we have $\tilde{h}=h$ on $\mathcal{A}_1\cap\mathcal{A}_2$, and it follows from an application of the triangle inequality that $\tilde{h}$ is globally Lipschitz continuous with Lipschitz constant $\operatorname{Lip}(h)$ (see e.g.\ Theorem 7.2 of \cite{M95}). 

We will show that $\tilde{h}$ concentrates around its mean, which can potentially differ from the mean of $h$.
Since $h$ and $\tilde{h}$ differ only on $(\mathcal{A}_1\cap\mathcal{A}_2)^c$, which has probability at most $c_2n^{-11}$, we can bound, using the Cauchy-Schwarz inequality,
\begin{equation*}
\E[|h(\A)|\Eins_{(\mathcal{A}_1\cap\mathcal{A}_2)^c}(\A)] \le \sqrt{\E[h(\A)^2]} \sqrt{\E[\Eins_{(\mathcal{A}_1\cap\mathcal{A}_2)^c}(\A)]} \le \frac{c'}{n^5},
\end{equation*}
where we used that $\E[h(\A)^2] \le \sqrt{315}k^2/m$. Using $\tilde{h}(\mathbf{a}) \le \operatorname{Lip}(h) \|\mathbf{a}\|_2$, we can bound
\begin{equation*}
\E[|\tilde{h}(\A)|\Eins_{(\mathcal{A}_1\cap\mathcal{A}_2)^c}(\A)] \le \operatorname{Lip}(h)\sqrt{\E[\|\A\|_2^2]} \sqrt{\E[\Eins_{(\mathcal{A}_1\cap\mathcal{A}_2)^c}(\A)]} \le \frac{c''}{n^3}.
\end{equation*}
All in all, this shows that
\begin{equation*}
\big|\E[h(\A)] - \E[\tilde{h}(\A)]\big| \le \frac{c_5}{n^3}.
\end{equation*}
for a constant $c_5 > 0$. 
Hence, we can apply Theorem \ref{thm:ref1} to obtain
\begin{align*}
\P\biggl[|\tilde{h}(\A)| > c\frac{|\mathcal{L}|\log n}{\sqrt{m}} \;\Big|\;\A_{\cdot i} = \mathbf{a}_{\cdot i} \biggr] &\le 2\exp\Biggl(- \frac{(c\frac{|\mathcal{L}|}{\sqrt{m}}\log n - c_5/n^3)^2}{2c_4\frac{|\mathcal{L}|\log n}{m}}\Biggr) \\
&\le 2 \exp(- c_6 |\mathcal{L}|\log n),
\end{align*} 
for a universal constant $c_6 \le \frac{c}{2c_4}- \frac{cc_5}{c_4}\frac{\sqrt{m}}{n^3}$, where we used the assumption $m\le n^{3/2}$.

\textbf{Step 2: Unravel the conditions: take union bounds and integrate over $\mathbf{a}_{\cdot i}$}\\
Let
\begin{align*}
\widetilde{B}_{i, \lambda} = \Big\{|\tilde{h}_{i,\mathcal{K},\mathcal{L},\x}(\A)|\le c\frac{|\mathcal{L}|\log n}{\sqrt{m}} \quad &\text{for all subsets } \mathcal{K}\subseteq \S\backslash \{i\} \text{ with } |\mathcal{K}| = \lambda, \\
&\mathcal{L}\subseteq \mathcal{K} \text{ with } |\mathcal{L}|\ge \frac{1}{2}|\mathcal{K}| \text{ and } \x\in N^{\mathcal{K}}_{\epsilon} \Big\},
\end{align*}
and let $\widetilde{B}_i$ be the same event without the restriction $|\mathcal{K}| = \lambda$. The cardinality of the $\epsilon$-net can be bounded by $|N^{\mathcal{K}}_{\epsilon}|\le (3/\epsilon)^\lambda\le (3m/c_1)^\lambda$.
Taking union bounds over all $\x\in N^{\mathcal{K}}_{\epsilon}$, $\mathcal{L}\subseteq \mathcal{K}$ with $|\mathcal{L}|\ge \frac{1}{2}|\mathcal{K}|$ and $\mathcal{K}\subseteq \S\backslash \{i\}$ with fixed cardinality $|\mathcal{K}|=\lambda$, we obtain (using the upper bound ${\binom{k}{\lambda}} \le (\frac{ek}{\lambda})^\lambda$)
\begin{align*}
\P\bigl[\widetilde{B}_{i,\lambda} \;\big||\; \A_{\cdot i} = \mathbf{a}_{\cdot i}\bigr] &\ge 1 - 2\exp\biggl(- \frac{c_6}{2}\lambda\log n + \lambda\log \frac{3m}{c_1} + \lambda\log 2 + \lambda\log\frac{ek}{\lambda}\biggr) \\
&\ge 1 - 2\exp(-c_7\lambda\log n)
\end{align*}
for a constant $c_7>0$ if $c_6 \ge 5 + 2\frac{\log (6e/c_1)}{\log n}$, where we used the assumption $m\le n^{3/2}$.  
Taking the union bound over all possible choices for the cardinalities $\lambda=1,...,k$ gives
\begin{align*}
\P\bigl[\widetilde{B}_i \;|\; \A_{\cdot i} = \mathbf{a}_{\cdot i}\bigr] &\ge 1 - \frac{2}{1 - 2\exp(-c_7 \log n)}\exp(-c_7\log n) \ge 1 - 2.1 \exp(-c_7\log n),
\end{align*}
provided that $n^{-c_7} \le 0.1/4.2$.
Next, we integrate over all $\mathbf{a}_{\cdot i}$ satisfying $\max_j |a_{ji}|\le 5\sqrt{\log n}$:
\begin{align*}
\P\bigl[\widetilde{B}_i\bigr] &\ge \int_{\{\max_j |a_{ji}|\le 5\sqrt{\log n}\}}\P\bigl[\widetilde{B}_i \;\big|\; \A_{\cdot i} = \mathbf{a}_{\cdot i} \bigr]\mu(\mathbf{a}_{\cdot i})d\mathbf{a}_{\cdot i} \nonumber\\
&\ge \Bigl(1 - 2.1e^{-c_7\log n}\Bigr) \P\Bigl[\max_j |a_{ji}| \le 5\sqrt{\log n}\Bigr] \\
&\ge 1 - 4n^{-11},
\end{align*}
where we write $\mu$ for the Gaussian density. The last line follows from standard Gaussian tail bounds and a union bound where we used the assumption $m \le n^{3/2}$, provided that $c_7 \ge 11$.

On $\mathcal{A}_1\cap\mathcal{A}_2\cap \{\max_j |a_{ji}|\le 5\sqrt{\log n}\}$, which is independent of $\mathcal{K}$, $\mathcal{L}$ and $\x$, we have, by construction, $h_{i,\mathcal{K},\mathcal{L},\x} = \tilde{h}_{i,\mathcal{K},\mathcal{L},\x}$ for all subsets $\mathcal{K}\subseteq \S\backslash \{i\}$, $\mathcal{L}\subseteq \mathcal{K}$ with $|\mathcal{L}|\ge \frac{1}{2}|\mathcal{K}|$ and $\x\in N^{\mathcal{K}}_{\epsilon}$. In this case, the bound on $\tilde{h}_{\mathcal{K},\mathcal{L},\x}(\A)$ also applies to $h_{\mathcal{K},\mathcal{L},\x}(\A)$, so that
\begin{align*}
\P[B_i] &\ge \P\Bigl[\widetilde{B}_i \cap \mathcal{A}_1\cap \mathcal{A}_2\cap \Bigl\{\max_j |a_{ji}|\le 5\sqrt{\log n}\Bigr\}\Bigr] \\
&\ge 1 - 4n^{-11} - c_2n^{-11} - n^{-11}\\
&\ge 1- \frac{c_p}{4}n^{-11},
\end{align*}
for $c_p = 4c_2 + 20$.
Finally, taking the union bound over all $i\in [n]$ establishes the bound (\ref{eq:tech6_1}) with probability at least $1-\frac{c_p}{4}n^{-10}$.

\textbf{Proof that (\ref{eq:tech6_2})--(\ref{eq:tech6_4}) are satisfied with high probability}\\
Inequalities (\ref{eq:tech6_2}) and (\ref{eq:tech6_4}) can be shown following the exact same steps as above for (\ref{eq:tech6_1}). The former is slightly simpler because of the lower order of the expression we need to control, which is also the reason for the term $\sqrt{\log n}$ in the bound (\ref{eq:tech6_2}) instead of $\log n$. The proof of inequality (\ref{eq:tech6_3}) is identical to the one of (\ref{eq:tech6_2}), as we can condition on the random variables $\{\varepsilon_j\}_{j=1}^m$ (the same way we conditioned on $\mathbf{A}_{\cdot i} = \mathbf{a}_{\cdot i}$ in Step 1 above) and use the standard tail bound $\P[|\varepsilon_j|>t]\le \exp(1-\tilde{c}t/\sigma^2)$ for sub-exponential random variables, where $\tilde{c}>0$ is an absolute constant. We omit the details of the proof to avoid repetition.
\end{proof}

\begin{proof}[Proof of Lemma \ref{lemma:tech7}]
We will show that each of the inequalities (\ref{eq:tech7_1})--(\ref{eq:tech7_3}) is satisfied with probability at least $1 - \frac{c_p}{4}n^{-10}$. The lemma then follows by taking a union bound. 

We begin by showing the bound (\ref{eq:tech7_1}). To show that (\ref{eq:tech7_1}) holds with high probability, we will show that it is satisfied for all $\{\mathbf{a}_j\}_{j=1}^m\in\mathcal{A}\subseteq \R^{m\times n}$, where $\mathcal{A}$ is defined as the set where the bounds from Lemmas \ref{lemma:tech3} and \ref{lemma:tech6} are satisfied, and such that the projection of $\mathcal{A}$ onto $\R^{m\times \mathcal{S}}$ satisfies Lemma \ref{lemma:tech1} with $t = 5\sqrt{\log n}$. By the aforementioned lemmas, we have $\P[\{\mathbf{A}_j\}_{j=1}^m\in\mathcal{A}]\ge 1-\frac{c_p}{4}n^{-10}$ for a universal constant $c_p>0$.

\textbf{Proof that (\ref{eq:tech7_1}) is satisfied with high probability}\\
Let $\{\mathbf{a}_j\}_{j=1}^m\in \mathcal{A}$ and fix any $i\in [n]$. 
We denote the term that we need to bound in (\ref{eq:tech7_1}) by
\begin{equation*}
g(\x) = \frac{1}{m}\sum_{j=1}^ma_{ji}(\mathbf{a}_{j,-i}^\top\x_{-i})^3,
\end{equation*}
and we consider the constrained optimization problem
\begin{equation}
\label{eq:optconstr}
\max_{\x} g(\x) \quad \text{s.t.} \quad \x\in\mathcal{X}_b,
\end{equation}
where we write $\mathcal{X}_b = \{\x\in\R^n: \x_{\S^c} = \mathbf{0}, \|\x\|_2^2 \le 1, \|\x\|_1\le b\}$ for some $0\le b \le \sqrt{k}$. We will show that any optimizer $\x'\in\mathcal{X}_b$ satisfies $g(\x')\le cb \sqrt{\frac{\log n}{m}}$. By symmetry, we can also obtain the lower bound $g(\x')\ge -cb \sqrt{\frac{\log n}{m}}$ by considering the corresponding minimization problem. Since the above holds for an arbitrary $0\le b\le \sqrt{k}$, this would then complete the proof that inequality (\ref{eq:tech7_1}) is satisfied for all $\{\mathbf{a}_j\}_{j=1}^m\in \mathcal{A}$ and $i\in [n]$.

Since we are maximizing a continuous function over a compact set, a global maximum is attained at some feasible point $\x'\in\mathcal{X}_b$. Using the KKT conditions at $\x'$, we will bound the Lagrange multiplier $\mu_2^\star$ corresponding to the constraint $\|\x\|_1\le b$. This controls how the maximum attainable value $g(\x')$ can increase if we relax the constraint $\|\x\|_1\le b$ (for details see e.g.\ Section 5.6 of \cite{BV04}), and integrating over $b$ gives the desired result.

\textbf{Step 1: Establish KKT conditions}\\
Let $\x'\in\mathcal{X}_b$ be an optimal point of the constrained optimization problem (\ref{eq:optconstr}).
In order to establish the KKT conditions at $\x'$, we verify that the linear independence constraint qualification (LICP) (see e.g.\ \cite{P73}) holds in this problem: the gradients of all active inequality constraints are linearly independent at any point $\x$. We restrict our attention to $b\neq \sqrt{s}$ for $s=1,...,k$, which is a set of measure zero and can be ignored when integrating. If only one inequality constraint is binding, there is nothing to show. Otherwise, the gradients of the constraints, $\frac{\partial}{\partial \x}\|\x\|_2^2 = 2\x$ and $\frac{\partial}{\partial \x}\|\x\|_1 = \operatorname{sign}(\x)$ can only be linearly dependent if $x_i\in \{-\tilde{c},0,\tilde{c}\}$ for all coordinates $i\in [n]$, where $\tilde{c}>0$ is some constant. But then $\|\x\|_2^2 = s\tilde{c}^2$ and $\|\x\|_1 = s\tilde{c}$, where $s$ is the number of non-zero coordinates of $\x$. Since $b\neq \sqrt{s}$, the two constraints cannot be simultaneously binding.

From LICP it follows that the following KKT conditions are satisfied at $\x'$: writing 
\begin{equation*}
\mathcal{L}(\x,\boldsymbol{\lambda},\boldsymbol{\mu}) = g(\x) - \mu_1 \big(\|\x\|_2^2-1\big) - \mu_2 \big(\|\x\|_1-b\big) + \sum_{r\in \S^c}\lambda_rx_r
\end{equation*}
for the Lagrangian, there exist Lagrange multipliers $\boldsymbol{\lambda}^*$ and $\boldsymbol{\mu}^\star$ satisfying 
\begin{align}
\nabla_{\x}\mathcal{L}(\x',\boldsymbol{\lambda}^\star,\boldsymbol{\mu}^\star) &= \mathbf{0} \quad &&\text{stationarity} \label{eq:kkt1}\\
\x_{\S^c} = \mathbf{0}, \quad \|\x'\|_2^2 \le 1, \quad \|\x'\|_1&\le b &&\text{primal feasibility}\label{eq:kkt2}\\
\mu_1^\star\ge 0, \quad \mu_2^\star&\ge 0 &&\text{dual feasibility}\label{eq:kkt3}\\
\mu_1^\star\big(\|\x'\|_2^2-1\big) = 0, \quad \mu_2^\star\big(\|\x'\|_1-b\big) &= 0 &&\text{complementary slackness}\label{eq:kkt4}
\end{align}

\textbf{Step 2: Bounding the Lagrange multiplier $\mu_2^\star$}\\
Using the stationarity condition (\ref{eq:kkt1}), we can compute, for any $l\in \S\backslash\{i\}$, 
\begin{align*}
\frac{\partial}{\partial x_l} \mathcal{L}(\x',\boldsymbol{\lambda}^\star,\boldsymbol{\mu}^\star) &= \frac{3}{m}\sum_{j=1}^ma_{ji}a_{jl}(\mathbf{a}_{j,-i}^\top\x'_{-i})^2 - 2\mu_1^\star x'_l - \mu_2^\star\operatorname{sign}(x'_l) = 0 \\
\Rightarrow \hspace{5em} h_l(\x') &:= \frac{1}{m}\sum_{j=1}^ma_{ji}a_{jl}(\mathbf{a}_{j,-i}^\top\x'_{-i})^2 = \frac{2}{3}\mu_1^\star x'_l + \frac{1}{3}\mu_2^\star\operatorname{sign}(x'_l),
\end{align*}
This identity implies that $h_l(\x')$ has the same sign as $x'_{l}$ because of dual feasibility (\ref{eq:kkt3}), and that $x'_{l}=0$ if and only if $h_l(\x') = 0$. Rearranging for $\mu_2^\star$, we obtain, for any $l$ with $x'_{l}\neq 0$,
\begin{equation}\label{eq:shadowprice}
\mu_2^\star = 3|h_l(\x')| - 2\mu_1^\star|x'_l| \le 3|h_l(\x')|,
\end{equation}
where we again used that $\mu_1^\star\ge 0$. Next, we show that
\begin{equation}\label{eq:shadowbound}
\min_{l:x'_l\neq 0} |h_l(\x')| \le \frac{c\log n}{3\sqrt{m}}.
\end{equation}
To this end, let $\mathcal{K} = \{r\in \S: x'_r\neq 0\}$ and let $\mathcal{L} = \{l\in \mathcal{K}: x'_l>0\}$. Note that we must have $x'_i = 0$, as $x_i$ does not contribute to the value of $g(\x)$. We can assume without loss of generality that $|\mathcal{L}|\ge \frac{1}{2}|\mathcal{K}|$, as we can otherwise consider the set defined by $x'_l<0$. Further, assume for notational simplicity that $\|\x'\|_2 = 1$; otherwise, we can replace $\x'$ by $\x' / \|\x'\|_2$, which makes every $|h_l(\x')|$ larger, since we have $\|\x'\|_2\le 1$ by primal feasibility (\ref{eq:kkt2}). 

Let $N_{\epsilon}^{\mathcal{K}}$ be an $\epsilon$-net of the set $\{\x\in\R^n: \x_{\mathcal{K}^c} = \mathbf{0}, \|\x\|_2 = 1\}$, where $\epsilon = c\log n/(465\sqrt{m})$. Let $\x\in N_{\epsilon}^{\mathcal{K}}$ with $\|\x-\x'\|_2 \le \epsilon$. Then, recalling $m\ge c_s\max\{k^2\log^2n, \; \log^5n\}$, we have by inequality (\ref{eq:tech6_1}) of Lemma \ref{lemma:tech6},
\begin{equation*}
\sum_{l\in \mathcal{L}} h_l(\x) \le c_1\mathcal{L}|\frac{\log n}{\sqrt{m}} \le \frac{c|\mathcal{L}|\log n}{6\sqrt{m}},
\end{equation*}
provided that $c\ge 6c_1$, where $c_1$ is the universal constant from Lemma \ref{lemma:tech6}. Since $h_l(\x)>0$ for all $l\in\mathcal{L}$, there must be an index $l\in \mathcal{L}$ with $h_l(\x) \le c\log n/(6\sqrt{m})$ by the pigeonhole principle. By the Cauchy-Schwarz inequality, we can bound
\begin{align*}
|h_l(\x)-h_l(\x')| &= \Bigg|\frac{1}{m}\sum_{j=1}^m a_{ji}a_{jl}\bigl(\mathbf{a}_{j,-i}^\top(\x_{-i}+\x_{-i}')\bigr)\bigl(\mathbf{a}_{j,-i}^\top(\x_{-i}-\x_{-i}')\bigr)\Bigg| \nonumber\\
&\le \Biggl(\frac{1}{m}\sum_{j=1}^ma_{ji}^2a_{jl}^2\bigl(\mathbf{a}_{j,-i}^\top(\x_{-i}+\x_{-i}')\bigr)^2\Biggr)^{\frac{1}{2}} \Biggl( \frac{1}{m}\sum_{j=1}^m\bigl(\mathbf{a}_{j,-i}^\top(\x_{-i}-\x_{-i}')\bigr)^2\Biggr)^\frac{1}{2} \nonumber\\
&\le \Biggl(\frac{1}{m}\sum_{j=1}^ma_{ji}^8\Biggr)^{\frac{1}{8}}\Biggl(\frac{1}{m}\sum_{j=1}^ma_{jl}^8\Biggr)^{\frac{1}{8}}\Biggl(\frac{1}{m}\sum_{j=1}^m\bigl(\mathbf{a}_{j,-i}^\top(\x_{-i}+\x_{-i}')\bigr)^4\Biggr)^{\frac{1}{4}} \\*
&\quad \cdot \Biggl(\frac{1}{m}\sum_{j=1}^m\bigl(\mathbf{a}_{j,-i}^\top(\x_{-i}-\x_{-i}')\bigr)^2\Biggr)^\frac{1}{2}\nonumber\\
&\le 106^\frac{1}{4}  \frac{(3m)^{\frac{1}{4}} + \sqrt{k} + 5\sqrt{\log n}}{m^\frac{1}{4}}\|\x+\x'\|_2  \frac{\sqrt{m} + \sqrt{k} + 5\sqrt{\log n}}{\sqrt{m}}\|\x-\x'\|_2\\
&\le \frac{c\log n}{6\sqrt{m}},
\end{align*}
where we used Lemmas \ref{lemma:tech1} and \ref{lemma:tech3}, and $\|\x-\x'\|\le c\log n/(465\sqrt{m})$. Since $h_l(\x')\ge 0$ is non-negative, this completes the proof of (\ref{eq:shadowbound}).
Combining (\ref{eq:shadowprice}) with (\ref{eq:shadowbound}), we obtain
\begin{equation*}
\mu_2^\star \le c\frac{\log n}{\sqrt{m}}.
\end{equation*}

\textbf{Step 3: Showing the inequality (\ref{eq:tech7_1})}\\
In order to show the inequality (\ref{eq:tech7_1}), define the value function
\begin{equation*}
v(b) = \max_{\x} \bigl\{g(\x): \x_{\S^c}= \mathbf{0}, \|\x\|_2^2\le 1, \|\x\|_1\le b\bigr\}.
\end{equation*}
We use that $\mu_2^\star$ is the shadow price of the constraint $\|\x\|_1\le b$ (see e.g.\ Section 5.6 of \cite{BV04}):
\begin{equation*}
\frac{\partial}{\partial b}v(b) = \mu_2^\star(b).
\end{equation*}
By definition, we have $v(0) = 0$. For any $0\le b_0\le \sqrt{k}$, the fundamental theorem of calculus yields
\begin{align*}
v(b_0) = \int_0^{b_0} \frac{\partial}{\partial b}v(b)db
= \int_0^{b_0} \mu_2^\star(b)db \le \int_0^{b_0}c\frac{\log n}{\sqrt{m}}db 
< cb_0 \frac{\log n}{\sqrt{m}},
\end{align*}
By definition, we have $g(\x)\le v(b_0)$ for all $\x\in\mathcal{X}$ with $\|\x\|_1\le b_0$.
Hence, 
\begin{equation*}
g(\x) \le cb \frac{\log n}{\sqrt{m}}
\end{equation*}
must hold for any $\x$ with $\|\x\|_2^2\le 1$ and $\|\x\|_1\le b$,
which completes the proof of (\ref{eq:tech7_1}).

\textbf{Proof that (\ref{eq:tech7_2})--(\ref{eq:tech7_4}) are satisfied with high probability}\\
The proofs of (\ref{eq:tech7_2}), (\ref{eq:tech7_3}) and (\ref{eq:tech7_4}) follow the same steps as the proof of (\ref{eq:tech7_1}), using the bounds (\ref{eq:tech6_2}), (\ref{eq:tech6_3}) and (\ref{eq:tech6_4}) of Lemma \ref{lemma:tech6}, respectively, to bound the shadow price $\mu_2^\star$. We omit the details of the proof to avoid repetition.
\end{proof}

\end{document}